\documentclass[notitlepage, 12pt]{amsart} %
\usepackage{amsthm, amsmath, dsfont, mathrsfs, amssymb, graphicx, mathabx, amsaddr, placeins, caption, soul}
\usepackage[usenames,dvipsnames,svgnames,table]{xcolor}
\usepackage{tikz}
\usepackage{subfig} 
\usetikzlibrary{decorations.pathreplacing}
\usepackage[
    colorlinks=true,
    linkcolor=Blue,
    citecolor=Brown]{hyperref}

\raggedbottom

\newtheorem{theorem}{Theorem}
\newtheorem{lemma}{Lemma}
\newtheorem{proposition}{Proposition}
\newtheorem{corollary}{Corollary}
\newtheorem{definition}{Definition}

\newtheorem{remark}{Remark}
\newtheorem{claim}{{\sc Claim}}[section]

\newcolumntype{C}[1]{>{\centering\arraybackslash}p{#1}} 

\usepackage{geometry}
\geometry{
	body={6.5in, 9.0in},
	left=1.25in,
	right=1.25in,
	top=1.25in,
	bottom=1.25in
}

\title[Strategically Simple Mechanisms]{Strategically Simple Mechanisms}\thanks{We are indebted to the editor and four anonymous referees for their comments and suggestions that substantially improved the paper. We are grateful to Gabriel Carroll, Yi-Chun Chen, Johannes H\"orner, Heng Liu, Alessandro Pavan, Xianwen Shi, Satoru Takahashi, and participants at various seminars and conferences for helpful discussions.}

\author[Tilman B\"orgers]{Tilman B\"orgers}\address{Department of Economics, University of Michigan}

\author[Jiangtao Li]{Jiangtao Li}\address{School of Economics, Singapore Management University}

\date{December 1, 2018.}

\begin{document}

\begin{abstract}
	
We define and investigate a property of mechanisms that we call ``strategic simplicity,'' and that is meant to capture the idea that, in strategically simple mechanisms, strategic choices require limited strategic sophistication. We define a mechanism to be strategically simple if choices can be based on first-order beliefs about the other agents' preferences and first-order certainty about the other agents' rationality alone, and there is no need for agents to form higher-order beliefs, because such beliefs are irrelevant to the optimal strategies. All dominant strategy mechanisms are strategically simple. But many more mechanisms are strategically simple. In particular, strategically simple mechanisms may be more flexible than dominant strategy mechanisms in the bilateral trade problem and the voting problem.

\end{abstract}

\maketitle

\thispagestyle{empty}

\clearpage

\thispagestyle{empty}

\null\null\null\null\null\null\null\null

\tableofcontents

\clearpage

\setcounter{page}{1}

\section{Introduction}\label{sec:intro}\smallskip

In mechanism design it seems useful to distinguish mechanisms in which agents face a straightforward choice problem from mechanisms that require agents to engage in complex thinking if they want to determine their optimal choices. In some cases, the mechanism designer might prefer a mechanism of the former type. There are several conceivable reasons for such a preference. For example, the mechanism designer might not be able to predict the behavior of agents who are not capable of complex thinking. Or the mechanism designer might find it desirable that the outcomes of a mechanism don't depend much on the cognitive abilities of the agents. Of course, one can also imagine settings in which the mechanism designer prefers mechanisms that make it hard for agents to find an optimal strategy.\medskip

In this paper, we introduce a property of mechanisms that is intended to capture the idea that strategic choices are simple in a particular way: agents don't need to have much strategic sophistication to determine their optimal strategies. Here we mean by ``strategic sophistication'' the ability to reason about the other agents' preferences, the other agents'  beliefs, their beliefs about beliefs, etc. That forming higher-order beliefs is difficult seems plausible from everyday experience, and there is also some recent experimental evidence that points into this direction; see, for example, Alaoui and Penta \cite{AlaouiPenta2017}.\medskip

There are, of course, many other dimensions to simplicity. For example, the sheer size of the strategy space and the complexity of the mapping that assigns outcomes to strategy combinations may make a mechanism difficult to understand.  One may also consider the computational complexity of finding optimal strategies. In settings in which the agents interact repeatedly, one may consider how easy it is for the agents to learn through repeated play.\footnote{Mathevet \cite{Mathevet2010} introduces an approach to construct supermodular mechanisms. Supermodular mechanisms are desirable in settings in which the agents interact repeatedly and in settings in which learning and adjusting are important.} The purpose of this paper is to isolate just one dimension of simplicity, and to consider one possible formalization of this dimension.\medskip

One class of mechanisms which require little strategic sophistication is the class of dominant strategy mechanisms.\footnote{A dominant strategy mechanism is a mechanism in which each agent has a dominant strategy regardless of her preference. We use the phrase ``dominant strategy'' in the sense in which it is used in mechanism design theory, that is, a strategy that is optimal regardless of what the other agents do. This is slightly different from ``weakly dominant'' or ``strictly dominant'' strategies as these terms are defined in game theory. Dominant strategy mechanisms are also called ``strategy-proof'' mechanisms in the literature.} In dominant strategy mechanisms, agents need not think at all about the motives of the other agents, or the other agents' rationality. This is because each agent has at least one strategy that is optimal regardless of what the other agents do, and she can just choose such a strategy.\medskip

For many mechanism design problems, the class of dominant strategy mechanisms is quite small, and only includes mechanisms that are rather unattractive for a mechanism designer who wants to maximize, say,  revenue, or welfare.\footnote{See the examples in Chapter 4 of B\"orgers \cite{Borgers2015}.} We therefore introduce in this paper a new class of mechanisms that includes, but is strictly larger than, the set of dominant strategy mechanisms. We call the mechanisms in this class ``strategically simple.'' We argue that for mechanisms in this new class the strategic sophistication needed to find optimal strategies is quite limited. Our results show that in applications the set of strategically simple mechanisms includes mechanisms that are more attractive to a mechanism designer concerned with fairness, efficiency, or revenue, than dominant strategy mechanisms.\medskip

The following example illustrates our idea. Consider the well-known problem of designing a mechanism that allows a seller of an indivisible object to trade with one potential buyer. Both agents have quasi-linear preferences. It is known from Hagerty and Rogerson \cite{HagertyRogerson1987} that the only dominant strategy mechanisms that satisfy ex post budget balance and individual rationality are posted price mechanisms. In a posted price mechanism, the designer chooses a (possibly random) price, without taking into account any of the agents' private information. The outcome depends on agents' private information only through their decision to trade, or not to trade, at the price proposed by the mechanism designer. Trade comes about only when both agents agree. Obviously, this is a rather unappealing mechanism for a welfare maximizing mechanism designer.\medskip

Now consider an alternative mechanism that we call ``price cap mechanism.'' The mechanism designer sets a price cap. The seller can refuse to trade, or choose a price less than or equal to the mechanism designer's price cap, and indicate that he is willing to trade at this price, or at a lower price. If the seller is willing to trade, then the buyer can decides whether or not to trade at the price chosen by the seller. Trade takes only place if both agents agree to trade.\medskip

 Whether or not to reduce the price, and how far to reduce the price, depends on the seller's belief about the buyer's willingness to pay. But, regardless of her belief, the seller will never reduce the price below her reservation value, and the buyer will never agree to trade if the potentially reduced price is above his willingness to pay. In comparison to the posted price mechanism, this mechanism facilitates more efficient trade.\footnote{This mechanism was discussed in B\"orgers and Smith \cite{BorgersSmith2012}.}\medskip

In the price cap mechanism, the buyer faces a straightforward choice problem. The buyer agrees to trade if and only if his willingness to pay is weakly higher than the price offered. The seller's problem is arguably not too complicated either. If she believes that the buyer accepts the trade if and only if his willingness to pay is weakly higher than the price offered, then all that she needs to do is to consider her belief about the buyer's willingness to pay. This problem is equivalent to the standard monopoly problem with a price ceiling, as taught in undergraduate microeconomics. For any belief that the seller might have, it is a straightforward optimization problem. Our formal definition of strategic simplicity will imply that the price cap mechanism is strategically simple.\medskip

On the other hand, the double auction described in Chatterjee and Samuelson \cite{ChatterjeeSamuelson1983} is, in our terminology, not strategically simple. To see why, note that in the double auction, the seller has to form her belief about the price that the buyer offers. Ideally, she would like to ask for a price that is as close as possible to, but not above the price that is offered by the buyer, provided that this price is above her reservation value. But to form her belief about the price that the buyer offers, presumably the seller first has to form her belief about the buyer's belief about the seller's reservation value. Similarly, the buyer has to form his belief about the seller's belief about the buyer's willingness to pay. Potentially, infinitely many layers of such beliefs matter. Mechanisms that require  of agents this level of depth of thinking will, in our terminology, not be strategically simple.\medskip

Motivated by this example, we define in this paper a mechanism to be strategically simple if optimal choices can be determined using first-order beliefs alone, and there is no need for agents to form higher-order beliefs because such beliefs are irrelevant to the optimal strategies. Here, we are referring to beliefs about the other agents' utility functions and rationality. Thus, a ``first-order belief''  of agent $i$ is agent $i$'s belief about the other agents' ($j\neq i$) utility functions, and about the other agents' rationality. ``Higher-order beliefs'' are, for example, agent $i$'s belief about agent $j$'s belief about agent $i$'s utility function, and about agent $j$'s belief about agent $i$'s rationality. We shall call a mechanism strategically simple if for each agent $i$, her belief about the other agents' ($j\neq i$) utility functions, combined with certainty that the other agents are rational, imply which choices are optimal for agent $i$.\footnote{It is, of course, somewhat arbitrary to restrict attention to mechanisms for which only first-order beliefs matter. We might instead allow first and second-order beliefs to matter, for example. We discuss such variations  in Section \ref{sec:discussion}.}\medskip

Our definition of strategic simplicity allows for the possibility that only some subset of all utility functions and some subset of all first-order beliefs are considered relevant for the determination whether or not a mechanism is strategically simple. Our main result shows that, under a ``richness" condition on the domain of relevant utility functions and first-order beliefs, strategic simplicity is equivalent to a ``local dictatorship'' property. The richness condition will be formally defined later, but we emphasize that it is much weaker than the requirement that all possible utility functions and all possible first-order beliefs are in the domain. In contrast with (classical)  dictatorship, local dictatorship means, roughly speaking, that there is some agent who dictates the outcome if we restrict attention for every agent to certain subset of her strategy set. The identity of the dictator may depend on the subsets that we consider. Every dictatorship mechanism is a local dictatorship mechanism, but there are many more local dictatorship mechanisms than dictatorship mechanisms, including mechanisms that are far from what in everyday language is called a ``dictatorship.'' For example, a voting mechanism in which one agent selects two alternatives from a larger set of several alternatives, and the other agents then vote over those two alternatives using majority voting, is in our language a local dictatorship.\medskip

Our characterization result suggests a natural division of strategically simple mechanisms into two categories: mechanisms in which there is some agent who is a local dictator at all restrictions that we consider, and mechanisms in which this is not the case. We shall call the former ``type 1 strategically simple mechanisms," and the latter ``type 2 strategically simple mechanisms.'' Type 1 strategically simple mechanisms are easy to characterize. One can think of type 1 strategically simple mechanisms as ``delegation mechanisms:'' the mechanism designer delegates the choice of the mechanism to a ``delegate,'' who chooses a mechanism from a given set of dominant strategy mechanisms that the designer has specified. Then the other agents play this dominant strategy mechanism. The delegate's choice will depend on her first-order belief, while the other agents' choices don't require any belief formation. Type 1 strategically simple mechanisms include, as a subclass, all dominant strategy mechanisms. The price cap mechanism presented above is also a type 1 strategically simple mechanism.\medskip

Type 2 strategically simple mechanisms are harder to characterize in general. Instead, after presenting general results, we shall analyze in this paper two applications. In these applications we will be able to characterize not only type 1 but also, under some assumptions, type 2 strategically simple mechanisms.\medskip

The first application that we study is the bilateral trade problem. We show that the type 1 strategically simple mechanisms are those in which one agent proposes terms of trade, and the other agent accepts or rejects. We then show that there are no type 2 strategically simple mechanisms in the bilateral trade environment. Thus, we fully characterize strategically simple mechanisms in the bilateral trade environment.\medskip

 The second application that we study is the voting problem. In the voting environment, the class of strategically simple mechanisms is much larger than the class of dominant strategy mechanisms. By the celebrated Gibbard-Satterthwaite \cite{Gibbard1973, Satterthwaite1975} Theorem, in the voting environment as we define it here, a mechanism has dominant strategies if and only if it is dictatorial. There are many more strategically simple voting mechanisms. In this paper, we characterize type 1 strategically simple voting mechanisms and we characterize type 2 strategically simple voting mechanisms when there are two agents and three alternatives. For this special case we show that there is a type 2 strategically simple mechanism that is anonymous,  that is, that treats both voters equally. By contrast, no type 1 strategically simple mechanism, and, in particular, no dominant strategy mechanism, is anonymous in this setting.\medskip

The paper is organized as follows. Section \ref{sec:definitions} contains the formal definition of strategic simplicity, and Section \ref{sec:examples} provides an example to illustrate this notion. Section \ref{sec:characterization} contains our characterization result of strategically simple mechanisms under a richness condition on the domain of utility functions and beliefs. Sections \ref{sec:bilateral-trade} and \ref{sec:voting} consider the applications of our main results in the bilateral trade problem and the voting problem. Section \ref{sec:literature} reviews the related literature. Section \ref{sec:discussion} is a discussion of open questions.

\section{Definitions} \label{sec:definitions}

\smallskip

There are $n$ agents: $i \in I = \{1,2,\ldots,n\}$ and a finite set $A$ of outcomes. A mechanism consists of a finite strategy set $S_i$ for each agent $i$, and an outcome function $g: \bigtimes_{i\in I} S_i \rightarrow A$ that describes for each choice of strategies which outcome will result. We define $S\equiv \bigtimes_{i\in I} S_i$ with generic element $s$, and, for every $i\in I$, we define $S_{-i}\equiv\bigtimes_{j\neq i}S_j$ with generic element $s_{-i}$. We assume that there are no duplicate strategies: for every $i\in I$, for all $s_i,s_i' \in S_i$ with $s_i \neq s_i'$, there is some $s_{-i}\in S_{-i}$ such that $g(s_i,s_{-i})\neq g(s_i',s_{-i})$.\medskip

A von Neumann-Morgenstern (vNM) utility function of agent $i$ is a function $u_i: A \rightarrow \mathds{R}$.  We define $\mathcal{U}$ to be the set of all utility functions such that: $u_i(a) \neq u_i(a')$ whenever $a\neq a'$, $\min_{a \in A} u_i(a) = 0$, and $\max_{a \in A} u_i(a) = 1$. Thus we rule out indifferences and normalize utility. This simplifies arguments below. We write $u\equiv (u_1,u_2,\ldots,u_n)$ and $u_{-i}\equiv (u_j)_{j\neq i}$.\medskip

For every agent $i$, there is a non-empty and Borel-measurable set $\bf{U}_i\subseteq \mathcal{U}$ of utility functions that are possible utility functions of agent $i$. We allow for the possibility that $\mathbf{U}_i\neq \mathcal{U}$ to be able to capture assumptions such as the assumption that agents' utility functions are quasi-linear. We define $\mathbf{U}\equiv \bigtimes_{i\in I} \mathbf{U}_i$, and, for every $i\in I$, we define $\mathbf{U}_{-i} \equiv \bigtimes_{j\neq i} \mathbf{U}_j$.\medskip

For a given mechanism, for every $i$ and every $u_i\in \mathbf{U}_i$, we denote by $U\!D_i(u_i)$ the set of all strategies that are not weakly dominated for agent $i$ with utility function $u_i$, where weak dominance may be by a pure or by a mixed strategy. If $u\in \mathbf{U}$, we define $U\!D(u)\equiv \bigtimes_{i\in I}U\!D_i(u_i)$. For every $i \in I$ and every $u_{-i} \in \mathbf{U}_{-i}$, we define $U\!D_{-i}(u_{-i})\equiv \bigtimes_{j\neq i} U\!D_j(u_j)$. To avoid tedious detail, we assume that for every agent $i\in I$ and every strategy $s_i\in S_i$, there is at least some $u_i\in \mathbf{U}_i$ such that $s_i\in U\!D_i(u_i)$.\medskip

A ``utility belief'' $\mu_i$ of agent $i$ is a Borel probability measure on $\mathbf{U}_{-i}$. We interpret $\mu_i$ as agent $i$'s ``first-order'' belief. Higher-order beliefs would be beliefs about other agents' beliefs about utility functions, etc. As indicated in the introduction, we want to focus on mechanisms in which higher-order beliefs play no role. Therefore, we don't formally define such beliefs here.\medskip

For any finite set (or Borel subset of a finite dimensional Euclidean space) $X$, we shall denote by $\Delta(X)$ the set of all (Borel) probability measures on $X$. The set of all possible utility beliefs of agent $i$ is some non-empty subset $\mathbf{M}_i$ of $\Delta(\mathbf{U}_{-i})$. We allow for the possibility that $\mathbf{M}_i\neq \Delta(\mathbf{U}_{-i})$ to be able to capture assumptions such as the assumption that every agent believes that the other agents' utility functions are stochastically independent.  We define $\mathbf{M}\equiv \bigtimes_{i\in I} \mathbf{M}_i$, and, for every $i\in I$, we define $\mathbf{M}_{-i} \equiv \bigtimes_{j\neq i} \mathbf{M}_j$, and we denote typical elements of these sets by $\mu$ and $\mu_{-i}$ respectively.\medskip

 A ``strategic belief'' $\hat \mu_i$ of agent $i$ is a probability measure on $S_{-i}$: $\hat \mu_i\in \Delta(S_{-i})$. Strategic beliefs are needed for agents to determine expected utility maximizing strategies. The next definition will describe how agents may derive a strategic belief from a utility belief. We assume that agents are certain that the other agents do not play weakly dominated strategies. Then, loosely speaking, a strategic belief can be obtained from a given utility belief by dividing the probability assigned to any utility function $u_j$ $(j\neq i)$ in some arbitrary way among the not weakly dominated strategies of agent $j$ with utility function $u_j$. We call a strategic belief that can be derived in this way from a utility belief ``compatible with the utility belief." Obviously, for a given utility belief, there may be many compatible strategic beliefs.  We formally define the compatibility of strategic beliefs with utility beliefs as follows:

\begin{definition} \label{compatible}
A strategic belief $\hat \mu_i$ is ``compatible with a utility belief $\mu_i$'' if there is a probability measure $\nu_i$ on $S_{-i}\times \mathbf{U}_{-i}$ that has support in $$\bigtimes\limits_{j\neq i}\left\{(s_j,u_j)\in S_j \times \mathbf{U}_j | s_j\in U\!D_j(u_j)\right\}$$ and that has marginals $\hat \mu_i$ on $S_{-i}$ and $\mu_i$ on $\mathbf{U}_{-i}$.
\end{definition}

In this definition, $\nu_i$ is agent $i$'s joint belief about strategies and utility functions of the other agents. Agent $i$'s certainty that the other agents don't play weakly dominated strategies is captured by the support restriction in Definition \ref{compatible}. The belief $\nu_i$ must also reflect the given utility belief $\mu_i$ of agent $i$, that is, $\nu_i$'s marginal on $\mathbf{U}_{-i}$ must be $\mu_i$. The marginal on $S_{-i}$ is then a compatible strategic belief.  We denote the set of all strategic beliefs that are compatible with a given utility belief $\mu_i$ by $\mathcal{M}_i(\mu_i)$.\medskip

Given a utility function $u_i\in \mathbf{U}_i$ and a strategic belief $\hat \mu_i\in \Delta(S_{-i})$ of agent $i$, we denote by $B\!R_i(u_i,\hat \mu_i)$ the set of all strategies in $U\!D_i(u_i)$ that maximize expected utility in $S_i$.\medskip

We are now ready to provide the key definition of this paper.

\begin{definition} \label{simple}
	
A mechanism is ``strategically simple'' with respect to $\mathbf{U}$ and $\mathbf{M}$\footnote{Whether a given mechanism is strategically simple or not depends on the domain of utility function $\mathbf{U}$ and the domain of first-order beliefs $\mathbf{M}$ that we study. For simplicity, we sometimes drop the quantifier ``with respect to $\textbf{U}$ and $\textbf{M}$'' when there is no confusion.} if for every agent $i$, every  utility function $u_i\in \mathbf{U}_i$, and every utility belief $\mu_i\in \mathbf{M}_i$,
\[
\bigcap\limits_{\hat \mu_i \in \mathcal{M}_i(\mu_i)} B\!R_i(u_i,\hat \mu_i)\neq \emptyset.
\]

\end{definition}

What we require here for every agent $i$, every utility function $u_i$ of agent $i$, and every utility belief $\mu_i$ of agent $i$, is that agent $i$ has at least one strategy that maximizes expected utility regardless of which compatible strategic belief $\hat \mu_i$ agent $i$ picks. Thus, there is no need for agent $i$ to try to distinguish more plausible from less plausible compatible strategic beliefs. If that was necessary, it may be helpful for agent $i$  to form higher-order beliefs. But if a mechanism is strategically simple, there is no benefit to agent $i$ from forming higher-order beliefs.

\begin{remark}
{\emph If agent $i$ with utility function $u_i$ has a weakly dominant strategy, then this strategy is included in $B\!R_i(u_i,\hat \mu_i)$ regardless of what $\hat \mu_i$ is, and the intersection referred to in Definition \ref{simple} is non-empty because it includes the dominant strategy. Dominant strategy mechanisms, in which all agents for all utility functions have dominant strategies, are therefore trivially ``strategically simple.''}
\end{remark}

\begin{remark}
{\emph In the definition of compatible strategic beliefs (Definition \ref{compatible}), we allow agents to incorporate correlations into their beliefs about the other agents' strategies that go beyond the correlations implied by correlations in utility beliefs and the requirement that not weakly dominated strategies are played. As is well known, without allowing for arbitrary correlations in strategic beliefs, the equivalence between not dominated strategies and expected utility maximizing strategies need not hold. This equivalence is invoked in our proofs. We have not pursued how our analysis would change if we did not allow such correlations.}
\end{remark}

\begin{remark}
{\emph One might conjecture that our definition of strategic simplicity of a mechanism is equivalent to the requirement that the mechanism, appropriately transformed into an incomplete information game, is dominance solvable in two steps, where the first step eliminates weakly dominated strategies and the second step eliminates strictly dominated strategies. A statement very similar to this is indeed true, except that we do not require that after two steps of the elimination procedure for every type\footnote{Here, a ``type'' must be interpreted as a pair consisting of a utility function and a utility belief.} a single strategy is left over, but rather, that every type has at least one strategy that is optimal for every strategic belief about the other types' remaining strategies. This perspective on our definition facilitates comparison with the requirement that every type have a dominant strategy, because that just means that the elimination of dominated strategies stops after one step. Nonetheless, we have found the definition used in this paper in terms of first-order beliefs expresses more directly the intuitive idea on which this paper is based.}
\end{remark}

Often a mechanism designer's interest is not in the mechanism itself, but in the outcomes that result when agents pick their strategies rationally. For strategically simple mechanisms, which strategy maximizes expected utility will depend not only on an agent's utility function, but also on this agent's utility belief. It is therefore natural to focus on correspondences that map utility functions and utility beliefs into sets of outcomes. We call such correspondences ``outcome correspondences.''

\begin{definition}
The ``outcome correspondence'' associated with a mechanism that is strategically simple with respect to $\textbf{U}$ and $\textbf{M}$ is the correspondence:
\[F: \mathbf{U}\times \mathbf{M} \twoheadrightarrow A\]
defined by:
\[F(u,\mu)\equiv g\left(\bigtimes_{i\in I}\left(\bigcap\limits_{\hat \mu_i \in \mathcal{M}_i(\mu_i)} B\!R_i(u_i,\hat \mu_i)\right)\right) \mbox{ for all } (u,\mu)\in \mathbf{U}\times \mathbf{M}.\]
\end{definition}\medskip

The following definition will be useful:

\begin{definition}
Two strategically simple mechanisms are ``equivalent'' if the outcome correspondences associated with these two mechanisms are the same.
\end{definition}

The literature already contains the concept of a ``social choice correspondence.'' Social choice correspondences are similar to ``outcome correspondences,'' except that their domain consists of profiles of utility functions (or preferences) only, and does not include profiles of first-order beliefs. Focusing on utility functions in the domain seems natural if one gives the correspondence a normative interpretation, as a reflection of the outcomes that the mechanism designer regards desirable. Here, however, we give our correspondence a positive interpretation: it is a description of the end result of a given mechanism. By including the first-order beliefs in the domain, we give a more detailed description of the consequences resulting from rational choice in a given mechanism than we would obtain if only preference profiles were in the domain.\medskip

Our definition of outcome correspondences assumes that for any given utility function $u_i$ and utility belief $\mu_i$ agent $i$ will only choose strategies from the set  $$\bigcap_{\hat \mu_i \in \mathcal{M}_i (\mu_i)} B\!R_i(u_i,\hat \mu_i).$$ This implies that an agent $i$ will not choose a strategy if it is a best response to only some strategic beliefs compatible with the agent's given utility belief, but not to all compatible strategic beliefs. This assumption is in the spirit of our basic hypothesis that agents find it costly to refine their strategic beliefs, beyond making it compatible with their utility belief, and will avoid doing so if they can.\medskip

One can interpret singleton-valued outcome correspondences as direct mechanisms in which agents report their utility functions and their utility beliefs.  Using this interpretation, one can then ask whether a revelation principle holds, i.e.: If a singleton-valued outcome correspondence is implemented by a strategically simple mechanism, is then the direct mechanism defined by the outcome correspondence itself a strategically simple mechanism, and is truth telling an optimal strategy for all utility functions and first-order beliefs, regardless of higher-order beliefs, in this mechanism? Unfortunately, a technical problem that we encounter when asking this question is that we have defined strategically simple mechanisms only for the case that a mechanism has a finite strategy set for each agent, whereas we have allowed the sets of pairs of utility functions and beliefs to be infinite, and thus the direct mechanism may have infinite strategy sets. This problem is bypassed if attention is restricted to the case of finite $\mathbf{U} \times \mathbf{M}$. In this case, one can then verify that the revelation principle as described above holds. Some of our analysis below is specifically about the case of infinite $\mathbf{U} \times \mathbf{M}$ and finite mechanisms, and therefore the revelation principle will not play an important role in our analysis, in contrast to the conventional theory of mechanism design.\medskip

The formal framework developed in this section suggests two possible focuses for our analysis: the characterization of strategically simple mechanisms and the characterization of outcome correspondences that are associated with strategically simple mechanisms. We find it convenient to focus on mechanisms themselves. But we shall explain some of the implications of our results for outcome correspondences.

\section{An Example} \label{sec:examples}

\smallskip

In this section, we provide an example to illustrate the mechanics of our definition of strategically simple mechanisms. Consider the mechanism in Figure \ref{fig:example}.\footnote{This mechanism is of special importance in Section \ref{sec:voting}.} In this mechanism, agent 1 (he, the row player) and agent 2 (she, the column player) collectively choose an outcome from three alternatives $\{a,b,c\}$. In what follows, we shall apply Definition \ref{simple} and show that this mechanism is strategically simple with respect to the full domain $\mathcal{U}^2 \times (\Delta(\mathcal{U}))^2.$

\begin{figure}[h]
\begin{center}
\begin{tabular}{|c|c|c|c|c|}\hline
& $L$ & $C1$ & $C2$ & $R$ \\ \hline
$T$ & $a$ & $a$&$a$&$a$ \\ \hline
$M1$ & $a$ & $b$&$a$&$b$ \\ \hline
$M2$ & $a$ & $b$&$c$&$b$ \\ \hline
$B$ & $a$ & $b$&$c$&$c$ \\ \hline
\end{tabular}
\end{center}
\caption{A strategically simple mechanism} \label{fig:example}
\end{figure}

We have to show that each agent can find an expected utility maximizing strategy on the basis of first-order belief alone. This is obvious for both agents if they rank $a$ or $b$ highest because then they have weakly dominant strategies. Also if $u_1(c)>u_1(b)>u_1(a)$ agent 1 has a weakly dominant strategy, as has agent 2 if $u_2(c)>u_2(a)>u_2(b)$.\medskip

Thus, we have only two cases with multiple not weakly dominated strategies: $T$ and $B$ are not weakly dominated for agent 1 if $u_1(c)>u_1(a)>u_1(b)$, and $C2$ and $R$ are not weakly dominated for agent 2 if $u_2(c)>u_2(b)>u_2(a)$. For any first-order belief $\mu_1$ of agent 1, the set of compatible strategic beliefs $\mathcal{M}_1 (\mu_1)$ is:\footnote{In what follows we write $\mu_1 (abc)$ to denote the probability that agent 1 attaches to agent 2's utility satisfying $u_2(a)>u_2(b)>u_2(c)$, and we use analogous notation for the probabilities of other orderings of $a$, $b$, and $c$.}

\begin{align*}
\mathcal{M}_1 (\mu_1) = \Big\{ \hat{\mu}_1: \enspace & \hat{\mu}_1 (L) = \mu_1 (abc) + \mu_1 (acb), \\
& \hat{\mu}_1 (C1) = \mu_1 (bac) + \mu_1 (bca), \\
& \hat{\mu}_1 (C2) = \mu_1 (cab) + x, \\
& \hat{\mu}_1 (R) = \mu_1 (cba) - x, \\
& \mbox{ where } 0 \leq x \leq \mu_1 (cba) \Big\}.
\end{align*}

The multiplicity of compatible strategic beliefs is because for agent 2 with utility $u_2(c)>u_2(b)>u_2(a)$, both $C2$ and $R$ are not weakly dominated. Therefore, agent $1$ cannot pin down the strategic belief on the basis of his first-order belief alone. Nevertheless, in this mechanism, agent 1 with utility $u_1(c)>u_1(a)>u_1(b)$ only cares about the probability of agent 2 playing $C1$ and the total probability of agent 2 playing $C2$ and $R$. Hence, despite the multiplicity of compatible strategic beliefs, agent 1 with utility $u_1(c)>u_1(a)>u_1(b)$ has a strategy that is a best response to any of the compatible strategic beliefs. In other words, agent 1 can determine his expected utility maximizing strategy on the basis of his first-order belief alone. Using similar arguments, one can check that agent 2 with preference $cba$ can determine her expected utility maximizing strategy on the basis of her first-order belief alone.

\section{Characterization} \label{sec:characterization}

\smallskip

We now provide a characterization result for strategically simple mechanisms under a richness assumption regarding the sets of relevant utility functions and first-order beliefs. We denote by $\mathcal{R}$ the set of all reflexive, complete, transitive, and anti-symmetric preference relations over the set of alternatives $A$. A generic element of $\mathcal{R}$ will be denoted by $R_i$, where the index refers to agent $i$, and we denote by $P_i$ the asymmetric part of $R_i$. Every utility function $u_i \in \mathcal{U}$ induces a preference relation in the following way: $a R_i b \Leftrightarrow u_i(a) \geq u_i(b)$ and $a P_i b \Leftrightarrow u_i(a) > u_i(b)$. We denote by $\mathcal{U}(R_i)$ the set of all utility functions in $\mathcal{U}$ that induce $R_i$.\medskip

Next, we extend the notion of weak dominance to the case that only pure strategy dominance is considered. In this case, only the  preference $R_i$ induced by agent $i$'s utility function $u_i$ matters.

\begin{definition}
Let $R_i \in \mathcal{R}$. A strategy $s_i \in S_i$ is ``weakly dominated given $R_i$'' if there is another strategy $\hat s_i\in S_i$ such that
\[ g(\hat s_i,s_{-i}) \, R_i  \, g(s_i,s_{-i})\]
for all $s_{-i}\in S_{-i}$ and
\[g(\hat s_i,s_{-i}) \, P_i \, g(s_i,s_{-i})\]
for some $s_{-i}\in S_{-i}$.
\end{definition}

For any $R_i \in \mathcal{R}$, we denote by $U\!D_i(R_i)\subseteq S_i$ the set of strategies of agent $i$ that are not weakly dominated given $R_i$. For any $R = (R_1, R_2, \ldots, R_n) \in \mathcal{R}^n$, we define $U\!D_{-i}(R_{-i}) \equiv \bigtimes_{j\neq i} U\!D_j(R_j)$ for every $i \in I$.\medskip

\begin{theorem} \label{thm:MainR}
Suppose for every agent $i$ there is a non-empty set $\mathcal{R}_i\subseteq \mathcal{R}$ such that $\mathbf{U}_i=\bigcup_{R_i \in \mathcal{R}_i}\mathcal{U}(R_i)$, and suppose $\mathbf{M}_i = \Delta(\mathbf{U}_{-i})$ for all $i\in I$. Then a mechanism is strategically simple with respect to $\textbf{U}$ and $\textbf{M}$ if and only if for every $R \in \bigtimes_{i\in I} \mathcal{R}_i$  there is an agent $i^*\in I$ such that for every strategy $s_{i^*} \in U\!D_{i^*}(R_{i^*} )$ there is an alternative $a\in A$ such that:
\[ g(s_{i^*},s_{-i^*})=a \mbox{ for all } s_{-i^*} \in U\!D_{-i^*}(R_{-i^*}).\]
\end{theorem}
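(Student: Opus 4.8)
The plan is to prove the two implications separately; the ``if'' direction (local dictatorship $\Rightarrow$ strategic simplicity) is the routine one, and the ``only if'' direction is the substantive one.

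\textbf{From local dictatorship to strategic simplicity.} Assume the right-hand condition, so that at every profile there is a local dictator. Fix $i$, a utility $u_i\in\mathbf{U}_i$ inducing $R_i$, and a belief $\mu_i\in\mathbf{M}_i=\Delta(\mathbf{U}_{-i})$. The engine is a decomposition of expected utility. For any compatible $\hat\mu_i$ with witness $\nu_i$, and any candidate $s_i\in U\!D_i(u_i)\subseteq U\!D_i(R_i)$, I split $\int u_i(g(s_i,s_{-i}))\,d\nu_i$ according to the preference profile $R_{-i}$ induced by the realized $u_{-i}$. On the event where the local dictator at $(R_i,R_{-i})$ is $i$, the support restriction $s_{-i}\in U\!D_{-i}(u_{-i})\subseteq U\!D_{-i}(R_{-i})$ forces $g(s_i,s_{-i})$ to equal a value $a_{R_{-i}}(s_i)$ independent of $s_{-i}$; on the complementary event some $j\neq i$ dictates, and because $s_i\in U\!D_i(R_i)$ the outcome does not depend on $s_i$ at all. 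Hence, for $s_i\in U\!D_i(u_i)$, $\int u_i(g(s_i,s_{-i}))\,d\nu_i=V_i(s_i)+C(\nu_i)$, where $V_i(s_i)=\int_{\{i\text{ dictates}\}}u_i(a_{R_{-i}}(s_i))\,d\mu_i$ depends only on the $\mathbf{U}_{-i}$-marginal $\mu_i$, and $C(\nu_i)$ does not depend on $s_i$. Since the integrand depends only on $s_{-i}$, the value is witness-independent, so $B\!R_i(u_i,\hat\mu_i)=\argmax_{s_i\in U\!D_i(u_i)}V_i(s_i)$ for every compatible $\hat\mu_i$, using the standard fact that the maximum of expected utility over $S_i$ is attained on $U\!D_i(u_i)$. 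This set is nonempty and independent of $\hat\mu_i$, so the intersection in Definition \ref{simple} is nonempty.

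\textbf{From strategic simplicity to local dictatorship.} I argue the contrapositive: if no agent dictates at some $R\in\bigtimes_i\mathcal{R}_i$, the mechanism is not strategically simple. The first step is a reduction. For a point-mass belief $\mu_i=\delta_{u_{-i}}$ with $u_{-i}\in\mathcal{U}(R_{-i})$, the set of compatible strategic beliefs is exactly $\Delta(U\!D_{-i}(u_{-i}))$ (the support condition forces each coordinate into $U\!D_j(u_j)$ but permits arbitrary correlation). Consequently any strategy in $\bigcap_{\hat\mu_i}B\!R_i(u_i,\hat\mu_i)$ must maximize $u_i(g(\cdot,s_{-i}))$ simultaneously for every $s_{-i}\in U\!D_{-i}(u_{-i})$; as this simultaneous-best-response property is ordinal, strategic simplicity forces, for every $i$ and every $u_{-i}\in\mathcal{U}(R_{-i})$, that $\bigcap_{s_{-i}\in U\!D_{-i}(u_{-i})}\argmax_{s_i\in S_i}u_i(g(s_i,s_{-i}))\neq\emptyset$. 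It therefore suffices to exhibit a single agent $i$ and a single cardinalization $u_{-i}$ for which this intersection is empty.

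\textbf{The combinatorial core and the main obstacle.} I expect the crux to be the following claim: if no agent dictates at $R$, then for some agent $i$ there are two opponent profiles $s_{-i},s_{-i}'\in U\!D_{-i}(R_{-i})$ that are simultaneously admissible for a common $u_{-i}\in\mathcal{U}(R_{-i})$ and whose $R_i$-best-response sets are disjoint. Non-dictatorship supplies, for each agent, an own-strategy whose outcome varies over the opponents' undominated profiles, and the plan is to use the absence of \emph{any} dictator to assemble such a ``crossing'' pair and then invoke the reduction above. Two difficulties sit at the heart of the matter. First, one must bridge the gap between the pure-dominance sets $U\!D(R)$ of the statement and the mixed-dominance sets $U\!D(u)$ of the definition: I will need that every $s_i\in U\!D_i(R_i)$ is a best response to some full-support belief for a suitable $u_i\in\mathcal{U}(R_i)$, and, more delicately, that the two conflicting opponent profiles can be made admissible for the \emph{same} $u_{-i}$; this is exactly where the richness hypothesis $\mathbf{U}_i=\bigcup_{R_i}\mathcal{U}(R_i)$ is essential. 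Second, for $n>2$ agents I plan to reduce to the interaction of $i$ with a single other agent $j$ by fixing the remaining agents at suitable undominated strategies, chosen so as not to accidentally restore a dictatorship in the reduced problem. Establishing this combinatorial claim—above all the common-admissibility of the conflicting pair—is the step I expect to be hardest.
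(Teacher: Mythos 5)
Your ``if'' direction is correct and is essentially the paper's own argument: partition the opponents' utility profiles according to whether agent $i$ is the local dictator at the induced preference profile, observe that on the first event the outcome depends only on $s_i$ (given that opponents play inside $U\!D_{-i}(u_{-i})\subseteq U\!D_{-i}(R_{-i})$) and on the second event it does not depend on $s_i$ at all, and conclude that the ranking of the strategies in $U\!D_i(u_i)$ is the same under every compatible strategic belief.

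The ``only if'' direction, however, rests on a combinatorial core that is false as stated. Consider two agents, $A=\{a,b,c\}$, $\mathcal{R}_1=\mathcal{R}_2=\{abc,\,cba\}$ (the theorem permits restricted domains), and the mechanism
\[
\begin{array}{c|ccc}
 & s_2^1 & s_2^2 & s_2^3 \\ \hline
s_1^1 & b & a & b \\
s_1^2 & c & b & a \\
s_1^3 & b & c & b
\end{array}
\]
where rows are agent 1's strategies and columns agent 2's. One checks that there are no duplicate strategies and that every strategy is undominated (even against mixtures) for some utility function in the domain, so all standing assumptions hold. At $R=(abc,cba)$ one computes $U\!D_1(abc)=\{s_1^1,s_1^2\}$ and $U\!D_2(cba)=\{s_2^1,s_2^2\}$; the restricted game is $\bigl(\begin{smallmatrix} b & a\\ c & b\end{smallmatrix}\bigr)$, so \emph{neither} agent is a local dictator at $R$. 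Yet there is no crossing pair at $R$: agent 1's best-response sets are $\{s_1^1,s_1^3\}$ against $s_2^1$ and $\{s_1^1\}$ against $s_2^2$, intersecting in $s_1^1$; agent 2's are $\{s_2^1,s_2^3\}$ against $s_1^1$ and $\{s_2^1\}$ against $s_1^2$, intersecting in $s_2^1$. Since $U\!D_{-i}(u_{-i})\subseteq U\!D_{-i}(R_{-i})$ for every cardinalization, every point-mass belief at this profile admits a robust best response ($s_1^1$, resp.\ $s_2^1$), so your contrapositive cannot get started at the very profile where dictatorship fails.

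The example also shows why this is a structural problem with the point-mass reduction rather than a fixable detail. The mechanism above is indeed not strategically simple (as Theorem \ref{thm:MainR} demands), but the violation is witnessed only at the \emph{other} profile $(abc,abc)$: there $U\!D_2(abc)=\{s_2^2,s_2^3\}$ and agent 1's best-response sets $\{s_1^1\}$ and $\{s_1^2\}$ are disjoint. Thus the reduction to point-mass beliefs, while a sound necessary condition, discards exactly the information that forces dictatorship profile by profile; any repair would need either a global argument propagating non-dictatorship at one profile into a crossing at another, or non-degenerate beliefs. The paper's proof takes the latter route: it first proves that any two strategies which are robust best responses to (possibly different) beliefs assigning positive probability to a fixed $u_{-i}$ must have outcome differences that are constant on $U\!D_{-i}(u_{-i})$, linking the two beliefs by a convex path and using upper hemicontinuity of the best-response correspondence; it then constructs, via a concavification argument (the lemma of B\"orgers (1993), Pearce's lemma, Weinstein's result, and an extreme-point argument), a single cardinalization $u_i^*$ of $R_i$, with all utility differences distinct, under which every element of $U\!D_i(R_i)$ is a unique best response to a full-support belief. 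Together these yield a dictator-or-no-influence dichotomy for each agent at each profile, from which the theorem follows. Note that this second construction is precisely the ``common admissibility'' lemma you defer as the hardest step; so even apart from the false crossing claim, the central analytic ingredient of the necessity proof is absent from your proposal.
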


In words, the condition that is necessary and sufficient for strategically simple mechanisms says the following. Whenever we fix a vector of preferences $(R_1, R_2, \ldots, R_n) \in \bigtimes_{i \in I} \mathcal{R}_i$ and consider the mechanism restricted to the strategy sets $U\!D_i(R_i)$ for all $i\in I$, then, in the restricted mechanism, some agent $i^*$ is a dictator. That is, for each of the alternatives that are possible when agents choose their strategies from $U\!D_i(R_i)$, agent $i^*$ has a strategy that enforces that alternative if all other agents choose from $U\!D_i(R_i)$, and each of agent $i^*$'s strategies enforces some alternative. We call agent $i^*$ a ``local dictator,'' because in the restricted game agent $i^*$ dictates which alternative is chosen.\medskip

The theorem applies only to certain domains of utility functions and beliefs. Specifically, the theorem assumes that for each agent the set of relevant utility functions is the set of all utility functions that induce some preference from a given set of preferences, and that for each agent the relevant beliefs are all beliefs that have support in the set of considered utility functions. We thus allow restricted domains of strategic simplicity, but domains that still satisfy strong ``richness'' conditions.  In some settings, such as voting settings, these assumptions may be plausible, whereas in other settings, they may be less desirable. For example, when the allocation of money is part of the specification of alternatives, our assumption on the set of utility functions rules out that only risk neutral agents are considered, even though that is a popular case in the mechanism design literature. The assumption on the set of relevant beliefs rules out that each agent regards the other agents' preferences as stochastically independent. Our proof of Theorem \ref{thm:MainR} makes strong use of these assumptions, and we have not yet found useful results for smaller domains.\medskip

Theorem \ref{thm:MainR} characterizes strategically simple mechanisms in terms of the local dictatorship property. The local dictatorship property is useful in several aspects. First, it provides a powerful tool to check whether a given mechanism is strategically simple. Second, the local dictatorship property can be used to establish several further properties of strategically simple mechanisms. These properties are contained in Appendix \ref{app:structure}. Third, we use the local dictatorship property to further study strategically simple mechanisms in two applications. In the bilateral trade environment that we study in Section \ref{sec:bilateral-trade}, we fully characterize the class of all strategically simple mechanisms. In the voting environment that we study in Section \ref{sec:voting}, we fully characterize the class of all strategically simple mechanisms when there are two agents and three alternatives. All these results build on the local dictatorship property that we established in Theorem \ref{thm:MainR}.\medskip


Furthermore, the local dictatorship property implies that certain outcome correspondences cannot be associated with any strategically simple mechanism. Loosely speaking, the set of alternatives can depend on at most one agent's vNM utilities and utility beliefs when we hold a preference profile $R$ fixed. We formalize this property in the following definition. We say that a profile of utility functions $u_{-i}$ induces a profile of preference relations $R_{-i}$ if for every $j \in I \setminus \{i\}$, $u_j$ induces $R_j$, and  that a profile of utility functions $u$ induces a profile of preference relations $R$ if for every $i \in I$, $u_i$ induces $R_i$.

\begin{definition}
Let $i\in I$ and $R\in \mathcal{R}^n$. An outcome correspondence $F: \mathbf{U} \times \mathbf{M} \twoheadrightarrow A$ is ``non-responsive to the vNM utilities and utility beliefs of agents $j\neq i$ at $R$'' if, whenever $u_i\in \mathbf{U}_i$ induces $R_i$,  $u_{-i},\hat u_{-i}\in \mathbf{U}_{-i}$ both induce $R_{-i}$, $\mu_i\in \mathbf{M}_i$, and $\mu_{-i}, \hat \mu_{-i}\in \mathbf{M}_{-i}$, then:
\[F((u_i,u_{-i}),(\mu_i, \mu_{-i}))=F((u_i,\hat u_{-i}), (\mu_i, \hat \mu_{-i})).\]
\end{definition}

In words, the outcome correspondence is non-responsive to agents $j\neq i$ at $R$ if, as long as agents' utility functions represent the preferences in $R$, then the von Neumann Morgenstern utility functions and beliefs of agents $j\neq i$ have no impact on the set of outcomes. The following result follows directly from Theorem 1. We don't give a formal proof.

\begin{corollary}
Suppose for every agent $i$ there is a non-empty set $\mathcal{R}_i\subseteq \mathcal{R}$ such that $\mathbf{U}_i=\bigcup_{R_i \in \mathcal{R}_i}\mathcal{U}(R_i)$, and suppose $\mathbf{M}_i = \Delta(\mathbf{U}_{-i})$ for all $i\in I$. If an outcome correspondence $F: \mathbf{U} \times \mathbf{M} \twoheadrightarrow A$ can be associated with a mechanism that is strategically simple, then for every preference profile $R\in \mathcal{R}^n$ there is some agent $i^*$ such that the correspondence $F$ is non-responsive to  the vNM utilities and utility beliefs of agents $j\neq i^*$ at $R$.

\end{corollary}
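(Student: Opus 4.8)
The plan is to derive the corollary directly from the local-dictatorship characterization in Theorem \ref{thm:MainR}, as the authors suggest. Fix a preference profile $R \in \mathcal{R}^n$. If $R_i \notin \mathcal{R}_i$ for some $i$, then under the maintained assumption $\mathbf{U}_i = \bigcup_{R_i \in \mathcal{R}_i}\mathcal{U}(R_i)$ there is no $u_i \in \mathbf{U}_i$ inducing $R_i$, so the hypotheses in the definition of non-responsiveness cannot all be met and the conclusion holds vacuously for any choice of $i^*$. Hence I may assume $R \in \bigtimes_{i\in I}\mathcal{R}_i$. Since the mechanism is strategically simple, Theorem \ref{thm:MainR} supplies a local dictator $i^* = i^*(R)$: for every $s_{i^*}\in U\!D_{i^*}(R_{i^*})$ there is an alternative, call it $a(s_{i^*})$, with $g(s_{i^*},s_{-i^*}) = a(s_{i^*})$ for all $s_{-i^*}\in U\!D_{-i^*}(R_{-i^*})$. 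I will show that $F$ is non-responsive to the agents $j\neq i^*$ at $R$ for exactly this $i^*$.

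The key preliminary step is to bridge the two dominance notions. The outcome correspondence is built from the mixed-dominance sets $U\!D_i(u_i)$, whereas Theorem \ref{thm:MainR} is phrased with the pure-dominance sets $U\!D_i(R_i)$. I claim that if $u_i$ induces $R_i$ then $U\!D_i(u_i)\subseteq U\!D_i(R_i)$: a strategy weakly dominated in the pure sense given $R_i$ is, because $u_i$ represents $R_i$, weakly dominated by the same pure strategy given $u_i$, so anything surviving mixed dominance under $u_i$ a fortiori survives pure dominance under $R_i$. Consequently $B\!R_i(u_i,\hat\mu_i)\subseteq U\!D_i(u_i)\subseteq U\!D_i(R_i)$ for every strategic belief $\hat\mu_i$, and this containment is preserved under intersection over $\hat\mu_i\in\mathcal{M}_i(\mu_i)$.

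Now compute $F$ explicitly. Write $T_i(u_i,\mu_i)\equiv\bigcap_{\hat\mu_i\in\mathcal{M}_i(\mu_i)}B\!R_i(u_i,\hat\mu_i)$, which is non-empty by Definition \ref{simple}, so that $F(u,\mu)=g\big(\bigtimes_{i\in I}T_i(u_i,\mu_i)\big)$. Take $u_{i^*}$ inducing $R_{i^*}$, utility profiles $u_{-i^*},\hat u_{-i^*}$ both inducing $R_{-i^*}$, and beliefs $\mu_{i^*},\mu_{-i^*},\hat\mu_{-i^*}$. By the containment above, $\bigtimes_{j\neq i^*}T_j$ is a non-empty subset of $U\!D_{-i^*}(R_{-i^*})$, for both the $u_{-i^*}$ and the $\hat u_{-i^*}$ specifications. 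Applying the local-dictatorship property, for each $s_{i^*}\in T_{i^*}(u_{i^*},\mu_{i^*})\subseteq U\!D_{i^*}(R_{i^*})$ the outcome $g(s_{i^*},s_{-i^*})$ equals $a(s_{i^*})$ for every $s_{-i^*}\in\bigtimes_{j\neq i^*}T_j$, whence
\[
F\big((u_{i^*},u_{-i^*}),(\mu_{i^*},\mu_{-i^*})\big)=\{\,a(s_{i^*}): s_{i^*}\in T_{i^*}(u_{i^*},\mu_{i^*})\,\}.
\]
The right-hand side depends only on $u_{i^*}$ and $\mu_{i^*}$, since the function $a(\cdot)$ is fixed by $R$ and $g$ alone. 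The identical expression therefore results when $(u_{-i^*},\mu_{-i^*})$ is replaced by $(\hat u_{-i^*},\hat\mu_{-i^*})$, which is precisely the non-responsiveness asserted.

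The only point requiring genuine care is the pure/mixed dominance mismatch, and the containment $U\!D_i(u_i)\subseteq U\!D_i(R_i)$ resolves it; this is what lets the dictatorship property, stated for the $R$-undominated sets, be applied to the best-response sets that define $F$. Non-emptiness of each $T_i$ is nothing but strategic simplicity itself, and it is what guarantees that the dictator's induced outcome set is genuinely pinned down rather than empty. Everything else is bookkeeping, which is presumably why the authors omit a formal proof.
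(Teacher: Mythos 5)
Your proof is correct and takes exactly the route the paper intends: the paper gives no formal proof, saying only that the corollary ``follows directly from Theorem \ref{thm:MainR},'' and your argument supplies the details of precisely that derivation. The two points you add beyond the paper's remark --- the containment $U\!D_i(u_i)\subseteq U\!D_i(R_i)$ bridging mixed-strategy dominance to the pure-dominance sets appearing in Theorem \ref{thm:MainR}, and the vacuous handling of profiles $R\notin\bigtimes_{i\in I}\mathcal{R}_i$ --- are the right bookkeeping steps, and your explicit formula $F(u,\mu)=\{a(s_{i^*}):s_{i^*}\in T_{i^*}(u_{i^*},\mu_{i^*})\}$ correctly establishes non-responsiveness.
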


Agent $i^*$ in this corollary is obviously the local dictator at $R$. This corollary implies, for example, that it is impossible to find a strategically simple mechanism that on its whole domain implements alternatives that maximize ex post utilitarian welfare, that is, the sum of agents' utilities.\medskip

Before we move on to the applications, we now partition the set of all strategically simple mechanisms on domains that satisfy the assumptions of Theorem \ref{thm:MainR} into two subsets. This provides a further understanding of strategically simple mechanisms. If the assumptions of Theorem \ref{thm:MainR} hold, then, for any $R\in \mathcal{R}_1\times \mathcal{R}_2 \times \ldots \times \mathcal{R}_n$, we denote by $I^*(R)$ the set of local dictators at $R$.

\begin{definition}
Suppose for every agent $i$ there is a non-empty set $\mathcal{R}_i\subseteq \mathcal{R}$ such that $\mathbf{U}_i=\bigcup_{R_i \in \mathcal{R}_i}\mathcal{U}(R_i)$, and suppose $\mathbf{M}_i = \Delta(\mathbf{U}_{-i})$ for all $i\in I$. Then
a strategically simple mechanism with respect to $\textbf{U}$ and $\textbf{M}$ is of ``type 1'' if:
\[
\bigcap\limits_{R \in \bigtimes_{i\in I} \mathcal{R}_i}I^*(R)\neq \emptyset.
\]
Otherwise, it is of ``type 2.''
\end{definition}

In words, in a type 1 strategically simple mechanism, there is an agent who is local dictator at \textit{all} preference profiles, whereas this is not the case for type 2 strategically simple mechanisms.\medskip

Type 1 strategically simple mechanisms can be easily characterized. To state this characterization, we first introduce a class of mechanisms that we call ``delegation mechanisms.''

\begin{definition}\label{DM}
A mechanism is a ``delegation mechanism'' if it is the reduced normal form of an extensive form mechanism of the following type: First, some agent $i^*\in I$ chooses an element $s_{i^*}$ from some finite set $S_{i^*}$. All agents observe $s_{i^*}$. Then, for every $s_{i^*}$, a subgame with simultaneous moves follows in which the players are the agents in $I\setminus \{i^*\}$, and in which a dominant strategy mechanism with outcomes in $A$ is played, where the mechanism may depend on $s_{i^*}$.
\end{definition}

In a delegation mechanism, the mechanism designer delegates the choice of the mechanism to some agent $i^*$. This agent has to choose a mechanism from a given set of dominant strategy mechanisms that the mechanism designer has specified. Clearly, in a delegation mechanism, all agents except $i^*$ have dominant strategies, and therefore do not even have to form first-order beliefs, and for agent $i^*$ therefore only first-order belief is relevant to the optimal choice. It is worth noting that dominant strategy mechanisms, in which all agents, for all relevant utility functions, have dominant strategies, are trivially delegation mechanisms.

\begin{theorem} \label{thm:MM}
Suppose for every agent $i$ there is a non-empty set $\mathcal{R}_i\subseteq \mathcal{R}$ such that $\mathbf{U}_i=\bigcup_{R_i \in \mathcal{R}_i}\mathcal{U}(R_i)$, and suppose $\mathbf{M}_i = \Delta(\mathbf{U}_{-i})$ for all $i \in I$. \smallskip
\begin{enumerate}
\item[(1)] Every delegation mechanism is a type 1 strategically simple mechanism. \smallskip
\item[(2)] For every type 1 strategically simple mechanism, there is an equivalent delegation mechanism.
\end{enumerate}
\end{theorem}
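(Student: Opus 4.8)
The plan is to prove the two parts separately, using throughout the local-dictatorship characterization of Theorem \ref{thm:MainR}.

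For part (1), I would fix a delegation mechanism with delegate $i^*$ and subgame mechanisms $(M_{s_{i^*}})_{s_{i^*}\in S_{i^*}}$. The first observation is that in its reduced normal form every agent $j\neq i^*$ has a weakly dominant strategy for each utility function, namely the strategy that plays, in each subgame $M_{s_{i^*}}$, the dominant strategy of $M_{s_{i^*}}$ for that utility. Because there are no indifferences and no duplicate strategies, any not-weakly-dominated strategy of $j$ must be outcome-equivalent to this one. Hence, fixing any $R$ and any $s_{i^*}\in U\!D_{i^*}(R_{i^*})$, I may replace each agent $j$'s strategy in $U\!D_j(R_j)$ by its dominant strategy without changing the outcome, so $g(s_{i^*},\cdot)$ is constant on $U\!D_{-i^*}(R_{-i^*})$. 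This is exactly the condition of Theorem \ref{thm:MainR} with dictator $i^*$; since the same $i^*$ works for every $R$, the mechanism is strategically simple and, by definition, of type 1.

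For part (2), let the mechanism be type 1 with common local dictator $i^*$. Using Theorem \ref{thm:MainR}, for every $s_{i^*}$ and every profile $R_{-i^*}$ the value $g(s_{i^*},s_{-i^*})$ is constant over $s_{-i^*}\in U\!D_{-i^*}(R_{-i^*})$; call it $h(s_{i^*},R_{-i^*})$, which is well defined since every strategy lies in some $U\!D_{i^*}(R_{i^*})$. I would then record the structure of optimal play: for each $j\neq i^*$ every strategy in $U\!D_j(u_j)$ is a best response to every compatible strategic belief, so $j$'s chosen set is all of $U\!D_j(u_j)$; and for $i^*$ the payoff of any $s_{i^*}\in U\!D_{i^*}(u_{i^*})$ against any compatible belief equals $V(s_{i^*}):=\mathbb{E}_{\mu_{i^*}}[u_{i^*}(h(s_{i^*},R_{-i^*}))]$ (with $R_{-i^*}$ induced by the drawn $u_{-i^*}$), whence strategic simplicity forces the chosen set $C_{i^*}(u_{i^*},\mu_{i^*})$ to equal $\{s_{i^*}\in U\!D_{i^*}(u_{i^*}):V(s_{i^*})=\max_{S_{i^*}}V\}$ and the outcome correspondence to be $F(u,\mu)=\{h(s_{i^*},R_{-i^*}):s_{i^*}\in C_{i^*}(u_{i^*},\mu_{i^*})\}$. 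I then construct the delegation mechanism $D$: the delegate $i^*$ chooses $s_{i^*}\in S_{i^*}$, after which each $j\neq i^*$ reports $R_j\in\mathcal{R}_j$ and the outcome is $h(s_{i^*},R_{-i^*})$. The key lemma is that each such subgame is a dominant strategy mechanism, i.e.\ that $h(s_{i^*},(R_j,R_{-\{i^*,j\}}))\,R_j\,h(s_{i^*},(R_j',R_{-\{i^*,j\}}))$ for all reports of the others. I would prove this by contradiction: assuming a strict reversal, I pick not-weakly-dominated witnesses $s_{i^*},(s_k)_{k\neq i^*,j},s_j,s_j'$ realizing the two outcomes in the original mechanism, take $\mu_j$ concentrated on utilities for which the point mass on $(s_{i^*},(s_k))$ is a compatible strategic belief, and note that any element of the (nonempty, by Definition \ref{simple}) set $\bigcap_{\hat\mu_j}B\!R_j(u_j,\hat\mu_j)$ must simultaneously produce the same outcome as $s_j$ (by local dictatorship) and weakly beat $s_j'$ (by being a best response), an impossibility. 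Truthful reporting is then weakly dominant against pure and hence mixed reports, so $D$ is a bona fide delegation mechanism.

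The final and hardest step is equivalence of $D$ with the original mechanism. Computing $F^D$ exactly as above gives $F^D(u,\mu)=\{h(s_{i^*},R_{-i^*}):s_{i^*}\in C^D_{i^*}(u_{i^*},\mu_{i^*})\}$ with the same $h$ and the same $V$, so everything reduces to proving $C_{i^*}=C^D_{i^*}$, that is, that $i^*$ has the same not-weakly-dominated strategies in the two mechanisms. The main obstacle is that weak dominance in the original is quantified over all profiles $s_{-i^*}\in S_{-i^*}$ whereas in $D$ it is quantified over all reports $r\in\bigtimes_{j\neq i^*}\mathcal{R}_j$, and since $\mu_{i^*}$ need not have full support, an a priori discrepancy between the two chosen sets could show up as a genuine difference of \emph{realized} outcomes at the true (possibly off-support) profile $R_{-i^*}$. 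I would close this by a column-matching argument: for any $s_{-i^*}$, choosing $r$ with $s_{-i^*}\in U\!D_{-i^*}(r)$, local dictatorship yields $g(t,s_{-i^*})=h(t,r)$ for every $t\in S_{i^*}$; and because every strategy of each $j$ is not-weakly-dominated for some preference, the family of columns $\{g(\cdot,s_{-i^*})\}_{s_{-i^*}\in S_{-i^*}}$ coincides as a set with the family $\{h(\cdot,r)\}_{r}$. Thus the weak-dominance inequalities (including the strictness requirement) defining $U\!D_{i^*}(u_{i^*})$ and $U\!D^D_{i^*}(u_{i^*})$ are literally the same, giving $C_{i^*}=C^D_{i^*}$ and hence $F=F^D$, which completes the proof.
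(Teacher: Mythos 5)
Your part (1) is fine (the paper dismisses it as obvious), and much of your part (2) is on solid ground: $h$ is well defined, every strategy in $U\!D_j(u_j)$ of each $j\neq i^*$ is indeed a best response to every compatible belief, and your key lemma (truthfulness is dominant in each subgame) is provable along the lines you sketch. One detail there is glossed: for the point mass on $(s_{i^*},(s_k))$ to be a compatible strategic belief you need utility functions \emph{representing the reported preferences} $R_k$ for which the witnesses $s_k$ are undominated in the mixed-dominance sense, not merely $s_k\in U\!D_k(R_k)$; this requires the B\"orgers (1993) lemma invoked in the paper's proof of Claim \ref{Concav}, or can be sidestepped by choosing $u_k\in\mathcal{U}(R_k)$ first and then $s_k\in U\!D_k(u_k)$.

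The genuine gap is your final column-matching step. By Definition \ref{DM}, a delegation mechanism is the \emph{reduced normal form of the extensive form} in which the agents $j\neq i^*$ observe $s_{i^*}$ before moving. Hence in $D$ an opponent's strategy is a contingent plan $f_j:S_{i^*}\to\mathcal{R}_j$, and the columns agent $i^*$ faces are all maps $t\mapsto h\bigl(t,(f_j(t))_{j\neq i^*}\bigr)$, not only the fixed-report columns $h(\cdot,r)$. So the two column families do \emph{not} coincide: $D$ has additional ``composite'' columns whenever at least two of $i^*$'s strategies lead to subgames in which the others' choices matter. This is not a technicality. Against contingent plans, weak dominance is a much more demanding relation (a mixture $\sigma$ can dominate $t$ only if, roughly, the worst outcome $\sigma$ can be held to when opponents react subgame by subgame is at least as good as the best outcome $t$ can attain), so $U\!D^D_{i^*}(u_{i^*})$ weakly contains, and can strictly contain, $U\!D_{i^*}(u_{i^*})$: a strategy of $i^*$ that is dominated in the original mechanism by a mixture which hedges across the opponents' reports is typically undominated in $D$. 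You yourself identified the resulting danger — a belief $\mu_{i^*}$ concentrated where that domination is not strict makes such a strategy a best response, and at an off-support true profile $R_{-i^*}$ it can then generate outcomes the original mechanism never generates — but your resolution of it rests on the false identity of the two dominance problems, so $C_{i^*}=C^D_{i^*}$, and hence $F=F^D$, does not follow. The paper proceeds differently at exactly this juncture: it first proves, from local dictatorship and the no-duplicates assumption, that every agent $j\neq i^*$ has a \emph{unique} undominated strategy for each preference (a fact you never establish), and then argues equivalence at the level of agent $i^*$'s expected outcomes; any complete argument must in addition confront the contingent-plan structure of $D$ head on — for instance by appealing to the paper's maintained assumption that every strategy is undominated for some admissible utility function, an assumption the never-truthful contingent plans in $D$ violate — and your write-up contains no step of this kind.
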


We do not have a parallel result for type 2 strategically simple mechanisms in the general framework. But we shall present a characterization of type 2 strategically simple mechanisms in the bilateral trade environment that we study in the next section, and also a characterization of type 2 strategically simple mechanisms in the voting problem in Section \ref{sec:voting} when there are 2 agents and 3 alternatives.

\section{Bilateral Trade} \label{sec:bilateral-trade}

\smallskip

We first consider an example of an environment in which outcomes include money payments, and in which it is therefore natural to restrict attention to preferences that are monotonically increasing in money, and to beliefs that attach probability 1 to preferences that are monotonically increasing in money. The set of agents is: $I=\{S,B\}$, where $S$ is the seller, and $B$ is the buyer. The set of outcomes is: $A=\{\phi \} \cup  T$, where ``$\phi$'' stands for ``no trade,'' and  $T$ is a finite subset of $\mathds{R}_{++}$. An outcome $t\in T$ corresponds to trade at price $t$. We require trade to be voluntary. We refer to any mechanism for this setting as a ``bilateral trade mechanism'' if each agent has a strategy that enforces the no trade outcome.\medskip

The preferences $R_S$ over $A$ that we consider for the seller are indexed by some value $v_S>0$, and the preferences $R_B$ over $A$ that we consider for the buyer are indexed by some value $v_B>0$. We assume that for $i = S, B$ the set of possible values of $v_i$ is a finite subset $V_i$ of $\mathds{R}_{++}$ with the following properties: $\min V_i < \min T$, $\max V_i > \max T$, and $V_i \cap T = \emptyset$ for $i = S, B$. The preference with index $v_S$ is such that the seller prefers outcome $\phi$ to outcome $t$ if and only if $t < v_S$, and the seller prefers larger elements of $T$ to smaller ones. The preference with index $v_B$ is such that the buyer prefers outcome $\phi$ to outcome $t$ if and only if $t > v_B$, and the buyer prefers smaller elements of $T$ to larger ones.\medskip

In the notation of Section \ref{sec:characterization}, we have now specified the sets $\mathcal{R}_i$ for $i = S, B$. The sets of admissible utility functions $\mathbf{U}_i$ and admissible beliefs $\mathbf{M}_i$ are as given in the first sentence of Theorem \ref{thm:MainR}. Note that the model that we have described does not assume quasi-linear preferences. Rather, arbitrary risk attitudes are allowed.\medskip

Theorem \ref{thm:MM} implies the following characterization of type 1 strategically simple bilateral trade mechanisms:

\begin{proposition} \label{BT1}
A bilateral trade mechanism is type 1 strategically simple if and only if it is equivalent to the normalform of a mechanism of the following type: Agents play a two-stage game of perfect information. \smallskip
\begin{enumerate}
\item[1.] Agent $i^*$ either chooses a price $t$ from some finite set $\hat T \subseteq T$, or chooses to reject trade. If agent $i^*$ rejects trade, then the game ends. No trade takes place, and no transfers are paid. Otherwise, Stage 2 is entered. \smallskip
\item[2.] Agent $-i^*$ accepts or rejects trade at the price $t$ proposed by agent $i^*$. If agent $-i^*$ accepts, then trade takes place, and the buyer pays the seller price $t$. Otherwise, no trade takes place, and no transfers are paid.
\end{enumerate}
\end{proposition}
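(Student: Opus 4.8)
The strategy is to invoke Theorem~\ref{thm:MM}, which reduces the characterization of type 1 strategically simple mechanisms to the characterization of delegation mechanisms (up to equivalence). The proposition will follow once I show that, in the bilateral trade environment, a delegation mechanism is (equivalent to) exactly the two-stage propose-accept/reject game described. Thus the proof splits into two directions: first, that every such two-stage game is a delegation mechanism (hence type 1 strategically simple by Theorem~\ref{thm:MM}(1)); second, that every delegation mechanism in this environment is equivalent to such a two-stage game (which, combined with Theorem~\ref{thm:MM}(2), gives the converse).

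**Easy direction.** For the first direction, I would take the two-stage game and verify it fits Definition~\ref{DM}. Here $i^*$ is the delegate who first chooses $s_{i^*} \in \hat T \cup \{\text{reject}\}$. After $i^*$'s move is observed, agent $-i^*$ plays a subgame in which the only remaining decision is accept-or-reject at a fixed price $t$ (or the trivial subgame when $i^*$ rejected). The key point is that each such subgame is a dominant strategy mechanism for the single agent $-i^*$: given the monotone preference $R_{-i^*}$ indexed by $v_{-i^*}$, accepting trade at price $t$ is dominant precisely when $-i^*$ prefers trade at $t$ to $\phi$ (i.e.\ $t \geq v_S$ if $-i^* = S$, or $t \leq v_B$ if $-i^* = B$), and rejecting is dominant otherwise. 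Since a one-player game with a well-defined optimal action is trivially a dominant strategy mechanism, the two-stage game is a delegation mechanism, and Theorem~\ref{thm:MM}(1) finishes this direction.

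**Hard direction.** The converse is the substantive part. Starting from an arbitrary delegation mechanism with delegate $i^*$, I must argue that after eliminating duplicate strategies and passing to the equivalent outcome correspondence, the only genuine structure is a price proposal by $i^*$ followed by an accept/reject by $-i^*$. The main obstacle is ruling out the apparent extra freedom in the definition of a delegation mechanism: $i^*$ chooses from an arbitrary finite set $S_{i^*}$, and each choice triggers an arbitrary dominant strategy mechanism among the remaining agents. I expect the argument to proceed by exploiting the structure of the bilateral trade environment heavily. Since there is only one other agent $-i^*$, every subgame is a single-agent dominant strategy mechanism, whose outcome as a function of $v_{-i^*}$ must be a monotone selection between $\phi$ and trade prices; the voluntary-trade requirement (each agent can enforce $\phi$) and the monotonicity of preferences force each such subgame to be, up to equivalence, a single posted price together with the no-trade option. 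I would show that a dominant strategy mechanism for agent $-i^*$ alone, on the domain of all monotone preferences, reduces to offering some price $t$ that $-i^*$ may accept or reject (this is essentially the Hagerty--Rogerson logic that dominant strategy individually rational mechanisms for one side are posted prices). Collapsing each subgame to its posted price, and absorbing $i^*$'s reject option, then presents $i^*$'s move as the choice of a price from a finite $\hat T \subseteq T$ or rejection.

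**Assembling the equivalence.** Finally I would check that the resulting two-stage game has the same outcome correspondence as the original delegation mechanism, so that the two are equivalent in the sense of the paper. The care needed here is to confirm that the no-duplicate-strategies normalization and the reduced normal form do not introduce spurious distinctions: two prices $t, t'$ in $i^*$'s menu that induce the same accept/reject behavior for all relevant $v_{-i^*}$ must be identified, and any $i^*$-choice leading to a subgame whose outcome is $\phi$ for all $v_{-i^*}$ is identified with the reject option. I expect these identifications to be routine once the subgame-reduction step is in hand, so the genuine difficulty is concentrated in showing that each dominant strategy subgame for the lone remaining agent collapses to a single posted price.
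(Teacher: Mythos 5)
Your proposal is correct and follows essentially the same route as the paper: the paper also derives Proposition 1 directly from Theorem 2, noting that with two agents each second-stage dominant strategy mechanism is just a single-agent menu, that monotone preferences make all but one trade price in any menu redundant (the seller picks the highest offered price, the buyer the lowest), and that the voluntary-trade requirement forces the no-trade option into every menu. The only cosmetic difference is that the paper states this menu-collapse argument directly, whereas you frame it as ``Hagerty--Rogerson logic''; for a lone second-stage agent the claim is elementary and needs no such appeal, but the substance is identical.
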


To obtain the class of mechanisms described in Proposition \ref{BT1}, consider the following simple argument. When there are only two agents, the second-stage dominant strategy mechanisms as referred to in Theorem \ref{thm:MM} are single agent mechanisms in which the agent $-i^*$ chooses among alternatives offered by agent $i^*$. Among the options offered that do include trade, the seller, if she is agent $-i^*$, will always pick trade at the highest price, and the buyer, if he is agent $-i^*$, will always pick trade at the lowest price. Therefore, offering trade at more than one price is redundant. Moreover, the mechanism that the seller offers must always include the no trade option.\medskip

Proposition \ref{BT1} in fact provides a complete characterization of all bilateral trade mechanisms that are strategically simple, as the following result, which we prove in Appendix \ref{app:bilateral-trade}, shows:

\begin{proposition} \label{BT2}
There are no bilateral trade mechanisms that are type 2 strategically simple.
\end{proposition}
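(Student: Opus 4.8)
The plan is to use Theorem~\ref{thm:MainR} to translate strategic simplicity into the existence of a local dictator at every preference profile, and then to show that the bilateral trade structure forces a single agent to be the local dictator at every profile, so that no strategically simple mechanism can be type~2. Since $I=\{S,B\}$ and a local dictator exists at every $R\in\mathcal{R}_S\times\mathcal{R}_B$, the defining condition $\bigcap_R I^*(R)=\emptyset$ for type~2 is equivalent to the existence of a profile $P$ at which $S$ is the \emph{unique} local dictator and a profile $Q$ at which $B$ is the unique local dictator. I assume this for contradiction. Throughout I would exploit the symmetry of the environment under simultaneously swapping the roles of $S$ and $B$ and reversing the order on $T$: this maps bilateral trade mechanisms to bilateral trade mechanisms and lets me treat the two agents interchangeably.

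Two structural lemmas would do most of the conceptual work. Let $\sigma_S,\sigma_B$ be the no-trade strategies guaranteed by the definition of a bilateral trade mechanism, so that $g(\sigma_S,s_B)=\phi$ and $g(s_S,\sigma_B)=\phi$ for all strategies, and for a local dictator $i$ write $a(s_i)$ for the constant outcome that $s_i$ enforces across $U\!D_{-i}(R_{-i})$. The first lemma is that if $i$ is the \emph{unique} local dictator at $R$, then the opponent's no-trade strategy $\sigma_{-i}$ is weakly dominated given $R_{-i}$: otherwise $\sigma_{-i}\in U\!D_{-i}(R_{-i})$ would force $a(s_i)=g(s_i,\sigma_{-i})=\phi$ for every $s_i$, so $g\equiv\phi$ on $U\!D_i(R_i)\times U\!D_{-i}(R_{-i})$, making $-i$ a local dictator too and contradicting uniqueness. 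Applied at $P$ and $Q$, this shows the buyer is ``eager'' at $P$ (i.e.\ $\sigma_B$ is dominated given $R_B^P$) and the seller is ``eager'' at $Q$. The second lemma is an \emph{acceptable-menu} property: when $i$ is the unique local dictator, the dictated set $\{a(s_i):s_i\in U\!D_i(R_i)\}$ has at least two elements (else $g$ is constant on the box and $-i$ is also a dictator), and each $a(s_i)$ is weakly preferred to $\phi$ by the opponent. The latter follows because, by finiteness and transitivity of weak dominance, the dominated $\sigma_{-i}$ is dominated by some \emph{non-dominated} $s_{-i}^*\in U\!D_{-i}(R_{-i})$, whence $g(s_i,s_{-i}^*)$ is weakly preferred to $g(s_i,\sigma_{-i})=\phi$ for all $s_i$, and dictatorship gives $a(s_i)=g(s_i,s_{-i}^*)$. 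Thus at $P$ the seller can dictate only buyer-acceptable outcomes (price at most the buyer's value) and at $Q$ the buyer can dictate only seller-acceptable outcomes (price at least the seller's value). I would also record the monotonicity that the preferences at which $\sigma_S$ is dominated form a down-set in the seller's value and, symmetrically, an up-set in the buyer's value, so the eagerness conclusions can be pushed to extreme values.

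The crux is to turn the coexistence of $P$ and $Q$ into a contradiction by confronting a seller strategy that ``reacts'' at $Q$ with the constancy that seller-dictatorship imposes at a neighbouring profile. Concretely, because $B$ is the unique dictator at $Q$, the seller is not a dictator there, so some $s_S^\sharp\in U\!D_S(R_S^Q)$ and two buyer strategies $b_1,b_2\in U\!D_B(R_B^Q)$ satisfy $g(s_S^\sharp,b_1)\neq g(s_S^\sharp,b_2)$; passing to the profile $(R_S^Q,R_B^P)$, where the seller is still eager and the buyer is eager, seller-dictatorship there would force $g(s_S^\sharp,\cdot)$ constant on $U\!D_B(R_B^P)$, contradicting the disagreement of $b_1,b_2$ \emph{provided these remain non-dominated}. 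The main obstacle is exactly this persistence: the sets $U\!D_i(R_i)$ are not nested in the values $v_i$ (for instance $\sigma_B$ is non-dominated for a reluctant buyer but dominated for an eager one), so one cannot simply invoke inclusion of non-dominated sets. I would therefore characterize $U\!D_S(R_S)$ and $U\!D_B(R_B)$ explicitly in terms of acceptance/offer thresholds determined by $v_i$ and the price ranking, verify that the witnessing strategies survive the relevant change of the opponent's value, and combine this with the acceptable-menu lemma (which pins the dictatable prices to opposite sides of the two values) to close the contradiction. This threshold characterization of the non-dominated strategy sets, together with the verification that seller- and buyer-dictation are simultaneously compatible with a single outcome function $g$ only for the proposal mechanisms of Proposition~\ref{BT1}, is where essentially all the technical work lies.
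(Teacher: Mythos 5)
Your setup and your two preliminary lemmas are sound and in fact coincide with the easy parts of the paper's argument: the reduction of type~2 to the existence of a profile $P$ where $S$ is the unique local dictator and a profile $Q$ where $B$ is, the observation that uniqueness of the dictator forces the opponent's no-trade strategy to be weakly dominated, and the conclusion that every dictated outcome is individually rational for the \emph{non-dictator} (this is the first half of the paper's Claim \ref{EPIR}, proved there the same way via Lemma \ref{lem:UD}). The genuine gap is in your ``crux.'' You need the witnessing buyer strategies $b_1,b_2\in U\!D_B(R_B^Q)$ to remain undominated when the buyer's preference is switched to $R_B^P$, you concede this persistence is the main obstacle, and you propose to obtain it from an explicit ``threshold characterization'' of the sets $U\!D_i(R_i)$. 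No such characterization is available at this point of the argument: a bilateral trade mechanism here is an arbitrary finite mechanism with abstract strategies (constrained only by the existence of an opting-out strategy), so strategies have no a priori price/acceptance structure, and deriving that structure is essentially equivalent to the characterization in Proposition \ref{BT1}, which is what the whole proof is supposed to deliver. In short, you have deferred all of the difficulty to an unproven step, and the step as stated (strategy-level persistence across a change in one's own value) is stronger than what is actually true or needed.

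For comparison, the paper closes the argument with three ingredients that are absent from your proposal. First, it proves ex post individual rationality \emph{for the dictator as well} (the hard half of Claim \ref{EPIR}), via a maximality argument over the finite value set: a violation at $(v_S,v_B)$ is shown to propagate to a violation at some $(v_S',v_B')$ with $v_S'>v_S$, contradicting the existence of a largest violating $v_S$. Second, using this, it proves persistence at the level of \emph{dictatorship} rather than of individual strategies (Claim \ref{SDSwitchSD}): if $i$ is the unique local dictator at $(v_i,v_{-i})$, then $i$ remains a local dictator at $(v_i',v_{-i})$ for every $v_i'$. This is the correct substitute for your persistence claim, and its proof already requires the full individual-rationality machinery. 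Third, with both agents dictating at the two mixed profiles, it pins down the outcome sets there ($\{\phi\}$ at one, a single trade price $t^*$ at the other, Claims \ref{NoTrade} and \ref{Trade}), derives $v_S>\hat v_S$ and $v_B>\hat v_B$ (Claim \ref{ValueComparisons}), and obtains the final contradiction by comparing $t^*$ with $\hat v_B$ via Lemma \ref{lem:UD}. Your instinct that some form of persistence across value changes is the key is right, but without these ingredients your outline does not constitute a proof.
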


Strategically simple mechanisms are more flexible than dominant strategy mechanisms in the bilateral trade environment. It is known that the only dominant strategy mechanisms that satisfy ex post budget balance and individual rationality are posted price mechanisms. As discussed in the Introduction, for each posted price mechanism, there exists a corresponding type 1 strategically simple bilateral trade mechanism (the price cap mechanism) that facilitates more efficient trade.

\section{Voting} \label{sec:voting}

\smallskip

We now analyze strategically simple mechanisms in settings in which no restrictions are assumed regarding the agents' utilities or beliefs: $\mathbf{U}_i = \mathcal{U}$ and $\mathbf{M}_i = \Delta(\mathcal{U}^{n-1})$ for all $i \in I$. Note that this is the most demanding form of strategic simplicity. We call a mechanism that is strategically simple on this domain a ``strategically simple voting mechanism,'' because the unrestricted domain is a domain that has been considered in parts of the voting literature. The celebrated Gibbard-Satterthwaite theorem \cite{Gibbard1973, Satterthwaite1975} shows that in the voting environment, a mechanism has dominant strategies if and only if it is dictatorial. As we shall see, there are many more strategically simple voting mechanisms.\medskip

The voting environment satisfies the domain assumptions in Section \ref{sec:characterization}. Thus, Theorems \ref{thm:MainR} and \ref{thm:MM} can be applied. We shall now distinguish type 1 and type 2 strategically simple voting mechanisms. In a type 1 strategically simple voting mechanism, some agent $i^*$ chooses a subset of the set $A$ of alternatives and a dominant strategy mechanism for the other agents to pick one alternative from this set. In a second stage, the other agents then play this dominant strategy mechanism. The influence of the first agent on the ultimate outcome may be restricted by limiting the set of subsets of $A$ and dominant strategy mechanisms she can choose from. If she can choose any arbitrary subset, then, of course, we have the classical dictatorship.\medskip

Standard results in voting theory provide characterizations of dominant strategy mechanisms that can be played in the second stage. If agent $i^*$ rules out all but two alternatives, then a mechanism has dominant strategies if and only if it is a generalized form of majority voting (see Barber\`a \cite[p. 759]{Barbera2010}). If agent $i^*$ allows the other agents to pick from at least three alternatives, then it follows from the Gibbard-Satterthwaite theorem that only dictatorial mechanisms have dominant strategies. Thus, agent $i^*$, if she wants to allow at least three alternatives,  has to pick one of the other agents, and needs to let this agent make the ultimate decision, where this agent is restricted to the set of alternatives chosen by agent $i^*$.\medskip

Type 2 strategically simple voting mechanisms are harder to characterize. Here, we provide a characterization for the voting environment with two agents and three alternatives, but leave as an open question the full characterization for more general voting environments.

\begin{proposition} \label{prop:voting-type-2}

Let $\#A= 3$ and $n=2$. A voting mechanism is type 2 strategically simple if and only if it is one of the following two mechanisms (up to relabeling of the agents and the alternatives):\bigskip

\begin{center}
\begin{tabular}
[|C{0.4cm}]{|C{0.4cm}|C{0.4cm}|C{0.4cm}|C{0.4cm}|C{0.4cm}|C{0.4cm}|} \hline
& $a$ & $b+$ & $b-$ & $c+$ & $c-$\\ \hline
$a$ & $a$ & $a$&$a$&$a$&$a$ \\ \hline
$b+$ & $a$ & $b$&$b$&$a$&$b$ \\ \hline
$b-$ & $a$ & $b$&$b$&$c$&$b$ \\ \hline
$c+$ & $a$ & $a$&$c$&$c$&$c$ \\ \hline
$c-$ & $a$ & $b$&$b$&$c$&$c$ \\ \hline
\end{tabular} \quad \quad \quad
\begin{tabular}
[|C{0.4cm}]{|C{0.4cm}|C{0.4cm}|C{0.4cm}|C{0.4cm}|C{0.4cm}|} \hline
& $a$ & $b$ & $c+$ & $c-$ \\ \hline
$a$ & $a$ & $a$&$a$&$a$ \\ \hline
$b+$ & $a$ & $b$&$a$&$b$ \\ \hline
$b-$ & $a$ & $b$&$c$&$b$ \\ \hline
$c$ & $a$ & $b$&$c$&$c$ \\ \hline
\end{tabular}\bigskip

Mechanism A \hskip 3cm Mechanism B
\end{center}
\end{proposition}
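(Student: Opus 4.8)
The plan is to exploit the full characterization machinery of Theorem \ref{thm:MainR}, specializing to $n=2$ and $\#A=3$. The ``only if'' direction is where the real work lies; the ``if'' direction amounts to verifying that Mechanisms A and B satisfy the local dictatorship condition of Theorem \ref{thm:MainR} at every preference profile $R=(R_1,R_2)$, and then checking that neither admits a single agent who is a local dictator at all profiles (so that both are genuinely type 2). Since there are $3!=6$ preferences per agent, there are $36$ profiles to examine; I would organize this by computing, for each of the six preferences of each agent, the set $U\!D_i(R_i)$ of undominated strategies in the displayed payoff tables, and then confirm that for each profile one of the two agents dictates the outcome on the product $U\!D_1(R_1)\times U\!D_2(R_2)$ while the other does not dictate at some \emph{other} profile. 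The labels ``$b+$, $b-$, $c+$, $c-$'' already encode the intended grouping of undominated strategies by preference, which should streamline this bookkeeping considerably.

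For the ``only if'' direction, I would start from Theorem \ref{thm:MainR}: at every profile $R=(R_1,R_2)$ there must be a local dictator. The key combinatorial constraint is that we are looking for a $5\times 5$ (or smaller, after removing duplicate strategies) outcome matrix over $\{a,b,c\}$ in which, for each of the $36$ preference profiles, the restriction to undominated strategies has a dictator, but \emph{no single agent} dictates at all profiles. First I would pin down what ``no duplicate strategies'' together with the type 2 requirement forces on the coarse structure. A useful first step is to observe that the set of outcomes attainable at a profile depends, by the Corollary, on at most one agent's finer information; this non-responsiveness sharply limits how the dictatorship can switch between agents as $R$ varies. I would then argue that the switch of dictator identity can only occur between profiles that differ in a controlled way, and use the anti-symmetry and the structure of single-crossing among the six preferences to force the matrix into one of the two displayed forms up to relabeling.

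The main obstacle I anticipate is the combinatorial explosion in the ``only if'' direction: a priori a type 2 mechanism could have a large strategy space, and one must rule out all alternatives to Mechanisms A and B rather than merely verifying the two given ones. The crucial reduction is to bound the number of \emph{essentially distinct} strategies each agent can have. I would obtain such a bound by noting that, for a strategically simple mechanism, each strategy $s_i$ must, when restricted to $U\!D_{-i}(R_{-i})$ for any profile where $i$ is the local dictator, enforce a single alternative; combined with the no-duplicates assumption and the fact that there are only three alternatives and six preferences, this should cap the number of columns and rows at five. Establishing this bound rigorously, and then showing the constraints from the $36$ profiles admit exactly the two solution matrices up to symmetry, is the delicate core of the argument; I would carry it out by fixing the local dictator at a ``focal'' profile (say where agent 1 ranks $a$ top), propagating the forced entries, and checking that type 2-ness (the dictator must change somewhere) together with consistency at neighboring profiles leaves no freedom beyond the stated relabelings.
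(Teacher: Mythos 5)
Your ``if'' direction matches what the paper does (direct verification of the two tables), but your ``only if'' plan has a genuine gap at its core reduction. You propose to cap each agent's strategy set at five \emph{a priori}, arguing that local dictatorship plus the no-duplicates assumption forces this. That argument does not go through: local dictatorship constrains only the strategies in $U\!D_i(R_i)$ at profiles where $i$ is the dictator, and a priori each of the six preferences could contribute its own set of undominated strategies; the best general bound available (via Corollary \ref{cor:not-same-menu}, which says distinct undominated strategies of the opponent must offer pairwise distinct menus) is seven strategies \emph{per preference}, nowhere near five in total. In the paper the counts four and five are not an input to the proof but its output, emerging only at the very end of a long propagation argument.

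The more important missing idea is the dichotomy that organizes the paper's entire argument and explains why exactly two mechanisms survive. One first shows (Claim \ref{TwoUD}) that type 2 simplicity yields a profile $(\hat R_1,\hat R_2)$ at which \emph{both} agents have at least two undominated strategies, and then (Claim \ref{No3}, again via Corollary \ref{cor:not-same-menu}) that $g(U\!D_1(\hat R_1),U\!D_2(\hat R_2))$ contains at most two alternatives. The case of exactly two attainable outcomes leads, through a chain of menu computations, to Mechanism B, and the case of exactly one outcome to Mechanism A. Without this split, your plan of ``fixing a focal profile and propagating forced entries'' has no anchor, and the tools you cite will not substitute for the ones actually needed: the non-responsiveness corollary of Section \ref{sec:characterization} is a statement about varying vNM utilities and beliefs while holding ordinal preferences fixed, not a device for controlling how the dictator's identity switches across ordinal profiles, and there is no single-crossing structure on the unrestricted domain of all six orderings. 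The propagation steps that actually pin down the tables require Lemma \ref{lem:UD} and Lemma \ref{lem:structure} --- the menu/best-response machinery of Appendix \ref{app:structure} --- which your proposal never invokes.
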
\medskip

The second mechanism in Proposition \ref{prop:voting-type-2} has already appeared in Section \ref{sec:examples} as an illustration of the notion of strategic simplicity (although the strategies were labeled differently there).  Here, we give a more extensive discussion of the first mechanism, and then add some brief comments about the second mechanism. For simplicity, let us call the first mechanism ``mechanism A'' and the second mechanism ``mechanism B.''\medskip

\textbf{Interpretation of Mechanism A:} This mechanism has the following interpretation (which also motivates the labels that we have given to the strategies).  Each agent has five strategies: a vote for $a$, a ``strong vote'' for $b$ denoted by $b+$, a ``weak vote'' for $b$ denoted by $b-$, and similarly a ``strong vote'' and a ``weak vote'' for $c$, denoted by $c+$ and $c-$ respectively. Alternative $a$ is the default alternative. If at least one of the agents votes for the default, then the default is chosen. If both agents vote for $b$ (resp. $c$), then $b$ (resp. $c$) is chosen regardless of whether the votes are strong or weak. If one of the agents casts a strong vote for $b$, but the other agent only casts a weak vote for $c$, then $b$ is chosen.  Similarly, if one agent casts a strong vote for $c$, but the other agent only casts a weak vote for $b$, then $c$ is chosen. If the agents cast weak votes for different alternatives, the disagreement is resolved in favor of $b$. If the agents cast strong votes for different alternatives, the mechanism reverts to the default $a$.\medskip

\textbf{Voting Incentives in Mechanism A:} Let us verify that this mechanism is strategically simple. Agents who rank $a$ top have a dominant strategy to vote for $a$.  Let us say that an agent $i$ has ``a weak preference for $b$ over $c$'' if her preference is: $bR_icR_ia$. That is, she ranks the default $a$ below both $b$ and $c$. For such an agent, a weak vote for $b$ is weakly dominant. Similarly, for an agent who has a ``strong preference for $b$ over $c$,'' i.e., $cR_ibR_ia$, a strong vote for $b$ is weakly dominant.\medskip

Weak and strong preferences for $c$ over $b$ are analogously defined, and it is also clear that an agent with a strong preference for $c$ over $b$ has the weakly dominant strategy of voting strongly for $c$. The final case to consider is an agent who has a weak preference for $c$ over $b$. Such an agent has two undominated strategies: a ``weak'' or a ``strong'' vote for $c$. Informally, such an agent potentially has an incentive to ``overstate'' the strength of her preference. The reason is as follows. A weak vote for $c$ will inevitably lead to $b$ if the other agent votes for $b$. If a strong vote for $c$ is cast, then the outcome is $a$, which is worse than $b$,  if the other agent casts a strong vote for $b$, and the outcome is $c$, which is better than $b$, if the other agent casts a weak vote for $b$. Which of these two cases is more important depends on the utility difference between $c$ and $b$, and on the agent's belief about the relative likelihood of the other agent having a weak or strong preference for $b$ over $c$.\medskip

Notice that in no case higher-order beliefs matter for an agent's choice. This is obvious if an agent has a weakly dominant choice. For agents with a weak preference for $c$ over $b$, higher-order beliefs don't matter because, whether the other agent casts a weak or strong vote for $b$ does not depend on that agent's first-order beliefs. Thus, the mechanism is strategically simple.\medskip

\textbf{Normative Properties of Mechanism A:} Why might a mechanism designer find mechanism A attractive? The most obvious attractive feature of mechanism A is that it is anonymous, that is, it treats all agents equally. Anonymity is often regarded by itself as a desirable property of a voting mechanism. No type 1 strategically simple mechanism, and in particular no dominant strategy mechanism, is anonymous in the voting setting with 2 agents and 3 alternatives.\medskip

The mechanism may also appeal to a mechanism designer who maximizes expected welfare. To show this, we consider two welfare criteria: the sum of agents' utilities (``utilitarian welfare'') and the minimum of agents' utilities (``Rawlsian welfare''), and we consider the comparison between mechanism A and dictatorship. Under mechanism A, when both agents rank $a$ in the middle, but rank different outcomes top, outcome $a$ is chosen, and thus  agents ``compromise.'' This yields higher Rawlsian welfare than dictatorship, and it might also yield higher utilitarian welfare than dictatorship, depending on the agents' vNM utilities of the compromise. When one agent ranks $c$ top, and the other agent ranks $c$ second behind $b$, then $c$ is chosen, regardless of which agent ranks $c$ top, whereas under dictatorship $c$ is only chosen when the first agent is the dictator. If the second agent is the dictator, then mechanism A yields higher utilitarian and Rawlsian welfare.\medskip

On the other hand, mechanism A might also lead to a Pareto inefficiency. Pareto inefficiencies harm both utilitarian welfare and Rawlsian welfare, and they are not possible under dictatorship. A Pareto inefficiency occurs if one agent has a weak preference for $c$ over $b$ and the other agent has a strong preference for $b$ over $c$. Agents may end up with $a$, although both prefer $b$ to $a$. This happens if the agent with a weak preference for $c$ over $b$ exaggerates the strength of her preference and casts a strong vote for $c$. But she will do so only if the utility from $b$ is close to zero, and therefore the loss from getting $a$ rather than $b$ is ``small.''\medskip

To compare mechanism A and dictatorship, the designer might adopt an ex ante perspective and calculate the expected welfare if she has a prior over all utility functions and all first-order beliefs. It is clear from the analysis in the previous two paragraphs that depending on which prior the designer uses, she may prefer one mechanism or the other. Under the uniform distribution over all utility functions and all first-order beliefs,  mechanism A achieves higher expected welfare than the dictatorial mechanism under both welfare criteria, as one can check numerically.\medskip

\textbf{Mechanism B:} This mechanism is similar to mechanism A, but it is not anonymous.  In mechanism B, only agent 1 can differentiate between a weak or a strong vote for $b$, and only agent 2 can differentiate between a weak or a strong vote for $c$. The voting rules are then similar to the voting rules  in mechanism A. In this mechanism, as the analysis in Section \ref{sec:examples} showed, agent 1 has two undominated strategies when she has a ``strong'' preference for $c$ over $b$, i.e. $cR_iaR_ib$: voting for $c$ and voting for $a$. This reflects that she cannot cast a ``strong'' vote for $c$, unlike in mechanism A. Agent 2 has two undominated strategies when she has a weak preference for $c$ over $b$. As in mechanism A, she might cast a ``weak'' or a ``strong'' vote for $c$ in this case.

\begin{remark}
So far in the voting problem, we have considered the domain in which there is no restriction regarding the agents' utilities or beliefs: $\mathbf{U}_i = \mathcal{U}$ and $\mathbf{M}_i = \Delta(\mathcal{U}^{n-1})$ for all $i \in I$. A common domain restriction for preferences in the voting literature is the so-called single-peaked domain. In our context, a single-peaked domain would be one for which there is an ordering of the alternatives in $A$ so that for all $i$ all utility functions in $\mathbf{U}_i$ are single-peaked, and the set $\mathbf{M}_i$ is the set of all beliefs that assign probability 1 to single-peaked utility functions.  It is easy to see that the first mechanism in Proposition \ref{prop:voting-type-2} is type 2 strategically simple even in the single-peaked domain when the alternatives are arranged in alphabetical order, whereas the second mechanism is type 1 strategically simple on the single-peaked domain.
\end{remark}

\section{Related Literature} \label{sec:literature}

\smallskip

Li \cite{Li2017} proposes the concept of ``obviously strategy-proof mechanisms.'' These mechanisms form a subclass of dominant strategy mechanisms in which it is particularly easy for agents to recognize that they have a dominant strategy. While Li's work is, in spirit, related to ours, our purpose is to introduce a class of mechanisms that is larger (rather than smaller) than the class of dominant strategy mechanisms, yet consists only of ``simple'' mechanisms. Our motivation for this is that in many applications the set of dominant strategy mechanisms seems ``too small.''\medskip

Li starts with the observation that subjects in experiments often do not recognize dominant strategies, but that they do recognize such strategies if the mechanism is ``obviously strategy-proof.''  But if subjects in experiments don't even recognize what is not obvious, readers might ask, then how can we expect them to engage in the strategic reasoning that we have called ``strategically simple'' in this paper?\medskip

We think that Li captures a different dimension of ``simplicity'' than we do. In dominant strategy mechanisms, and arguably also in strategically simple mechanisms, the ``logic'' that underlies the determination of an optimal strategy is straightforward. By this we mean that the logic can be explained to the agents in a simple and persuasive way. This is presumably a necessary, but not sufficient condition for optimal choices to be ``obvious'' in the sense that agents can easily find optimal strategies by themselves, without being offered explanations. In practice, it seems common that mechanism designers spend a lot of time explaining to the participants in the mechanism how the mechanism works, and which considerations the participants should base their strategic choices on. We regard our requirement of strategic simplicity as a formalization of the idea that the mechanism designer can present a simple and persuasive explanation of the relevant strategic considerations to the agents. That does not mean that the optimal choices are ``obvious'' to the agents.\medskip

Several recent papers have analyzed mechanism design when agents' strategy choices are guided by ``level-$k$ thinking.''\footnote{De Clippel et al. \cite{DeClippel2017}, Crawford \cite{Crawford2016}, Gorelkina \cite{Gorelkina2018}, Kneeland \cite{Kneeland2017}.}  Like our paper, the level-$k$ model of strategic choice is motivated by the idea that there is bound to the length of hierarchies of beliefs that agents can form. We assume in this paper that $k$ is equal to 2. The level-$k$ model relies, however, on an exogenously assumed ``anchor'' that describes the beliefs of an agent who does not analyze the other agents' incentives at all, but who does maximize expected utility (``level 1 agents'').  This amounts to selecting among the not strictly dominated strategies of level 1 agents those that are best responses to a particular belief. We select the not weakly dominated strategies, and thus implicitly assume full support beliefs, but do not fix any specific anchor belief. Thus, our theory of behavior is in most games more permissive than the level-$k$ model of behavior.\medskip 

Particularly closely related to our work is a paper by de Clippel et al. \cite{DeClippel2017}. They consider an incomplete information environment with a  common prior type space. The mechanism designer seeks to implement a social choice function that assigns to each type vector one outcome. They study a mechanism designer who believes that each agent $i$ is a level $k_i(\geq 1)$ player, but who does not know the levels $k_i$. Implementation is achieved if for each type vector the desired outcome results whenever each player $i$ is a level $k_i$ player for any combination of $k_i$'s. This implies that players with a level $k_i\geq 2$ anticipate the same outcome regardless of which level the other players are, and therefore that there is no benefit to a player of thinking beyond level 2, just as in our paper. However, unlike our paper, de Clippel et al.'s postulate an exogenous anchor, work with common priors, and focus on social choice functions rather than correspondences.

Our work is also related to papers that consider the implementation of social choice functions when agents perform a limited number of rounds of elimination of dominated strategies. Saran's \cite{Saran2016} implementation notion includes the requirement that any strategy combination that survives one round of elimination of strictly undominated strategies yields the outcome prescribed by the social choice function. He obtains for many economic environments that a strict subset of the set of all strategy-proof social choice functions can be implemented. By comparison, in our model agents are assumed to be able to perform higher-order strategic thinking. Another difference is that we allow for the elimination of weakly dominated strategies.\medskip

Jackson et al. \cite{Jackson1994} and Sj{\"o}str{\"o}m \cite{Sjostrom1994} show that, in certain economic environments, all social choice functions can be implemented in two rounds of elimination of weakly dominated strategies, with a unique strategy surviving in round two. Thus, the corresponding mechanisms are strategically simple on the restricted domain in which agents have point beliefs about the other agents' preferences. This paper considers strategic simplicity on larger domains.\medskip

In complete information models, de Clippel et al. \cite{DeClippel2014} and Van der Linden \cite{VanDerLinden2017} use the number of rounds of elimination of dominated strategies, or of backward induction, that are required to solve a mechanism as a measure of the strategic complexity of mechanisms for the choice of an arbitrator or of a jury. This idea is closely related to our notion of strategic simplicity. One important difference with our work is that they don't allow uncertainty about other agents' preferences.\medskip

Bahel and Sprumont \cite{BahelSprumont2017} consider dominant strategy mechanisms for the choice among Savage acts. The act that is chosen by the mechanism may depend on each agents' beliefs about the state, but it will not depend on any agent's beliefs about the other agents' beliefs about the state, etc. This is because, for given beliefs and valuations, their mechanisms have dominant strategies. There is thus a parallel between their work and ours, although in their work beliefs are about Savage-style ``states of the world,'' whereas in our work beliefs are about other agents' preferences.\medskip

Strategic simplicity can also be interpreted as a form of robustness in the sense of Bergemann and Morris \cite{BergemannMorris2005}. Whereas Bergemann and Morris study implementation that does not rely on any conditions on agents' hierarchies of beliefs, we study implementation of outcomes that may depend on agents' first layer of beliefs, but not on any higher-order beliefs.

\section{Discussion} \label{sec:discussion}

\smallskip

In this section, we suggest some directions for further research. First, it might be interesting to define a refinement of strategic simplicity that is satisfied by all mechanisms that we call ``type 1 strategically simple,'' but not by any mechanism that we call ``type 2 strategically simple.'' Type 2 strategically simple mechanisms don't just seem harder to characterize. The type 2 strategically simple mechanisms that we have found in the voting environment also seem more complicated for the agents than the delegation mechanisms.\footnote{We thank the editor and the referees for encouraging us to think along this direction.}\medskip


To find such a refinement, one would have to look for an intermediate notion of simplicity that is weaker than the requirement that agents have dominant strategies but stronger than the requirement that the mechanism be strategically simple in our sense. One way of proceeding would be the following. Say that a mechanism is ``strategically simple$^*$'' if each agent believes their opponents choose a dominant strategy if there is one, but she does not have any idea how the other agents choose their strategies if there is no dominant strategy. In particular, she may even consider it plausible that they choose dominated strategies if there is no dominant strategy. Note that this alternative notion lowers each agent's belief in the other agents' rationality in comparison to our construction. This alternative assumption enlarges the set of compatible strategic beliefs and thus makes it harder for each agent to have an action that is always a best response to all these compatible strategic beliefs. This does not affect the type 1 strategically simple mechanisms that we study, because the agents who have dominant strategies still do not have to think about the other agents' beliefs, and the only agent who does not have dominant strategies believes that the other agents play the dominant strategy. However, this would rule out type 2 strategically simple mechanisms as ``strategically simple$^*$'' mechanisms, because in type 2 strategically simple mechanisms, there are at least two agents who do not have a dominant strategy, and each of these two agents does not believe the other agent only picks undominated strategies.\medskip

The construction described in the previous paragraph is, however, a bit ad-hoc. Thus the open research question is whether there are other ways of strengthening the requirement of strategic simplicity that rule out type 2 strategically simple mechanisms, but not type 1 strategically simple mechanisms. Another interesting research direction on type 2 strategically simple mechanisms would be to seek general conditions on primitives that imply that all strategically simple mechanisms are of type 1. We provided one such environment in this paper, namely, the bilateral trade environment. Our proof in the bilateral trade environment relies heavily on the particular domain structure, and we have not yet found useful ways to generalize the result to other settings.\medskip

Strategic simplicity as defined in this paper focuses on mechanisms in which the agents' optimal choices can be based on first-order beliefs alone. It would be interesting to investigate mechanisms in which there is some integer $k\geq 2$ such that only beliefs up to the \textit{k}th-order matter for agents' choices. Our experience from writing this paper suggests that a characterization of such mechanisms might be particularly difficult when considering the analog of what we have called in this paper ``type 2'' strategically simple mechanisms. However, even the generalization of type 1 strategically mechanism seems non-trivial. Consider the following simple example: There are three agents that vote over four alternatives. The three agents in turn remove one alternative from consideration, and the remaining alternative is chosen. This is a simple extension of type 1 strategically simple voting mechanisms, and one might conjecture that the highest order beliefs that matter for optimal choice in this mechanism are the second-order beliefs. Obviously, the agent who moves last does not have to form any beliefs. One might conjecture that the agent who moves second only has to form first-order belief, and that the agent who moves first only has to form second-order belief. However, this conjecture is wrong. The reason is that the second mover needs to form second-order belief about the correlation between the first mover's belief and the third mover's preference, and therefore, for the first mover, it matters what he believes about the second-order belief of the second mover.  We have not yet tackled the complex issues raised by this example, but hope to do so in future research.\medskip

A study of strategically simple mechanisms in applications other than the two settings covered in this paper would be interesting. For certain classes of environments with quasi-linear preferences, mechanisms in which agents need to form at most first-order beliefs to determine their expected utility maximizing strategies have been described in Chen and Li \cite{ChenLi2018}, Yamashita and Zhu \cite{YamashitaZhu2017}, and Cr\'emer and Riordan \cite{CremerRiordan1985}. The first two papers show that such strategically simple mechanisms dominate the optimal dominant strategy mechanism for a revenue maximizing mechanism designer, and Cr\'emer and Riordan \cite{CremerRiordan1985} focus on efficiency properties. While strategic simplicity is not the focus of these papers, their results suggest that a further study of strategically simple mechanisms in environments with quasi-linear preferences (for example, strategically simple auctions) might be promising.\medskip


Future research might also develop criteria which a mechanism designer could use to evaluate strategically simple mechanisms, and then characterize using such criteria the best strategically simple mechanisms. The simplest way of proceeding would be to endow the mechanism designer with a prior over agents' utility functions and their first-order beliefs, and then to maximize the expected value of the designer's objective function. In this paper we have conducted such an exercise when comparing the expected welfare from a type 2 strategically simple voting mechanism and a dictatorial voting mechanism. The comparison is based on a uniform prior. An interesting open research question is whether in the voting context, and in general, results can be obtained that do not rely on the ad-hoc specification of a prior.\medskip

Finally, experimental tests of our notion of strategic simplicity would be of interest. We argued in the previous section that in strategically simple mechanisms the ``logic'' that underlies the determination of an optimal strategy is straightforward, although it might not be obvious. To address the possibility that experimental subjects might not discover by themselves what is not obvious, the experimenter might explain it to them. A test of our concept of strategic simplicity could potentially rely on a comparison between the subjects' understanding and acceptance of the experimenter's explanations in strategically simple mechanisms, and in mechanisms that are not strategically simple.



\newpage

\appendix

\section{Proof of Theorem 1}

\smallskip

We first show that mechanisms that satisfy the conditions in the theorem are strategically simple.  Fix an agent $i \in I$, any utility function $u_i \in \mathbf{U}_i$, and any utility belief $\mu_i \in \mathbf{M}_i$. We have to show that agent $i$ has a strategy that is a best response to all compatible strategic beliefs.\medskip

Let $R_i$ denote the preference induced by $u_i$. It suffices to show that there is a strategy in $U\!D_i(R_i)$ that is among the strategies in $U\!D_i(R_i)$ a best response to all compatible strategic beliefs. By the definition of weak dominance given $R_i$,  the same strategy will then also be among \emph{all} strategies of agent $i$ a best response to all compatible strategic beliefs.\medskip

 We can classify the profiles of utility functions $u_{-i}$ of agents other than $i$  into two categories: (1) the ones that induce preference profile $R_{-i}$ such that agent $i$ is a local dictator in the mechanism restricted to the strategy set $U\!D_i(R_i)$ for agent $i$ and strategy sets $U\!D_{-i} (R_{-i})$ for the other agents; and (2) the ones that induce preference profile $R_{-i}$ such that agent $i$ is not a local dictator in this mechanism. In the first case, the outcome is determined by agent $i$'s own choice from $U\!D_i(R_i)$ regardless of which strategies in $U\!D(R_{-i})$ the other agents choose, and with utility function $u_{-i}$ all undominated strategies of the other agents will be contained in $U\!D_{-i}(R_{-i})$.   In the second case, the outcome is  the same regardless of agent $i$'s own choice from $U\!D_i(R_i)$ as long as the other agents choose strategies in $U\!D_{-i}(R_{-i})$, and, again, with utility function $u_{-i}$ all undominated strategies of the other agents will be contained in $U\!D_{-i}(R_{-i})$. Thus, agent $i$'s expected utility maximizing choice from $U\!D_i(R_i)$ only depends on her utility belief, and is the same for all compatible strategic beliefs, and we can conclude that the mechanism is strategically simple.\medskip

Next, we show that mechanisms that are strategically simple with respect to domains described in the theorem must satisfy the conditions in the theorem. We proceed by establishing a sequence of claims.

\begin{claim}\label{Indiff1}
	Let $u_i \in \mathbf{U}_i$, $u_{-i} \in \mathbf{U}_{-i}$, and let $\mu_i \in \mathbf{M}_i$ be a utility belief such that $\mu_i(\{u_{-i}\})>0$. Suppose $s_i,s_i'\in \bigcap_{\hat \mu_i \in \mathcal{M}_i(\mu_i)} B\!R_i(u_i,\hat \mu_i)$. Then for all $s_{-i},s_{-i}' \in U\!D_{-i}(u_{-i})$:
	\[
	u_i(g(s_i,s_{-i}))-u_i(g(s_i',s_{-i})) = u_i(g(s_i,s_{-i}'))-u_i(g(s_i',s_{-i}')).
	\]
\end{claim}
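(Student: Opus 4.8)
The plan is to turn ``$s_i$ and $s_i'$ are both best responses to every compatible strategic belief'' into a single linear condition and then probe it with two cleverly chosen beliefs. Since $s_i,s_i'\in B\!R_i(u_i,\hat\mu_i)$ for every $\hat\mu_i\in\mathcal{M}_i(\mu_i)$, both attain the maximal expected utility against each such $\hat\mu_i$, so their expected utilities coincide. Writing expected utility against $\hat\mu_i$ as the finite sum $\sum_{s_{-i}\in S_{-i}}\hat\mu_i(s_{-i})\,u_i(g(\cdot\,,s_{-i}))$ and setting
\[
\phi(s_{-i})\;\equiv\;u_i(g(s_i,s_{-i}))-u_i(g(s_i',s_{-i})),
\]
the equality of expected utilities is exactly $\sum_{s_{-i}\in S_{-i}}\hat\mu_i(s_{-i})\,\phi(s_{-i})=0$ for every $\hat\mu_i\in\mathcal{M}_i(\mu_i)$. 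The assertion of the claim is precisely that $\phi(s_{-i})=\phi(s_{-i}')$ for all $s_{-i},s_{-i}'\in U\!D_{-i}(u_{-i})$, so I only need, for each such pair, two compatible strategic beliefs whose difference isolates $\phi(s_{-i})-\phi(s_{-i}')$.

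The key is the atom at $u_{-i}$. Let $p\equiv\mu_i(\{u_{-i}\})>0$ and fix any joint belief $\nu_i^0$ on $S_{-i}\times\mathbf{U}_{-i}$ realizing some compatible strategic belief as in Definition \ref{compatible}; such a $\nu_i^0$ exists because $\mathcal{M}_i(\mu_i)\neq\emptyset$. Because the $\mathbf{U}_{-i}$-marginal of $\nu_i^0$ is $\mu_i$, the restriction $\nu_i^0|_{u_{-i}}$ of $\nu_i^0$ to $S_{-i}\times\{u_{-i}\}$ has total mass $p$. I would then perform surgery only on this slice, defining
\[
\nu_i^1\;\equiv\;\nu_i^0-\nu_i^0|_{u_{-i}}+p\,\delta_{(s_{-i},u_{-i})},\qquad
\nu_i^2\;\equiv\;\nu_i^0-\nu_i^0|_{u_{-i}}+p\,\delta_{(s_{-i}',u_{-i})}.
\]
Each is a nonnegative measure of total mass $1$ whose $\mathbf{U}_{-i}$-marginal is again $\mu_i$ (we remove mass $p$ at $u_{-i}$ and put it back at $u_{-i}$), and since $s_{-i},s_{-i}'\in U\!D_{-i}(u_{-i})$ the added atoms respect the product support restriction of Definition \ref{compatible}. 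Hence the $S_{-i}$-marginals $\hat\mu_i^1,\hat\mu_i^2$ of $\nu_i^1,\nu_i^2$ both lie in $\mathcal{M}_i(\mu_i)$.

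Finally, $\nu_i^1$ and $\nu_i^2$ differ only in the replaced slice, so their $S_{-i}$-marginals satisfy $\hat\mu_i^1-\hat\mu_i^2=p\,(\delta_{s_{-i}}-\delta_{s_{-i}'})$. Evaluating the condition $\sum_{s_{-i}}\hat\mu_i(s_{-i})\,\phi(s_{-i})=0$ at $\hat\mu_i^1$ and at $\hat\mu_i^2$ and subtracting yields $p\,\bigl(\phi(s_{-i})-\phi(s_{-i}')\bigr)=0$; dividing by $p>0$ gives $\phi(s_{-i})=\phi(s_{-i}')$, which is the claimed identity.

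I expect essentially all the content to sit in the second paragraph: the legitimacy of the surgery, i.e., checking that replacing the $u_{-i}$-slice of $\nu_i^0$ by a point mass preserves simultaneously the $\mathbf{U}_{-i}$-marginal and the product-form support constraint. Everything else is bookkeeping. In particular, existence of a baseline $\nu_i^0$ is not a real difficulty: each $U\!D_j(u_j)$ is a nonempty subset of the finite set $S_j$ and takes only finitely many values as $u_j$ varies, so a measurable selection into $\bigtimes_{j\neq i}\{(s_j,u_j):s_j\in U\!D_j(u_j)\}$ exists and gives a compatible belief; and nonemptiness of $\mathcal{M}_i(\mu_i)$ is in any case implicit in the hypothesis that $s_i,s_i'$ lie in the displayed intersection.
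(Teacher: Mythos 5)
Your proof is correct and is essentially the paper's argument: both exploit the freedom to redistribute the atom $\mu_i(\{u_{-i}\})>0$ among strategy profiles in $U\!D_{-i}(u_{-i})$ and then use linearity plus the equal-expected-utility condition forced by $s_i,s_i'$ both being best responses. The only difference is presentational --- the paper argues by contradiction via a small $\varepsilon$-shift of mass from $s_{-i}'$ to $s_{-i}$ in an arbitrary compatible belief, whereas you argue directly by comparing the two extreme beliefs that park the whole atom on $s_{-i}$ versus $s_{-i}'$, and your explicit surgery on the joint measure $\nu_i$ actually makes the compatibility check (which the paper leaves informal) airtight.
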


\begin{proof}
	Suppose the assertion were not true. Then there are  $s_{-i}, s_{-i}'\in U\!D_{-i}(u_{-i})$ such that:
	\begin{align*}
		u_i(g(s_i,s_{-i}))-u_i(g(s_i',s_{-i})) > u_i(g(s_i, s_{-i}'))-u_i(g(s_i', s_{-i}')).
	\end{align*}
	Pick any $\hat \mu_i \in \mathcal{M}_i(\mu_i)$ that places strictly positive probability on $s_{-i}$ and $s_{-i}'$. Because  $s_i$ and $s_i'$ are both in $B\!R_i(u_i,\hat \mu_i)$ both strategies must yield the same expected utility under $\hat \mu_i$. Now suppose we vary $\hat \mu_i$ such that it places $\varepsilon$ probability more than $\hat \mu_i$ on $s_{-i}$ and $\varepsilon$ probability less than $\hat \mu_i$ on $s_{-i}'$, leaving all other probabilities unchanged. If we choose $\varepsilon>0$ and sufficiently small, we can vary $\hat \mu_i$ in this way so that it remains an element of $\mathcal{M}_i(\mu_i)$, and so that for the modified belief $s_i$ is a strictly better response than $s_i'$. This contradicts $s_i'\in \bigcap_{\hat \mu_i \in \mathcal{M}_i(\mu_i)} B\!R_i(u_i,\hat \mu_i)$.
\end{proof}

\begin{claim}\label{Indiff2}
	Let $u_i \in \mathbf{U}_i$, $u_{-i} \in \mathbf{U}_{-i}$, and let $\mu_i, \mu_i' \in \mathbf{M}_i$ be any two utility beliefs such that $\mu_i(\{u_{-i}\})>0$ and $\mu_i'(\{u_{-i}\})>0$. Suppose:
	\begin{align*}
		s_i & \in \bigcap_{\hat \mu_i \in \mathcal{M}_i(\mu_i)} B\!R_i(u_i,\hat \mu_i); \\
		\text{and } s_i' & \in \bigcap_{\hat \mu_i' \in \mathcal{M}_i(\mu_i')} B\!R_i(u_i,\hat \mu_i').
	\end{align*}  Then for all $s_{-i},s_{-i}' \in U\!D_{-i}(u_{-i})$:
	\[
	u_i(g(s_i,s_{-i}))-u_i(g(s_i',s_{-i})) = u_i(g(s_i,s_{-i}'))-u_i(g(s_i',s_{-i}')).
	\]
\end{claim}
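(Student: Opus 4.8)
The asserted equality says precisely that the function $d(s_{-i}):=u_i(g(s_i,s_{-i}))-u_i(g(s_i',s_{-i}))$ is constant as $s_{-i}$ ranges over $U\!D_{-i}(u_{-i})$, so this is what I would establish. The only obstacle relative to Claim~\ref{Indiff1} is that $s_i$ and $s_i'$ are optimal for two \emph{different} utility beliefs, so they need not both be best responses to any common strategic belief. The plan is therefore to manufacture a single strategy that is simultaneously optimal for both beliefs, and then to reduce the statement to two applications of Claim~\ref{Indiff1}.

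Concretely, set $\mu_i'':=\tfrac12\mu_i+\tfrac12\mu_i'$. Since the hypotheses of Theorem~\ref{thm:MainR} give $\mathbf{M}_i=\Delta(\mathbf{U}_{-i})$, the set $\mathbf{M}_i$ is convex, hence $\mu_i''\in\mathbf{M}_i$; moreover $\mu_i''(\{u_{-i}\})>0$. The structural observation I would use is that compatibility behaves affinely: if $\hat\mu_i\in\mathcal{M}_i(\mu_i)$ is witnessed by a joint measure $\nu_i$ and $\hat\mu_i'\in\mathcal{M}_i(\mu_i')$ by $\nu_i'$, then $\tfrac12\nu_i+\tfrac12\nu_i'$ witnesses $\tfrac12\hat\mu_i+\tfrac12\hat\mu_i'\in\mathcal{M}_i(\mu_i'')$, so $\tfrac12\mathcal{M}_i(\mu_i)+\tfrac12\mathcal{M}_i(\mu_i')\subseteq\mathcal{M}_i(\mu_i'')$. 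By strategic simplicity applied at $\mu_i''$, I may choose $s_i''\in\bigcap_{\hat\mu_i''\in\mathcal{M}_i(\mu_i'')}B\!R_i(u_i,\hat\mu_i'')$.

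The heart of the argument, and the step I expect to be hardest, is to show that this single $s_i''$ lies in $\bigcap_{\hat\mu_i\in\mathcal{M}_i(\mu_i)}B\!R_i(u_i,\hat\mu_i)$ and in $\bigcap_{\hat\mu_i'\in\mathcal{M}_i(\mu_i')}B\!R_i(u_i,\hat\mu_i')$ simultaneously. The easy direction is that a strategy optimal at both endpoints is optimal at the midpoint; the direction I actually need is the reverse, namely that optimality at the midpoint forces optimality at each endpoint, and this is delicate. I would argue by contradiction: if $s_i''$ failed to be a best response at some $\hat\mu_i^0\in\mathcal{M}_i(\mu_i)$, then, because $s_i$ is a best response there, averaging $\hat\mu_i^0$ with an arbitrary $\hat\mu_i'\in\mathcal{M}_i(\mu_i')$ and invoking optimality of $s_i''$ at the average would force $s_i''$ to beat $s_i$ by a fixed positive margin throughout $\mathcal{M}_i(\mu_i')$; playing the symmetric statement off against the optimality of $s_i'$ for $\mu_i'$ is what should yield the contradiction. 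Here I expect to need the full force of strategic simplicity, i.e.\ a nonempty best-response-to-all set at \emph{every} belief along the segment joining $\mu_i$ and $\mu_i'$, together with the finiteness of $S_i$ to pass to a limit along $\lambda\mu_i+(1-\lambda)\mu_i'$ should the direct comparison not close on its own.

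Granting this, the conclusion is immediate. Applying Claim~\ref{Indiff1} to $\mu_i$ with the pair $s_i,s_i''$ shows $u_i(g(s_i,s_{-i}))-u_i(g(s_i'',s_{-i}))$ is constant over $s_{-i}\in U\!D_{-i}(u_{-i})$; applying it to $\mu_i'$ with the pair $s_i',s_i''$ shows $u_i(g(s_i',s_{-i}))-u_i(g(s_i'',s_{-i}))$ is likewise constant. Subtracting these two constants gives that $d(s_{-i})$ is constant on $U\!D_{-i}(u_{-i})$, which is exactly the claimed equality.
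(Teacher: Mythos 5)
There is a genuine gap at the step you yourself identify as the heart of the argument: universal optimality at the midpoint belief $\mu_i''=\tfrac{1}{2}\mu_i+\tfrac{1}{2}\mu_i'$ does \emph{not} imply universal optimality at either endpoint, and your sketched contradiction cannot be closed. From a failure of $s_i''$ at some $\hat\mu_i^0\in\mathcal{M}_i(\mu_i)$ with margin $\delta$ you correctly deduce (using the affinity of compatible beliefs, which is fine) that $s_i''$ beats $s_i$ by at least $\delta$ throughout $\mathcal{M}_i(\mu_i')$; but this is perfectly consistent with $s_i'$ being optimal throughout $\mathcal{M}_i(\mu_i')$ — it merely says $s_i$ is uniformly suboptimal there — and the symmetric statement adds nothing. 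In fact the transfer step is refuted inside the paper's own Figure \ref{fig:example} mechanism, which is strategically simple on the full domain. Let agent 1 have $u_1(c)>u_1(a)>u_1(b)$ with $u_1(b)=0$, $u_1(c)=1$, and $u_1(a)$ slightly above $\tfrac{1}{2}$, so that $U\!D_1(u_1)=\{T,B\}$. Let $u_2^*$ represent $bac$; let $\mu_1$ put mass $1-\epsilon$ on a utility with preference $bac$ and mass $\epsilon$ on $u_2^*$, and let $\mu_1'$ put mass $1-\epsilon$ on a utility with preference $cab$ and mass $\epsilon$ on $u_2^*$, so both beliefs charge the common $u_2^*$ as Claim \ref{Indiff2} requires. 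Since $bac$ types play only $C1$ and $cab$ types play only $C2$, each of the three beliefs $\mu_1$, $\mu_1'$, $\tfrac{1}{2}\mu_1+\tfrac{1}{2}\mu_1'$ has a \emph{unique} compatible strategic belief. At $\mu_1$ that belief is $\delta_{C1}$ and $T$ is uniquely optimal ($u_1(a)$ vs.\ $0$); at $\mu_1'$ it puts mass $1-\epsilon$ on $C2$ and $B$ is uniquely optimal ($1-\epsilon$ vs.\ $u_1(a)$); at the midpoint it puts mass $(1+\epsilon)/2$ on $C1$, so $T$ is again uniquely optimal ($u_1(a)>(1-\epsilon)/2$). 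Hence the only admissible choice is $s_1''=T$, which is \emph{not} a best response at $\mu_1'$, so your subsequent application of Claim \ref{Indiff1} to the pair $(s_i',s_i'')$ at $\mu_i'$ is unavailable. (The conclusion of Claim \ref{Indiff2} still holds here — $U\!D_{-1}(u_2^*)$ is a singleton — but your route to it collapses.)

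The paper's proof sidesteps exactly this obstacle: it never asks for one strategy to be universally optimal at two beliefs fixed in advance. Instead it walks along the segment $\mu_i(\alpha)=(1-\alpha)\mu_i+\alpha\mu_i'$ and uses upper hemi-continuity of the best-response correspondence together with finiteness of $S_i$ to produce a finite chain $s_i=s_i^1,\dots,s_i^K=s_i'$ and beliefs $\mu_i^k$ on the segment such that \emph{consecutive} strategies $s_i^{k-1},s_i^k$ are both universally optimal at $\mu_i^k$: at the supremum $\alpha^k$ of the interval where $s_i^{k-1}$ stays universally optimal, $s_i^{k-1}$ remains optimal by continuity from the left, while a strategy optimal at $\mu_i(\alpha^k+\varepsilon)$ along a sequence $\varepsilon\downarrow 0$ is optimal at $\mu_i(\alpha^k)$ by continuity from the right. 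Each such belief still charges $u_{-i}$, so Claim \ref{Indiff1} applies to each consecutive pair and the constant differences telescope. This "handoff" device — overlapping optimality at transition points rather than global optimality of a single strategy — is precisely what your midpoint construction lacks, and it is what your closing remark about passing to a limit along $\lambda\mu_i+(1-\lambda)\mu_i'$ would have had to be developed into to obtain a correct proof.
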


\begin{proof}
	We focus on the non-trivial case: $s_i\neq s_i'$. Claim \ref{Indiff2} follows from repeated applications of Claim \ref{Indiff1} if we can find a sequence of utility beliefs of agent $i$, $\mu_i^k$ ($k=2,\ldots,K$), and strategies of agent $i$, $s_i^k$ ($k=1,2,\ldots,K$), where $K\geq 2$, such that $s_i^1=s_i$, $s_i^K=s_i'$, for every $k\in \{2,\ldots, K\}$ the utility belief $\mu_i^k$ places positive probability on $u_{-i}$, and for every $k \in \{2,\ldots, K\}$  both $s_i^{k-1}$ and $s_i^k$ are elements of $\bigcap_{\hat \mu_i^k \in \mathcal{M}_i(\mu_i^k)} B\!R_i(u_i,\hat \mu_i^k)$. We shall construct such a sequence.\medskip
	
	For every $\alpha \in [0,1]$ we define $\mu_i(\alpha)\equiv (1-\alpha) \mu_i + \alpha \mu_i'$. We set $s_i^1=s_i$.  Define $\alpha^2\equiv \sup\{\alpha \in [0,1]| s_i^1 \in \bigcap_{\hat \mu_i \in \mathcal{M}_i(\mu_i(\alpha))} B\!R_i(u_i,\hat \mu_i)\}$. Observe that the upper hemi-continuity of the best response correspondence implies that $s_i^1\in  \bigcap_{\hat \mu_i \in \mathcal{M}_i(\mu_i(\alpha^2))} B\!R_i(u_i,\hat \mu_i)\}$. If $\alpha^2=1$, then we can set $s_i^2=s_i'$, $\mu_i^2=\mu_i'$, $K=2$, and our sequence has all the required properties.\medskip
	
	If  $\alpha^2<1$, define $s_i^2$ to be any strategy in $S_i$ that is an element of: $$\bigcap_{\hat \mu_i \in \mathcal{M}_i(\mu_i(\alpha^2+\varepsilon))} B\!R_i(u_i,\hat \mu_i)\}$$ for a sequence of $\varepsilon>0$ tending to zero. Then, by upper hemi-continuity of the correspondence of best responses,  $s_i^2 \in \bigcap_{\hat \mu_i \in \mathcal{M}_i(\mu_i(\alpha^2))} B\!R_i(u_i,\hat \mu_i)\}$. We define $\mu_i^2$ to be $\mu_i(\alpha^2)$. Note that, because $\mu_i$ and $\mu_i'$ attach strictly positive probability to $u_{-i}$, and because $\mu_i^2$ is a convex combination of $\mu_i$ and $\mu_i'$, also $\mu_i^2$ places strictly positive probability on $u_{-i}$. If $s_i^2=s_i'$, then we set $K=2$, and the construction is complete.\medskip
	
	If $s_i^2\neq s_i'$, then we repeat the steps just described. In general, let $k\geq 2$, and suppose that, after $k-1$ steps, we had  determined $\mu_i^k$ such that $\mu_i^k=\mu_i(\alpha^k)$ for some $\alpha^k<1$, and $s_i^k$ such that $s_i^k\neq s_i'$. Then repeating the steps described above means that  we define $\alpha^{k+1}\equiv \sup\{\alpha \in [\alpha^k,1]| s_i^k \in \bigcap_{\hat \mu_i \in \mathcal{M}_i(\mu_i(\alpha))} B\!R_i(u_i,\hat \mu_i)\}$.  By the upper hemi-continuity of the best response correspondence: $s_i^k\in  \bigcap_{\hat \mu_i \in \mathcal{M}_i(\mu_i(\alpha^{k+1}))} B\!R_i(u_i,\hat \mu_i)\}$.  If $\alpha^{k+1}=1$, then we can define $s_i^{k+1}=s_i'$, $\mu_i^{k+1}=\mu_i'$, $K=k+1$, and our sequence has the required properties. If $\alpha^{k+1}<1$, define $s_i^{k+1}$ to be a strategy in $S_i$ that is an element of $\bigcap_{\hat \mu_i \in \mathcal{M}_i(\mu_i(\alpha^{k+1}+\varepsilon))} B\!R_i(u_i,\hat \mu_i)\}$ for a sequence of $\varepsilon>0$ tending to zero. By the upper hemi-continuity of the correspondence of best responses, $s_i^{k+1} \in \bigcap_{\hat \mu_i \in \mathcal{M}_i(\mu_i(\alpha^{k+1}))} B\!R_i(u_i,\hat \mu_i)\}$. We define $\mu_i^{k+1}$ to be $\mu_i(\alpha^{k+1})$. Note that  $\mu_i^{k+1}$ places strictly positive probability on $u_{-i}$. If $s_i^{k+1}=s_i'$, then we set $K=k+1$, and the construction is complete. Otherwise, we continue as before.\medskip
	
	Note that by construction, in the sequence of strategies no strategy is ever repeated.  Because the number of strategies is finite, the construction has to end after a finite number of steps. At that point our sequence will have all the required properties.
\end{proof}

\begin{claim}\label{Concav}
	For every agent $i$, for every preference $R_i \in \mathcal{R}_i$ on $A$, there exists a utility function $u_i^*$ that represents $R_i$, such that for every $s_i\in U\!D_i(R_i)$ there is a strategic belief $\hat \mu_i$ with support equal to $S_{-i}$ such that:
	\[
	B\!R_i(u_i^*,\hat \mu_i) = \{s_i\}.
	\]
	Moreover, the utility function $u_i^*$ can be chosen such that $u_i^*(a)-u_i^*(b)\neq u_i^*(c)-u_i^*(d)$ for all $(a,b), (c,d)\in A^2$ with $(a,b)\neq (c,d)$.
\end{claim}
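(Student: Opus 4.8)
The plan is to reduce the statement to a claim about \emph{unique} best responses under a single cardinal representation, and then to produce that representation by choosing it concave. The first step is to record the standard linear-duality fact (Pearce \cite{Pearce1984}, B\"orgers \cite{Borgers1993}): for a fixed utility function $u_i$, a pure strategy $s_i$ is the unique expected-utility maximizer in $S_i$ against some belief if and only if there is no mixed strategy $\sigma_i\in\Delta(S_i\setminus\{s_i\})$ with
\[
\sum_{t\in S_i\setminus\{s_i\}}\sigma_i(t)\,u_i(g(t,s_{-i}))\ \ge\ u_i(g(s_i,s_{-i}))\quad\text{for all }s_{-i}\in S_{-i}.
\]
Call this the ``no-cover'' condition for $s_i$. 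A belief witnessing uniqueness can be taken to have full support, since mixing it with a small amount of the uniform belief on $S_{-i}$ preserves the strict inequalities; and a unique maximizer over all of $S_i$ is automatically undominated, hence lies in $U\!D_i(u_i)$, so that $B\!R_i(u_i,\hat\mu_i)=\{s_i\}$. Thus it suffices to exhibit a single representation $u_i^*$ of $R_i$ for which the no-cover condition holds for every $s_i\in U\!D_i(R_i)$.

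For the representation itself I would take $u_i^*$ to be \emph{concave in the ranks} of $R_i$: writing $r(a)$ for the position of $a$ in the $R_i$-ordering (best $=1$), set $u_i^*(a)=f(r(a))$ for a strictly decreasing, strictly concave $f$ whose successive gaps $f(k)-f(k+1)$ grow as $k$ increases. The ``Moreover'' clause is then free: the strictly concave representations form a nonempty open subset of $\mathcal{U}(R_i)$, and the requirement that all differences $u_i^*(a)-u_i^*(b)$ be distinct excludes only finitely many hyperplanes, so a generic concave $f$ satisfies it.

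The heart of the argument is to show that concavity defeats the no-cover condition, i.e.\ that if $s_i\in U\!D_i(R_i)$ then no mixture of the other strategies weakly covers it. The mechanism is Jensen's inequality: for any mixture $\sigma_i$ and any $s_{-i}$,
\[
\sum_{t}\sigma_i(t)\,f\big(r(g(t,s_{-i}))\big)\ \le\ f\Big(\sum_{t}\sigma_i(t)\,r(g(t,s_{-i}))\Big),
\]
so a cover forces the $\sigma_i$-average rank of the mixture's outcomes to be at least as good as that of $s_i$ in \emph{every} column $s_{-i}$, and strictly so whenever the covering mixture spreads over outcomes of differing rank in that column. Intuitively, concavity makes a middle outcome more valuable than any gamble between a better and a worse one, so a mixture cannot cheaply imitate the payoff profile of a pure-undominated strategy. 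The goal is to push this rank inequality, together with the absence of duplicate strategies, to the conclusion that some single pure strategy in the support of $\sigma_i$ is weakly better than $s_i$ in every column and strictly better in some column -- i.e.\ ordinally weakly dominates $s_i$ -- contradicting $s_i\in U\!D_i(R_i)$.

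The main obstacle is exactly this last localization: the Jensen bound delivers that the mixture beats $s_i$ \emph{on average in each column}, but a fractional (mixed) weak cover need not collapse to an integral (pure) weak domination, because different pure strategies in the support may be the ``good'' ones in different columns, and a single top-ranked component can prop up the expected utility in a column while carrying a little weight on a bad outcome. I expect that overcoming this requires calibrating the curvature of $f$ to the finite data of the mechanism -- making the top gaps small enough that matching $f(r(g(s_i,s_{-i})))$ in a column essentially forces the mixture's weight onto outcomes at least as good as $s_i$'s there -- followed by a combinatorial argument on a minimal-support covering mixture to extract the dominating pure strategy. Verifying that one sufficiently concave $f$ simultaneously blocks the covers of \emph{all} pure-undominated strategies, rather than merely one at a time, is the step I expect to be the most delicate, and it is where the richness of the domain and the no-duplicate assumption must do their work.
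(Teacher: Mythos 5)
Your reduction is sound: the duality between ``unique best response to some belief'' and the absence of a mixed cover, the full-support perturbation, and the observation that a unique maximizer over $S_i$ automatically lies in $U\!D_i(u_i^*)$ so that $B\!R_i(u_i^*,\hat\mu_i)=\{s_i\}$, are all correct, and your instinct that a sufficiently concave representation should work is the right one. But the heart of the proof is missing, and you say so yourself: you never show that a single concave-in-ranks $u_i^*$ blocks every mixed cover of every $s_i\in U\!D_i(R_i)$. Moreover, the tool you propose for this step, Jensen's inequality, cannot do the job on its own: it only yields that a covering mixture has weakly better \emph{average rank} than $s_i$ in every column, and that conclusion is perfectly compatible with $s_i$ being undominated (two strategies can each be good in complementary columns, so a mixture of them beats $s_i$ in average rank everywhere while neither of them dominates $s_i$). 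So the statement you reduce the claim to is essentially the claim itself, left unproved.

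For what it is worth, the localization you anticipate can be carried out, but by a penalization/union-bound argument rather than Jensen: if the successive utility gaps of $u_i^*$ grow by a factor $K$ as the rank worsens, then in each column $s_{-i}$ the cover inequality forces the mixture's total weight on strategies whose outcome is strictly worse than $g(s_i,s_{-i})$ to be at most of order $1/(K-1)$; choosing $K$ larger than $|S_{-i}|+1$ and taking a union bound over columns, some pure strategy in the support of the covering mixture is weakly better than $s_i$ in every column, hence (by the no-duplicates assumption) weakly dominates it, contradicting $s_i\in U\!D_i(R_i)$. The paper itself takes a different route entirely: it invokes the Lemma of B\"orgers \cite{Borgers1993} to obtain, for each $s_i\in U\!D_i(R_i)$ separately, a representation $u_{s_i}$ and a full-support belief making $s_i$ uniquely optimal; it then shows that increasing concave transformations preserve the property ``unique best response to some full-support belief'' (via Weinstein's argument, Pearce's Lemma 4, and an extreme-point argument using Bertsimas--Tsitsiklis); and finally it constructs one $u_i^*$ that is simultaneously a concave transformation of every $u_{s_i}$, by making its successive gap ratios smaller than the minimum of the corresponding ratios of the $u_{s_i}$. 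Either way, the work your proposal defers to ``calibrating the curvature'' plus ``a combinatorial argument'' is the entire substance of the claim, so as written the proposal has a genuine gap.
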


\begin{proof}
	First note that, if we can find a utility function $u_i^*$ with the property in the first sentence of Claim \ref{Concav}, then we can slightly perturb this utility function so that the property in the first sentence is maintained, but also the condition in the second sentence of Claim \ref{Concav} holds. Therefore, it is sufficient to prove only the first sentence of Claim \ref{Concav}.\medskip
	
	By the Lemma, and the remark in the first paragraph of the proof of that Lemma, in B\"orgers \cite{Borgers1993}, for every strategy $s_i\in U\! D_i(R_i)$ there exist a utility function $u_{s_i}$ that represents $R_i$, and a full support strategic belief $\hat \mu_i$, such that $s_i$ is the unique maximizer of expected utility given that belief. To prove Claim \ref{Concav} it therefore only remains to be shown that the utility functions $u_{s_i}$ can be chosen to be the same for all strategies $s_i\in U\! D_i(R_i)$.\medskip
	
	We begin with the following observation: Suppose that $s_i$ is the unique maximizer of expected utility in $S_i$ for utility function $u_i$ and full support strategic belief $\hat \mu_i$, and suppose that $f: \mathds{R}\rightarrow \mathds{R}$ is strictly increasing and concave. We claim that then there is another full support strategic belief $\hat{\hat \mu}_i$ such that $s_i$ is the unique maximizer of expected utility for the utility function $f\circ u_i$. To see this note first that, because $s_i$ maximizes expected utility for a full support belief if utility is $u_i$, it is not weakly dominated given utility function $u_i$. Next, because $f$ is increasing and concave, $s_i$ is not weakly dominated given utility function $f \circ u_i$, either.  This follows directly from the argument in the proof of Proposition 1 in Weinstein \cite{Weinstein2016}. We can now use Lemma 4 in Pearce \cite{Pearce1984} and conclude that there is some full support strategic belief $\hat{\hat \mu}_i$ of agent $i$ such that $s_i$ maximizes expected utility when the utility function is $f \circ u_i$. It remains to be shown that this belief can be chosen such that $s_i$ is the \emph{unique} maximizer of expected utility. We do this in the next paragraph.\medskip
	
	Because $s_i$ is the unique maximizer of expected utility for some full support belief if the utility function is $u_i$, by Theorem 2.3 in Bertsimas and Tsitsiklis \cite{BertsimasTsitsiklis1997}, the utility vector $\left(u_i(s_i,s_{-i})\right)_{s_{-i}\in S_{-i}}\in \mathds{R}^{|S_{-i}|}$ is an extreme point of the convex hull of the set of all such utility vectors:
	\[
	co\left(\left\{\left(u_i(s_i',s_{-i})\right)_{s_{-i}\in S_{-i}}|s_i'\in S_i\right\}\right).
	\]
	We now claim that the utility vector corresponding to $s_i$ remains an extreme point if we apply an increasing and concave transformation to $u_i$. That is,  we claim that $\left(f(u_i(s_i,s_{-i}))\right)_{s_{-i}\in S_{-i}}\in \mathds{R}^{|S_{-i}|}$ is an extreme point of:
	\[
	co\left(\left\{\left(f(u_i(s_i',s_{-i}))\right)_{s_{-i}\in S_{-i}}|s_i'\in S_i\right\}\right).
	\]
	Suppose it were not. Then $\left(f(u_i(s_i,s_{-i}))\right)_{s_{-i}\in S_{-i}}$ could be written as a convex combination of the elements of $\left\{\left(f(u_i(s_i',s_{-i}))\right)_{s_{-i}\in S_{-i}}|s_i'\in S_i, s_i'\neq s_i\right\}$, that is, there would be a mixed strategy $\sigma_i\in \Delta(S_i)$ of agent $i$ that attaches zero probability to $s_i$, and such that:
	\[\left(f(u_i(s_i,s_{-i}))\right)_{s_{-i}\in S_{-i}}=\sum_{s_i'\in S_i}\left(f (u_i(s_i',s_{-i}))\right)_{s_{-i}\in S_{-i}}\sigma_i(s_i').\]
	Because $f$ is strictly concave, this implies:
	\[\left(u_i(s_i,s_{-i})\right)_{s_{-i}\in S_{-i}} \lvertneqq \left(u_i(\sigma_i,s_{-i})\right)_{s_{-i}\in S_{-i}},\]
	which contradicts that $s_i$ is not weakly dominated for utility function $u_i$. We conclude that $\left(f(u_i(s_i,s_{-i}))\right)_{s_{-i}\in S_{-i}}\in \mathds{R}^{|S_{-i}|}$  is an extreme point. Using again Theorem 2.3 in Bertsimas and Tsitsiklis \cite{BertsimasTsitsiklis1997} this implies that  there is some function $\xi: S_{-i}\rightarrow \mathds{R}$ such that $s_i$ is the unique maximizer of $\sum_{s_{-i}\in S_{-i}}\xi(s_{-i}) f(u_i(s_i,s_{-i}))$ in $S_i$. Let us treat $\xi$ as a vector in $\mathds{R}^{|S_{-i}|}$. One can verify that there must be a small ball around $\xi$ such that for every  vector $\tilde \xi$ in this ball $s_i$ is the unique maximizer of $\sum_{s_{-i}\in S_{-i}}\tilde \xi(s_{-i}) f(u_i(s_i,s_{-i}))$. We can pick from this ball some $\tilde \xi$ such that $\sum_{s_{-i}\in S_{-i}}\tilde \xi(s_{-i}) \neq 0$. Now  consider the vector $\tilde \mu_i$ defined by:
	\[
	\tilde \mu_i(s_{-i})\equiv \frac{\hat{\hat{\mu}}_i+\varepsilon \frac{\tilde \xi(s_{-i})}{\sum_{s_{-i}'\in S_{-i}}\tilde \xi(s_{-i}')}}{1+\varepsilon}
	\]
	for all $s_{-i}\in S_{-i}$. For sufficiently small $\varepsilon>0$ this is a strategic belief.  It is a convex combination of $\hat {\hat \mu}_i$, for which $s_i$ is \emph{a} expected utility maximizer, and of $\tilde \xi$, for which $s_i$ is the \emph{unique}  maximizer of $\sum_{s_{-i}\in S_{-i}}\tilde \xi(s_{-i}) f(u_i(s_i,s_{-i}))$ in $S_i$. Therefore, $s_i$ is the unique expected utility maximizer for the strategic belief  $\tilde \mu_i$.\medskip
	
	We can now complete the proof by showing that there are a utility function $u_i^*$  and, for every $s_i\in U\! D_i(R_i)$, a concave function $f_{s_i}:\mathds{R}\rightarrow \mathds{R}$, such that $u_i^*=f_{s_i}(u_{s_i})$ for all $s_i\in U\!D_i(R_i)$.
	We first construct $u_i^*$. Enumerate the elements of $A$ as $a_1,a_2,\ldots, a_L$ such that $a_L R_ia_{L-1}R_ia_{L-2} R_i \ldots R_i a_1.$ We pick $u_i^*$ to satisfy the following, where the first two lines are a normalization:
	\begin{eqnarray*}
		u_i^*(a_1)&=&0\\
		u_i^*(a_2)&=&1\\
		&\ldots&
	\end{eqnarray*}
	\[u_i^*(a_{\ell-1})<u_i^*(a_\ell)< u_i^*(a_{\ell-1})+ \ldots \]
	\[ \hskip 3cm \ldots \left(u_i^*(a_{\ell-1})-u_i^*(a_{\ell-2})\right)\min\limits_{s_i\in U\!D_i(R_i)} \frac{u_{s_i}(a_\ell)-u_{s_{i}}(a_{\ell-1})}{u_{s_i}(a_{\ell-1})-u_{s_{i}}(a_{\ell-2})}.\]\vskip 0.5cm

\noindent Note that the right most term in the inequality is strictly larger than the left term, so that $u_i^*$ can be constructed, and will be monotonically increasing, and thus compatible with $R_i$.\medskip
	
	We now turn to the construction of the functions $f_{s_i}$. For every $s_i$ we set $f_{s_i}(u_{s_i}(a_\ell))=u_i^*(a_\ell)$ for all $\ell=1,2,\ldots,L$. This defines $f_{s_i}$ for a finite number of elements of $\mathds{R}$ only. However, it is clear that we can extend $f_{s_i}$ to a concave piecewise linear function on $\mathds{R}$ if it satisfies the following concavity condition for the points in which it is defined:
	\[
	\frac{f_{s_i}(u_{s_i}(a_\ell))-f_{s_i}(u_{s_i}(a_{\ell-1}))}{u_{s_i}(a_\ell)-u_{s_{i}}(a_{\ell-1})} \leq \frac{f_{s_i}(u_{s_i}(a_{\ell-1}))-f_{s_i}(u_{s_i}(a_{\ell-2}))}{u_{s_i}(a_{\ell-1})-u_{s_{i}}(a_{\ell-2})}
	\]
	for all $\ell\geq 2$. By the definition of $f_{s_i}$, this inequality is equivalent to:
	\[\frac{u_i^*(a_\ell)-u_i^*(a_{\ell-1})}{u_{s_i}(a_\ell)-u_{s_{i}}(a_{\ell-1})} \leq \frac{u_i^*(a_{\ell-1})-u_i^*(a_{\ell-2})}{u_{s_i}(a_{\ell-1})-u_{s_{i}}(a_{\ell-2})} \Leftrightarrow \]
	\[ u_i^*(a_\ell) \leq u_i^*(a_{\ell-1}) + \ldots \]
	\[ \ldots \left(u_i^*(a_{\ell-1})-u_i^*(a_{\ell-2})\right)  \frac{u_{s_i}(a_\ell)-u_{s_{i}}(a_{\ell-1})}{u_{s_i}(a_{\ell-1})-u_{s_{i}}(a_{\ell-2})} \]\medskip
	which holds by construction.
\end{proof}

\begin{claim}\label{GroupLD1}
	For every agent $i$, for every preference $R_i \in \mathcal{R}_i$ on $A$, and for every $u_{-i} \in \mathbf{U}_{-i}$ either
	
	(i) there is for every strategy $s_i\in U\!D(R_i)$ an alternative $a$ such that $g(s_i,s_{-i})=a$ for all $s_{-i}\in U\!D_{-i}(u_{-i})$,
	
	or
	
	(ii) there is for every strategy combination $s_{-i}\in U\!D_{-i}(u_{-i})$ an alternative $a$ such that $g(s_i,s_{-i})=a$ for all $s_i\in U\!D_i(R_i)$,
	
	or both.
\end{claim}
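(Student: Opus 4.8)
The plan is to fix the agent $i$, the preference $R_i\in\mathcal{R}_i$, and the profile $u_{-i}\in\mathbf{U}_{-i}$, and to reduce the dichotomy to an additive-separability statement about the composite function $\phi(s_i,s_{-i}):=u_i^*(g(s_i,s_{-i}))$ on $U\!D_i(R_i)\times U\!D_{-i}(u_{-i})$, where $u_i^*$ is the representation of $R_i$ produced by Claim \ref{Concav}. Recall that Claim \ref{Concav} supplies a $u_i^*$ representing $R_i$ whose outcome-differences are all distinct and for which every $s_i\in U\!D_i(R_i)$ is the \emph{unique} best response to some full-support strategic belief; in particular $U\!D_i(R_i)=U\!D_i(u_i^*)$. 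Since $u_i^*$ is injective on $A$, the outcome $g(s_i,s_{-i})$ is recovered from $\phi(s_i,s_{-i})$, so it suffices to analyze $\phi$.

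First I would record the consequence of Claims \ref{Indiff1}--\ref{Indiff2}: if $s_i,s_i'$ are each robustly optimal (i.e. lie in $\bigcap_{\hat\mu_i\in\mathcal{M}_i(\mu_i)}B\!R_i(u_i^*,\hat\mu_i)$) for utility beliefs charging $u_{-i}$, then the cross-difference $\phi(s_i,s_{-i})-\phi(s_i',s_{-i})$ is constant in $s_{-i}\in U\!D_{-i}(u_{-i})$. Taking $\mu_i$ to be the point mass $\delta_{u_{-i}}$, for which $\mathcal{M}_i(\delta_{u_{-i}})=\Delta(U\!D_{-i}(u_{-i}))$, strategic simplicity yields a distinguished $s_i^*\in\bigcap_{\hat\mu_i\in\Delta(U\!D_{-i}(u_{-i}))}B\!R_i(u_i^*,\hat\mu_i)$, that is, a strategy weakly best on the restricted domain. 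Using $s_i^*$ as an anchor, full additive separability $\phi(s_i,s_{-i})=\alpha(s_i)+\beta(s_{-i})$ on $U\!D_i(R_i)\times U\!D_{-i}(u_{-i})$ is equivalent to the assertion that, for every $s_i\in U\!D_i(R_i)$, the difference $\phi(s_i^*,s_{-i})-\phi(s_i,s_{-i})$ is independent of $s_{-i}$.

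The genericity finish is then short. Suppose $\phi=\alpha+\beta$, with every value lying in the finite set $u_i^*(A)$ whose pairwise differences are all distinct. If some $\alpha(s_i)\neq\alpha(s_i')$ and some $\beta(s_{-i})\neq\beta(s_{-i}')$, then among the four sums $\alpha(\cdot)+\beta(\cdot)$ the two pairs $\{\alpha(s_i)+\beta(s_{-i}),\,\alpha(s_i)+\beta(s_{-i}')\}$ and $\{\alpha(s_i')+\beta(s_{-i}),\,\alpha(s_i')+\beta(s_{-i}')\}$ have the same nonzero difference $\beta(s_{-i})-\beta(s_{-i}')$; writing these sums as $u_i^*$ of outcomes and invoking distinctness of differences forces the two pairs to coincide, whence $\alpha(s_i)=\alpha(s_i')$, a contradiction. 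Hence $\alpha$ is constant or $\beta$ is constant. If $\beta$ is constant then $g(s_i,s_{-i})$ depends only on $s_i$, giving case (i); if $\alpha$ is constant then $g$ depends only on $s_{-i}$, giving case (ii).

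The main obstacle is the step that upgrades pairwise separability to hold for \emph{all} of $U\!D_i(R_i)$, i.e. showing that every $s_i\in U\!D_i(R_i)$ is robustly optimal for some utility belief charging $u_{-i}$, so that Claim \ref{Indiff2} applies to the anchor pair $(s_i^*,s_i)$. Here I would exploit Claim \ref{Concav}: each $s_i$ is the unique, hence strict, maximizer of expected utility against a full-support strategic belief. I would then try to realize that belief, or a perturbation of it, as the compatible belief of a utility belief $\mu_i=(1-\varepsilon)\bar\mu_i+\varepsilon\,\delta_{u_{-i}}$, using the standing assumption that every opponent strategy is undominated under some admissible opponent utility, together with the richness $\mathbf{M}_i=\Delta(\mathbf{U}_{-i})$, to build $\bar\mu_i$; a strict uniform optimality gap at $\bar\mu_i$ then survives the $O(\varepsilon)$ perturbation and places $s_i$ in $\bigcap_{\hat\mu_i\in\mathcal{M}_i(\mu_i)}B\!R_i(u_i^*,\hat\mu_i)$ while $\mu_i(\{u_{-i}\})>0$. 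The delicate point, and the reason this is the crux, is that compatibility lets the strategic belief redistribute mass arbitrarily within each opponent's undominated set, so securing the strict gap requires controlling those undominated sets. An alternative and perhaps cleaner route is to prove the dichotomy by contradiction: assume both (i) and (ii) fail, and construct a single utility belief $\mu_i$ for which $\bigcap_{\hat\mu_i\in\mathcal{M}_i(\mu_i)}B\!R_i(u_i^*,\hat\mu_i)=\emptyset$, contradicting strategic simplicity.
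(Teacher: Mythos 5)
Your reduction and your finish are sound, and they match the paper's: represent $R_i$ by the utility function $u_i^*$ of Claim \ref{Concav}, show that cross-differences $u_i^*(g(s_i,s_{-i}))-u_i^*(g(s_i',s_{-i}))$ are constant in $s_{-i}\in U\!D_{-i}(u_{-i})$ for pairs of robustly optimal strategies, and then use the no-equal-differences property of $u_i^*$ to force either case (i) or case (ii); that last combinatorial step is exactly the paper's, just phrased as additive separability. The genuine gap is precisely the step you flag as the crux: you need \emph{every} $s_i\in U\!D_i(R_i)$ to lie in $\bigcap_{\hat \mu_i \in \mathcal{M}_i(\mu_i)} B\!R_i(u_i^*,\hat \mu_i)$ for some utility belief $\mu_i$ with $\mu_i(\{u_{-i}\})>0$, so that Claim \ref{Indiff2} applies to arbitrary pairs. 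Your proposed route --- building $\bar\mu_i$, mixing with $\delta_{u_{-i}}$, and arguing that a ``strict uniform optimality gap at $\bar\mu_i$ survives the $O(\varepsilon)$ perturbation'' --- cannot be completed as stated, because Claim \ref{Concav} gives strict optimality of $s_i$ against \emph{one} full-support strategic belief $\hat\mu_i$, not uniformly over all of $\mathcal{M}_i(\bar\mu_i)$; compatibility lets mass be redistributed arbitrarily within each opponent's undominated set, and against some of those reshuffled beliefs $s_i$ may simply fail to be a best response. Indeed, no direct-verification argument can work here: in a mechanism that is \emph{not} strategically simple there may be no robustly optimal strategy at all for $\mu_i$, so any correct proof of this step must use the hypothesis of strategic simplicity, which your construction never invokes.

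The missing idea is a short one, and it is the paper's key move. Since $\hat\mu_i$ has full support on $S_{-i}$ and every strategy of every opponent is undominated for some admissible utility function, you can build a joint measure $\nu_i$ whose strategic marginal is $\hat\mu_i$, whose support respects Definition \ref{compatible}, and which assigns the mass that $\hat\mu_i$ puts on $U\!D_{-i}(u_{-i})$ to the utility profile $u_{-i}$ itself; its utility marginal $\mu_i$ then satisfies $\mu_i(\{u_{-i}\})>0$ and $\hat\mu_i\in\mathcal{M}_i(\mu_i)$. Now apply Definition \ref{simple}: the intersection $\bigcap_{\hat \mu \in \mathcal{M}_i(\mu_i)} B\!R_i(u_i^*,\hat \mu)$ is non-empty, and it is contained in $B\!R_i(u_i^*,\hat\mu_i)=\{s_i\}$, hence it equals $\{s_i\}$, i.e.\ $s_i$ is robustly optimal for a belief charging $u_{-i}$. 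This one observation replaces your entire perturbation argument, and it also makes your point-mass anchor $s_i^*$ and the $\varepsilon$-mixture unnecessary: Claim \ref{Indiff2} can then be applied directly to any pair $s_i,s_i'\in U\!D_i(R_i)$, after which your separability-plus-genericity finish goes through verbatim. Your fallback suggestion (derive a contradiction by exhibiting a belief with empty robust best-response set) is in the right spirit --- it at least uses simplicity --- but as written it is only a sketch with no construction behind it.
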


\begin{proof}
	Let us represent $R_i$ by the utility function $u_i^*$ from Claim \ref{Concav}. Pick any two $s_i,s_i'\in U\!D_i(R_i)$. By Claim \ref{Concav} there are a full support strategic belief $\hat \mu_i$  such that: $B\!R_i(u_i^*,\hat \mu_i) = \{s_i\}$, and a full support strategic belief $\hat \mu_i'$ such that: $B\!R_i(u_i^*,\hat \mu_i') = \{s_i\}$. Because $\hat \mu_i$ has full support, and because every strategy is undominated for at least some utility function, there is a utility belief $\mu_i$ with $\mu_i(u_{-i})>0$ that is compatible with $\hat \mu_i$. Similarly, there is a utility belief $\mu_i'$ with $\mu_i'(u_{-i})>0$ that is compatible with $\hat \mu_i'$. This implies $s_i\in \bigcap_{\hat \mu_i \in \mathcal{M}_i(\mu_i)} B\!R_i(u_i,\hat \mu_i)$ and $s_i'\in \bigcap_{\hat \mu_i' \in \mathcal{M}_i(\mu_i')} B\!R_i(u_i,\hat \mu_i')$. Therefore, by Claim \ref{Indiff2} for all $s_{-i},s_{-i}' \in U\!D_{-i}(u_{-i})$:
	\[
	u_i^*(g(s_i,s_{-i}))-u_i^*(g(s_i',s_{-i})) = u_i^*(g(s_i,s_{-i}'))-u_i^*(g(s_i',s_{-i}')). \hskip 1cm (*)
	\]
	This has to hold for any two $s_i,s_i'\in U\!D_i(R_i)$.\medskip
	
	Now let us fix some $s_i\in U\!D_i(R_i)$, and suppose first that for some $a\in A$ we have: $g(s_i,s_{-i})=a$ for all $s_{-i}\in U\!D_{-i}(u_{-i})$. Then (*) implies that for every other $s_i'\in U\!D_i(R_i)$ there must be some $\tilde a \in A$ such that $g(s_i,s_{-i})=\tilde a$ for all $s_{-i}\in U\!D_{-i}(u_{-i})$. This follows from $u_i^*(a)-u_i^*(b)\neq u_i^*(c)-u_i^*(d)$ for all $(a,b), (c,d)\in A^2$ with $(a,b)\neq (c,d)$. Thus, we have obtained Case (i).\medskip
	
	Next suppose that for the $s_i$ that we fixed in the previous paragraph we have: $g(s_i,s_{-i})\neq g(s_i,s_{-i}')$ for some $s_{-i},s_{-i}' \in U\!D_{-i}(u_{-i})$. Then  $u_i^*(a)-u_i^*(b)\neq u_i^*(c)-u_i^*(d)$ for all $(a,b), (c,d)\in A^2$ with $(a,b)\neq (c,d)$ implies that (*) can only hold if both sides equal zero, and hence $g(s_i,s_{-i})=g(s_i',s_{-i})$ for all $s_i,s_i'\in U\!D_i(R_i)$ and all $s_{-i}\in U\!D_{-i}(R_{-i})$. Thus, we have obtained Case (ii).
\end{proof}

\begin{claim}\label{GroupLD2} Suppose for every agent $j$ we have a preference $R_j \in \mathcal{R}_j$ on $A$. Then, for every agent $i$, either
	
	(i) there is for every strategy $s_i\in U\!D(R_i)$ an alternative $a$ such that $g(s_i,s_{-i})=a$ for all $s_{-i}\in U\!D_{-i}(R_{-i})$,
	
	or
	
	(ii) there is for every strategy combination $s_{-i}\in U\!D_{-i}(R_{-i})$ an alternative $a$ such that $g(s_i,s_{-i})=a$ for all $s_i\in U\!D_i(R_i)$,
	
	or both.
\end{claim}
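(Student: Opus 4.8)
The plan is to reduce this claim about the \emph{ordinal} profile $R_{-i}$ directly to Claim \ref{GroupLD1}, which already establishes exactly the same dichotomy, but stated for a \emph{cardinal} profile $u_{-i}$ and the set $U\!D_{-i}(u_{-i})$. The only gap to bridge is that Claim \ref{GroupLD2} refers to $U\!D_{-i}(R_{-i})$, the strategies surviving \emph{pure} weak dominance under the ordinal preferences, whereas Claim \ref{GroupLD1} refers to $U\!D_{-i}(u_{-i})$, the strategies surviving weak dominance by a pure \emph{or mixed} strategy for a given cardinal profile. The key step is therefore to produce, for each opponent $j\neq i$, a utility function $u_j^*$ representing $R_j$ whose undominated set coincides exactly with $U\!D_j(R_j)$.

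First I would invoke Claim \ref{Concav} (only its first sentence is needed): for each $j\neq i$ it furnishes a utility function $u_j^*$ representing $R_j$ such that every $s_j\in U\!D_j(R_j)$ is the unique best response to some full-support strategic belief. From this I would deduce the set equality $U\!D_j(u_j^*)=U\!D_j(R_j)$. The inclusion $U\!D_j(R_j)\subseteq U\!D_j(u_j^*)$ holds because a strategy that is the unique maximizer of expected utility against a full-support belief cannot be weakly dominated, even by a mixed strategy, since any dominating (possibly mixed) strategy would yield strictly higher expected utility against that full-support belief. The reverse inclusion $U\!D_j(u_j^*)\subseteq U\!D_j(R_j)$ is immediate: if $s_j$ were weakly dominated given $R_j$ by some pure $\hat s_j$, then, because $u_j^*$ represents $R_j$, that same $\hat s_j$ would weakly dominate $s_j$ for the cardinal $u_j^*$, contradicting $s_j\in U\!D_j(u_j^*)$. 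By the richness assumption $\mathbf{U}_j=\bigcup_{R_j\in\mathcal{R}_j}\mathcal{U}(R_j)$ each $u_j^*$ lies in $\mathbf{U}_j$, so the profile $u_{-i}^*\equiv(u_j^*)_{j\neq i}$ belongs to $\mathbf{U}_{-i}$, induces $R_{-i}$, and, taking products, satisfies $U\!D_{-i}(u_{-i}^*)=U\!D_{-i}(R_{-i})$.

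With this identification in hand, I would apply Claim \ref{GroupLD1} to the preference $R_i$ and the cardinal profile $u_{-i}^*\in\mathbf{U}_{-i}$. Its conclusion is precisely the dichotomy asserted here once every occurrence of $U\!D_{-i}(u_{-i}^*)$ is replaced by the equal set $U\!D_{-i}(R_{-i})$: case (i) of Claim \ref{GroupLD1} becomes case (i) here, and case (ii) becomes case (ii). The only substantive content of the proof is the set equality $U\!D_{-i}(u_{-i}^*)=U\!D_{-i}(R_{-i})$, which reconciles the pure-strategy (ordinal) notion of dominance appearing in the claim with the mixed-strategy (cardinal) notion appearing in Claim \ref{GroupLD1}; this reconciliation is exactly what Claim \ref{Concav} was constructed to provide. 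Everything else is a direct substitution into the previously established dichotomy, so I expect no genuine obstacle beyond verifying this identification carefully.
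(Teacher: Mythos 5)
Your proposal is correct and takes essentially the same route as the paper: the paper's own proof is a single sentence stating that Claim \ref{GroupLD2} follows from Claim \ref{GroupLD1} once each $R_j$ ($j\neq i$) is represented by the utility function $u_j^*$ from Claim \ref{Concav}, because then $U\!D_{-i}(u_{-i}^*)=U\!D_{-i}(R_{-i})$. Your only addition is to spell out the two inclusions behind this set equality (full-support unique best responses cannot be weakly dominated even by mixtures, and pure dominance under $R_j$ transfers to dominance under any representing $u_j^*$), which the paper leaves implicit.
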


\begin{proof}
	Claim \ref{GroupLD2} follows from Claim \ref{GroupLD1} if we represent for each $j$ with $j\neq i$ the preference $R_j$ by the utility function $u_j^*$ referred to in Claim \ref{Concav} because then: $U\!D_{-i}(u^*_{-i})=U\!D_{-i}(R_{-i})$.
\end{proof}

{\sc Completing the Proof of Theorem \ref{thm:MainR}:} The claim is obviously true if there is an alternative $a$ such that $g(s)=a$ for all $s\in U\!D(R)$. Therefore from now on we restrict attention in this proof to the case that there are two alternatives $a\neq b$ such that  $g(s)=a$ for some $s\in U\!D(R)$ and $g(s')=b$ for some other $s'\in U\!D(R)$.\medskip

We shall say that agent $i\in I$ ``has no influence'' if for every $s_{-i}\in U\!D_{-i}(R_{-i})$ there is an $a\in A$ such that $g(s_i,s_{-i})=a$ for all $s_i\in U\!D_i(R_i)$, and we shall say that agent $i$ is a dictator if  agent $i$ has the property ascribed to agent $i^*$  in Theorem \ref{thm:MainR}.  By Claim \ref{GroupLD2} every agent $i$ either has no influence, or is a dictator.\medskip

Next note that it cannot be that there is more than one dictator. A dictator can enforce any of the alternatives contained in $\{g(s) | s\in U\!D(R)\}$. We have assumed that there are at least two such alternatives, say $a$ and $b$. Having two dictators leads to a contradiction if one of them chooses an action that enforces $a$, and the other one chooses an action that enforces $b$.\medskip

Finally note that it cannot be that all agents have no influence. Recall that we are considering the case in which there are two alternatives $a\neq b$ such that  $g(s)=a$ for some $s\in U\!D(R)$ and $g(s')=b$ for some other $s'\in U\!D(R)$. Consider the sequence of $n$ strategy combinations $s^k$ obtained by switching sequentially first agent 1, then agent 2, etc. from strategy $s_i$ to strategy $s_i'$. Thus, $s^1=(s_1',s_2,\ldots,s_n)$, $s^2=(s_1',s_2',s_3 \ldots,s_n)$, etc. Define $s^0=s$. Because $g(s^0)\neq g(s^n)$, there must be some $k$ such that $g(s^k)\neq g(s^{k-1})$. But this means that by construction agent $k$ has influence. Hence agent $k$ must be a dictator. \qed

\section{Proof of Theorem 2}

\smallskip

Part (1) of Theorem \ref{thm:MM} is obvious. Here, we only provide the proof for part (2). Consider a type 1 strategically simple mechanism, and let:
$$i^* \in \bigcap \limits_{R \, \in \bigtimes_{i\in I} \mathcal{R}_i}I^*(R).$$
We shall show that, for all $i \neq i^*$ and all $R_i \in \mathcal{R}_i$, the set $U\!D_i(R_i)$ contains exactly one element. Suppose that, for some $i$ and $R_i$, the set $U\!D_i(R_i)$ had two distinct elements, say $s_i$ and $s_i'$. Consider any $s_{-i}\in S_{-i}$. We claim that $g(s_i,s_{-i})=g(s_i',s_{-i})$. To see this, first note that $s_{-i} \in U\!D_{-i}(R_{-i})$ for some $R_{-i}\in \bigtimes_{j \neq i} \mathcal{R}_j$, because we assume that every strategy is not weakly dominated for some utility function. Now consider the preference profile $(R_i, R_{-i})$. Since agent $i^*$ is local dictator for this preference profile, for any $s_i^* \in U\!D_{i^*}(R_{i^*})$, there is an $a \in A$ such that: $g(s_{i^*}, s_{-i^*})=a$ for all $s_{-i^*} \in U\!D_{-i^*}(R_{-i^*})$. This implies:  $g(s_{i^*}, s_i,s_{-(i^*,i)})=g(s_{i^*},s_i',s_{-(i^*,i)})$ for all $s_{-(i^*,i)}\in U\!D_{-(i^*,i)}(R_{-(i^*,i)})$. As this holds for all $s_i^* \in U\!D_{i^*}$, the assertion follows. But this contradicts our assumption that mechanisms do not have duplicate strategies.\medskip

Fix any $s_{i^*}\in S_{i^*}$, and consider the mechanism in which we have removed agent $i^*$ from the set of agents, in which all other agents have the same strategy sets as originally, i.e., $S_j$, and in which the outcome corresponding to any $s_{-i^*}$ is given by $g(s_{i^*}, s_{-i^*})$. Let us call this mechanism the ``restricted mechanism'' corresponding to $s_{i^*}$. If all agents $j\neq i$ play the strategies that are uniquely dominant in the overall mechanism, then the restricted mechanism implements an outcome function: $F_{s_{i^*}}: \mathbf{U}_{-i} \times \mathbf{M}_{-i} \rightarrow A$. Because, in the overall mechanism, agents have dominant strategies, the outcome correspondence is constant with respect to beliefs, and it is also constant if utility functions are changed without changing the order of the  elements of $A$. We can therefore write $F$ as: $F_{s_{i^*}}: \bigtimes_{j \neq i^*} \mathcal{R}_j \rightarrow A$. We can treat this outcome function as a direct mechanism. Because agents choose dominant strategies in the overall mechanism, in the direct mechanism it is a dominant strategy for each agent to report their preferences truthfully.\medskip


Because in the overall mechanism agents have uniquely dominant strategies, they must have for every preference ordering a strategy that induces in each of the restricted mechanisms a dominant strategy. Agent $i^*$ thus expects, for each of the strategies that he can choose, the same outcome distribution as he would in the sequential mechanism described in Theorem \ref{thm:MM}, if the second stage mechanisms are the restricted mechanisms described by the outcome function $F_{s_{i^*}}$. Agent $i^*$ will make the same choice as in the sequential mechanism as in the given type 1 strategically simple mechanism. This implies part (ii) of Theorem \ref{thm:MM}. \qed

\section{Further Properties of Strategically Simple Mechanisms} \label{app:structure}

\smallskip

In this section of the appendix we provide some general properties of strategically simple mechanisms that will be useful for the proofs of our results for the bilateral trade and the voting applications. This section's results rely on the domain assumptions in Theorem \ref{thm:MainR}, but they are not restricted to the applications, and therefore also complement the main text's results on strategically simple mechanisms in general.\medskip

Before we turn to our results on strategically simple mechanisms, we record a simple observation that is a direct implication of the notion of weak dominance, and that is unrelated to strategic simplicity.

\begin{lemma} \label{lem:UD}
	Let $i \in I$, $R_i, R_i' \in \mathcal{R}_i$, and $s_i \in U\!D_i (R_i)$. Then there exists $s_i' \in U\!D_i (R_i')$ such that for all $s_{-i} \in S_{-i}$:
	\[
	g(s_i', s_{-i}) R_i' g(s_i, s_{-i}).
	\]
\end{lemma}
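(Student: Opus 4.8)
The plan is to treat weak dominance given $R_i'$ as a strict partial order on the finite set $S_i$ and to find, above $s_i$, a maximal (i.e.\ undominated) element. First I would introduce the preorder $\succeq$ on $S_i$ defined by declaring $s \succeq s'$ whenever
\[
g(s,s_{-i}) \, R_i' \, g(s',s_{-i}) \quad \text{for all } s_{-i}\in S_{-i}.
\]
Since $R_i'$ is reflexive and transitive, $\succeq$ is reflexive and transitive. The definition of weak dominance given $R_i'$ says exactly that $s'$ is weakly dominated iff there is some $s$ with $s \succeq s'$ together with $g(s,s_{-i})\, P_i' \, g(s',s_{-i})$ for at least one $s_{-i}$; writing $s \succ s'$ for this strict relation, the set $U\!D_i(R_i')$ is precisely the set of $\succeq$-maximal strategies, namely those $s'$ for which no $s$ satisfies $s \succ s'$.

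Next I would show that $\succ$ is acyclic, which then lets me ascend from $s_i$ to a maximal element. Starting at $s_i$, if $s_i \in U\!D_i(R_i')$ I am already done with $s_i'=s_i$; otherwise there is some strategy strictly weakly-dominating $s_i$, and I replace $s_i$ by it and repeat. Because $S_i$ is finite, it suffices to rule out cycles. The key observation is that a putative cycle $s^0 \prec s^1 \prec \cdots \prec s^m = s^0$ forces a contradiction: letting $s_{-i}^*$ be an index witnessing the strict improvement in the first step $s^0 \prec s^1$, transitivity of $R_i'$ along the cycle gives $g(s^1,s_{-i}^*)\,R_i'\cdots R_i'\,g(s^0,s_{-i}^*)$ while $g(s^1,s_{-i}^*)\,P_i'\,g(s^0,s_{-i}^*)$, whence $g(s^0,s_{-i}^*)\,P_i'\,g(s^0,s_{-i}^*)$, contradicting irreflexivity of $P_i'$. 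Hence the ascending procedure terminates after finitely many steps at some $s_i' \in U\!D_i(R_i')$ with $s_i' \succeq s_i$, and transitivity of $R_i'$ at each fixed $s_{-i}$ yields $g(s_i',s_{-i})\,R_i'\,g(s_i,s_{-i})$ for all $s_{-i}$, which is the desired conclusion.

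The only delicate point — and thus the main obstacle, though a mild one — is precisely this acyclicity step, since each single weak domination may improve the outcome strictly only at \emph{some} $s_{-i}$, so one cannot simply argue coordinatewise; the resolution is to pin down the strict coordinate of the \emph{first} step and invoke transitivity there, as above. It is worth noting that the hypothesis $s_i \in U\!D_i(R_i)$ is never used: the statement in fact holds for every $s_i \in S_i$, and the argument does not even require the no-duplicate-strategies assumption, relying only on the finiteness of $S_i$ and the transitivity and antisymmetry inherent in $R_i' \in \mathcal{R}$.
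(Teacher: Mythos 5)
Your proof is correct and takes essentially the same approach as the paper: the paper's own two-line proof likewise sets $s_i'=s_i$ when $s_i\in U\!D_i(R_i')$ and otherwise appeals to finiteness to obtain an undominated strategy that weakly dominates $s_i$ --- which is exactly the claim your ascending-chain and acyclicity argument establishes in detail (the paper asserts it without proof, and, as you note, neither proof ever uses the hypothesis $s_i\in U\!D_i(R_i)$). One typographical point: the chain in your acyclicity step should be displayed as $g(s^0,s_{-i}^*)\,R_i'\,\cdots\,R_i'\,g(s^1,s_{-i}^*)$, since the relation runs opposite to the traversal order of the cycle; this is the direction your concluding contradiction $g(s^0,s_{-i}^*)\,P_i'\,g(s^0,s_{-i}^*)$ actually requires.
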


\begin{proof}
	If $s_i \in U\!D_i (R_i')$, the lemma is true if we set $s_i' = s_i$. If $s_i \notin U\!D_i (R_i')$, because we are considering finite mechanisms, there is some $s_i'' \in U\!D_i (R_i')$ that weakly dominates $s_i$, and the lemma follows if we set $s_i' = s_i''$.
\end{proof}

For the main results of this section of the appendix, we need some additional notation.  For each $i \in I$ and each $s_{-i} \in S_{-i}$, let:
\begin{align*}
	M_i (s_{-i}) & \equiv \{g(s_i, s_{-i}) | s_i \in S_i\}, \text{ and} \\
	b_i (s_{-i}, R_i) & \equiv \arg\max_{R_i} M_i (s_{-i}).
\end{align*}
In words, $M_i (s_{-i})$ is the set of all possible outcomes when agent $i$'s opponents choose $s_{-i}$, and $b_i (s_{-i}, R_i)$ is agent $i$'s most preferred outcome in the set $M_i (s_{-i})$ if agent $i$'s preference is $R_i$. In this paper, we say that $M_i (s_{-i})$ is the ``menu'' offered to agent $i$ if the other agents choose $s_{-i}$.\medskip

\begin{lemma} \label{lem:structure}
	
	Consider a strategically simple mechanism under the domain assumptions in Theorem \ref{thm:MainR}. Suppose that for some $R_{-i} \in \mathcal{R}_{-i}$ the set $U\!D_{-i} (R_{-i})$ has at least two elements. Let $R_i \in \mathcal{R}_i$.
	
	\begin{itemize}
		
		\item [(1)] Suppose that for some $s_{-i}', s_{-i}'' \in U\!D_{-i} (R_{-i})$ we have $b_i (s_{-i}', R_i) \neq b_i (s_{-i}'', R_i)$. Then for all $s_{-i} \in U\!D_{-i} (R_{-i})$ and all $s_i \in U\!D_i (R_i)$:
		\[
		g(s_i, s_{-i}) = b_i (s_{-i}, R_i).
		\]
		
		\item [(2)] Suppose that for some $s_{-i}', s_{-i}'' \in U\!D_{-i} (R_{-i})$ we have $b_i (s_{-i}', R_i) = b_i (s_{-i}'', R_i) = a$ for some outcome $a$. Then there exists some strategy $s_i \in U\!D_i (R_i)$ such that:
		\[
		g(s_i,s_{-i}') = g(s_i,s_{-i}'') = a.
		\]
		Moreover, for all $s_i \in U\!D_i (R_i)$ we have:
		\[
		g(s_i, s_{-i}')= g(s_i, s_{-i}'').
		\]
		
	\end{itemize}
	
\end{lemma}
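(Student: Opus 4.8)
The plan is to reduce both parts to the dichotomy established while proving Theorem~\ref{thm:MainR}, namely Claim~\ref{GroupLD2}: applied to the fixed profile $(R_i,R_{-i})$, agent $i$ is either a \emph{local dictator} (case~(i): every $s_i\in U\!D_i(R_i)$ enforces one outcome $a(s_i)$ across all $s_{-i}\in U\!D_{-i}(R_{-i})$) or \emph{has no influence} (case~(ii): every $s_{-i}\in U\!D_{-i}(R_{-i})$ yields one outcome $c(s_{-i})$ across all $s_i\in U\!D_i(R_i)$). The second ingredient is an elementary observation that I will call~(A): for every $s_{-i}\in S_{-i}$ there is an undominated strategy $s_i\in U\!D_i(R_i)$ with $g(s_i,s_{-i})=b_i(s_{-i},R_i)$. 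To prove~(A), pick any $\tilde s_i\in S_i$ attaining the menu-best outcome $b_i(s_{-i},R_i)\in M_i(s_{-i})$, pass (as in the proof of Lemma~\ref{lem:UD}, using finiteness) to an $s_i\in U\!D_i(R_i)$ with $g(s_i,s_{-i})\,R_i\,g(\tilde s_i,s_{-i})$, and note that since $g(s_i,s_{-i})\in M_i(s_{-i})$ while $b_i(s_{-i},R_i)$ is $R_i$-maximal in that menu, anti-symmetry forces equality. Observation~(A) is what ties the abstract dichotomy to the outcomes $b_i$.

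\textbf{Part (1).} Assume $b_i(s_{-i}',R_i)\neq b_i(s_{-i}'',R_i)$. I first rule out case~(i). By~(A) there are $s_i^{\dagger},s_i^{\ddagger}\in U\!D_i(R_i)$ with $g(s_i^{\dagger},s_{-i}')=b_i(s_{-i}',R_i)$ and $g(s_i^{\ddagger},s_{-i}'')=b_i(s_{-i}'',R_i)$. In case~(i) these strategies enforce constant outcomes on $U\!D_{-i}(R_{-i})$, so $g(s_i^{\dagger},s_{-i}'')=b_i(s_{-i}',R_i)$ and $g(s_i^{\ddagger},s_{-i}')=b_i(s_{-i}'',R_i)$. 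Thus $b_i(s_{-i}'',R_i)\in M_i(s_{-i}')$ and $b_i(s_{-i}',R_i)\in M_i(s_{-i}'')$, and maximality of each outcome in its own menu, together with the two outcomes being distinct, yields both $b_i(s_{-i}',R_i)\,P_i\,b_i(s_{-i}'',R_i)$ and $b_i(s_{-i}'',R_i)\,P_i\,b_i(s_{-i}',R_i)$, a contradiction. Hence case~(ii) holds; since in that case $g(s_i,s_{-i})=c(s_{-i})$ for all $s_i\in U\!D_i(R_i)$ while (A) exhibits some such $s_i$ with value $b_i(s_{-i},R_i)$, we get $c(s_{-i})=b_i(s_{-i},R_i)$ for every $s_{-i}\in U\!D_{-i}(R_{-i})$, which is exactly the claimed identity $g(s_i,s_{-i})=b_i(s_{-i},R_i)$ for all $s_i\in U\!D_i(R_i)$.

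\textbf{Part (2).} Assume $b_i(s_{-i}',R_i)=b_i(s_{-i}'',R_i)=a$. In case~(ii), the argument above gives $c(s_{-i}')=c(s_{-i}'')=a$, so every $s_i\in U\!D_i(R_i)$ yields $a$ against both $s_{-i}'$ and $s_{-i}''$; this delivers the existence statement and the ``moreover'' statement simultaneously. In case~(i), each $s_i\in U\!D_i(R_i)$ enforces the same outcome $a(s_i)$ against both $s_{-i}'$ and $s_{-i}''$, which is the ``moreover'' statement; and the strategy $s_i^{\dagger}$ from~(A) has $a(s_i^{\dagger})=g(s_i^{\dagger},s_{-i}')=a$, hence $g(s_i^{\dagger},s_{-i}'')=a$ as well, giving the existence statement.

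The main obstacle is the contradiction inside Part~(1): one must see that distinct menu-best outcomes against two opponent profiles are incompatible with local dictatorship, because a single dictating strategy would carry each best outcome into the \emph{other} opponent's menu, producing a strict-preference cycle. Everything else is bookkeeping once Claim~\ref{GroupLD2} and observation~(A) are in place; in particular, the hypothesis $\#U\!D_{-i}(R_{-i})\ge 2$ is used only to guarantee that distinct $s_{-i}',s_{-i}''$ exist to feed into the two cases.
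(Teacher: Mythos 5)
Your proof is correct and takes essentially the same route as the paper's: both rest on the observation that undominated strategies attain the menu-best outcome (your (A), which the paper gets from Lemma \ref{lem:UD}) combined with the local-dictator dichotomy of Theorem \ref{thm:MainR}/Claim \ref{GroupLD2}, concluding in part (1) that agent $i$ cannot be the dictator (so the outcome is constant in $s_i$ and must equal $b_i(s_{-i},R_i)$) and in part (2) splitting on which side dictates. The only cosmetic difference is how dictatorship by $i$ is excluded in part (1): the paper fixes a WLOG ordering and notes the $R_i$-better of the two menu-best outcomes is absent from the other menu, while you derive a strict-preference cycle from both menu-best strategies; the substance is identical.
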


In words, this lemma says the following. Consider any agent $i$. Consider any preference profile for which at least one agent other than $i$ has more than one undominated strategy. Each of the profiles of undominated strategies of the other agents offers a menu to agent $i$, and from each of these menus we can determine the outcome that agent $i$ prefers most. In case (1), agent $i$'s preferred alternative is different for two such menus. In this case, \emph{all} undominated strategies of agent $i$ must yield agent $i$'s most preferred alternative from the menu offered by the other agents for all undominated strategy profiles of the other agents. In case (2), agent $i$'s preferred alternative is the same for two such menus. In this case, \emph{some, but not all} undominated strategies of agent $i$ must yield agent $i$'s most preferred alternative from the menu offered by the other agents. In case (2) every other undominated strategy of agent $i$ must, however, yield the same outcome, regardless of which undominated strategy profile the other agents choose.


\begin{proof}
	
	(1) Without loss of generality, let $b_i (s_{-i}', R_i) = a$ and $b_i (s_{-i}'', R_i) = b$. Also without loss of generality, we assume that $R_i$ ranks $a$ above $b$. Therefore, $a \notin M_i (s_{-i}'')$. By Lemma \ref{lem:UD}, agent $i$ has an undominated strategy $s_i'$ such that $g(s_i' , s_{-i}')=a$. Because $a \notin M_i (s_{-i}'')$, we know that $g(s_i', s_{-i}'') \neq a$. This means that agent $i$ is not the local dictator, and hence some agent other than agent $i$ must be the local dictator. That is, for every strategy $s_{-i} \in U\!D_{-i} (R_{-i})$, the outcome is independent of agent $i$'s strategy $s_i \in U\!D_i (R_i)$. Moreover, by Lemma \ref{lem:UD}, the outcome has to be agent $i$'s most preferred outcome in the set $M_i (s_{-i})$.\medskip
	
	(2) Suppose that there is no strategy $s_i \in U\!D_i (R_i)$ such that $g(s_i, s_{-i}') = g(s_i, s_{-i}'') = a$. Since $b_i (s_{-i}', R_i) = a$, by Lemma \ref{lem:UD}, there must be some $s_i' \in U\!D_i (R_i)$ such that $g(s_i', s_{-i}') = a$. Hence, $g(s_i', s_{-i}'') \neq a$. This means that agent $i$ is not the local dictator, and hence some agent other than agent $i$ must be the local dictator. This implies that $g(s_i, s_{-i}'') \neq a$ for all $s_i \in U\!D_i (R_i)$. But this contradicts that $b_i (s_{-i}'', R_i) = a$ and Lemma \ref{lem:UD}. Thus, we can conclude that there exists some strategy $s_i \in U\!D_i (R_i)$ such that $g(s_i, s_{-i}') = g(s_i, s_{-i}'') = a$.\medskip
	
	Now either for all $s_i \in U\!D_i (R_i)$ we also have $g(s_i, s_{-i}') = g(s_i, s_{-i}'') = a$, in which case the second part of the assertion obviously holds, or there exists some undominated strategy that sometimes yields an outcome other than $a$ if her opponents choose $s_{-i}'$ or $s_{-i}''$. In the latter case, agent $i$ must be the local dictator, and the second part of the assertion follows.
\end{proof}

Part (2) of the above lemma has a simple implication. Suppose for some preference profile of the agents other than $i$ at least one agent has two undominated strategies. Then no two profiles of undominated strategies of the agents other than $i$ can offer the same menu to agent $i$.

\begin{corollary} \label{cor:not-same-menu}
	
	Consider a strategically simple mechanism under the domain assumptions of Theorem \ref{thm:MainR}. Suppose that for some $R_{-i} \in \mathcal{R}_{-i}$ the set $U\!D_{-i} (R_{-i})$ has at least two elements. If $s_{-i}', s_{-i}'' \in U\!D_{-i} (R_{-i})$ and $s_{-i}' \neq s_{-i}''$, then:
	\[
	M_i (s_{-i}') \neq M_i (s_{-i}'').
	\]
	
\end{corollary}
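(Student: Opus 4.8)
The plan is to argue by contradiction, with part~(2) of Lemma~\ref{lem:structure} serving as the engine. Suppose, contrary to the claim, that $s_{-i}',s_{-i}''\in U\!D_{-i}(R_{-i})$ satisfy $s_{-i}'\neq s_{-i}''$ yet $M_i(s_{-i}')=M_i(s_{-i}'')$. The first step is the observation that equal menus have equal maximal elements under every preference: for each $R_i\in\mathcal{R}_i$ we have $b_i(s_{-i}',R_i)=b_i(s_{-i}'',R_i)$, since $b_i(s_{-i},R_i)$ is by definition the $R_i$-best element of the set $M_i(s_{-i})$, and these two sets coincide. Consequently, for every $R_i$ the pair $s_{-i}',s_{-i}''$ falls under case~(2) of Lemma~\ref{lem:structure} (the hypothesis that $U\!D_{-i}(R_{-i})$ has at least two elements is met, as this set contains both $s_{-i}'$ and $s_{-i}''$).

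Next I would invoke the ``moreover'' part of the conclusion of Lemma~\ref{lem:structure}(2), which yields $g(s_i,s_{-i}')=g(s_i,s_{-i}'')$ for all $s_i\in U\!D_i(R_i)$. The key move is then to let $R_i$ range over all of $\mathcal{R}_i$. Every strategy $s_i\in S_i$ is undominated for at least one admissible utility function $u_i$, and since weak dominance given a preference (by a pure strategy) forces weak dominance for any utility function representing that preference, we have $U\!D_i(u_i)\subseteq U\!D_i(R_i)$ whenever $u_i$ induces $R_i$. Hence each $s_i$ lies in $U\!D_i(R_i)$ for some $R_i\in\mathcal{R}_i$, and applying the displayed equality at that preference gives
\[
g(s_i,s_{-i}')=g(s_i,s_{-i}'')\quad\text{for all } s_i\in S_i.
\]

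The final and most delicate step is to convert this identity into a contradiction with the no-duplicate-strategies assumption, and this is where I expect the main obstacle to lie. The identity says that $s_{-i}'$ and $s_{-i}''$ are completely indistinguishable from agent $i$'s point of view. In the two-agent environments in which this corollary is applied---the bilateral trade problem and the voting problem---$s_{-i}'$ and $s_{-i}''$ are simply two distinct strategies of the unique opponent $j\neq i$, and the identity states exactly that these two strategies of $j$ produce the same outcome against every strategy of agent $i$, directly contradicting the assumption that the mechanism has no duplicate strategies; this rules out the supposed pair and forces $M_i(s_{-i}')\neq M_i(s_{-i}'')$. The point that must be handled with care in the general $n$-agent formulation is that the per-agent no-duplicate condition is phrased for a single agent, so one must locate an agent $j$ in which $s_{-i}'$ and $s_{-i}''$ disagree and trace the indistinguishability back to two distinct strategies of that agent; for the applications in this paper this reduction is immediate.
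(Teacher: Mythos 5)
Your proof is correct and takes essentially the same route as the paper's: an indirect argument in which equal menus force $b_i(s_{-i}',R_i)=b_i(s_{-i}'',R_i)$ for every $R_i\in\mathcal{R}_i$, part (2) of Lemma \ref{lem:structure} then makes $s_{-i}'$ and $s_{-i}''$ yield identical outcomes against every strategy of agent $i$ (each strategy being undominated for some preference), and the no-duplicate-strategies assumption delivers the contradiction. The $n$-agent subtlety you flag at the end is not a defect of your argument relative to the paper: the paper's own proof simply declares the two profiles ``duplicate strategies'' without addressing that the no-duplicates condition is stated per agent, and in the two-agent settings where the corollary is actually invoked your reduction coincides exactly with the paper's.
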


\begin{proof}
	The proof is indirect. If $M_i (s_{-i}') = M_i (s_{-i}'')$, then $b_i (s_{-i}', R_i) = b_i (s_{-i}'', R_i)$ for all $R_i \in \mathcal{R}_i$. Therefore, case (2) of Lemma \ref{lem:structure} applies, and for any strategy of agent $i$ that is undominated for some preference (in other words, for all strategies of agent $i$), $s_{-i}'$ and $s_{-i}''$ yield the same outcome. They are thus duplicate strategies. But this contradicts our assumption that the mechanism does not contain any duplicate strategies.
\end{proof}

\section{Proof of Proposition \ref{BT2}} \label{app:bilateral-trade}\smallskip

To simplify the notation, we shall use ``$v_i$'' not just to refer to agent $i$'s value of the object, but also to refer to the corresponding ordinal preference. We use $U\!D_i(v_i)$ to denote the set of strategies of agent $i$ that are not weakly dominated if agent $i$ has ordinal preference $v_i$. We use $I^*(v_S,v_B)$ to denote the set of local dictators at preference profile $(v_S,v_B)$.  Finally, for any $(v_S,v_B)$, we denote by $\mathcal{O}(v_S,v_B)$ the set of outcomes that can arise when both agents play strategies that are not weakly dominated given their valuations. That is, $\mathcal{O}(v_S,v_B) \equiv \{ a \in A \, | \, a = g(s_S,s_B) \text{ for some } s_S \in U\!D_S(v_S) \text{ and } s_B \in U\!D_B(v_B) \}$.\medskip

We now prove four claims that will be useful in the proof of the proposition. These claims describe implications of strategic simplicity in the bilateral trade setting, regardless of whether we are referring to type 1 or type 2 strategic simplicity.

\begin{claim}\label{TDOutcomes}
	If $I^*(v_S,v_B)=\{S,B\}$, then $|\mathcal{O}(v_S,v_B)| =1 $.
\end{claim}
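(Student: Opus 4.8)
The plan is to read off the conclusion directly from the definition of local dictator in Theorem \ref{thm:MainR}, using no special feature of the bilateral trade environment beyond what is already recorded in the appendix's notation. By hypothesis both agents lie in $I^*(v_S,v_B)$, so both are local dictators at the profile $(v_S,v_B)$. Unpacking the definition for the seller: for every $s_S\in U\!D_S(v_S)$ there is an alternative that $g(s_S,\cdot)$ attains on all of $U\!D_B(v_B)$; equivalently, once an undominated $s_S$ is fixed, the outcome $g(s_S,s_B)$ does not depend on which undominated $s_B$ the buyer selects. Symmetrically, because the buyer is also a local dictator, once an undominated $s_B$ is fixed, the outcome $g(s_S,s_B)$ does not depend on which undominated $s_S$ the seller selects.

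The core step is then a one-coordinate-at-a-time (separability) argument. I would fix a reference pair $(\bar s_S,\bar s_B)\in U\!D_S(v_S)\times U\!D_B(v_B)$ and set $a^*:=g(\bar s_S,\bar s_B)$; such a pair exists because both undominated sets are nonempty. For an arbitrary pair $(s_S,s_B)$ in the product I would first invoke the seller's dictatorship to replace $s_B$ by $\bar s_B$ without changing the outcome, yielding $g(s_S,s_B)=g(s_S,\bar s_B)$, and then invoke the buyer's dictatorship to replace $s_S$ by $\bar s_S$ without changing the outcome, yielding $g(s_S,\bar s_B)=g(\bar s_S,\bar s_B)=a^*$. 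Chaining the two substitutions gives $g(s_S,s_B)=a^*$ for every undominated profile, so $\mathcal{O}(v_S,v_B)=\{a^*\}$ and hence $|\mathcal{O}(v_S,v_B)|=1$.

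I do not anticipate a genuine obstacle here: the statement is really the general fact that two simultaneous local dictators force the outcome function to be constant on the rectangle $U\!D_S(v_S)\times U\!D_B(v_B)$, and this follows from the separability reasoning above rather than from anything specific to prices or to voluntary trade. The only points I would be careful to state are that $U\!D_S(v_S)$ and $U\!D_B(v_B)$ are both nonempty, so that $\mathcal{O}(v_S,v_B)$ is well-defined and nonempty, and that each dictatorship condition licenses exactly one single-coordinate substitution while holding the other coordinate fixed; it is precisely the composition of these two substitutions that collapses the achievable outcome set to a singleton.
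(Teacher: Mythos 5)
Your proof is correct and is essentially the paper's own argument spelled out in full: the paper's one-sentence proof (``if one agent were able to enforce two different outcomes, then the other agent could not be a local dictator'') is just the contrapositive of your two-substitution separability chain $g(s_S,s_B)=g(s_S,\bar s_B)=g(\bar s_S,\bar s_B)$. No gap; the extra care about nonemptiness of the undominated sets is harmless and standard for finite mechanisms.
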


\begin{proof}
	This immediately follows from the definition of local dictatorship: if one agent were able to enforce two different outcomes, then the other agent could not be a local dictator.
\end{proof}

\begin{claim}\label{SDOutcomes}
	If $I^*(v_S,v_B)=\{i^*\}$ for some $i^*\in I$, then $|\mathcal{O}(v_S,v_B)| \geq 2$, and $\mathcal{O}(v_S,v_B)\cap T\neq \emptyset.$
\end{claim}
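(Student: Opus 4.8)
The plan is to obtain both conclusions as a logical complement to Claim \ref{TDOutcomes}, exploiting the fact that the bilateral trade environment has exactly one non-trade outcome. First I would record that $\mathcal{O}(v_S,v_B)$ is non-empty: since the mechanism is finite, each agent has at least one undominated strategy (a best response to any full-support belief is never weakly dominated), so $U\!D_S(v_S)\neq\emptyset$ and $U\!D_B(v_B)\neq\emptyset$, and any $g(s_S,s_B)$ with $s_S\in U\!D_S(v_S)$ and $s_B\in U\!D_B(v_B)$ witnesses $\mathcal{O}(v_S,v_B)\neq\emptyset$.

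To prove $|\mathcal{O}(v_S,v_B)|\geq 2$, I would argue by contradiction, in effect establishing the converse of Claim \ref{TDOutcomes}. Suppose $|\mathcal{O}(v_S,v_B)|=1$, say $\mathcal{O}(v_S,v_B)=\{a\}$. Then for each agent $i\in\{S,B\}$ and each $s_i\in U\!D_i(v_i)$, every profile $s_{-i}\in U\!D_{-i}(v_{-i})$ gives $g(s_i,s_{-i})\in\mathcal{O}(v_S,v_B)=\{a\}$, hence $g(s_i,s_{-i})=a$. Thus every undominated strategy of agent $i$ enforces the single outcome $a$, which is exactly the defining property of a local dictator. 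Since this holds simultaneously for $i=S$ and $i=B$, we get $I^*(v_S,v_B)=\{S,B\}$, contradicting the hypothesis $I^*(v_S,v_B)=\{i^*\}$. Therefore $|\mathcal{O}(v_S,v_B)|\geq 2$.

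For the trade conclusion, I would invoke $A=\{\phi\}\cup T$, so that $\phi$ is the unique non-trade outcome and $A\setminus T=\{\phi\}$ is a singleton. A set of at least two outcomes cannot be contained in a singleton, so $\mathcal{O}(v_S,v_B)$, having at least two elements, is not a subset of $\{\phi\}$ and must therefore contain some element of $T$; that is, $\mathcal{O}(v_S,v_B)\cap T\neq\emptyset$.

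The only step requiring care is the converse direction used in the second paragraph — that a unique attainable outcome forces both agents to satisfy the local-dictator condition at once. Once this is in place the argument is purely combinatorial, and in particular the trade conclusion uses nothing about the specific value-based preferences or the voluntary-trade requirement beyond the single-non-trade-outcome structure of $A$. I do not anticipate any substantive obstacle.
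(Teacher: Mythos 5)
Your proposal is correct and follows essentially the same route as the paper: the paper also argues that if $\mathcal{O}(v_S,v_B)$ had a single element then both agents would satisfy the local-dictator property (contradicting uniqueness), and likewise derives $\mathcal{O}(v_S,v_B)\cap T\neq\emptyset$ directly from $|\mathcal{O}(v_S,v_B)|\geq 2$ together with $A=\{\phi\}\cup T$. Your added remarks on non-emptiness of the undominated sets merely make explicit what the paper leaves implicit.
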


\begin{proof}
	The first part of the claim follows from the fact that if $\mathcal{O}(v_S,v_B)$ had just one element, then both agents would be local dictators. The second part of the claim is a direct implication of the first part.
\end{proof}

{\sc Claim} \ref{SDOutcomes} implies that the following notations for pairs $(v_S,v_B)$ such that $I^*(v_S,v_B)$ $HagertyRogerson1987=\{i^*\}$ for some $i^*\in I$ are well-defined: $\bar t(v_S,v_B)\equiv\max \mathcal{O}(v_S,v_B)$ $\cap T$ and $\underline t(v_S,v_B)\equiv\min \mathcal{O}(v_S,v_B)\cap T$.\medskip

Next, we show that the assumption that each agent has an opting out strategy implies that only ex post individually rational outcomes can occur when agents do not choose weakly dominated strategies.

\begin{claim}\label{EPIR}
	For any $(v_S,v_B)\in V_S\times V_B$, for every $i\in \{S,B\}$, agent $i$ with preferences $v_i$ weakly prefers every outcome in $\mathcal{O}(v_S,v_B)$ to no trade.
\end{claim}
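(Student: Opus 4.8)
The plan is to prove the statement for the seller $S$; the buyer case follows by the symmetric argument, with the roles of ``high'' and ``low'' prices interchanged. I would fix $(v_S,v_B)$ and an outcome $a=g(s_S,s_B)\in\mathcal{O}(v_S,v_B)$ with $s_S\in U\!D_S(v_S)$ and $s_B\in U\!D_B(v_B)$; if $a=\phi$ there is nothing to prove, so suppose $a=t\in T$ and, for contradiction, that $t<v_S$, i.e.\ the seller strictly prefers $\phi$ to $t$. The first ingredient I would set up is a ``safe'' seller strategy. Since $\max V_S>\max T$, there is a seller value at which the opt-out strategy delivers the top outcome and is therefore undominated; applying Lemma \ref{lem:UD} then produces some $s_S'\in U\!D_S(v_S)$ with $g(s_S',s_{-S})$ weakly preferred to $\phi$ by the seller for every $s_{-S}$. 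The second ingredient is Theorem \ref{thm:MainR}: at the profile $(v_S,v_B)$ some agent is a local dictator.

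The easy case is when the buyer is a local dictator at $(v_S,v_B)$. Then for the fixed $s_B$ the outcome does not depend on which undominated strategy the seller plays, so $t=g(s_S,s_B)=g(s_S',s_B)$, and the right-hand side is weakly preferred to $\phi$ by construction of $s_S'$, contradicting $t<v_S$.

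The hard case, which I expect to be the main obstacle, is when the buyer is not a local dictator, so that by Theorem \ref{thm:MainR} the seller is. Now $s_S$ enforces $t$ across $U\!D_B(v_B)$, and the difficulty is that undominance of $s_S$ is a condition against \emph{all} buyer strategies, including those dominated at $v_B$, so one cannot simply dominate $s_S$ by $s_S'$. The key observation I would establish is that whenever the seller is a local dictator at a profile $(v_S,w)$, the outcome that the fixed $s_S$ enforces equals the buyer's most preferred element $b_B(s_S,w)$ of the \emph{fixed} menu $M_B(s_S)$: choosing a buyer strategy that attains the menu-maximum and pushing it to an undominated strategy via Lemma \ref{lem:UD} yields an undominated-at-$w$ buyer strategy still attaining that maximum, and local dictatorship forces every undominated buyer strategy to give the common enforced outcome. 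Because $\phi\in M_B(s_S)$ (the buyer can opt out) and the buyer prefers lower prices, $b_B(s_S,w)$ is always either $\phi$ or the single lowest price $\underline\tau:=\min\bigl(M_B(s_S)\cap T\bigr)$, regardless of $w$. Applying this at $w=v_B$ gives $t=\underline\tau$. On the other hand, since $s_S$ is undominated it is not dominated by $s_S'$, so there is a buyer strategy $s_B^\dagger$ with $g(s_S,s_B^\dagger)$ strictly better for the seller than $g(s_S',s_B^\dagger)$; combined with $g(s_S',s_B^\dagger)$ being weakly preferred to $\phi$, this forces $g(s_S,s_B^\dagger)=t^\dagger$ with $t^\dagger>v_S$. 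By the standing assumption that every strategy is undominated for some value, $s_B^\dagger$ is undominated at some $\hat v_B$, and at $(v_S,\hat v_B)$ the seller is again a local dictator (the strategies $s_S$ and $s_S'$ disagree against $s_B^\dagger$, so the buyer cannot be). The menu identity then forces $t^\dagger=b_B(s_S,\hat v_B)=\underline\tau=t$, contradicting $t^\dagger>v_S>t$.

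In short, the crux is the identity ``enforced outcome $=b_B(s_S,w)$'' together with the fact that a fixed seller strategy can enforce at most one trade price, namely $\underline\tau$; once these are in place, the undominance-driven existence of a seller-favorable enforced price $t^\dagger>v_S$ collides with the assumed loss-making price $t<v_S$, and the contradiction is immediate. The only bookkeeping I would verify carefully is that the buyer genuinely has an opt-out (so that $\phi\in M_B(s_S)$) and that the auxiliary value $\hat v_B$ lies in $V_B$; both follow from the bilateral-trade assumptions ($\min V_i<\min T$, $\max V_i>\max T$, voluntary trade) and the every-strategy-undominated hypothesis.
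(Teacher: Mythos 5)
Your proof is correct, and while its easy half mirrors the paper, your treatment of the hard half --- ex post individual rationality for the local dictator herself --- takes a genuinely different route. The common part: like the paper, you dispose of the case where the \emph{other} agent dictates by combining the opt-out strategy with Lemma \ref{lem:UD} (your ``safe'' strategy $s_S'$ is exactly the paper's device for the non-dictator) and the outcome-constancy that local dictatorship delivers. For the dictator's own IR, the paper runs a maximal-counterexample ascent over the finite value grid: a seller-side violation at $(v_S,v_B)$ with the seller dictating is shown to generate a buyer-side violation at $(v_S,v_B')$ with the buyer dictating (its statements (i)--(iv)), which by the symmetric step generates a seller-side violation at $(v_S',v_B')$ with $v_S'>v_S$ (statements (v)--(viii)); finiteness of $V_S$ then closes the contradiction. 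You avoid that induction entirely via your menu identity: since every buyer strategy is undominated at some value, Lemma \ref{lem:UD} lets you place a menu-maximizing buyer strategy inside $U\!D_B(w)$, so at \emph{every} profile $(v_S,w)$ at which the seller dictates, the fixed strategy $s_S$ enforces exactly $b_B(s_S,w)\in\{\phi,\underline{\tau}\}$ with $\underline{\tau}=\min (M_B(s_S)\cap T)$; hence a fixed dictator strategy can enforce at most one trade price, common to all such profiles. Failure of $s_S'$ to dominate $s_S$ (the one step you compress: since $s_S'$ is strictly better than $s_S$ against $s_B$, non-dominance can only fail through some $s_B^{\dagger}$ at which $s_S$ is strictly better) then produces a second such profile $(v_S,\hat{v}_B)$ whose enforced price exceeds $v_S$, colliding with $t<v_S$. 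What each approach buys: yours is shorter, needs no ascent or explicit appeal to the finiteness of $V_S$, and isolates a reusable structural fact about dictator strategies very much in the spirit of Lemma \ref{lem:structure}; the paper's uses nothing beyond Lemma \ref{lem:UD} but pays with the chain construction over value profiles.
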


\begin{proof}
	The claim is straightforward for outcomes when both agents are local dictators. By Lemma \ref{lem:UD}, each agent weakly prefers at least one outcome in $\mathcal{O}(v_S,v_B)$ to no trade. By Claim \ref{TDOutcomes}, if both agents are local dictators, $\mathcal{O}(v_S,v_B)$ has just one element. Hence, both agents must weakly prefer this outcome to no trade.\medskip
	
	In the rest of the proof we focus on the case of a unique local dictator, $I(v_S,v_B)=\{i^*\}$. Consider first the agent who is \emph{not} the local dictator, i.e. agent $i\neq i^*$. Obviously, it is sufficient to consider only outcomes in $\mathcal{O}(v_S,v_B)$ that correspond to trade at some price $t\in T$. Consider any strategy $s_{i^*}\in U\!D_{i^*}(v_{i^*})$ of agent $i^*$ that results in trade at price $t$ against any strategy in $U\!D_{i}(v_{i})$ (see Figure \ref{fig:Prop-illustration-1}). Because $i$ has an opting out strategy, by Lemma \ref{lem:UD} one of the strategies in $U\!D_i(v_i)$ must yield at least as good an outcome as no trade for agent $i$ with preferences $v_i$. This implies that trade at price $t$ must be at least as good as no trade for agent $i$ with preference $v_i$. \smallskip
	
	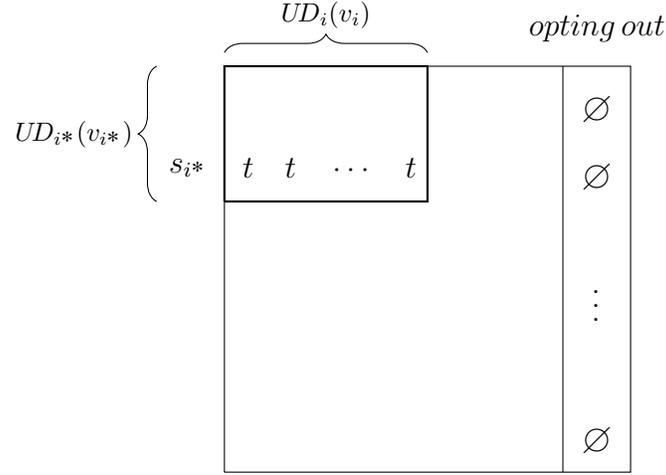
\begin{figure}[h]
		\label{fig-agent-not-the-local-dictator}
		\centering
		\begin{tikzpicture}[scale=0.9]
		\draw (0,0) -- (0,6) -- (6,6) -- (6,0) -- (0,0);
		\draw [thick] (0,4) -- (0,6) -- (3,6) -- (3,4) -- (0,4);
		\draw (5,0) -- (5,6);
		\node [left] at (-0.1,4.5) {$s_{i^*}$};
		\node [left] at (3,4.5) {$t \quad t \quad \cdots \quad t$};
		\node [above] at (5.5,6.2) {$opting \, out$};
		\node [above] at (5.5,5) {$\emptyset$};
		\node [above] at (5.5,4) {$\emptyset$};
		\node[rotate=90,yshift=0pt] at (5.5,2.5) {$\cdots$};
		\node [above] at (5.5,0.1) {$\emptyset$};
		\draw [decorate,decoration={brace,amplitude=7pt},xshift=0pt,yshift=0pt]
		(-1,4) -- (-1,6) node [black,midway,xshift=-1.1cm]
		{\footnotesize $U\!D_{i^*}(v_{i^*})$};
		\draw [decorate,decoration={brace,amplitude=7pt},xshift=0pt,yshift=0pt]
		(0,6.2) -- (3,6.2) node [black,midway,yshift=0.5cm]
		{\footnotesize $U\!D_i(v_i)$};
		\end{tikzpicture}
		\caption{\small  Agent $i^*$ is the unique local dictator at $(v_S,v_B)$. In this case, trade at price $t$ must not be worse than no trade for agent $i$ with value $v_i$.} \label{fig:Prop-illustration-1}
	\end{figure} \smallskip
	
	We are left with the task to show that, when there is a unique local dictator $i^*$, all outcomes are ex post individually rational for the local dictator herself. Without loss of generality we consider the case $i^*=S$. Our proof strategy will be the following. We consider any $(v_S,v_B)$ such that ex post individual rationality for the seller is violated at $(v_S,v_B)$. We  show that then there must be some $v_S'>v_S$ and some $v_B'$ such that $I^*(v_S',v_B')=\{S\}$ and ex post individual rationality for the seller is also violated at $(v_S',v_B')$. This implies the claim, because the assumption that there is any value profile at which the seller's ex post individual rationality were violated would imply that there would have to be a largest $v_S\in I^*_S$ for which individual rationality is violated for some $v_B$, and this would be in contradiction with the assertion that we just made.\medskip
	
	Thus, consider any $(v_S,v_B)$ such that $I^*(v_S,v_B)=\{S\}$ and the seller's individual rationality is violated at $(v_S,v_B)$ (see Figure \ref{fig:Prop-illustration-2}). This means that there is a strategy $s_S\in U\!D_S(v_S)$ for which  $g(s_S,s_B)\in T$, and $g(s_S,s_B)<v_S$ for all $s_B \in U\!D_B(v_B)$. To start, note that there must exist some $v_B' \in V_B$ and $s_B' \in U\!D_B(v_B')$ such that $v_S$ ranks $g(s_S, s_B')$ above no trade. Otherwise, for the seller with preference $v_S$, the strategy $s_S$ would be weakly dominated by the strategy of opting out.  Since $v_S$ ranks $g(s_S, s_B')$ above no trade, we have $g(s_S, s_B') \in T$ and $g(s_S, s_B') > v_S$.\smallskip
	
	\begin{figure}[h]
		\label{fig-agent-the-local-dictator}
		\centering
		\begin{tikzpicture}[scale=0.9]
		\draw (0,0) -- (0,11) -- (11,11) -- (11,0) -- (0,0);
		\draw [thick] (0,7) -- (0,11) -- (4,11) -- (4,7) -- (0,7);
		\draw [thick] (5,7) -- (5,11) -- (9,11) -- (9,7) -- (5,7);
		\draw [thick] (5,2) -- (5,6) -- (9,6) -- (9,2) -- (5,2);
		\draw (10,0) -- (10,11);
		\draw (0,1) -- (11,1);
		\node [left] at (-0.1,7.5) {$s_S$};
		\node [above] at (0.5,11.2) {$s_B$};
		\node [above] at (6,11.2) {$s_B'$};
		\node [above] at (7.3,11.2) {$s_B''$};
		\node [left] at (-0.1,2.5) {$s_S'$};
		\node [above] at (6,10.2) {\footnotesize $g(s_S,s_B')$};
		\node [above] at (6,9.2) {\footnotesize $g(s_S,s_B')$};
		\node [above] at (6,7.2) {\footnotesize $g(s_S,s_B')$};
		\node [rotate=270,yshift=0pt] at (6,8.5) {$\cdots$};
		
		\node [above] at (7.3,10.2) {$\emptyset$};
		\node [above] at (7.3,9.2) {$\emptyset$};
		\node [above] at (7.3,7.2) {$\emptyset$};
		\node [rotate=270,yshift=0pt] at (7.3,8.5) {$\cdots$};
		
		\node [above] at (10.5,11.2) {$opting \, out$};
		\node [above] at (10.5,10.2) {$\emptyset$};
		\node [above] at (10.5,9.2) {$\emptyset$};
		\node[rotate=90,yshift=0pt] at (10.5,5.5) {$\cdots$};
		\node [above] at (10.5,0.2) {$\emptyset$};
		
		\node [left] at (-0.2,0.5) {$opting \, out$};
		\node [above] at (0.5,0.2) {$\emptyset$};
		\node [above] at (1.5,0.2) {$\emptyset$};
		\node [above] at (5.5,0.2) {$\cdots$};
		
		\node [left] at (4.1,7.5) {\footnotesize $g(s_S,s_B) \, \cdots \, g(s_S,s_B)$};
		\node [left] at (9.1,2.5) {\footnotesize $g(s_S',s_B') \, \cdots \, g(s_S',s_B')$};
		
		\draw [decorate,decoration={brace,amplitude=7pt},xshift=0pt,yshift=0pt]
		(-0.8,7) -- (-0.8,11) node [black,midway,xshift=-1cm]
		{\footnotesize $U\!D_S(v_S)$};
		
		\draw [decorate,decoration={brace,amplitude=7pt},xshift=0pt,yshift=0pt]
		(-0.8,2) -- (-0.8,6) node [black,midway,xshift=-1cm]
		{\footnotesize $U\!D_S(v_S')$};
		
		\draw [decorate,decoration={brace,amplitude=7pt},xshift=0pt,yshift=0pt]
		(0,11.9) -- (4,11.9) node [black,midway,yshift=0.5cm]
		{\footnotesize $U\!D_B(v_B)$};
		
		\draw [decorate,decoration={brace,amplitude=7pt},xshift=0pt,yshift=0pt]
		(5,11.9) -- (9,11.9) node [black,midway,yshift=0.5cm]
		{\footnotesize $U\!D_B(v_B')$};
		\end{tikzpicture}
		\caption{\small The seller is the unique local dictator at $(v_S,v_B)$. Suppose that $g(s_S,s_B) < v_S$. We then find a  $v_S' > v_S$ such that $g(s_S',s_B') < v_S'$.} \label{fig:Prop-illustration-2}
	\end{figure} \smallskip
	
	Our next objective is to prove the following statements about the behavior of the mechanism at $(v_S,v_B)$ and $(v_S,v_B')$. Here, $s_B$ is any arbitrary strategy in $U\!D_B(v_B)$.
	\begin{itemize}
		\item[(i)] $B$ is the unique local dictator at $(v_S,v_B')$;
		\item[(ii)] $g(s_S,s_B')>g(s_S,s_B)$;
		\item[(iii)] $g(s_S,s_B)>v_B'$;
		\item[(iv)] $g(s_S,s_B')>v_B'$.
	\end{itemize}\medskip
	
	Proving $(ii)$ is simple: We have $g(s_S,s_B)<v_S$, and, by construction, $g(s_S,s_B')>v_S$. Thus, $(ii)$ follows. Now note that $g(s_S, s_B') >g(s_S,s_B)$ implies that $v_B'$ ranks $g(s_B,s_B')$ below $g(s_S,s_B)$. By Lemma \ref{lem:UD}, there must be some strategy $s_B'' \in U\!D_B(v_B')$ such that $v_B'$ ranks $g(s_S, s_B'')$ above $g(s_S,s_B)$ or $g(s_S, s_B'') = g(s_S,s_B)$. Note that we can conclude $g(s_S, s_B)\neq g(s_S, s_B'')$, and hence that $(i)$ is true.\medskip
	
	As an intermediate step we show next that $g(s_S, s_B'')=\phi$. If $g(s_S, s_B'')$ were an element of $T$, since $v_B'$ ranks $g(s_S, s_B'')$ above $g(s_S,s_B)$ or $g(s_S, s_B'') = g(s_S,s_B)$, it would have to be that $g(s_S, s_B'') \leq g(s_S,s_B)$. Since $v_S$ ranks $g(v_S,v_B)$ below no trade, $v_S$ also ranks $g(s_S, s_B'')$ below no trade. But this contradicts the ex post individual rationality for the agent who is not dictator, which we showed in an earlier step of this proof.  We conclude: $g(s_S, s_B'')=\phi$.\medskip
	
	By construction, $v_B'$ ranks $g(s_S, s_B'')$ above $g(s_S,s_B)$, hence $(iii)$ follows from the fact that
	$g(s_S, s_B'')$ is no trade. Finally, $(ii)$ and $(iii)$ imply $(iv)$.\medskip
	
	Now note that we have obtained a pair of valuations at which the buyer is the local dictator, and, by $(iv)$, the buyer's ex post individual rationality is violated. We can therefore repeat the argument just presented, reversing the roles of the buyer and the seller. This yields the conclusion that there is some $v_S'\in V_S$, and some $s_S' \in U\!D(v_S')$ such that $g(s_S', s_B')\in T$ and $g(s_S',s_B')<v_B'$, and:
	\begin{itemize}
		\item[(v)] $S$ is the local dictator at $(v_S',v_B')$;
		\item[(vi)] $g(s_S',s_B')<g(s_S,s_B')$;
		\item[(vii)] $g(s_S,s_B')<v_S'$;
		\item[(viii)] $g(s_S',s_B')<v_S'$.
	\end{itemize} \smallskip \smallskip
	
	The proof can now be concluded. By construction: $v_S<g(s_S,s_B')$. Result $(vii)$ says: $g(s_S,s_B')<v_S'$. Hence $v_S<v_S'$. Moreover $(viii)$ shows that ex post individual rationality for the seller is violated at $(v_S',v_B')$.
\end{proof}

Our next result shows that, if at some valuation profile  some agent $i$ is the unique local dictator, this agent remains (not necessarily unique) local dictator even if we change $i$'s valuation, keeping the other valuation fixed.

\begin{claim}\label{SDSwitchSD}
	Suppose $I^*(v_i,v_{-i})=\{i\}$. Then $i\in I^*(v_i',v_{-i})$ for all $v_i'\in V_i$.
\end{claim}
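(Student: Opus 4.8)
The plan is to argue by contradiction and to exploit the fact that changing agent $i$'s own valuation leaves the opponent's undominated strategies---and hence the menus $M_i(\cdot)$ offered to $i$---unchanged. Without loss of generality take $i=S$; the case $i=B$ is symmetric after reversing the price order. If $U\!D_B(v_B)$ is a singleton the seller is trivially a local dictator at every $(v_S',v_B)$, so I assume it has at least two elements. Suppose, contrary to the claim, that $S\notin I^*(v_S',v_B)$. Since the mechanism is strategically simple, Theorem \ref{thm:MainR} guarantees a local dictator at $(v_S',v_B)$, so it must be $B$, and $S$ is not a local dictator there. The key observation is that the columns $U\!D_B(v_B)$ are literally the same object at $(v_S,v_B)$ and at $(v_S',v_B)$.

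Next I would read off, via Lemma \ref{lem:structure}, what each profile says about the seller's favorite outcome $b_S(c,\cdot)$ in the column menus $M_S(c)$, $c\in U\!D_B(v_B)$. At $(v_S,v_B)$ the seller is the unique local dictator, so case (1) of Lemma \ref{lem:structure} cannot occur (its conclusion makes the outcome depend only on the column, i.e. makes $B$ a local dictator); hence for every pair of columns the seller's top choice coincides, and there is a single outcome $a^*$ with $b_S(c,v_S)=a^*$ for all $c$. At $(v_S',v_B)$ the seller is \emph{not} a local dictator, so the favorites cannot all agree (otherwise case (2) of Lemma \ref{lem:structure}, applied to every pair, would force each undominated row to be constant across all columns, making $S$ a local dictator); therefore case (1) applies, the realized outcome on column $c$ is exactly $b_S(c,v_S')$, and these favorites take at least two distinct values.

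The contradiction then comes from the bilateral--trade preference structure. Because the seller always has an opting--out strategy, $\phi\in M_S(c)$ for every $c$, and because the seller ranks trades by price, $b_S(c,v_S)$ is the highest--priced trade in $M_S(c)$ when that price exceeds $v_S$ and is $\phi$ otherwise. I would first rule out $a^*=\phi$: were it $\phi$, every enforceable outcome would be weakly worse than $\phi$ for $v_S$, while Claim \ref{EPIR} makes every enforceable outcome weakly better than $\phi$; by strictness of preferences this forces $\mathcal{O}(v_S,v_B)=\{\phi\}$, contradicting $|\mathcal{O}(v_S,v_B)|\ge 2$ from Claim \ref{SDOutcomes}. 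Hence $a^*=t^*$ is a trade with $t^*>v_S$, and the highest trade in \emph{every} column menu equals $t^*$. But then the seller's favorite at the new valuation, $b_S(c,v_S')$, equals $t^*$ if $t^*>v_S'$ and equals $\phi$ if $t^*<v_S'$ (the two cases are exhaustive since $V_S\cap T=\emptyset$), and in either case this is the \emph{same} outcome for all columns. Thus $b_S(\cdot,v_S')$ is constant across columns, contradicting that it takes at least two distinct values, and we conclude $S\in I^*(v_S',v_B)$.

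The main obstacle, and the step I would be most careful about, is the reduction of the whole statement to a question about a \emph{fixed} family of menus: once one notices that $M_S(c)$ does not move when the seller's own valuation changes, the claim becomes the assertion that a family of menus cannot exhibit a constant seller--favorite at $v_S$ yet a non--constant one at $v_S'$, which the monotone price ordering together with the ever--present $\phi$ settles through a single threshold comparison. The only genuinely delicate point inside this is the individual--rationality argument ruling out $a^*=\phi$; this is exactly where the hypothesis that $i$ is the \emph{unique} local dictator (so that $|\mathcal{O}(v_S,v_B)|\ge 2$) is used, and the mirror--image step for $i=B$ relies on the buyer's favorite being the \emph{lowest} available trade and on individually rational trades lying below $v_B$.
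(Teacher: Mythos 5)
Your proposal is correct, but it takes a genuinely different route from the paper's own proof. The paper works directly with the extremal enforceable prices: writing $\bar t(v_S,v_B)$ and $\bar t(v_S',v_B)$ for the highest trades that the respective local dictators can enforce, it splits into the three cases $\bar t(v_S',v_B)>\bar t(v_S,v_B)$, $<$, and $=$, and derives a contradiction in each case using Lemma \ref{lem:UD} (when the seller's preference changes, some undominated strategy does weakly better pointwise) together with Claims \ref{SDOutcomes} and \ref{EPIR}. You instead reduce the whole claim to a statement about a \emph{fixed} family of menus: since $U\!D_B(v_B)$, and hence each menu $M_S(s_B)$, does not move when the seller's own valuation changes, Lemma \ref{lem:structure} translates ``$S$ is the unique local dictator at $(v_S,v_B)$'' into ``the favorites $b_S(\cdot,v_S)$ are constant across the columns in $U\!D_B(v_B)$'' and ``$S$ is not a local dictator at $(v_S',v_B)$'' into ``the favorites $b_S(\cdot,v_S')$ are non-constant''; the single-threshold structure of seller preferences (every menu contains $\phi$, and the favorite is the menu's highest trade or $\phi$ according to whether that trade exceeds the valuation, with no ties since $V_S\cap T=\emptyset$) then makes non-constancy at \emph{any} valuation impossible once the common favorite $a^*$ at $v_S$ is shown to be a trade. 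Your use of Claims \ref{EPIR} and \ref{SDOutcomes} to rule out $a^*=\phi$ is exactly where the uniqueness hypothesis enters, and your separate treatment of the case in which $U\!D_B(v_B)$ is a singleton is needed (and correct), since Lemma \ref{lem:structure} requires at least two opponent strategies. As for what each approach buys: yours is more structural, making transparent that the result is driven by the invariance of the menus under changes in one's own valuation plus threshold preferences, and it avoids any case analysis on prices; the paper's argument is more hands-on, needing only the elementary Lemma \ref{lem:UD} rather than the machinery of Lemma \ref{lem:structure}.
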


\begin{proof}
	Without loss of generality we focus on the case $i=S$. The proof is indirect. Let $I^*(v_S,v_B)=\{S\}$, and suppose $I^*(v_S, v_B)=\{B\}$ for some $v_S' \in V_S$. Let $s_S \in S_S$ be the strategy in $U\!D_S(v_S)$ that enforces the outcome $\bar t(v_S,v_B)$ against any strategy in $U\!D_B(v_B)$. Let $s_B \in S_B$ be the strategy in $U\!D_B(v_B)$ that enforces the outcome $\bar t(v_S', v_B)$ against any strategy in $U\!D_S(v_S')$.\medskip
	
	Suppose also, first, that: $\bar t(v_S',v_B)> \bar t(v_S,v_B)$. By Lemma \ref{EPIR}, $v_S$ ranks $\bar t(v_S,v_B)$ above no trade. Therefore, $v_S$ must also rank $\bar t(v_S',v_B)$ above no trade. By Lemma \ref{lem:UD}, the seller with value $v_S$ must have a strategy in $U\!D_S(v_S)$ that guarantees an outcome at least as good as  $\bar t(v_S',v_B)$ against any strategy in $U\!D_B(v_B)$. This contradicts the definition of $\bar t(v_S,v_B)$ as the highest price that the seller can guarantee with a strategy in $U\!D_S(v_S)$.\medskip
	
	Now suppose: $\bar t(v_S',v_B)< \bar t(v_S,v_B)$. By Lemma \ref{lem:UD}, the seller with value $v_S'$ must have at least one strategy in $U\!D_S(v_S')$ that yields against $s_B$ an outcome at least as good as $\bar t(v_S,v_B)$. This contradicts that $s_B$ yields $\bar t(v_S',v_B)$ for all $s_S\in U\!D_S(v_S)$.\medskip
	
	Finally suppose: $\bar t(v_S',v_B)= \bar t(v_S,v_B)$. Let $s_B' \in U\!D_B(v_B)$ denote a strategy such that $g(s_S, s_B') \neq \bar t(v_S',v_B)$ for all $s_S \in U\!D_S(v_S')$. By Claim \ref{SDOutcomes}, such an $s_B'$ exists. Since the buyer is the unique local dictator at preference profile $(v_S', v_B)$, by Lemma \ref{lem:UD}, any strategy in $U\!D_S(v_S')$ yields against $s_B'$ an outcome at least as good as $\bar t(v_S,v_B)$. This outcome cannot be trade at price $\bar t(v_S,v_B)$, because $s_B'$ leads to an outcome other than $\bar t(v_S,v_B)$, and it cannot be trade at a price higher than $\bar t(v_S,v_B)$ because we are considering the case $\bar t(v_S',v_B)= \bar t(v_S,v_B)$. Therefore, any strategy in $U\!D_S(v_S')$ yields against $s_B'$ no trade. But then we have concluded that the seller prefers no trade to trade at $\bar t(v_S,v_B)$, which contradicts Claim \ref{EPIR}, i.e. the seller's ex post individual rationality at $(v_S,v_B)$ and at $(v_S',v_B)$. \end{proof}

We now turn to an indirect proof of Proposition \ref{BT2}, that is, we postulate that a bilateral trade mechanism is  type 2 strategically simple, and then derive a contradiction. The next four claims describe implications of the premises of the indirect proof.

\begin{claim}\label{TwoSD}
	There are $v_S,\hat v_S\in V_S$ with $v_S\neq \hat v_S$ and $v_B,\hat v_B\in V_B$ with $v_B\neq \hat v_B$ such that: $I^*(v_S,v_B)=\{S\}, I^*(\hat v_S, \hat v_B)=\{B\}$, and $I^*(v_S, \hat v_B)=I^*(\hat v_S, v_B)=\{S,B\}$.
\end{claim}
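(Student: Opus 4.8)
The plan is to run the indirect argument from the type~2 hypothesis and to assemble a $2\times 2$ grid of preference profiles whose four corners realize exactly the pattern in the claim, using Claim~\ref{SDSwitchSD} as essentially the only tool. First I would extract the two ``seed'' profiles. Since the mechanism is assumed to be type~2, we have $\bigcap_{(v_S,v_B)} I^*(v_S,v_B)=\emptyset$; because $I^*(\cdot)$ is always a non-empty subset of $\{S,B\}$ (non-emptiness being guaranteed by Theorem~\ref{thm:MainR}), this forces the existence of one profile $(v_S,v_B)$ with $I^*(v_S,v_B)=\{S\}$ and one profile $(\hat v_S,\hat v_B)$ with $I^*(\hat v_S,\hat v_B)=\{B\}$. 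These will become the top-left and bottom-right corners of the grid over seller values $\{v_S,\hat v_S\}$ and buyer values $\{v_B,\hat v_B\}$.

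Next I would fill in the bottom-left corner $(\hat v_S,v_B)$ directly. Applying Claim~\ref{SDSwitchSD} to agent $S$ at $(v_S,v_B)$ (vary the seller's value from $v_S$ to $\hat v_S$, keeping the buyer fixed at $v_B$) yields $S\in I^*(\hat v_S,v_B)$; applying it to agent $B$ at $(\hat v_S,\hat v_B)$ (vary the buyer's value from $\hat v_B$ to $v_B$, keeping the seller fixed at $\hat v_S$) yields $B\in I^*(\hat v_S,v_B)$. Hence $I^*(\hat v_S,v_B)=\{S,B\}$, as required.

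The remaining corner $(v_S,\hat v_B)$ is the crux, since Claim~\ref{SDSwitchSD} does not reach it directly from either seed profile, and this is where I expect the main obstacle to lie. I would pin it down by contradiction. If $I^*(v_S,\hat v_B)=\{S\}$, then Claim~\ref{SDSwitchSD} applied to $S$ (vary seller to $\hat v_S$, buyer fixed at $\hat v_B$) gives $S\in I^*(\hat v_S,\hat v_B)$, contradicting $I^*(\hat v_S,\hat v_B)=\{B\}$. Symmetrically, if $I^*(v_S,\hat v_B)=\{B\}$, then applying the claim to $B$ (vary buyer to $v_B$, seller fixed at $v_S$) gives $B\in I^*(v_S,v_B)$, contradicting $I^*(v_S,v_B)=\{S\}$. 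Since $I^*(v_S,\hat v_B)$ is a non-empty subset of $\{S,B\}$, the only surviving possibility is $I^*(v_S,\hat v_B)=\{S,B\}$. The same two contradiction arguments also yield the required distinctness: if $v_S=\hat v_S$ then the $\{B\}$-seed $(\hat v_S,\hat v_B)$ shares its seller value with the $\{S\}$-seed $(v_S,v_B)$, and Claim~\ref{SDSwitchSD} applied to $B$ (varying the buyer to $v_B$) forces $B\in I^*(v_S,v_B)=\{S\}$, a contradiction; the case $v_B=\hat v_B$ is symmetric. The key idea throughout is that a \emph{pure} dictatorship sitting at the unreachable corner would propagate, via Claim~\ref{SDSwitchSD}, back onto the diagonal and clash with the opposite pure dictatorship already established at the diagonal corner.
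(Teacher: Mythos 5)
Your proposal is correct and follows essentially the same route as the paper: extract the two seed profiles $(v_S,v_B)$ and $(\hat v_S,\hat v_B)$ from the type-2 definition, then use Claim \ref{SDSwitchSD} both directly (for the corner $(\hat v_S,v_B)$ and for distinctness of components) and in contrapositive form (for the corner $(v_S,\hat v_B)$). The paper compresses all of this into three sentences, whereas you make explicit the asymmetry between the two off-diagonal corners---namely that one is reached by direct application of Claim \ref{SDSwitchSD} while the other requires ruling out each pure dictatorship by contradiction---but the ingredients and logic are identical.
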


\begin{proof}
	By definition of type 2 strategic simplicity, we must have two pairs of values in $V_S\times V_B$, one at which $S$ is the unique local dictator, and another one at which $B$ is the unique local dictator. By Claim \ref{SDSwitchSD} these two pairs must have no component in common. Claim \ref{SDSwitchSD} also implies that if we combine the seller's value in one pair with a buyer's value in the other pair, then both agents must be local dictators.
\end{proof}

For the remainder of the proof we use the notation $(v_S, v_B)$ and $(\hat v_S, \hat v_B)$ to refer to the two pairs the existence of which is asserted in Claim \ref{TwoSD}.

\begin{claim}\label{NoTrade}
	$\mathcal{O}(v_S,\hat v_B)=\{\phi\}$.
\end{claim}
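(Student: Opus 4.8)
The plan is to argue by contradiction. Since $I^*(v_S,\hat v_B)=\{S,B\}$, Claim \ref{TDOutcomes} gives $|\mathcal{O}(v_S,\hat v_B)|=1$; write $\mathcal{O}(v_S,\hat v_B)=\{o\}$. Suppose, contrary to the assertion, that $o=t$ for some trade price $t\in T$. I will derive a violation of the ex post individual rationality established in Claim \ref{EPIR}, applied to the buyer at the profile $(v_S,v_B)$, where the seller is the unique local dictator.

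The engine of the argument is a comparison of the two buyer valuations $v_B$ and $\hat v_B$ through a common menu. Fix any outcome $a\in\mathcal{O}(v_S,v_B)$. Because $I^*(v_S,v_B)=\{S\}$, there is a seller strategy $s_S\in U\!D_S(v_S)$ that enforces $a$, i.e. $g(s_S,s_B)=a$ for every $s_B\in U\!D_B(v_B)$; note that $U\!D_S(v_S)$ does not depend on the buyer's valuation. Consider the buyer's menu $M_B(s_S)$. Using Lemma \ref{lem:UD} together with the fact that the seller is a local dictator at $(v_S,v_B)$, every undominated strategy of the buyer with value $v_B$ yields $a$ against $s_S$, and one of them attains the $v_B$-best element $b_B(s_S,v_B)$ of the menu; hence $a=b_B(s_S,v_B)$. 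Since $s_S\in U\!D_S(v_S)$ and $\mathcal{O}(v_S,\hat v_B)=\{t\}$, every undominated strategy of the buyer with value $\hat v_B$ yields $t$ against $s_S$, so the same reasoning gives $t=b_B(s_S,\hat v_B)$. As both $a$ and $t$ lie in $M_B(s_S)$, this forces $a\succeq_{v_B}t$ and $t\succeq_{\hat v_B}a$.

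These two opposed rankings pin down $a$: if $a\in T$, then $a\succeq_{v_B}t$ forces $a\le t$ while $t\succeq_{\hat v_B}a$ forces $t\le a$, so $a=t$; if $a=\phi$, then $\phi\succeq_{v_B}t$ forces $t>v_B$ and $t\succeq_{\hat v_B}\phi$ forces $t<\hat v_B$. Thus every element of $\mathcal{O}(v_S,v_B)$ equals either $t$ or $\phi$. But $I^*(v_S,v_B)=\{S\}$ gives $|\mathcal{O}(v_S,v_B)|\ge 2$ by Claim \ref{SDOutcomes}, so $\mathcal{O}(v_S,v_B)=\{t,\phi\}$ and, in particular, the presence of $\phi$ forces $t>v_B$. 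Now $t\in\mathcal{O}(v_S,v_B)$ with $t>v_B$ means the buyer with valuation $v_B$ strictly prefers no trade to $t$, contradicting Claim \ref{EPIR}. Hence $o\ne t$, and $\mathcal{O}(v_S,\hat v_B)=\{\phi\}$.

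I would flag the menu step as the crux: turning ``all undominated strategies of a buyer type give the dictated outcome against $s_S$'' into ``that outcome is the type's most preferred element of $M_B(s_S)$'' requires invoking Lemma \ref{lem:UD} in the right direction (dominating an outcome-maximizing strategy by an undominated one) together with the local-dictator property at each of the two profiles $(v_S,v_B)$ and $(v_S,\hat v_B)$. The remaining case analysis and the appeal to Claim \ref{EPIR} are routine once the two opposed rankings $a\succeq_{v_B}t$ and $t\succeq_{\hat v_B}a$ are in hand.
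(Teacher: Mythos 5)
Your proof is correct and follows essentially the same route as the paper's: both arguments apply Lemma \ref{lem:UD} in the two directions across the buyer types $v_B$ and $\hat v_B$ against seller strategies in $U\!D_S(v_S)$ (your two menu-optimality statements $a = b_B(s_S,v_B)$ and $t = b_B(s_S,\hat v_B)$ are exactly the paper's two invocations of that lemma), and both combine Claims \ref{TDOutcomes}, \ref{SDOutcomes}, and \ref{EPIR} to reach the contradiction. The only organizational difference is where the contradiction lands: you assemble the two opposed rankings for an arbitrary $a\in\mathcal{O}(v_S,v_B)$, conclude $\mathcal{O}(v_S,v_B)=\{t,\phi\}$, and contradict Claim \ref{EPIR}, whereas the paper pivots on the minimum price $\underline t(v_S,v_B)$ and lets the final contradiction fall on Lemma \ref{lem:UD} itself.
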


\begin{proof}
	By Claim \ref{TDOutcomes}, $\mathcal{O}(v_S,\hat v_B)$ has only one element. Suppose $\mathcal{O}(v_S,\hat v_B)=\{t\}$ for some $t\in T$. Using Lemma \ref{lem:UD} for the buyer, we can infer $t\leq \underline t(v_S, v_B)$. Because at $(v_B,v_S)$ the seller is the only local dictator, Claim \ref{SDOutcomes} implies that the set $\mathcal{O}(v_S,v_B)$ must include an outcome $a$ other than $\underline t(v_S,v_B)$. If this is trade at a price higher than $\underline t(v_S,v_B)$, then clearly the buyer strictly prefers $\underline t(v_S,v_B)$ to $a$. But if $a$ is no trade, then Claim \ref{EPIR} implies that the buyer strictly prefers $\underline t(v_S,v_B)$ to $a$. Thus, $\mathcal{O}(v_S,v_B)$ includes an outcome $a$  that the buyer ranks  strictly below $\underline t(v_S,v_B)$, and hence also strictly below $t$.  The seller has a strategy that locally enforces this outcome at $(v_S,v_B)$. By Lemma \ref{lem:UD} this contradicts the fact that the buyer has a strategy that enforces at $(v_S,\hat v_B)$ the price $t$.
\end{proof}

\begin{claim}\label{ValueComparisons}
	$v_S>\hat v_S$ and $v_B>\hat v_B$.
\end{claim}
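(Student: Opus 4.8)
The plan is to establish the two inequalities by mirror-image arguments, each of which manufactures a trade price that is \emph{strictly sandwiched} between the two relevant valuations. The workhorse is the combination of Claim \ref{NoTrade} (only no trade is achievable at $(v_S,\hat v_B)$) with the across-preference comparison furnished by Lemma \ref{lem:UD} and the ex post individual rationality of Claim \ref{EPIR}. Throughout I would exploit that $V_i\cap T=\emptyset$, so that any weak preference comparison between a price and $\phi$ is in fact strict. Recall from Claim \ref{TwoSD} that $I^*(v_S,v_B)=\{S\}$ and $I^*(\hat v_S,\hat v_B)=\{B\}$.

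To get $v_B>\hat v_B$: since $S$ is the unique local dictator at $(v_S,v_B)$, Claim \ref{SDOutcomes} provides a trade price $t^*\in\mathcal{O}(v_S,v_B)\cap T$, and because $S$ is a local dictator there is a strategy $s_S^*\in U\!D_S(v_S)$ enforcing it, i.e.\ $g(s_S^*,s_B)=t^*$ for all $s_B\in U\!D_B(v_B)$. By Claim \ref{EPIR} the buyer with value $v_B$ weakly, hence (as $t^*\neq v_B$) strictly, prefers $t^*$ to $\phi$, so $t^*<v_B$. Now fix any $s_B\in U\!D_B(v_B)$ and apply Lemma \ref{lem:UD} with $i=B$, moving from $v_B$ to $\hat v_B$: this yields $s_B'\in U\!D_B(\hat v_B)$ with $g(s_S,s_B')\,\hat v_B\,g(s_S,s_B)$ for all $s_S$, in particular $g(s_S^*,s_B')\,\hat v_B\,t^*$. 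Since $s_S^*\in U\!D_S(v_S)$ and $s_B'\in U\!D_B(\hat v_B)$, Claim \ref{NoTrade} forces $g(s_S^*,s_B')\in\mathcal{O}(v_S,\hat v_B)=\{\phi\}$, so the buyer with value $\hat v_B$ strictly prefers $\phi$ to $t^*$, giving $t^*>\hat v_B$. Hence $\hat v_B<t^*<v_B$.

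The argument for $v_S>\hat v_S$ is the reflection. Since $B$ is the unique local dictator at $(\hat v_S,\hat v_B)$, I would pick $t^{**}\in\mathcal{O}(\hat v_S,\hat v_B)\cap T$ and a strategy $s_B^{**}\in U\!D_B(\hat v_B)$ with which $B$ enforces $t^{**}$ against every $s_S\in U\!D_S(\hat v_S)$; Claim \ref{EPIR} gives $t^{**}>\hat v_S$. Fixing any $s_S\in U\!D_S(\hat v_S)$ and applying Lemma \ref{lem:UD} with $i=S$ to pass from $\hat v_S$ to $v_S$ produces $s_S'\in U\!D_S(v_S)$ with $g(s_S',s_B^{**})\,v_S\,g(s_S,s_B^{**})=t^{**}$; since $g(s_S',s_B^{**})\in\mathcal{O}(v_S,\hat v_B)=\{\phi\}$ by Claim \ref{NoTrade}, the seller with value $v_S$ strictly prefers $\phi$ to $t^{**}$, so $t^{**}<v_S$. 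Thus $\hat v_S<t^{**}<v_S$.

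The only delicate step is the correct deployment of Lemma \ref{lem:UD}: it compares undominated strategies \emph{across} two preferences of the \emph{same} agent, so I must start from the profile at which the relevant agent is the dictator—so that the enforced price $t^*$ (resp.\ $t^{**}$) is pinned down against \emph{all} of the opponent's undominated strategies—and then transport to the profile $(v_S,\hat v_B)$ where Claim \ref{NoTrade} forces no trade. The point that makes the sandwich close is that the dictator's enforcing strategy remains undominated for its \emph{own unchanged} valuation, so the transported strategy pair genuinely lies in $\mathcal{O}(v_S,\hat v_B)$; keeping track of which valuation is held fixed is the main thing to get right.
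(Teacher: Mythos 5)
Your proof is correct and follows essentially the same route as the paper's: both arguments take a trade price guaranteed by Claim \ref{SDOutcomes} at the profile where the relevant agent's valuation changes, bound it on one side by ex post individual rationality (Claim \ref{EPIR}) and on the other by transporting an undominated strategy via Lemma \ref{lem:UD} into the no-trade profile $(v_S,\hat v_B)$ of Claim \ref{NoTrade}, yielding the sandwich $\hat v_i < t < v_i$. The paper's version is merely terser (it writes out only the seller's case and invokes symmetry), and your explicit use of the dictator's enforcing strategy is harmless extra detail rather than a different idea.
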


\begin{proof}
	
The arguments are symmetric for seller and buyer. Consider the seller. Because no trade occurs at $(v_S,\hat v_B)$, by Claim \ref{NoTrade}, and trade at some price $t$ is a possible outcome at $(\hat v_S, \hat v_B)$, Lemma \ref{lem:UD} implies that with value $v_S$ the seller must find no trade preferable to a trade at  price $t$. Claim \ref{EPIR} says that the seller with value $\hat v_S$ prefers trade at price $t$ to no trade. These findings together imply $v_S > \hat v_S$.
\end{proof}

\begin{claim}\label{Trade}
	$\mathcal{O}(\hat v_S, v_B)=\{t^*\}$ for some $t^*\in T$.
\end{claim}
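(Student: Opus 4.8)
The plan is to note first that, since $I^*(\hat v_S, v_B)=\{S,B\}$ by Claim \ref{TwoSD}, Claim \ref{TDOutcomes} gives $|\mathcal{O}(\hat v_S, v_B)|=1$. Thus it suffices to rule out $\mathcal{O}(\hat v_S, v_B)=\{\phi\}$, i.e., to exhibit a single trade outcome that arises at $(\hat v_S, v_B)$ when both agents play undominated strategies. I would produce such an outcome by ``transporting'' a trade outcome from the profile $(\hat v_S, \hat v_B)$, at which the buyer is the unique local dictator and trade is possible, up to the profile $(\hat v_S, v_B)$, by raising the buyer's value from $\hat v_B$ to $v_B$ while holding the seller's value fixed at $\hat v_S$ (so that $U\!D_S(\hat v_S)$ is unchanged).

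Concretely, the steps would be the following. First, by Claim \ref{SDOutcomes}, $\mathcal{O}(\hat v_S, \hat v_B)\cap T \neq \emptyset$, so I pick a trade price $t \in \mathcal{O}(\hat v_S, \hat v_B)\cap T$; Claim \ref{EPIR} then forces $\hat v_S < t < \hat v_B$, since the seller with $\hat v_S$ and the buyer with $\hat v_B$ both weakly prefer $t$ to no trade and the values are disjoint from $T$. Because the buyer is the local dictator at $(\hat v_S, \hat v_B)$, there is a strategy $s_B^\ast \in U\!D_B(\hat v_B)$ with $g(s_S, s_B^\ast)=t$ for all $s_S \in U\!D_S(\hat v_S)$. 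Second, I apply Lemma \ref{lem:UD} to the buyer, with current preference $\hat v_B$, target preference $v_B$, and strategy $s_B^\ast$, to obtain $s_B' \in U\!D_B(v_B)$ such that, for every $s_S \in U\!D_S(\hat v_S)$, the buyer with value $v_B$ weakly prefers $g(s_S, s_B')$ to $g(s_S, s_B^\ast)=t$. Third, Claim \ref{ValueComparisons} gives $\hat v_B < v_B$, hence $t < \hat v_B < v_B$, so the buyer with value $v_B$ \emph{strictly} prefers trade at $t$ to no trade; therefore $g(s_S, s_B')$ cannot be $\phi$ and must lie in $T$, for every $s_S \in U\!D_S(\hat v_S)$.

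To conclude, since $U\!D_S(\hat v_S)$ is non-empty, fixing any $s_S$ in it yields $g(s_S, s_B')\in \mathcal{O}(\hat v_S, v_B)\cap T$. Combined with $|\mathcal{O}(\hat v_S, v_B)|=1$ from the first step, this gives $\mathcal{O}(\hat v_S, v_B)=\{t^\ast\}$ for some $t^\ast \in T$, as required.

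The main point requiring care is the third step, ruling out the no-trade outcome after the value change. This is exactly where the value ordering does the work: one needs the strict inequality $t < v_B$, obtained by chaining the EPIR bound $t < \hat v_B$ (Claim \ref{EPIR}) with the comparison $\hat v_B < v_B$ (Claim \ref{ValueComparisons}). The accompanying subtlety is that the transported strategy $s_B'$ must remain undominated at the new value, so that $g(s_S, s_B')$ genuinely belongs to $\mathcal{O}(\hat v_S, v_B)$; this is precisely the content of Lemma \ref{lem:UD}, which is what makes the transport argument legitimate.
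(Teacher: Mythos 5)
Your proof is correct and is essentially the paper's own argument in mirror image: the paper rules out the no-trade outcome at $(\hat v_S, v_B)$ by transporting a trade outcome from $(v_S, v_B)$ via Lemma \ref{lem:UD} applied to the \emph{seller} (using $\hat v_S < v_S$ from Claim \ref{ValueComparisons}), whereas you transport one from $(\hat v_S, \hat v_B)$ via Lemma \ref{lem:UD} applied to the \emph{buyer} (using $\hat v_B < v_B$). Since both arguments rest on exactly the same ingredients---Claims \ref{TDOutcomes}, \ref{SDOutcomes}, \ref{EPIR}, \ref{ValueComparisons} and the transport lemma---this is the same approach up to the seller/buyer symmetry of the problem.
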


\begin{proof}
	By Claim \ref{TDOutcomes}, $\mathcal{O}(\hat v_S,v_B)$ has only one element. Suppose $\mathcal{O}(\hat v_S,v_B) = \{\phi\}$. By Claims \ref{SDOutcomes} and \ref{EPIR}, trade at some price is contained in $\mathcal{O}(v_S,v_B)$ that  the seller with value $v_S$ strictly prefers to no trade. When the seller has value $\hat v_S$, the seller still strictly prefers trade at that price to no trade, because,  by Claim \ref{ValueComparisons}, $\hat v_S$ is lower than $v_S$. Hence we would have a contradiction to Lemma \ref{lem:UD} if the  outcome in $\mathcal{O}(\hat v_S, v_B)$ were no trade.
\end{proof}

We can now complete the proof of Proposition \ref{BT2}. Using Lemma \ref{lem:UD} we have: $t^* = \underline t(\hat v_S,\hat v_B)$. By Claim \ref{EPIR}, $t^*\leq \hat v_B$. Using Lemma \ref{lem:UD} we also have: $t^*=\bar t(v_S, v_B)$. But then Lemma \ref{lem:UD} and $t^*\leq \hat v_B$ implies that among the outcomes in $\mathcal{O}(\hat v_B, v_S)$ there must be a trade at a price below $\hat v_B$. This contradicts Claim \ref{NoTrade}. \qed

\section{Proof of Proposition \ref{prop:voting-type-2}} \label{app:voting}\smallskip

Let $A = \{a, b, c\}$. Suppose that a mechanism is a strategically simple mechanism of type 2. We shall analyze properties of such a mechanism that ultimately imply that, up to relabeling of the agents and the alternatives, only the two mechanisms listed in Proposition \ref{prop:voting-type-2} are candidates for type 2 strategically simple mechanisms. The analysis of these two mechanisms in the main text shows that these mechanisms are indeed type 2 strategically simple.\medskip

Throughout this proof, we shall denote the ordinal preference $R_i$ that satisfies $a R_i b$ and $b R_i c$ by ``$abc$," and we shall use analogous notation for any other ordinal preference over the three alternatives.

\begin{claim}\label{TwoUD}
There is at least one preference profile $(\hat R_1, \hat R_2)$ such that both $U\!D_1 (\hat R_1)$ and $U\!D_2(\hat R_2)$ have at least two elements.
\end{claim}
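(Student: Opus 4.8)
The plan is to extract from the definition of type 2 two preference profiles, one at which agent $1$ is the unique local dictator and one at which agent $2$ is, and then to splice them together into a single profile, exploiting the fact that $U\!D_i(R_i)$ depends only on agent $i$'s own preference.

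First I would record the consequences of type 2 strategic simplicity. Since the mechanism is strategically simple and the voting domain satisfies the hypotheses of Theorem \ref{thm:MainR}, that theorem guarantees $I^*(R)\neq\emptyset$ for every $R\in\mathcal{R}^2$. Type 2 means $\bigcap_{R}I^*(R)=\emptyset$, so agent $1$ must fail to lie in some $I^*(R)$: there is a profile $R'=(R_1',R_2')$ with $1\notin I^*(R')$, and since $I^*(R')$ is a nonempty subset of $\{1,2\}$, this forces $I^*(R')=\{2\}$. Symmetrically, there is a profile $R''=(R_1'',R_2'')$ with $I^*(R'')=\{1\}$.

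Next I would translate ``not a local dictator'' into a statement about the number of undominated strategies. Spelling out the negation of the local dictator condition in Theorem \ref{thm:MainR} for two agents, agent $j$ is \emph{not} a local dictator at $R$ exactly when some $s_j\in U\!D_j(R_j)$ together with two strategies $s_{-j},s_{-j}'\in U\!D_{-j}(R_{-j})$ satisfy $g(s_j,s_{-j})\neq g(s_j,s_{-j}')$; in particular $s_{-j}\neq s_{-j}'$, so $U\!D_{-j}(R_{-j})$ has at least two elements. Applying this to $R''$, where agent $2$ is not a local dictator, yields $|U\!D_1(R_1'')|\geq 2$; applying it to $R'$, where agent $1$ is not a local dictator, yields $|U\!D_2(R_2')|\geq 2$.

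The decisive and final step is the observation that $U\!D_i(R_i)$ is defined purely in terms of agent $i$'s own preference $R_i$ and the fixed outcome function $g$, and is therefore unaffected by changing the other agent's preference. Hence, setting $\hat R_1\equiv R_1''$ and $\hat R_2\equiv R_2'$, the profile $(\hat R_1,\hat R_2)$ satisfies $U\!D_1(\hat R_1)=U\!D_1(R_1'')$ and $U\!D_2(\hat R_2)=U\!D_2(R_2')$, both of which have at least two elements, which is exactly the claim. The only points requiring care are reading off the negation of the local dictator property correctly and noticing that the two ``multiplicity'' conclusions, obtained at two different profiles, can be combined into one because each $U\!D_i$ ignores the opponent's preference; I do not expect any genuine obstacle beyond these bookkeeping observations.
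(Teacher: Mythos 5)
Your proposal is correct and follows essentially the same route as the paper's own (very terse) proof: the paper likewise uses the type 2 definition together with Theorem \ref{thm:MainR} to obtain one profile at which agent 1 is the unique local dictator (forcing $|U\!D_1(R_1'')|\geq 2$ there) and one at which agent 2 is (forcing $|U\!D_2(R_2')|\geq 2$ there), and then implicitly splices the two profiles together. Your write-up merely makes explicit two steps the paper leaves tacit, namely the negation of the local dictatorship condition and the observation that $U\!D_i(R_i)$ depends only on agent $i$'s own preference.
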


\begin{proof}
 At a preference profile at which agent 1 is the unique local dictator, agent 1 must have at least two undominated strategies. At a preference profile at which agent 2 is the unique local dictator, agent 2 must have at least two undominated strategies.
\end{proof}

\begin{claim}\label{No3}
If for some preference profile $(\hat R_1, \hat R_2)$ both $U\!D_1 (\hat R_1)$ and $U\!D_2(\hat R_2)$ have at least two elements, then the set $g(U\!D_1 (\hat R_1), U\!D_2 (\hat R_2))$ has no more than two elements.
\end{claim}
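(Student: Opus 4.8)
The plan is to argue by contradiction. Fix a profile $(\hat R_1,\hat R_2)$ at which both $U\!D_1(\hat R_1)$ and $U\!D_2(\hat R_2)$ have at least two elements, and suppose that the outcome set $g(U\!D_1(\hat R_1),U\!D_2(\hat R_2))$ has three elements; since $\#A=3$ this means it is all of $A$. The first step is to pin down a \emph{unique} local dictator at $(\hat R_1,\hat R_2)$. By Theorem \ref{thm:MainR} some agent is a local dictator. Because the outcome set has at least two elements, the dictator is unique: a local dictator can enforce every achievable alternative, so if both agents were local dictators, agent $1$ could enforce some alternative $a$ and agent $2$ some alternative $b\neq a$, and evaluating $g$ at the corresponding pair of undominated strategies would force $a=b$, a contradiction. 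Without loss of generality I take agent $1$ to be the unique local dictator.

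Next I would exploit dictatorship of agent $1$ together with the assumption that the outcome set is all of $A$. Being a local dictator means that for every $s_1\in U\!D_1(\hat R_1)$ there is a single alternative $h(s_1)$ with $g(s_1,s_2)=h(s_1)$ for all $s_2\in U\!D_2(\hat R_2)$. Consequently the outcome set equals $\{h(s_1):s_1\in U\!D_1(\hat R_1)\}$, which by assumption is all of $A$. Now fix any $s_2\in U\!D_2(\hat R_2)$: as $s_1$ ranges over $U\!D_1(\hat R_1)$, the outcome $g(s_1,s_2)=h(s_1)$ ranges over all of $A$, so the menu $M_1(s_2)=\{g(s_1,s_2):s_1\in S_1\}$ contains $A$ and hence equals $A$. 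Thus $M_1(s_2)=A$ for \emph{every} $s_2\in U\!D_2(\hat R_2)$.

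Finally, I would invoke Corollary \ref{cor:not-same-menu}. Since $U\!D_2(\hat R_2)=U\!D_{-1}(\hat R_{-1})$ has at least two elements, the corollary guarantees that distinct undominated strategies of agent $2$ offer distinct menus to agent $1$; picking two distinct $s_2',s_2''\in U\!D_2(\hat R_2)$ then yields $M_1(s_2')\neq M_1(s_2'')$, which contradicts the previous paragraph's conclusion that all these menus equal $A$. This completes the contradiction, so the outcome set has at most two elements. The main obstacle, and the only place where $\#A=3$ is genuinely used, is the passage from ``agent $1$ enforces three distinct outcomes'' to ``every menu offered to agent $1$ is the full set $A$'': for larger $A$ the enforced outcomes need not exhaust the menu, so the clean clash with Corollary \ref{cor:not-same-menu} is special to three alternatives. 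A secondary point to handle with care is the uniqueness of the local dictator, which relies on the outcome set having at least two elements, the singleton case being trivially within the asserted bound.
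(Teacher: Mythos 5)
Your proposal is correct and follows essentially the same argument as the paper: assume all three outcomes arise, invoke Theorem \ref{thm:MainR} to get a local dictator, observe that the dictator's ability to enforce every outcome forces every undominated strategy of the other agent to offer the full menu $A$ to the dictator, and then contradict Corollary \ref{cor:not-same-menu}. The only difference is cosmetic (which agent you label the dictator, plus an unnecessary but harmless uniqueness step), so there is nothing substantive to add.
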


\begin{proof}
By Theorem \ref{thm:MainR}, there must be a local dictator at $(\hat R_1, \hat R_2)$. Without loss of generality, assume that agent 2 is a local dictator. If the set $g(U\!D_1 (\hat R_1), U\!D_2 (\hat R_2))$ contains three elements, then agent 2, as a local dictator, could enforce each of them. Therefore, each of agent 1's undominated strategies would have to offer the same menu that contains all three elements. But this contradicts Corollary \ref{cor:not-same-menu}. Therefore, $g(U\!D_1(\hat R_1),U\!D(\hat R_2))$ has only one or two elements.
\end{proof}

We now distinguish the two cases. Case 1 is the case in which there is at least one preference profile such that both agents have multiple undominated strategies, and such that exactly two outcomes may result if both agents with these preferences choose from their sets of undominated strategies.  For this case, we show that the $4 \times 4$ mechanism in Proposition \ref{prop:voting-type-2} is the unique strategically simple mechanism, up to relabeling of the agents and the alternatives. Case 2 is the case in which for all preference profiles such that both agents have multiple undominated strategies, exactly one outcome may result if both agents with these preferences choose from their sets of undominated strategies. For this case, we show that the $5\times 5$ mechanism in Proposition \ref{prop:voting-type-2} is the unique strategically simple mechanism, up to relabeling of the agents and the alternatives. \bigskip

\noindent{\sc Case 1:} There is at least one preference profile, say $(\hat R_1, \hat R_2)$, such that both agents have multiple undominated strategies, and such that exactly two outcomes may result, say $g(U\!D_1(\hat R_1),U\!D(\hat R_2))=\{a,b\}$, if both agents with these preferences choose from their sets of undominated strategies. \bigskip

Figure \ref{fig:case1} illustrates the proof for the first case. We shall refer to Figure \ref{fig:case1} while presenting the proof.\medskip

\begin{figure}[h]

\centering
	\begin{tikzpicture}[scale=0.55]
	
	\subfloat[Step (1)]{
		\node [left] at (-5, 5) {$\hat{s}_1$};
		\node [left] at (-5, 4) {$\hat{\hat{s}}_1$};
		\node [left] at (-3.9, 6) {$\hat{s}_2$};
		\node [left] at (-2.8, 6) {$\hat{\hat{s}}_2$};
		\node [left] at (-4, 5) {$a$};
		\node [left] at (-4, 4) {$a$};
		\node [left] at (-3, 5) {$b$};
		\node [left] at (-3, 4) {$b$};
		\draw [decorate,decoration={brace,amplitude=7pt},xshift=0pt,yshift=0pt]
		(-6, 3.5) -- (-6, 5.5) node [black,midway,xshift=-1.1cm]
		{\footnotesize $U\!D_1 (\hat R_1)$};
	}
	
	\node [left] at (-1.5, 4.5) {$\Rightarrow$};
	
	\subfloat[Step (2)]{
		\node [left] at (3, 5) {$\hat{s}_1$};
		\node [left] at (3, 4) {$\hat{\hat{s}}_1$};
		\node [left] at (4.1, 6) {$\hat{s}_2$};
		\node [left] at (5.2, 6) {$\hat{\hat{s}}_2$};
		\node [left] at (6.7, 6) {$s_2^{cab}$};
		\node [left] at (8, 6) {$s_2^{cba}$};
		\node [left] at (4, 5) {$a$};
		\node [left] at (4, 4) {$a$};
		\node [left] at (5, 5) {$b$};
		\node [left] at (5, 4) {$b$};
		\node [left] at (6.2, 5) {$c$};
		\node [left] at (6.2, 4) {$a$};
		\node [left] at (7.3, 5) {$c$};
		\node [left] at (7.3, 4) {$b$};
		\draw [decorate,decoration={brace,amplitude=7pt},xshift=0pt,yshift=0pt]
		(2, 3.5) -- (2, 5.5) node [black,midway,xshift=-1.1cm]
		{\footnotesize $U\!D_1 (bca)$};
	}
	
	\node [left] at (9, 4.5) {$\Rightarrow$};
	
	\subfloat[Step (3)]{
		\node [left] at (13.5, 5) {$\hat{s}_1$};
		\node [left] at (13.5, 4) {$\hat{\hat{s}}_1$};
		\node [left] at (14.6, 6) {$\hat{s}_2$};
		\node [left] at (15.7, 6) {$\hat{\hat{s}}_2$};
		\node [left] at (17.2, 6) {$s_2^{cab}$};
		\node [left] at (18.5, 6) {$s_2^{cba}$};
		\node [left] at (14.5, 5) {$a$};
		\node [left] at (14.5, 4) {$a$};
		\node [left] at (15.5, 5) {$b$};
		\node [left] at (15.5, 4) {$b$};
		\node [left] at (16.7, 5) {$c$};
		\node [left] at (16.7, 4) {$a$};
		\node [left] at (17.8, 5) {$c$};
		\node [left] at (17.8, 4) {$b$};
		\node [left] at (14.5, 3) {$a$};
		\node[rotate=90,yshift=0pt] at (14.1, 1.8) {$\cdots$};
		\draw [decorate,decoration={brace,amplitude=7pt},xshift=0pt,yshift=0pt]
		(12.5, 3.5) -- (12.5, 5.5) node [black,midway,xshift=-1.1cm]
		{\footnotesize $U\!D_1(bca)$};
	}
	
	\node [left] at (-6.5, -3.5) {$\Rightarrow$};
	
	\subfloat[Step (5)]{
		\node [left] at (-2, -3) {$\hat{s}_1$};
		\node [left] at (-2, -4) {$\hat{\hat{s}}_1$};
		\node [left] at (-0.9, -2) {$\hat{s}_2$};
		\node [left] at (0.2, -2) {$\hat{\hat{s}}_2$};
		\node [left] at (1.7, -2) {$s_2^{cab}$};
		\node [left] at (3, -2) {$s_2^{cba}$};
		\node [left] at (-1, -3) {$a$};
		\node [left] at (-1, -4) {$a$};
		\node [left] at (0, -3) {$b$};
		\node [left] at (0, -4) {$b$};
		\node [left] at (1.2, -3) {$c$};
		\node [left] at (1.2, -4) {$a$};
		\node [left] at (2.3, -3) {$c$};
		\node [left] at (2.3, -4) {$b$};
		\node [left] at (-1, -5) {$a$};
		\node[rotate=90,yshift=0pt] at (-1.4, -6.2) {$\cdots$};
		\draw [decorate,decoration={brace,amplitude=7pt},xshift=0pt,yshift=0pt]
		(-3, -4.5) -- (-3, -2.5) node [black,midway,xshift=-1.1cm]
		{\footnotesize $U\!D_1(bca)$};
		\draw [decorate,decoration={brace,amplitude=7pt},xshift=0pt,yshift=0pt]
		(-2,-1.5) -- (0,-1.5) node [black,midway,yshift=0.5cm]
		{\footnotesize $U\!D_2 (bac)$};
	}
	
	\node [left] at (4.5, -3.5) {$\Rightarrow$};
	
	\subfloat[Step (5)]{
		\node [left] at (9, -3) {$\hat{s}_1$};
		\node [left] at (9, -4) {$\hat{\hat{s}}_1$};
		\node [left] at (10.1, -2) {$\hat{s}_2$};
		\node [left] at (11.2, -2) {$\hat{\hat{s}}_2$};
		\node [left] at (12.7, -2) {$s_2^{cab}$};
		\node [left] at (14, -2) {$s_2^{cba}$};
		\node [left] at (15.3, -2) {$s_2^{bca}$};
		\node [left] at (10, -3) {$a$};
		\node [left] at (10, -4) {$a$};
		\node [left] at (11, -3) {$b$};
		\node [left] at (11, -4) {$b$};
		\node [left] at (12.2, -3) {$c$};
		\node [left] at (12.2, -4) {$a$};
		\node [left] at (13.3, -3) {$c$};
		\node [left] at (13.3, -4) {$b$};
		\node [left] at (14.5, -3) {$b$};
		\node [left] at (14.5, -4) {$b$};
		\node [left] at (10, -5) {$a$};
		\node[rotate=90,yshift=0pt] at (9.6, -6.2) {$\cdots$};
		\draw [decorate,decoration={brace,amplitude=7pt},xshift=0pt,yshift=0pt]
		(8, -4.5) -- (8, -2.5) node [black,midway,xshift=-1.1cm]
		{\footnotesize $U\!D_1(bca)$};
		\draw [decorate,decoration={brace,amplitude=7pt},xshift=0pt,yshift=0pt]
		(9,-1.5) -- (11,-1.5) node [black,midway,yshift=0.5cm]
		{\footnotesize $U\!D_2 (bac)$};
	}
	
	\node [left] at (-6.5, -11.5) {$\Rightarrow$};
	
	\subfloat[Step (5)]{
		\node [left] at (-2, -11) {$\hat{s}_1$};
		\node [left] at (-2, -12) {$\hat{\hat{s}}_1$};
		\node [left] at (-0.9, -10) {$\hat{s}_2$};
		\node [left] at (0.2, -10) {$\hat{\hat{s}}_2$};
		\node [left] at (1.7, -10) {$s_2^{cab}$};
		\node [left] at (3, -10) {$s_2^{cba}$};
		\node [left] at (4.3, -10) {$s_2^{bca}$};
		\node [left] at (-1, -11) {$a$};
		\node [left] at (-1, -12) {$a$};
		\node [left] at (0, -11) {$b$};
		\node [left] at (0, -12) {$b$};
		\node [left] at (1.2, -11) {$c$};
		\node [left] at (1.2, -12) {$a$};
		\node [left] at (2.3, -11) {$c$};
		\node [left] at (2.3, -12) {$b$};
		\node [left] at (3.5, -11) {$b$};
		\node [left] at (3.5, -12) {$b$};
		\node [left] at (-1, -13) {$a$};
		\node [left] at (0, -13) {$c$};
		\node [left] at (1.2, -13) {$c$};
		\node [left] at (2.3, -13) {$c$};
		\node [left] at (3.5, -13) {$c$};
		\node [left] at (-1, -14) {$a$};
		\node [left] at (0, -14) {$a$};
		\node [left] at (1.2, -14) {$a$};
		\node [left] at (2.3, -14) {$a$};
		\node [left] at (3.5, -14) {$a$};
		\draw [decorate,decoration={brace,amplitude=7pt},xshift=0pt,yshift=0pt]
		(-3, -12.5) -- (-3, -10.5) node [black,midway,xshift=-1.1cm]
		{\footnotesize $U\!D_1(bca)$};
		\draw [decorate,decoration={brace,amplitude=7pt},xshift=0pt,yshift=0pt]
		(-2,-9.5) -- (0,-9.5) node [black,midway,yshift=0.5cm]
		{\footnotesize $U\!D_2 (bac)$};
	}
	
	\node [left] at (5.5, -11.5) {$\Rightarrow$};
	
	\subfloat[Step (5)]{
		\node [left] at (10, -11) {$\hat{s}_1$};
		\node [left] at (10, -12) {$\hat{\hat{s}}_1$};
		\node [left] at (11.1, -10) {$\hat{s}_2$};
		\node [left] at (12.2, -10) {$\hat{\hat{s}}_2$};
		\node [left] at (13.7, -10) {$s_2^{cab}$};
		\node [left] at (15, -10) {$s_2^{cba}$};
		\node [left] at (11, -11) {$a$};
		\node [left] at (11, -12) {$a$};
		\node [left] at (12, -11) {$b$};
		\node [left] at (12, -12) {$b$};
		\node [left] at (13.2, -11) {$c$};
		\node [left] at (13.2, -12) {$a$};
		\node [left] at (14.3, -11) {$c$};
		\node [left] at (14.3, -12) {$b$};
		\node [left] at (11, -13) {$a$};
		\node [left] at (12, -13) {$c$};
		\node [left] at (13.2, -13) {$c$};
		\node [left] at (14.3, -13) {$c$};
		\node [left] at (11, -14) {$a$};
		\node [left] at (12, -14) {$a$};
		\node [left] at (13.2, -14) {$a$};
		\node [left] at (14.3, -14) {$a$};
		\draw [decorate,decoration={brace,amplitude=7pt},xshift=0pt,yshift=0pt]
		(9, -12.5) -- (9, -10.5) node [black,midway,xshift=-1.1cm]
		{\footnotesize $U\!D_1(bca)$};
		\draw [decorate,decoration={brace,amplitude=7pt},xshift=0pt,yshift=0pt]
		(10,-9.5) -- (12,-9.5) node [black,midway,yshift=0.5cm]
		{\footnotesize $U\!D_2 (bac)$};
	}
	
	\end{tikzpicture}
	\caption{\small There is a unique type 2 strategically simple mechanism (up to relabeling) in {\sc Case 1}.} \label{fig:case1}
\end{figure}

We begin our analysis of this case with the observation that with preference $\hat R_1$ agent 1 has only two undominated strategies.

\begin{claim}\label{twomenus}
$U\!D_1(\hat R_1)$ has exactly two elements, one, which we shall denote by $\hat s_1$, satisfies $M_2(\hat s_1)=\{a,b,c\}$, and the other one, which we shall denote by $\hat{\hat s}_1$, satisfies $M_2(\hat{\hat s}_1)=\{a,b\}$.
\end{claim}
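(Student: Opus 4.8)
The plan is to leverage the fact that, at the profile $(\hat R_1,\hat R_2)$ fixed in Case 1, the two-element outcome set forces a \emph{unique} local dictator, and to pin down agent $1$'s undominated strategies entirely through the menus they offer to that dictator. First I would note that since $g(U\!D_1(\hat R_1),U\!D_2(\hat R_2))=\{a,b\}$ contains two elements, Theorem \ref{thm:MainR} guarantees a local dictator at $(\hat R_1,\hat R_2)$, and the two-dictator argument from the completion of the proof of Theorem \ref{thm:MainR} (if there were two dictators, one could enforce $a$ while the other enforces $b$, a contradiction) shows the dictator is unique. Relabeling the agents if necessary, I would assume that this unique local dictator is agent $2$; this is the labeling convention implicit in Case 1, and it makes agent $1$ the non-dictator (if instead agent $1$ were the dictator, each of her undominated strategies would enforce a fixed alternative and the menus she offers would be a singleton together with a larger set, not the pair $\{a,b\}$ and $\{a,b,c\}$ asserted by the claim).

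The key intermediate step is the inclusion $\{a,b\}\subseteq M_2(s_1)$ for every $s_1\in U\!D_1(\hat R_1)$. Since agent $2$ is the local dictator and the outcome set at this profile is exactly $\{a,b\}$, agent $2$ has an undominated strategy $\hat s_2$ that enforces $a$ and an undominated strategy $\hat{\hat s}_2$ that enforces $b$ against every strategy in $U\!D_1(\hat R_1)$. Hence for any $s_1\in U\!D_1(\hat R_1)$ we get $g(s_1,\hat s_2)=a$ and $g(s_1,\hat{\hat s}_2)=b$, so $a,b\in M_2(s_1)$; together with $M_2(s_1)\subseteq A=\{a,b,c\}$ this shows each undominated strategy of agent $1$ offers either the menu $\{a,b\}$ or the menu $\{a,b,c\}$, and no others are possible.

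The counting is then immediate from Corollary \ref{cor:not-same-menu}. Applying that corollary with $i=2$ (its hypothesis $|U\!D_1(\hat R_1)|\geq 2$ holds by the Case 1 assumption that both agents have multiple undominated strategies), distinct strategies in $U\!D_1(\hat R_1)$ must offer distinct menus to agent $2$. Because only two menus are available and they must be pairwise distinct, $U\!D_1(\hat R_1)$ has at most two elements; combined with $|U\!D_1(\hat R_1)|\geq 2$ it has exactly two, one offering $\{a,b,c\}$ (which I name $\hat s_1$) and one offering $\{a,b\}$ (which I name $\hat{\hat s}_1$), as required.

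I expect the only genuine obstacle to be the first paragraph, namely justifying that the claim's ``agent $1$'' is the non-dictator, i.e.\ that one may take agent $2$ to be the local dictator; once that labeling is fixed, the inclusion $\{a,b\}\subseteq M_2(s_1)$ (which crucially uses that $M_2(s_1)$ ranges over \emph{all} of $S_2$, so enforceability against $U\!D_1(\hat R_1)$ by $\hat s_2,\hat{\hat s}_2$ suffices) and the finite counting via Corollary \ref{cor:not-same-menu} are routine.
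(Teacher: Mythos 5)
Your proof is correct and takes essentially the same route as the paper's: agent 2's local dictatorship at $(\hat R_1,\hat R_2)$ forces every strategy in $U\!D_1(\hat R_1)$ to offer a menu containing both $a$ and $b$, leaving $\{a,b\}$ and $\{a,b,c\}$ as the only possible menus, and Corollary \ref{cor:not-same-menu} then yields exactly two undominated strategies, one per menu. The additional material in your first paragraph (uniqueness of the local dictator and the explicit relabeling) is a harmless elaboration of the convention the paper adopts implicitly when it assumes agent 2 is the dictator.
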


\begin{proof}
Because agent 2 is a local dictator at $(\hat R_1,\hat R_2)$, every undominated strategy of agent 1 has to offer a menu that includes both $a$ and $b$. There are only two such menus: $\{a, b, c\}$ and $\{a, b\}$. Because agent 1 has multiple undominated strategies and each such strategy by Corollary \ref{cor:not-same-menu} has to offer a different menu, she has exactly two undominated strategies with one strategy offering menu $\{a, b, c\}$ and the other strategy offering menu $\{a, b\}$.
\end{proof}

Next, we investigate agent 2's strategy set, and for each of her strategies the outcome that results if agent 1 chooses $\hat s_1$ or $\hat{\hat{s}}_1$. Define:
\begin{align*}
	S_2^a & = \{s_2 \in S_2 : g(s_1, s_2) = a \text{ for all } s_1 \in U\!D_1 (\hat R_1)\}, \text{ and } \\
	S_2^b & = \{s_2 \in S_2 : g(s_1, s_2) = b \text{ for all } s_1 \in U\!D_1 (\hat R_1)\}.
\end{align*}

Because, by assumption, in Case 1: $g(U\!D_1(\hat R_1),U\!D_2(\hat R_2))=\{a,b\},$ and because, also by assumption, agent 2 is the local dictator at $(\hat R_1,\hat R_2)$, there must be at least one strategy in $U\!D_2(\hat R_2)$ that is in $S_2^a$, and also at least one strategy in $U\!D_2(\hat R_2)$ that is in $S_2^b$. Let us denote the former strategy by $\hat s_2$ and the latter by $\hat{\hat{s}}_2$. We also know that \emph{all} strategies in $U\!D_2(\hat R_2)$ are contained in $S_2^a\cup S_2^b$. That is because agent 2 is the local dictator at $(\hat R_1,\hat R_2)$. The top left panel in Figure \ref{fig:case1} represents, symbolically, what we have inferred so far about the mechanism that we are considering.\medskip

The focus of {\sc Claims} \ref{agent2strategies} and \ref{agent2strategiesAgain} will be strategies of agent 2 that are \emph{not} in $S_2^a\cup S_2^b$. We shall conclude that there are exactly two such strategies, and we shall show which outcomes they yield against $\hat s_1$ and $\hat{\hat{s}}_1$.

\begin{claim}\label{agent2strategies}
If $s_2\in S_2 \setminus (S_2^a \cup S_2^b)$, then either:
\[
g(\hat{s}_1, s_2) = c \mbox{ and } g(\hat{\hat{s}}_1, s_2)=a,
\]
or
\[
g(\hat{s}_1, s_2) = c \mbox{ and } g(\hat{\hat{s}}_1, s_2) = b.
\]
\end{claim}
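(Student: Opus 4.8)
The plan is to deduce the claim from the structural result Lemma~\ref{lem:structure}, applied with the roles of the two agents reversed: I treat agent $2$ as the ``responder'' (the agent ``$i$'' of that lemma) while holding agent $1$'s preference fixed at $\hat R_1$. First I would record the trivial half: since $M_2(\hat{\hat s}_1)=\{a,b\}$, we automatically have $g(\hat{\hat s}_1,s_2)\in\{a,b\}$, so the whole content of the claim reduces to showing $g(\hat s_1,s_2)=c$. This single equality simultaneously excludes the two ``mixed'' possibilities $(g(\hat s_1,s_2),g(\hat{\hat s}_1,s_2))=(a,b)$ and $(b,a)$ that remain after $s_2\notin S_2^a\cup S_2^b$ has already ruled out $(a,a)$ and $(b,b)$.

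The key step is to set up Lemma~\ref{lem:structure} correctly. By Claim~\ref{twomenus}, $U\!D_1(\hat R_1)=\{\hat s_1,\hat{\hat s}_1\}$ has exactly two elements, so the lemma applies with $i=2$ and $R_{-2}=\hat R_1$. Using the standing non-redundancy assumption, I would pick some preference $R_2$ for which $s_2\in U\!D_2(R_2)$ (a utility function making $s_2$ undominated exists, and mixed-undominatedness implies pure-undominatedness, so $s_2$ lies in $U\!D_2(R_2)$ in the pure sense used throughout the appendix). I then compare the two menus offered to agent $2$, namely $M_2(\hat s_1)=\{a,b,c\}$ and $M_2(\hat{\hat s}_1)=\{a,b\}$, via agent $2$'s favourites $b_2(\hat s_1,R_2)$ and $b_2(\hat{\hat s}_1,R_2)$, and split according to the dichotomy of Lemma~\ref{lem:structure}.

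In case (1) of Lemma~\ref{lem:structure}, $b_2(\hat s_1,R_2)\neq b_2(\hat{\hat s}_1,R_2)$, and the lemma gives $g(s_1,s_2)=b_2(s_1,R_2)$ for every $s_1\in U\!D_1(\hat R_1)$. Since two nested menus $\{a,b\}\subseteq\{a,b,c\}$ have distinct top elements exactly when the larger one is topped by $c$, this forces $b_2(\hat s_1,R_2)=c$, whence $g(\hat s_1,s_2)=c$ and $g(\hat{\hat s}_1,s_2)\in\{a,b\}$, which is precisely the claim. In case (2), $b_2(\hat s_1,R_2)=b_2(\hat{\hat s}_1,R_2)$, and the ``moreover'' part of Lemma~\ref{lem:structure} yields $g(\hat s_1,s_2)=g(\hat{\hat s}_1,s_2)$; as the common value lies in $M_2(\hat{\hat s}_1)=\{a,b\}$, this would place $s_2$ in $S_2^a\cup S_2^b$, contradicting the hypothesis. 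Hence only case (1) can occur, and the claim follows.

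I expect the main obstacle to be conceptual rather than computational: the crux is recognizing that the useful instantiation of Lemma~\ref{lem:structure} is the one in which agent $2$ is the agent whose favourite-outcome comparison drives the conclusion, with $\hat R_1$ and the two menus $\{a,b,c\}$ and $\{a,b\}$ isolated in Claim~\ref{twomenus} playing the role of the fixed opponent profile. The only points needing care are the existence of a preference $R_2$ with $s_2\in U\!D_2(R_2)$ (secured by the non-redundancy assumption) and the elementary observation about when the tops of the two nested menus differ.
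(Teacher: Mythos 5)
Your proof is correct and takes essentially the same route as the paper's: both invoke Lemma~\ref{lem:structure} with agent 2 in the role of ``$i$'' against the two-element set $U\!D_1(\hat R_1)$ from Claim~\ref{twomenus}, and both use the standing non-redundancy assumption to obtain a preference $R_2$ with $s_2\in U\!D_2(R_2)$. The paper merely organizes the case split by which alternative $R_2$ ranks top, which coincides exactly with your case (1)/case (2) dichotomy, since the tops of the nested menus $\{a,b\}\subseteq\{a,b,c\}$ differ precisely when $R_2$ ranks $c$ first.
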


\begin{proof}
Recall that we have assumed that for every strategy of agent $i$ there is some preference for which it is undominated. Suppose that $R_2$ ranks $a$ top. Then part (2) of Lemma \ref{lem:structure} implies that $U\!D_2(R_2) \subseteq S_2^a \cup S_2^b$. Analogously, if $R_2$ ranks $b$ top, then $U\!D_2(R_2) \subseteq S_2^a \cup S_2^b$. By part (1) of Lemma \ref{lem:structure}, any $s_2 \in U\!D_2(cab)$ satisfies:
	\[
	g(\hat{s}_1, s_2)=c \mbox{ and } g(\hat{\hat{s}}_1, s_2)=a,
	\]
	and any $s_2 \in U\!D_2(cba)$ satisfies:
	\[
	g(\hat{s}_1, s_2) =  c\mbox{ and } g(\hat{\hat{s}}_1, s_2)=b.
	\]
\end{proof}

Before we proceed with our analysis of agent 2's strategies, we observe that the conclusions of {\sc Claim} \ref{agent2strategies} allows us to narrow down the set of possible candidates for the preference $\hat R_1$.

\begin{claim}\label{1spreferences}
$\hat R_1$ is either $acb$ or $bca$.
\end{claim}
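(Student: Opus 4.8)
The plan is to show that $\hat R_1$ must rank $c$ strictly between $a$ and $b$; since $\hat R_1$ is a strict total order on $\{a,b,c\}$, this is precisely the assertion that $\hat R_1$ is $acb$ or $bca$. The leverage comes from Claim \ref{twomenus}, which gives $U\!D_1(\hat R_1)=\{\hat s_1,\hat{\hat s}_1\}$ with $M_2(\hat s_1)=\{a,b,c\}$ and $M_2(\hat{\hat s}_1)=\{a,b\}$, together with the outcome tabulation already assembled: against any $s_2\in S_2^a$ both $\hat s_1$ and $\hat{\hat s}_1$ yield $a$; against any $s_2\in S_2^b$ both yield $b$; and against any $s_2\in S_2\setminus(S_2^a\cup S_2^b)$, Claim \ref{agent2strategies} gives $g(\hat s_1,s_2)=c$ while $g(\hat{\hat s}_1,s_2)\in\{a,b\}$.

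First I would record two structural facts. The strategies $\hat s_1$ and $\hat{\hat s}_1$ agree on every column indexed by $S_2^a\cup S_2^b$, and they can differ only on columns indexed by $S_2\setminus(S_2^a\cup S_2^b)$, where $\hat s_1$ always yields $c$ and $\hat{\hat s}_1$ yields $a$ or $b$. Moreover the set of differing columns is nonempty: since $c\in M_2(\hat s_1)$, there is some $s_2$ with $g(\hat s_1,s_2)=c$, and such an $s_2$ lies in neither $S_2^a$ nor $S_2^b$. I would also note that in this section $U\!D_1(\hat R_1)$ refers to strategies not weakly dominated in the \emph{pure} sense of the definition preceding Theorem \ref{thm:MainR}, so it suffices to reason about pure weak dominance.

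The key step is a two-way elimination. Suppose $\hat R_1$ ranked $c$ at the top. Then on every differing column $\hat s_1$ yields $c$, which $\hat R_1$ strictly prefers to $\hat{\hat s}_1$'s outcome in $\{a,b\}$, while on all remaining columns the two strategies tie; hence $\hat s_1$ would weakly dominate $\hat{\hat s}_1$, contradicting $\hat{\hat s}_1\in U\!D_1(\hat R_1)$. Symmetrically, if $\hat R_1$ ranked $c$ at the bottom, then on every differing column $\hat{\hat s}_1$'s outcome in $\{a,b\}$ would be strictly $\hat R_1$-preferred to $\hat s_1$'s outcome $c$, so $\hat{\hat s}_1$ would weakly dominate $\hat s_1$, contradicting $\hat s_1\in U\!D_1(\hat R_1)$. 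Since $c$ can be neither top nor bottom, it must be in the middle, giving $\hat R_1\in\{acb,bca\}$.

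I do not expect a genuine obstacle here; the whole content is the observation that $\hat s_1$ and $\hat{\hat s}_1$ are outcome-identical except where $\hat s_1$ delivers $c$, which converts the requirement that both survive pure weak dominance into a constraint on the rank of $c$. The only points demanding care are confirming that the differing columns are nonempty (so that the dominance is strict somewhere and each contradiction is genuine) and that the comparison is in pure strategies, matching the definition of $U\!D_1(\hat R_1)$ used in this section.
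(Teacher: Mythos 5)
Your proof is correct and takes essentially the same route as the paper: the paper's proof likewise eliminates the two cases, arguing that if $\hat R_1$ ranked $c$ top then $\hat s_1$ would weakly dominate $\hat{\hat s}_1$, and if $\hat R_1$ ranked $c$ bottom then $\hat{\hat s}_1$ would weakly dominate $\hat s_1$, each contradicting membership in $U\!D_1(\hat R_1)$. The extra details you supply --- that the two strategies agree on all columns in $S_2^a \cup S_2^b$ and differ exactly on the (nonempty, since $c \in M_2(\hat s_1)$) set of remaining columns where $\hat s_1$ yields $c$ --- are precisely what the paper leaves implicit.
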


\begin{proof}
f $\hat R_1$ ranks $c$ top, then $\hat{s}_1$ would weakly dominate $\hat{\hat{s}}_1$, contradicting that $\hat{\hat s}_1 \in U\!D_1 (\hat R_1)$. If $\hat R_1$ ranks $c$ bottom, then $\hat{\hat{s}}_1$ would weakly dominate $\hat{s}_1$, contradicting that $\hat{s}_1 \in U\!D_1 (\hat R_1)$.
\end{proof}

Without loss of generality, we assume that $\hat R_1=bca$. We now return to our analysis of agent 2's strategy set.

\begin{claim}\label{agent2strategiesAgain}
 There are exactly two strategies in $S_2$ that are not in $S_2^a \cup S_2^b$. One of these, which we shall denote by $s_2^{cab}$, satisfies
\[
g(\hat{s}_1, s_2^{cab}) = c \mbox{ and } g(\hat{\hat{s}}_1, s_2^{cab})=a,
\]
and the other one, which we shall denote by $s_2^{cba}$, satisfies
\[
g(\hat{s}_1, s_2^{cba}) = c \mbox{ and } g(\hat{\hat{s}}_1, s_2^{cba}) = b.
\]
Moreover, $U\!D_2(cab)=\{s_2^{cab}\}$ and $U\!D_2(cba)=\{s_2^{cba}\}$.
\end{claim}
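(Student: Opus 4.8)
The plan is to determine agent~2's strategies outside $S_2^a\cup S_2^b$ by analysing, for agent~1 fixed at $\hat R_1=bca$, the two preferences $cab$ and $cba$ of agent~2, exploiting that $U\!D_1(bca)=\{\hat s_1,\hat{\hat s}_1\}$ has two elements with $M_2(\hat s_1)=\{a,b,c\}$ and $M_2(\hat{\hat s}_1)=\{a,b\}$ (Claim \ref{twomenus}). First I would pin down the outcomes on $U\!D_2(cab)$ and $U\!D_2(cba)$. Under $cab$ the best alternative in $M_2(\hat s_1)$ is $c$ while in $M_2(\hat{\hat s}_1)$ it is $a$; since these differ, part~(1) of Lemma \ref{lem:structure} (with $i=2$, using that $U\!D_1(bca)$ has two elements) gives $g(\hat s_1,s_2)=c$ and $g(\hat{\hat s}_1,s_2)=a$ for every $s_2\in U\!D_2(cab)$. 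The identical argument under $cba$ (best alternatives $c$ and $b$) gives $g(\hat s_1,s_2)=c$, $g(\hat{\hat s}_1,s_2)=b$ for every $s_2\in U\!D_2(cba)$. As these sets are non-empty, this furnishes strategies $s_2^{cab}$ and $s_2^{cba}$ realising the two claimed outcome pairs, and both lie outside $S_2^a\cup S_2^b$ because they return $c$ against $\hat s_1$.

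Next I would show that \emph{every} strategy outside $S_2^a\cup S_2^b$ is one of these. By Claim \ref{agent2strategies} such a strategy $t_2$ has outcome pair $(c,a)$ or $(c,b)$ against $(\hat s_1,\hat{\hat s}_1)$. By the standing assumption $t_2\in U\!D_2(R_2')$ for some $R_2'$; applying Lemma \ref{lem:structure} at $(bca,R_2')$ with $i=2$, the case in which the two menu-maxima coincide is impossible, since part~(2) would force $g(\hat s_1,t_2)=g(\hat{\hat s}_1,t_2)$, contradicting $c\neq a$ (resp. $c\neq b$). Hence part~(1) applies and forces $R_2'$ to rank $c$ on top and to rank $a$ above $b$ (resp. $b$ above $a$); that is, $R_2'=cab$ (resp. $cba$). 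Thus $t_2\in U\!D_2(cab)$ or $t_2\in U\!D_2(cba)$, so $S_2\setminus(S_2^a\cup S_2^b)=U\!D_2(cab)\cup U\!D_2(cba)$.

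It then remains to prove $|U\!D_2(cab)|=|U\!D_2(cba)|=1$, which I expect to be the main obstacle. Suppose $U\!D_2(cab)$ contained two distinct strategies $s_2,s_2'$. By Corollary \ref{cor:not-same-menu} they offer different menus to agent~1; since both contain $\{a,c\}$ (from the outcomes against $\hat s_1,\hat{\hat s}_1$) and lie in $\{a,b,c\}$, one menu is $\{a,c\}$ and the other $\{a,b,c\}$. I would then argue that the strategy with menu $\{a,c\}$ weakly dominates the other for $cab$, a contradiction. Row by row: for any $R_1$ of agent~1 whose top alternative is $a$ or $c$ the two menu-maxima coincide, so part~(2) of Lemma \ref{lem:structure} yields equal outcomes on $U\!D_1(R_1)$; for $R_1$ with top $b$ the maxima differ and part~(1) forces the larger-menu strategy to return $b$ (the worst alternative for $cab$) while the smaller-menu strategy returns an element of $\{a,c\}$, which is strictly better. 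Because every strategy of agent~1 is undominated for some preference, these two cases cover all rows, giving weak dominance; and since the larger menu actually contains $b$, there is a row $s_1^\dagger$ on which the larger-menu strategy returns $b$. That row must carry a $b$-top preference (otherwise part~(2) would force the smaller-menu strategy also to return $b\notin\{a,c\}$), so the dominance is strict there.

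The symmetric argument, interchanging the roles of $a$ and $b$ (now $a$ is the worst alternative and the witnessing row is $a$-top), handles $cba$. Concluding, each of $U\!D_2(cab)$ and $U\!D_2(cba)$ is a singleton, so there are exactly two strategies outside $S_2^a\cup S_2^b$, namely $s_2^{cab}$ and $s_2^{cba}$, and $U\!D_2(cab)=\{s_2^{cab}\}$, $U\!D_2(cba)=\{s_2^{cba}\}$. The delicate points, and the place where I expect to spend the most care, are the case analysis on the dictator's top alternative that underlies the weak-dominance comparison, and the observation that the row witnessing $b\in M_1(\cdot)$ (resp. $a\in M_1(\cdot)$) is necessarily of the corresponding top type, which is exactly what upgrades weak dominance to strict and produces the contradiction.
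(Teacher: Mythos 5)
Your proof is correct and follows essentially the same route as the paper's: the same reduction, via Claim \ref{agent2strategies} and the assumption that every strategy is undominated for some preference, to showing that $U\!D_2(cab)$ and $U\!D_2(cba)$ are singletons, followed by a contradiction from two undominated strategies whose menus must be $\{a,c\}$ and $\{a,b,c\}$ by Corollary \ref{cor:not-same-menu}. The only difference is presentational: where the paper disposes of this case by invoking the earlier chain of claims ``with the roles of agents 1 and 2 swapped'' (so that the analogue of Claim \ref{1spreferences} forces $cab$ to rank $b$ in the middle, a contradiction), you unwind that symmetry argument into an explicit row-by-row weak-dominance computation via Lemma \ref{lem:structure} --- the same underlying argument, written out in full.
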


\begin{proof}
The argument in the proof of Claim \ref{agent2strategies} shows that it suffices to prove that $U\!D_2(cab)$ and $U\!D_2(cab)$ each have no more than one element. Without loss of generality we show this only for $U\!D_2(cab)$.  Suppose that $U\!D_2(cab)$ had more than one element.  By part (1) of Lemma \ref{lem:structure}, any $s_2 \in U\!D_2(cab)$ satisfies:
	\[
	g(\hat{s}_1, s_2)=c \mbox{ and } g(\hat{\hat{s}}_1, s_2)=a.
	\]
Now consider the preference pair consisting of $\hat R_1$ and of $cab$. We could apply to this preference profile the same reasoning as we applied above to the preference profile $\hat R_1$ and $\hat R_2$, with the roles of agents 1 and 2 swapped. We could infer, as we did above in {\sc Claim} \ref{1spreferences}, that agent 2's preference must be such that $b$ is ranked in the middle. But this contradicts that agent 2's preference is $cab$.
\end{proof}

What we have inferred so far allows us is symbolically represented by the middle panel in the top row of Figure \ref{fig:case1}. After we have pinned down the strategies that are not in $S_2^a\cup S_2^b$, we now return to the strategies of agent 2 that are in this set.

\begin{claim}\label{enforcingA}
$s_2\in S_2^a$ implies $g(s_1,s_2)=a$ for all $s_1\in S_1$. Moreover, $S_2^a$ has only one element, and $U\!D(abc)=U\!D(acb)=S_2^a$.
\end{claim}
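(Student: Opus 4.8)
The plan is to reduce all three assertions to the single core fact that each $s_2 \in S_2^a$ returns $a$ against \emph{every} $s_1 \in S_1$ (the first assertion), and then to read off the cardinality claim and the identification of $U\!D_2(abc)$ and $U\!D_2(acb)$ from it.

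First I would dispatch the two easy reductions. Granting the first assertion, the claim that $S_2^a$ is a singleton is immediate: if $s_2,s_2'\in S_2^a$ were distinct they would both be the constant map $s_1\mapsto a$, hence duplicate strategies, contradicting the standing no-duplicate-strategies hypothesis. For the last assertion, once $S_2^a=\{s_2^a\}$ with $s_2^a$ constant in $a$, note that $s_2^a$ attains agent $2$'s top alternative in every column under any preference ranking $a$ first, so $s_2^a\in U\!D_2(abc)\cap U\!D_2(acb)$. Conversely, Lemma \ref{lem:structure}(2) applied at $(\hat R_1,abc)$ (the two menus $M_2(\hat s_1)=\{a,b,c\}$ and $M_2(\hat{\hat s}_1)=\{a,b\}$ share the top element $a$) shows every $s_2\in U\!D_2(abc)$ returns a common value in $\{a,b\}$ against $\hat s_1$ and $\hat{\hat s}_1$; the value $b$ is impossible because $s_2^a$ weakly dominates any $S_2^b$ strategy under $abc$, so $U\!D_2(abc)\subseteq S_2^a$, giving equality, and $acb$ is handled identically.

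So the real work is the first assertion. I would first use Lemma \ref{lem:structure}(2) at $(\hat R_1,abc)$ to locate a strategy $s_2^*\in U\!D_2(abc)\cap S_2^a$, and then argue that $s_2^*$ is globally constant in $a$: for an arbitrary $s_1\in S_1$ choose $R_1$ with $s_1\in U\!D_1(R_1)$ and examine the profile $(R_1,abc)$. By Theorem \ref{thm:MainR} there is a local dictator, and in either case (agent $1$ a dictator, via Lemma \ref{lem:structure}(1); agent $2$ a dictator, via the constancy asserted in part (2)) the value $g(s_1,s_2^*)$ equals agent $2$'s most preferred element of the menu $M_2(s_1)$, which is $a$ as soon as $a\in M_2(s_1)$. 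Once $s_2^*$ is known to be constant-$a$, any other member of $S_2^a$ that failed to be constant-$a$ would be weakly dominated by $s_2^*$ under $abc$ and hence undominated only at a preference ranking $b$ first (it is already ruled out at $c$-top profiles, since $U\!D_2(cab)=\{s_2^{cab}\}$ and $U\!D_2(cba)=\{s_2^{cba}\}$ by Claim \ref{agent2strategiesAgain}); excluding this possibility completes the first assertion.

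The main obstacle is precisely these last two points, which are intertwined: showing (a) that $a$ lies in every menu $M_2(s_1)$, i.e.\ that agent $2$ can always secure $a$, and (b) that no $S_2^a$ strategy is undominated at a $b$-top preference, i.e.\ that every $S_2^a$ strategy is globally constant-$a$. Both are equivalent to the existence of a single globally constant-$a$ strategy for agent $2$, whereas the local-dictatorship arguments only control the undominated sub-rectangles $U\!D_1(R_1)\times U\!D_2(R_2)$ and say nothing about columns indexed by strategies that are dominated for $R_1$. I expect to close the gap by a sliding/connectedness argument across agent $1$'s preferences, in the spirit of Claim \ref{Indiff2}, that propagates the value $a$ from the reference profile $(\hat R_1,\cdot)$ to all columns, combined with Corollary \ref{cor:not-same-menu}, the no-duplicate-strategies assumption, and the weak-dominance relations forced by $s_2^{cab}$ and $s_2^{cba}$ being the unique undominated strategies at the two $c$-top profiles.
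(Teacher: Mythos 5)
Your reductions of the second and third assertions to the first one are fine, but the first assertion itself --- the core of Claim \ref{enforcingA} --- is not actually proved in your proposal, and you acknowledge as much: the ``sliding/connectedness argument'' you hope will close the gap is never carried out. The gap is genuine. Your strategy of fixing $s_2^*\in U\!D_2(abc)\cap S_2^a$ and examining the profile $(R_1,abc)$ for an $R_1$ with $s_1\in U\!D_1(R_1)$ does not pin down $g(s_1,s_2^*)$: if agent 2 is the local dictator at $(R_1,abc)$, local dictatorship only says that $g(\cdot,s_2^*)$ is \emph{constant} on $U\!D_1(R_1)$, not that the constant is $a$; since $U\!D_1(R_1)$ need not intersect $U\!D_1(\hat R_1)$, nothing forces that constant to agree with the value $a$ that $s_2^*$ takes on $\{\hat s_1,\hat{\hat s}_1\}$. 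Worse, Lemma \ref{lem:structure} cannot even be invoked when $U\!D_1(R_1)$ is a singleton (which does happen in the mechanism being characterized, e.g.\ for $R_1$ ranking $a$ top), since its hypothesis requires at least two undominated strategies for the opponent. So the case analysis ``agent 1 dictator via part (1), agent 2 dictator via part (2)'' does not deliver $g(s_1,s_2^*)=b_2(s_1,abc)$ in general, and your items (a) and (b) remain open.

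The missing idea is much simpler than a propagation argument, and it is what the paper uses: exploit the fact that, after the normalization $\hat R_1=bca$, alternative $a$ is agent 1's \emph{bottom-ranked} alternative under $\hat R_1$. Suppose $s_2\in S_2^a$ and $g(s_1,s_2)\neq a$ for some $s_1\in S_1$. Then $g(s_1,s_2)\,\hat P_1\, a$, since every alternative other than $a$ is strictly above $a$ under $bca$. By Lemma \ref{lem:UD} there is $s_1'\in U\!D_1(\hat R_1)$ with $g(s_1',s_{2}')\,\hat R_1\,g(s_1,s_{2}')$ for \emph{all} $s_2'\in S_2$; taking $s_2'=s_2$ gives $g(s_1',s_2)\,\hat R_1\,g(s_1,s_2)\,\hat P_1\,a$, hence $g(s_1',s_2)\neq a$, contradicting the definition of $S_2^a$. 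This proves the first assertion in one step, for every element of $S_2^a$ at once (so there is no need to first isolate a special $s_2^*$ or to separately rule out $S_2^a$-strategies being undominated at $b$-top preferences); the singleton claim and $U\!D_2(abc)=U\!D_2(acb)=S_2^a$ then follow exactly as in your easy reductions. In short: no column-by-column dictatorship analysis and no sliding argument is needed --- only Lemma \ref{lem:UD} plus the observation that $S_2^a$ is defined by constancy at the worst alternative of $\hat R_1$.
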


\begin{proof}
	The second sentence is an immediate implication of the first sentence, the assumption that there are no duplicate strategies, and the definition of weak dominance. For an indirect proof of the first sentence, suppose that for some $s_2\in S_2^a$, we have $g(s_1, s_2) \neq a$ for some $s_1 \in S_1$. Then the preference $\hat R_1 = bca$ ranks $g(s_1,s_2)$ strictly above $a$. But then by Lemma \ref{lem:UD}, there would have to be a strategy $s_1' \in U\!D_1 (\hat R_1)$ such that $g(s_1', s_2)$ is ranked above $a$. This contradicts that $s_2 \in S_2^a$.
\end{proof}

The right panel in the top row of Figure \ref{fig:case1} symbolizes what we have concluded so far. Next, we can pin down the preference $\hat R_2$.

\begin{claim}\label{WhatIsR2?}
$\hat R_2 = bac$.
\end{claim}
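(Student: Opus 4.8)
The plan is to pin down $\hat R_2$ by elimination, using the structural facts already established about agent $2$'s undominated strategies. First I would record that $U\!D_2(\hat R_2)$ contains at least two distinct elements. Indeed, the paragraph preceding {\sc Claim}~\ref{agent2strategies} exhibits $\hat s_2 \in U\!D_2(\hat R_2) \cap S_2^a$ and $\hat{\hat s}_2 \in U\!D_2(\hat R_2)\cap S_2^b$, and these differ because a strategy in $S_2^a$ prescribes outcome $a$ and a strategy in $S_2^b$ prescribes outcome $b$ against every strategy in the nonempty set $U\!D_1(\hat R_1)$. Consequently, any preference that would leave agent $2$ with a single undominated strategy cannot be $\hat R_2$.

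Next I would eliminate four of the six preferences over $\{a,b,c\}$ using this singleton test. By {\sc Claim}~\ref{enforcingA}, $U\!D_2(abc)=U\!D_2(acb)=S_2^a$, which is a singleton, so $\hat R_2 \notin \{abc, acb\}$. By {\sc Claim}~\ref{agent2strategiesAgain}, $U\!D_2(cab)=\{s_2^{cab}\}$ and $U\!D_2(cba)=\{s_2^{cba}\}$ are also singletons, so $\hat R_2 \notin \{cab, cba\}$. This leaves only the two candidates $\hat R_2 \in \{bac, bca\}$.

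It then remains to rule out $\hat R_2 = bca$, which I expect to be the crux of the argument. The key observation is that $\hat s_2 \in S_2^a$, so by {\sc Claim}~\ref{enforcingA} the strategy $\hat s_2$ yields outcome $a$ against \emph{every} strategy of agent $1$, not merely against those in $U\!D_1(\hat R_1)$. If agent $2$'s preference were $bca$, then $a$ would be her bottom-ranked alternative, so every outcome is weakly preferred to $a$; in particular $g(s_1,\hat{\hat s}_2)\,\hat R_2\, a = g(s_1,\hat s_2)$ for all $s_1 \in S_1$, while $g(\hat s_1,\hat{\hat s}_2)=b$ (as $\hat{\hat s}_2 \in S_2^b$ and $\hat s_1 \in U\!D_1(\hat R_1)$) is strictly preferred to $a$. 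Hence $\hat{\hat s}_2$ would weakly dominate $\hat s_2$ under $bca$, contradicting $\hat s_2 \in U\!D_2(\hat R_2)$. Therefore $\hat R_2 = bac$.

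The main obstacle is making this dominance step watertight, and it hinges on two ingredients already in hand: knowing that $\hat s_2$ enforces $a$ against the \emph{entire} strategy set $S_1$ (so that no strategy of agent $1$ can make $\hat s_2$ perform strictly better, and this is exactly the strengthening provided by {\sc Claim}~\ref{enforcingA}), and the fact that $a$ is bottom-ranked under $bca$ (so that $\hat{\hat s}_2$ is guaranteed never to do strictly worse than $\hat s_2$). Once the candidate set has been narrowed to $\{bac, bca\}$, the conclusion follows directly.
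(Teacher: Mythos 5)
Your proof is correct and takes essentially the same route as the paper: an elimination over the six preferences in which the crux step---ruling out $bca$ because $\hat{\hat s}_2$ weakly dominates $\hat s_2$ when $a$ is bottom-ranked---is exactly the paper's ``$a$ ranked bottom'' argument. The only difference is cosmetic: where you cite the singleton statements in {\sc Claims}~\ref{enforcingA} and \ref{agent2strategiesAgain} to dispose of $abc$, $acb$, $cab$, and $cba$, the paper instead runs short inline arguments (dominance of $\hat s_2$ when $a$ is ranked top, and Lemma~\ref{lem:UD} when $c$ is ranked top).
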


\begin{proof} It cannot be that $\hat R_2$ ranks $a$ top, because then $\hat{s}_2$ would be a dominant strategy, and therefore would contradict with our assumption that $\hat R_2$ has at least two undominated strategies. It cannot be that $\hat R_2$ ranks $a$ bottom, because then $\hat{s}_2$ would be weakly dominated. Finally, it cannot be that $\hat R_2$ ranks $c$ top, because then, by Lemma \ref{lem:UD}, $U\!D_2 (\hat R_2)$ would have to include a strategy that yields $c$ against $\hat{s}_1$, which contradicts that $U\!D_2(\hat R_2) \subseteq S_2^a \cup S_2^b$. It follows that $\hat R_2 = bac$. 	
\end{proof}

\begin{claim}\label{Undominatedbac}
 $U\!D_2(bac)=\{\hat{s}_2,\hat{\hat{s}}_2\}$.
 \end{claim}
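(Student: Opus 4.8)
The forward inclusion $\{\hat s_2,\hat{\hat s}_2\}\subseteq U\!D_2(bac)$ is immediate, since $\hat s_2$ and $\hat{\hat s}_2$ were chosen as elements of $U\!D_2(\hat R_2)$ and $\hat R_2=bac$ by Claim \ref{WhatIsR2?}. For the reverse inclusion I would exploit two facts already in hand: every strategy in $U\!D_2(bac)$ lies in $S_2^a\cup S_2^b$, and $S_2^a=\{\hat s_2\}$ by Claim \ref{enforcingA}. The statement therefore reduces to showing that $\hat{\hat s}_2$ is the \emph{only} member of $S_2^b$ that is undominated for $bac$.

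To prove that, the plan is to show that every $s_2\in U\!D_2(bac)\cap S_2^b$ offers agent 1 the full menu $M_1(s_2)=\{a,b,c\}$. Once this is established, Corollary \ref{cor:not-same-menu}---distinct undominated strategies of agent 2 present distinct menus to agent 1---forces $U\!D_2(bac)\cap S_2^b$ to contain at most one strategy, which must be $\hat{\hat s}_2$. Membership $b\in M_1(s_2)$ is automatic because $g(\hat s_1,s_2)=b$ for every $s_2\in S_2^b$, so the real task is to exhibit, for each such $s_2$, agent-1 strategies producing the default $a$ and the alternative $c$ against it, i.e.\ $\{a,c\}\subseteq M_1(s_2)$.

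I would locate these strategies by reading off the local dictatorship structure (Theorem \ref{thm:MainR}) at the profiles $(R_1,bac)$. One half is clean and worth isolating first: if agent 1 were a local dictator at some $(R_1,bac)$, then, since $\hat s_2\in U\!D_2(bac)$ always yields $a$ (Claim \ref{enforcingA}), each undominated strategy of agent 1 would be forced to enforce $a$ against all of $U\!D_2(bac)$; consequently any two $b$-enforcing strategies would agree on $U\!D_1(R_1)$, so a preference whose undominated strategies \emph{distinguish} a hypothetical rival $s_2^{\ast}\neq\hat{\hat s}_2$ from $\hat{\hat s}_2$ must instead make agent 2 the local dictator, with $s_2^{\ast}$ and $\hat{\hat s}_2$ enforcing different outcomes there. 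Feeding this back into the role-reversed form of Claims \ref{twomenus}--\ref{1spreferences} (as was done for $cab$ and $cba$ in Claim \ref{agent2strategiesAgain}) should then force $bac$ to rank some alternative in the middle, contradicting that $bac$ ranks $a$ in the middle. The main obstacle is precisely this last contradiction: because the menus $\{a,b,c\}$ and $\{a,b\}$ that $\hat s_1,\hat{\hat s}_1$ present to agent 2 share the same $bac$-best (namely $b$), the profile $(bca,bac)$ falls under case (2) rather than case (1) of Lemma \ref{lem:structure}, so the role reversal used in Claim \ref{agent2strategiesAgain} is not available verbatim. Closing the gap requires showing that a $b$-enforcing undominated strategy genuinely exposes both $a$ and $c$ to agent 1, which I expect to need the part of agent 1's strategy structure mirroring Claim \ref{enforcingA}---an always-$a$ default strategy together with a $c$-enforcing strategy---and this is the step where I anticipate the difficulty.
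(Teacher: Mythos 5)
Your setup coincides with the paper's: the reverse inclusion reduces, via $U\!D_2(bac)\subseteq S_2^a\cup S_2^b$, Claim \ref{enforcingA}, and Corollary \ref{cor:not-same-menu}, to showing $M_1(s_2)=\{a,b,c\}$ for every $s_2\in U\!D_2(bac)\cap S_2^b$. But the proof is not complete: the step you yourself flag as open is exactly the step that carries all the weight, and your proposed route to it does not go through. Note first that half of your ``real task'' is cheap and you should have dispatched it: if $M_1(s_2)$ were $\{b\}$ or $\{a,b\}$, then $s_2$ would weakly dominate $\hat s_2$ under $bac$ (it yields $b$ against $U\!D_1(\hat R_1)$ and never anything worse than $a$ elsewhere), contradicting $\hat s_2\in U\!D_2(bac)$; so $c\in M_1(s_2)$ is automatic. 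The genuine difficulty is solely ruling out $M_1(s_2)=\{b,c\}$, i.e.\ showing $a\in M_1(s_2)$. Your role-reversal plan cannot close this case for the structural reason you half-identify: the role-reversed Claims \ref{twomenus}--\ref{1spreferences} constrain the preference of the agent who is \emph{not} the local dictator at the relevant profile, whereas your ``clean half'' establishes that agent 2 is the \emph{dictator} at every profile $(R_1,bac)$ whose undominated strategies distinguish a rival from $\hat{\hat s}_2$. So the machinery, if it applied at all, would constrain $R_1$, not $bac$, and no contradiction with ``$bac$ ranks $a$ in the middle'' can be extracted this way.

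The paper closes the $\{b,c\}$ case by a different argument, which uses material you did not invoke. Suppose $M_1(s_2)=\{b,c\}$. At the profile $(acb,\,bac)$ the menus $M_1(\hat s_2)=\{a\}$ and $M_1(s_2)=\{b,c\}$ have different $acb$-best elements ($a$ versus $c$), so part (1) of Lemma \ref{lem:structure} applies and forces $g(s_1,s_2)=c$ for \emph{every} $s_1\in U\!D_1(acb)$. Hence $c\in M_2(s_1)$ for every such $s_1$, and since $s_2^{cab}$ is a dominant strategy for the preference $cab$ (Claim \ref{agent2strategiesAgain}), which ranks $c$ top, it must pick $c$ out of each of these menus: $g(s_1,s_2^{cab})=c$ for all $s_1\in U\!D_1(acb)$. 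On the other hand, Claim \ref{agent2strategiesAgain} also gives $g(\hat{\hat s}_1,s_2^{cab})=a$, and Lemma \ref{lem:UD} (applied to agent 1, moving from $bca$ to $acb$) then produces some $s_1'\in U\!D_1(acb)$ with $g(s_1',s_2^{cab})=a$ --- a contradiction. This pivot to the profile $(acb,bac)$ and to the already-established dominant strategy $s_2^{cab}$ is the idea missing from your proposal; without it (or a substitute for it), the claim is not proved.
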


\begin{proof}
	From {\sc Claim} \ref{enforcingA}, we know that $\hat{s}_2$ is the unique element in $S_2^a$. We now show that $M_1(s_2)=\{a,b,c\}$ for all $s_2 \in S_2^b \cap U\!D_2 (\hat R_2)$. It then follows from Corollary \ref{cor:not-same-menu} that $\hat{\hat{s}}_2$ is the unique element in $S_2^b \cap U\!D_2(\hat R_2)$. The claim follows since $U\!D_2(\hat R_2) \subseteq S_2^a \cup S_2^b$.\medskip
	
	We proceed by elimination. It cannot be that $M_1 (s_2) = \{b\}$, nor that $M_1 (s_2) = \{a,b\}$, because in both cases $\hat{s}_2$ would be weakly dominated given $\hat R_2$. It remains to eliminate the possibility that $M_1 (s_2)=\{b,c\}$.\medskip
	
	Suppose that for some $s_2 \in S_2^b \cap U\!D_2 (\hat R_2)$, $M_1 (s_2) = \{b, c\}$. First consider the set of undominated strategies of agent 1 when she has preference $acb$. Part (1) of Lemma \ref{lem:structure} implies that $g(s_1, s_2) = c$ for all $s_1 \in U\!D_1(acb)$. Next we consider agent 2 when he has preference $cab$. Recall from {\sc Claim} \ref{agent2strategiesAgain} that agent 2 with this preference has a dominant strategy $s_2^{cab}$. We can then conclude that $g(s_1, s_2^{cab}) = c$ for all $s_1 \in U\!D_1(acb)$. But Lemma \ref{lem:UD}, combined with $g(\hat{\hat s}_1, s_2^{cab}) = a$, which we established in {\sc Claim} \ref{agent2strategiesAgain}, implies that there must exist some $s_1' \in U\!D_1(acb)$ such that $g(s_1', s_2^{cab}) = a$. We have thus obtained a contradiction, and the only remaining possibility is that $M_1 (s_2) = \{a,b,c\}$ for all $s_2 \in S_2^b \cap U\!D_2 (\hat R_2)$, which is what we wanted to show.
\end{proof}

By now, we know that agent 2, if he ranks $a$ top, has a dominant strategy $\hat{s}_2$. We also know that for every preference that ranks $c$ top, agent 2 has a dominant strategy, as described in {\sc Claim} \ref{agent2strategiesAgain}. Finally, we know that agent 2 with preference $bac$ has two undominated strategies: $\hat{s}_2$ and $\hat{\hat{s}}_2$. The left panel in the middle row of Figure \ref{fig:case1} symbolically represents what we have obtained so far. In the next step, we shall investigate agent 2's undominated strategies if he has preference $bca$.

\begin{claim}\label{Undominatedbca}
$|U\!D_2(bca)|=1$.
\end{claim}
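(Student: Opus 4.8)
The plan is to argue by contradiction: suppose $|U\!D_2(bca)| \ge 2$ and derive a contradiction. Throughout I would use that, by {\sc Claim}~\ref{agent2strategiesAgain} together with $S_2^a = \{\hat s_2\}$ ({\sc Claim}~\ref{enforcingA}), agent $2$'s strategy set decomposes as $S_2 = \{\hat s_2\} \cup S_2^b \cup \{s_2^{cab}, s_2^{cba}\}$, and I would record the outcomes these strategies produce against $\hat s_1$ and $\hat{\hat s}_1$: namely $(a,a)$ for $\hat s_2$, $(b,b)$ for every element of $S_2^b$, $(c,a)$ for $s_2^{cab}$, and $(c,b)$ for $s_2^{cba}$. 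A first easy observation disposes of $\hat s_2$: by {\sc Claim}~\ref{enforcingA} it yields $a$ against every strategy of agent $1$, and since $a$ is the bottom outcome under $bca$ while $\hat{\hat s}_2 \in S_2^b$ yields $b$ against $\hat s_1$, the strategy $\hat{\hat s}_2$ weakly dominates $\hat s_2$ given $bca$. Hence $\hat s_2 \notin U\!D_2(bca)$.

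Next I would reduce the problem to the strategies in $S_2^b$. Consider the preference profile $(bca, bca)$. Since $U\!D_1(bca) = \{\hat s_1, \hat{\hat s}_1\}$ has two elements and, by hypothesis, $U\!D_2(bca)$ has at least two, Theorem~\ref{thm:MainR} guarantees a local dictator at this profile. If agent $2$ is the local dictator, then every $s_2 \in U\!D_2(bca)$ must yield the same outcome against $\hat s_1$ and $\hat{\hat s}_1$; this excludes $s_2^{cab}$ (outcomes $c \neq a$) and $s_2^{cba}$ (outcomes $c \neq b$), so $U\!D_2(bca) \subseteq S_2^b$. If instead agent $1$ is the local dictator, then each of $\hat s_1, \hat{\hat s}_1$ must yield a constant outcome across $U\!D_2(bca)$; inspecting $\hat s_1$ (which gives $b$ against $S_2^b$ but $c$ against $s_2^{cab}, s_2^{cba}$) forces $U\!D_2(bca)$ to lie entirely in $S_2^b$ or entirely in $\{s_2^{cab}, s_2^{cba}\}$, and the latter is ruled out because $\hat{\hat s}_1$ would then take both values $a$ and $b$. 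Either way $U\!D_2(bca) \subseteq S_2^b$, and each of its (at least two) elements yields $b$ against both $\hat s_1$ and $\hat{\hat s}_1$.

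Finally I would contradict Corollary~\ref{cor:not-same-menu}. Since $|U\!D_2(bca)| \ge 2$, any two distinct elements of $U\!D_2(bca) \subseteq S_2^b$ must, by that corollary, offer agent $1$ different menus; so it suffices to show that every $s_2 \in S_2^b \cap U\!D_2(bca)$ offers agent $1$ the full menu $\{a,b,c\}$, contradicting distinctness. Each such $s_2$ has $b \in M_1(s_2)$, so the only menus to rule out are $\{b\}$, $\{a,b\}$ and $\{b,c\}$. The menu $\{b\}$ is impossible because a strategy yielding $b$ everywhere would be weakly dominant under $bca$, forcing $|U\!D_2(bca)| = 1$. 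The menus $\{a,b\}$ and $\{b,c\}$ I would eliminate by the same device as in the proof of {\sc Claim}~\ref{Undominatedbac}: using a second element of $U\!D_2(bca)$ as the comparison strategy, part~(1) of Lemma~\ref{lem:structure} (applied at a profile in which agent $1$ ranks $c$, resp.\ $a$, on top) pins down the outcome that $s_2$ yields against every undominated strategy of agent $1$ for that preference, and this is then contradicted through Lemma~\ref{lem:UD} and the known behavior of the dominant strategies $s_2^{cab}, s_2^{cba}$ ({\sc Claim}~\ref{agent2strategiesAgain}). The main obstacle is precisely this menu-elimination step: unlike in {\sc Claim}~\ref{Undominatedbac}, the natural anchor $\hat s_2$ is no longer undominated for $bca$, so the second comparison strategy must be supplied by the contradiction hypothesis itself, and the roles of $a$ and $c$ must be interchanged relative to the $bac$ argument. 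Once every such menu is shown to equal $\{a,b,c\}$, Corollary~\ref{cor:not-same-menu} delivers the contradiction, so $|U\!D_2(bca)| = 1$.
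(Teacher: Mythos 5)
Your overall architecture matches the paper's --- reduce $U\!D_2(bca)$ to $S_2^b$, then force two distinct undominated strategies to offer agent 1 the same menu, contradicting Corollary \ref{cor:not-same-menu} --- and most of your steps are sound: disposing of $\hat s_2$, the reduction to $S_2^b$ via a local dictator at $(bca,bca)$ (a legitimate substitute for the paper's appeal to part (2) of Lemma \ref{lem:structure}), the elimination of the menu $\{b\}$, and the elimination of the menu $\{b,c\}$, which is exactly the paper's closing argument: part (1) of Lemma \ref{lem:structure} under $acb$ pins $g(s_1,s_2)=c$ on $U\!D_1(acb)$, dominance of $s_2^{cab}$ propagates $c$ to $g(s_1,s_2^{cab})$, and Lemma \ref{lem:UD} applied from $\hat{\hat s}_1$ (whose outcome against $s_2^{cab}$ is $a$, the top of $acb$) gives the contradiction.

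The gap is the elimination of the menu $\{a,b\}$, and it is not presentational: the device you propose genuinely fails there. The case you must handle is $U\!D_2(bca)=\{s_2,s_2'\}$ with $M_1(s_2)=\{a,b\}$ and $M_1(s_2')=\{a,b,c\}$ (if the companion menu were $\{b,c\}$, your other argument already applies). Under any preference of agent 1 ranking $a$ top, both menus have the same best element $a$, so part (1) of Lemma \ref{lem:structure} is unavailable; only part (2) applies, which yields no contradiction. Under $cab$ (resp.\ $cba$), part (1) pins $g(s_1,s_2)=a$ (resp.\ $b$) and $g(s_1,s_2')=c$ for all $s_1\in U\!D_1(cab)$ (resp.\ $U\!D_1(cba)$), and dominance of $s_2^{cab}$, $s_2^{cba}$ then pins $g(s_1,s_2^{cab})=c$, $g(s_1,s_2^{cba})=c$ on those sets; but every instance of Lemma \ref{lem:UD} is satisfied by these assignments --- for example, from $\hat{\hat s}_1$ under $cab$, Lemma \ref{lem:UD} only requires some $s_1'\in U\!D_1(cab)$ whose outcome against $s_2^{cab}$ is weakly preferred to $a$ under $cab$, and the outcome $c$ does the job. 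The deeper reason your ``interchange $a$ and $c$'' idea cannot be repaired is that the mechanism pinned down by Claims \ref{enforcingA} and \ref{agent2strategiesAgain} is not symmetric in $a$ and $c$: the $\{b,c\}$ contradiction lives off the fact that agent 2's $c$-top dominant strategy $s_2^{cab}$ yields $a$ against $\hat{\hat s}_1$, whereas the interchanged argument would need agent 2's $a$-top dominant strategy --- which is $\hat s_2$ --- to yield something other than $a$ against some strategy of agent 1, and by Claim \ref{enforcingA} it never does, so Lemma \ref{lem:UD} has nothing to bite on. The paper closes this case by a different, one-line argument that you had available but did not invoke: a strategy of agent 2 with menu $\{b\}$ or $\{a,b\}$ would weakly dominate $\hat s_2$ for agent 2 with preference $bac$ (its outcomes lie in $\{a,b\}$, hence are always weakly better than the constant $a$ and strictly better wherever it yields $b$), contradicting $\hat s_2\in U\!D_2(bac)$, i.e.\ Claims \ref{WhatIsR2?} and \ref{Undominatedbac}. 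With that substitution, the rest of your proof goes through.
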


\begin{proof}
	We first show that $U\!D_2(bca) \subseteq S_2^b$. By part (2) of Lemma \ref{lem:structure}, and by the results that we have so far obtained for agent 2's strategy set, we have to have: $U\!D_2(bca) \subseteq S_2^a \cup S_2^b$. If there exists a strategy $s_2 \in U\!D_2(bca)$ but $s_2 \notin S_2^b$, then what we have established so far implies that it must be the strategy $\hat{s}_2$. But $\hat{s}_2$ is weakly dominated if agent 2 has preference $bca$. Therefore, we conclude $U\!D_2(bca) \subseteq S_2^b$.\medskip
	
	Strategies in $U\!D_2(bca)$ cannot offer the menu $\{b\}$ or $\{a,b\}$, because then the strategy corresponding to this menu would weakly dominate $\hat{s}_2$ for agent $2$ with preference $bac$, which contracts with Claim \ref{WhatIsR2?}. Thus, strategies in $U\!D_2(bca)$ must either offer $\{b,c\}$ or $\{a,b,c\}$.\medskip
	
	Suppose that $U\!D_2(bca)$ has at least two elements. Then Corollary \ref{cor:not-same-menu} implies that there are exactly two strategies in $U\!D_2(bca)$, with one strategy offering the menu $\{b,c\}$ and the other strategy offering the menu $\{a,b,c\}$. In what follows, we show that this leads to a contradiction.\medskip
	
	First consider agent 1 with preference $acb$. By part (1) of Lemma \ref{lem:structure}, each of her undominated strategies $s_1 \in U\!D_1(acb)$ must satisfy (1) $g(s_1, s_2)=c$ if $s_2\in U\!D_2(bca)$ and $M_1(s_2)=\{b,c\}$; and (2) $g(s_1, s_2)=a$ if $s_2\in U\!D_2(bca)$ and $M_1(s_2)=\{a,b,c\}$. Now consider agent 2 with preference $cab$. Claim \ref{agent2strategiesAgain} showed that agent 2 with this preference has a dominant strategy $s_2^{cab}$. Because the strategy is dominant, we have to have: $g(s_1, s_2^{cab}) = c$ for all $s_1 \in U\!D_1(acb)$. Claim \ref{agent2strategiesAgain} also showed that $g(\hat{\hat{s}}_1, s_2^{cab}) = a$. But Lemma \ref{lem:UD} then implies that $g(s_1, s_2^{cab}) = a$ for at least one $s_1 \in U\!D_1(acb)$. We have found a contradiction.
\end{proof}

Since $|U\!D_2(bca)|=1$, agent 2 with preference $bca$ also has a dominant strategy. We denote this strategy by $s_2^{bca}$. Our discussion of agent 2's strategy set so far says that  agent 2 has either four (if $s_2^{bca}=\hat{\hat{s}}_2$) or five (if $s_2^{bca} \neq \hat{\hat{s}}_2$) strategies. We will resolve this question in the last step for Case 1. For the moment, we turn to agent 1's strategies.\medskip

\begin{claim}\label{bNotInMenu}
 For all $s_1\in S_1\setminus U\!D_1(bca)$ we have $b\notin M_2(s_1)$.
\end{claim}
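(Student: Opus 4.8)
The plan is to argue by contradiction: suppose some $s_1\in S_1\setminus U\!D_1(bca)$ satisfies $g(s_1,s_2^\ast)=b$ for some $s_2^\ast$. Since every strategy is undominated for some preference, fix $R_1$ with $s_1\in U\!D_1(R_1)$; as $s_1\notin U\!D_1(bca)$ we have $R_1\neq bca$. Two inputs drive the argument. First, applying Lemma~\ref{lem:structure} to agent $1$ against agent $2$'s preference $bac$ (the only preference for which agent $2$ has two undominated strategies, with menus $M_1(\hat s_2)=\{a\}$ by Claim~\ref{enforcingA} and $M_1(\hat{\hat s}_2)=\{a,b,c\}$ by the proof of Claim~\ref{Undominatedbac}) shows, in both of its cases, that $g(s_1,\hat{\hat s}_2)$ equals the top-ranked alternative of $R_1$. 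Second, by Lemma~\ref{lem:UD} there is $\bar s_1\in U\!D_1(bca)=\{\hat s_1,\hat{\hat s}_1\}$ weakly dominating $s_1$ given $bca$; since $b$ is top for $bca$ this forces $g(\bar s_1,s_2^\ast)=b$, and evaluating the domination at $\hat s_2$ and $s_2^{cab}$ (using Claim~\ref{enforcingA} and the known values of $\hat s_1,\hat{\hat s}_1$ there) yields $g(s_1,\hat s_2)=a$ and $g(s_1,s_2^{cab})\in\{a,c\}$. Because agent $2$'s strategy set has been catalogued as $\{\hat s_2,\hat{\hat s}_2,s_2^{cab},s_2^{cba},s_2^{bca}\}$ (Claims~\ref{enforcingA}, \ref{Undominatedbac}, \ref{agent2strategiesAgain}, \ref{Undominatedbca}), the column $s_2^\ast$ producing $b$ must be one of $\hat{\hat s}_2,s_2^{cba},s_2^{bca}$.

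The case where $b$ is the top of $R_1$ is dispatched directly. Then $R_1=bac$, so $s_1\in U\!D_1(bac)$. But $\hat{\hat s}_1$ yields $a$ exactly against $\hat s_2$ and $s_2^{cab}$ and yields $b$ against every other column; combined with $g(s_1,\hat s_2)=a$ and $g(s_1,s_2^{cab})\in\{a,c\}$, this makes $\hat{\hat s}_1$ weakly $R_1$-dominate $s_1$ (it is $R_1$-weakly-better at every column, and strictly better somewhere since the two strategies are distinct and there are no duplicate strategies), contradicting $s_1\in U\!D_1(bac)$. In particular $g(s_1,\hat{\hat s}_2)\neq b$, so $s_2^\ast\neq\hat{\hat s}_2$ and the top of $R_1$ lies in $\{a,c\}$.

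It remains to treat the top of $R_1$ equal to $c$; the case $a$ is handled by the same template with the $a$-maximal competitor in place of the $c$-maximal one. Here $g(s_1,\hat{\hat s}_2)=c$ and $s_2^\ast\in\{s_2^{cba},s_2^{bca}\}$, so $M_2(s_1)=\{a,b,c\}$. The idea is to transfer $\hat s_1$ to the preference $cba$ via Lemma~\ref{lem:UD}, obtaining $\tilde s_1\in U\!D_1(cba)$ that yields $c$ against $s_2^{cab}$ and $s_2^{cba}$ (forced, $c$ being $cba$-top), $c$ against $\hat{\hat s}_2$ (by the first input), $a$ against $\hat s_2$, and an element of $\{b,c\}$ against $s_2^{bca}$. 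Since $\tilde s_1$ and $s_1$ disagree at $s_2^{cba}$ (value $c$ versus $b$) they are distinct, so $U\!D_1(cba)$ has at least two elements and Corollary~\ref{cor:not-same-menu} gives $M_2(\tilde s_1)\neq M_2(s_1)=\{a,b,c\}$; as $s_2^{bca}$ is the only column where $\tilde s_1$ could take the value $b$, this forces $g(\tilde s_1,s_2^{bca})=c$ and hence $M_2(\tilde s_1)=\{a,c\}$. One then checks $\tilde s_1$ is $R_1$-weakly-better than $s_1$ at every column and strictly better at $s_2^{cba}$ (if $s_2^\ast=s_2^{cba}$) or at $s_2^{bca}$ (if $s_2^\ast=s_2^{bca}$), so $\tilde s_1$ weakly dominates $s_1$ given $cba$, contradicting $s_1\in U\!D_1(cba)$.

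The hard part is exactly this last step: the column $\hat{\hat s}_2$ \emph{protects} $s_1$, because $s_1$ already attains the $R_1$-top there, so neither $\hat s_1$ nor $\hat{\hat s}_1$ can dominate $s_1$ outright and one must instead manufacture a competitor inside $U\!D_1(R_1)$ and lean on Corollary~\ref{cor:not-same-menu} to prevent that competitor from re-introducing $b$ at the residual column $s_2^{bca}$. I expect the delicate bookkeeping to be (i)~the top $=a$ subcase, where the competitor must be the $a$-heavy transfer and the menu comparison is run with $M_2(\tilde s_1)\subseteq\{a,b\}$, and (ii)~the degenerate subcase $s_2^{bca}=\hat{\hat s}_2$, in which agent $2$ has only four strategies, the Corollary step is unnecessary, and the domination is immediate.
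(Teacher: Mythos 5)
Your plan is workable, and two of its pieces are right: the ``first input'' ($g(s_1,\hat{\hat{s}}_2)$ equals the top of $R_1$, via Lemma~\ref{lem:structure} applied to the profile $(R_1,bac)$) is correctly derived, and the top-of-$R_1$-equals-$b$ case is handled correctly. The genuine gaps are in the top-$c$ case (and hence also in the mirrored top-$a$ case, which you only wave at). First, you conflate ``$R_1$ has top $c$'' with ``$R_1=cba$'': your hypothesis only gives $s_1\in U\!D_1(R_1)$ with $R_1\in\{cab,cba\}$, yet the application of Corollary~\ref{cor:not-same-menu} needs two \emph{distinct elements of $U\!D_1(cba)$} --- you take them to be $\tilde s_1$ and $s_1$ --- and your final contradiction is with ``$s_1\in U\!D_1(cba)$.'' If $R_1=cab$, nothing places $s_1$ in $U\!D_1(cba)$, so neither the Corollary step nor the contradiction goes through as written. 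Second, your distinctness claim ``$\tilde s_1$ and $s_1$ disagree at $s_2^{cba}$ (value $c$ versus $b$)'' presupposes $s_2^\ast=s_2^{cba}$. If instead $s_2^\ast=s_2^{bca}$, that column does not separate them: since $U\!D_2(cba)=\{s_2^{cba}\}$ (Claim~\ref{agent2strategiesAgain}), Lemma~\ref{lem:UD} applied to agent 2 forces $g(s_1,s_2^{cba})$ to be agent 2's $cba$-best element of $M_2(s_1)$, which is $c$ here because $c=g(s_1,\hat{\hat{s}}_2)\in M_2(s_1)$; so $s_1$ and $\tilde s_1$ agree at $s_2^{cba}$, and the only candidate disagreement column is $s_2^{bca}$, where you know $g(\tilde s_1,s_2^{bca})\in\{b,c\}$ only \emph{after} the Corollary step you are trying to justify. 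The argument is circular there.

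The missing ingredient is one you never aim at $s_1$ itself: Lemma~\ref{lem:UD} applied to \emph{agent 2}. Since $b\in M_2(s_1)$ and $b$ is the top of $bac$, some strategy in $U\!D_2(bac)=\{\hat{s}_2,\hat{\hat{s}}_2\}$ (Claim~\ref{Undominatedbac}) must yield $b$ against $s_1$; as $g(s_1,\hat{s}_2)=a$ always (Claim~\ref{enforcingA}), this forces $g(s_1,\hat{\hat{s}}_2)=b$, which flatly contradicts your own first input whenever the top of $R_1$ is $a$ or $c$. That single observation finishes both of your problematic cases with no auxiliary strategy $\tilde s_1$ and no appeal to Corollary~\ref{cor:not-same-menu}. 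The paper's proof is more economical still and never fixes $R_1$ at all: because $U\!D_2(R_2)$ is a singleton for every $R_2\neq bac$ (Claims~\ref{enforcingA}, \ref{agent2strategiesAgain}, \ref{Undominatedbca}), Lemma~\ref{lem:UD} pins down every entry $g(s_1,s_2^{cab})$, $g(s_1,s_2^{cba})$, $g(s_1,s_2^{bca})$, $g(s_1,\hat{s}_2)$ as agent 2's best element of $M_2(s_1)$, and the $bac$ column gives $g(s_1,\hat{\hat{s}}_2)=b$ as above; the row of $s_1$ then coincides with that of $\hat{\hat{s}}_1$ (if $M_2(s_1)=\{a,b\}$) or of $\hat{s}_1$ (if $M_2(s_1)=\{a,b,c\}$), contradicting the no-duplicate-strategies assumption.
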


\begin{proof}
	The proof is indirect. Suppose that there exists some $s_1 \in S_1 \setminus U\!D_1(bca)$ such that $b \in M_2(s_1)$. We are going to show that $s_1$ is a duplicate of one of the strategies in $U\!D_1(bca)$, which contradicts our assumption that there are no duplicate strategies. We distinguish two cases. The first is that $M_2(s_1) = \{a,b\}$, and the second case is that $M_2(s_1) = \{a,b,c\}$. The arguments for the two cases are completely analogous. Therefore, here we only deal with the case that $M_2(s_1) = \{a,b\}$. Applying Lemma \ref{lem:UD} to agent 2 with preference $bac$, we can conclude that $g(s_1, \hat{\hat{s}}_2) = b$. Because for all other preferences agent 2 has dominant strategies that we have already identified, we can conclude that:
	\[g(s_1, \hat{s}_2) = g(s_1,s_2^{cab})= a \mbox{ and } g(s_1, s_2^{cba}) = g(s_1,s_2^{bca}) = b.\]
	This implies that $s_1$ is a duplicate strategy of $\hat{\hat{s}}_1$.
\end{proof}

This claim implies that strategies that are not in $U\!D_1(bca)$ must yield either $a$ or $c$ against any other strategy of agent 2. Let us focus on the alternative that they yield when agent 2 chooses $\hat{\hat{s}}_2$. The next two claims show that there is only one strategy outside of $U\!D_1(bca)$ that yields $c$ against $\hat{\hat{s}}_2$, and also only one such strategy that yields $a$ against $\hat{\hat{s}}_2$. This then implies that agent 1 has only four strategies, the two strategies in $U\!D_1(bca)$, and the two strategies not in $U\!D_1(bca)$.

\begin{claim}\label{Uniquec}
 There is a unique strategy $s_1\in S_1\setminus U\!D_1(bca)$ such that $g(s_1, \hat{\hat{s}}_2) = c$. Furthermore, for this strategy we have:
\[
g(s_1, s_2^{bca}) = g(s_1, s_2^{cab}) = g(s_1,s_2^{cba})=c.
\]
\end{claim}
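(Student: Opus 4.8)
The plan is to separate the statement into an easy existence part and a ``furthermore'' part that does all the work, since once I know that every qualifying strategy yields $c$ against $s_2^{cab}$, $s_2^{cba}$, $s_2^{bca}$ (and, automatically, $a$ against $\hat s_2$ and $c$ against $\hat{\hat s}_2$), the entire row of such a strategy is determined, and uniqueness follows immediately. For existence, I would invoke the fact, established in the proof of Claim~\ref{Undominatedbac}, that $M_1(\hat{\hat s}_2)=\{a,b,c\}$: some strategy $s_1$ then satisfies $g(s_1,\hat{\hat s}_2)=c$, and since $\hat{\hat s}_2\in S_2^b$ forces $g(\hat s_1,\hat{\hat s}_2)=g(\hat{\hat s}_1,\hat{\hat s}_2)=b\neq c$, this $s_1$ lies outside $U\!D_1(bca)$.

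The heart of the argument is a single uniform computation for the three columns. Fix any $s_1\in S_1\setminus U\!D_1(bca)$ with $g(s_1,\hat{\hat s}_2)=c$. By Claim~\ref{bNotInMenu}, $b\notin M_2(s_1)$, so the menu $s_1$ offers agent~2 satisfies $M_2(s_1)\subseteq\{a,c\}$, and it contains $c$ because $g(s_1,\hat{\hat s}_2)=c$. Now take each preference $R_2\in\{cab,cba,bca\}$ in turn; by Claims~\ref{agent2strategiesAgain} and~\ref{Undominatedbca}, agent~2's unique undominated strategy at $R_2$ is $s_2^{cab}$, $s_2^{cba}$, or $s_2^{bca}$ respectively. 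I would apply Lemma~\ref{lem:UD} to agent~2 starting from $\hat{\hat s}_2\in U\!D_2(bac)$ with target preference $R_2$, obtaining $s_2'\in U\!D_2(R_2)$ with $g(s_1,s_2')\,R_2\,g(s_1,\hat{\hat s}_2)=c$. Because each of $cab,cba,bca$ ranks $c$ above $a$, and because $g(s_1,s_2')\in M_2(s_1)\subseteq\{a,c\}$ cannot equal $b$ (the only alternative that $bca$ ranks above $c$), I get $g(s_1,s_2')=c$; since $U\!D_2(R_2)$ is the relevant singleton, $s_2'$ is the named strategy, giving $g(s_1,s_2^{cab})=g(s_1,s_2^{cba})=g(s_1,s_2^{bca})=c$.

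Uniqueness would then follow mechanically. By the standing assumption that every strategy of agent~2 is undominated for some preference, the earlier claims exhaust agent~2's strategy set as $\{\hat s_2,\hat{\hat s}_2,s_2^{cab},s_2^{cba},s_2^{bca}\}$ (whether or not $s_2^{bca}=\hat{\hat s}_2$). Combining the three values just computed with $g(s_1,\hat s_2)=a$ from Claim~\ref{enforcingA} and $g(s_1,\hat{\hat s}_2)=c$ determines $g(s_1,\cdot)$ on all of $S_2$. Hence any two strategies outside $U\!D_1(bca)$ that yield $c$ against $\hat{\hat s}_2$ have identical rows, are therefore duplicates, and must coincide.

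The main obstacle I anticipate is handling the three columns uniformly: for $cab$ and $cba$ the conclusion is immediate since $c$ is top-ranked, but for $bca$ the outcome weakly preferred to $c$ could a priori be $b$, and it is precisely Claim~\ref{bNotInMenu} that rules this out by deleting $b$ from $M_2(s_1)$. I would also need to take care to confirm, before running the uniqueness step, that agent~2's strategy set has already been completely enumerated by the preceding claims, since that is what licenses the inference from the five (or four) computed entries to the whole row.
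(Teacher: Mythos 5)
Your proposal is correct and follows essentially the same route as the paper's proof: existence from $M_1(\hat{\hat{s}}_2)=\{a,b,c\}$, the menu restriction $M_2(s_1)\subseteq\{a,c\}$ via Claims \ref{enforcingA} and \ref{bNotInMenu}, the observation that agent 2's dominant strategies for $cab$, $cba$, $bca$ must then select $c$ against $s_1$, and uniqueness from the no-duplicate-strategies assumption. Your version is in fact slightly more careful than the paper's, since you explicitly verify that the existence strategy lies outside $U\!D_1(bca)$, route the dominance step through Lemma \ref{lem:UD}, and spell out the enumeration of $S_2$ that the paper leaves implicit in ``we have now pinned down for all strategies of agent 2.''
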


\begin{proof}
Recall that in the proof of {\sc Claim} \ref{Undominatedbac}, we concluded that $M_1(\hat{\hat{s}}_2)=\{a,b,c\}$. This implies that there is at least one strategy $s_1$ such that $g(s_1, \hat{\hat{s}}_2) = c$.
From Claims \ref{enforcingA} and  \ref{bNotInMenu}, we know that $M_2(s_1) = \{a,c\}$. Because we already know that agent 2 with preferences $bca$, $cab$, or $cba$ has dominant strategies, we know that
$g(s_1, s_2^{bca}) = g(s_1, s_2^{cab}) = g(s_1,s_2^{cba})=c$. We have now pinned down for all strategies of agent 2 which outcome results if agent 1 chooses a strategy $s_1 \in S_1 \setminus U\!D_1(bca)$ such that $g(s_1, \hat{\hat{s}}_2) = c$. The uniqueness of such a strategy is therefore a consequence of the assumption that there are no duplicate strategies.
\end{proof}

\begin{claim}\label{Uniquea}
There is a unique strategy $s_1$ such that $g(s_1, \hat{\hat{s}}_2) = a$. Furthermore, for this strategy we have:
\[
g(s_1, s_2^{bca}) = g(s_1,s_2^{cab}) = g(s_1,s_2^{cba}) = a.
\]
\end{claim}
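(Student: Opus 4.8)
The plan is to follow the template of Claim \ref{Uniquec} for the easy parts and then supply the one genuinely new ingredient: forcing the menu that the strategy offers agent 2 to be the singleton $\{a\}$, not $\{a,c\}$.

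First I would locate the strategy. Since $M_1(\hat{\hat s}_2)=\{a,b,c\}$ (from the proof of Claim \ref{Undominatedbac}), some strategy $s_1$ satisfies $g(s_1,\hat{\hat s}_2)=a$; and because $\hat{\hat s}_2\in S_2^b$ forces $g(\hat s_1,\hat{\hat s}_2)=g(\hat{\hat s}_1,\hat{\hat s}_2)=b$, any such $s_1$ lies in $S_1\setminus U\!D_1(bca)$. By Claim \ref{enforcingA}, $g(s_1,\hat s_2)=a$, so $a\in M_2(s_1)$, and by Claim \ref{bNotInMenu}, $b\notin M_2(s_1)$; hence $M_2(s_1)\subseteq\{a,c\}$. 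I would then observe the dichotomy: if $M_2(s_1)=\{a\}$ then $s_1$ enforces $a$ everywhere (the target), while if $M_2(s_1)=\{a,c\}$ then, since $s_2^{cab},s_2^{cba},s_2^{bca}$ are dominant strategies for preferences ranking $c$ above $a$, the outcome profile of $s_1$ against $(\hat s_2,\hat{\hat s}_2,s_2^{cab},s_2^{cba},s_2^{bca})$ is forced to be $(a,a,c,c,c)$.

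The hard part is excluding $M_2(s_1)=\{a,c\}$, and here the proof genuinely diverges from Claim \ref{Uniquec}, where $g(s_1,\hat{\hat s}_2)=c$ pins the menu down immediately. I would first establish, by applying part (2) of Lemma \ref{lem:structure} to agent 1 at the profile $(abc,bac)$ --- where $U\!D_2(bac)=\{\hat s_2,\hat{\hat s}_2\}$ and both menus $M_1(\hat s_2)=\{a\}$, $M_1(\hat{\hat s}_2)=\{a,b,c\}$ have $a$ as the $abc$-best outcome --- that every strategy in $U\!D_1(abc)$ yields $a$ against $\hat{\hat s}_2$ (call this Fact 1). Now, assuming toward a contradiction that some $a$-yielding $s_1$ has menu $\{a,c\}$, I would split on whether agent 1 has a strategy enforcing $a$ everywhere. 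If it does not, then (using the already-computed values $g(\hat s_1,s_2^{cab})=c$, $g(\hat{\hat s}_1,s_2^{cab})=a$ from Claim \ref{agent2strategiesAgain}, together with $g(\cdot,s_2^{cab})=c$ for the remaining strategies) the only strategy giving $a$ against $s_2^{cab}$ is $\hat{\hat s}_1$ itself, so nothing dominates $\hat{\hat s}_1$ under $abc$ and hence $\hat{\hat s}_1\in U\!D_1(abc)$; but $\hat{\hat s}_1$ yields $b\neq a$ against $\hat{\hat s}_2$, contradicting Fact 1. If instead an all-$a$ strategy $s_1^{\ast}$ exists, then the profile $(a,a,c,c,c)$ is weakly dominated for \emph{every} preference --- by $s_1^{\ast}$ when $a$ is ranked above $c$, and by the strategy of Claim \ref{Uniquec} (whose profile $(a,c,c,c,c)$ differs only against $\hat{\hat s}_2$) when $c$ is ranked above $a$ --- contradicting the standing assumption that every strategy is undominated for some preference.

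With $M_2(s_1)=\{a\}$ forced, $s_1$ enforces $a$ against all of agent 2's strategies, giving the ``furthermore'' equalities $g(s_1,s_2^{bca})=g(s_1,s_2^{cab})=g(s_1,s_2^{cba})=a$; and since every $a$-yielding strategy then has the identical all-$a$ outcome profile, the no-duplicate-strategies assumption leaves exactly one such strategy. The only real subtlety, and the step I expect to require the most care, is the two-case dominance argument excluding the menu $\{a,c\}$, since (unlike in Claim \ref{Uniquec}) Corollary \ref{cor:not-same-menu} cannot be applied directly: the candidate menu-$\{a,c\}$ strategy and the Claim \ref{Uniquec} strategy are never simultaneously undominated for a common preference.
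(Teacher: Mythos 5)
Your proposal is correct, and all of its inputs check out against what is available at this point in the paper (Claims \ref{enforcingA}, \ref{bNotInMenu}, \ref{agent2strategiesAgain}, \ref{Undominatedbac}, \ref{Uniquec}, Lemma \ref{lem:UD}, Lemma \ref{lem:structure}). The second half of your argument coincides with the paper's: the paper also kills a menu-$\{a,c\}$ strategy, in the presence of an all-$a$ strategy, by observing that its outcome profile $(a,a,c,c,c)$ is weakly dominated for every preference (by the all-$a$ strategy when $a$ is ranked above $c$, by the Claim \ref{Uniquec} strategy when $c$ is ranked above $a$) and then invoking the standing assumption that every strategy is undominated for some preference --- the paper just leaves the two dominating strategies implicit. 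Where you genuinely diverge is the pivotal step, namely refuting the case in which no strategy enforces $a$ everywhere. The paper proceeds there as follows: under that hypothesis the set of strategies yielding $a$ against $\hat{\hat s}_2$ is a singleton $\{s_1\}$ by no-duplicates; $s_1$ is undominated for an $a$-top preference; and Lemma \ref{lem:UD}, applied to $\hat{\hat s}_1 \in U\!D_1(bca)$ with $g(\hat{\hat s}_1, s_2^{cab})=a$, produces a \emph{second} undominated strategy that disagrees with $s_1$ against $\hat{\hat s}_2$, so that neither agent is a local dictator at the profile ($a$-top, $bac$), contradicting Theorem \ref{thm:MainR}. You instead show that, absent an all-$a$ strategy, every strategy outside $U\!D_1(bca)$ has menu $\{a,c\}$ and hence yields $c$ against $s_2^{cab}$, so $\hat{\hat s}_1$ is the unique strategy yielding $a$ against $s_2^{cab}$ and therefore survives dominance under $abc$; this contradicts your Fact 1, i.e.\ the ``moreover'' clause of part (2) of Lemma \ref{lem:structure} at $(abc, bac)$, since $g(\hat{\hat s}_1, \hat{\hat s}_2)=b$. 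Both refutations exploit the same underlying structure --- failure of local dictatorship at $(abc,bac)$, read off the $s_2^{cab}$ column --- but your version is somewhat more economical: it needs neither the preliminary singleton observation nor the auxiliary strategy supplied by Lemma \ref{lem:UD}, because Lemma \ref{lem:structure} already packages the dictatorship restriction in exactly the form you need; the paper's version, in exchange, works directly from the rawer tools (Lemma \ref{lem:UD} plus Theorem \ref{thm:MainR}) without routing through the structure lemma. Your closing remark about why Corollary \ref{cor:not-same-menu} cannot substitute for the domination argument is also accurate.
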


\begin{proof}
Recall that in the proof of {\sc Claim} \ref{Undominatedbac}, we concluded that $M_1(\hat{\hat{s}}_2)=\{a,b,c\}$. This implies that there is at least one strategy $s_1$ such that $g(s_1, \hat{\hat{s}}_2) = a$. From Claim \ref{bNotInMenu} we can then infer that: $M_2(s_1)$ is either $\{a\}$ or $\{a,c\}$. For ease of notation, let:
	\[
	S_1^a = \{s_1 \in S_1: g(s_1, \hat{s}_2)=g(s_1, \hat{\hat{s}}_2)=a\}.
	\]
	
	We first show that there exists at least one strategy $s_1 \in S_1^a$ that offers the menu $\{a\}$. The proof is indirect. Suppose that $M_2(s_1)=\{a,c\}$ for all $s_1 \in S_1^a$. We must have:
	\[
	g(s_1, s_2^{bca}) = g(s_1,s_2^{cab}) = g(s_1,s_2^{cba})=c
	\]
	for all $s_1 \in S_1^a$. This is because all the strategies of agent 2 that we are referring to are dominant strategies. Because there are no duplicate strategies, we obtain that there is a unique element $s_1$ in $S_1^a$, and that for this strategy
	\[
	g(s_1, s_2^{bca}) = g(s_1,s_2^{cab}) = g(s_1,s_2^{cba})=c.
	\]
	Now consider agent 1 who ranks $a$ top. The unique element in $S_1^a$ cannot be weakly dominated, because this is the only strategy that yields outcome $a$ against strategy $\hat{\hat s}_2$. But since $g(\hat{\hat s}_1, s_2^{cab}) = a$, by Lemma \ref{lem:UD}, she must have another undominated strategy that yields $a$ against $s_2^{cab}$. But then, if agent 1 has a preference that ranks $a$ top, and agent 2 has preference $bac$, there is no local dictator. Thus we have obtained a contradiction.\medskip
	
	Therefore, there must exist at least one strategy $s_1 \in S_1^a$ such that $M_2(s_1)=\{a\}$. Because there are no duplicate strategies, there can only be one such strategy. But now suppose there is also a strategy $s_1' \in S_1^a$ with $M_2(s_1) = \{a,c\}$. As before, it follows that
	\[
	g(s_1', s_2^{cba}) = g(s_1', s_2^{bca}) = g(s_1', s_2^{cab}) = c.
	\]
	But note that $s_1'$ cannot be undominated for any preference, and we have ruled out that strategies that are not dominated for all preferences are included in the mechanism. The claim follows.
\end{proof}

What we have found so far establishes that agent 1 has four strategies and agent 2 has either four (if $s_2^{bca} = \hat{\hat{s}}_2$) or five (if $s_2^{bca} \neq \hat{\hat{s}}_2$) strategies. Moreover, for any strategy combination, we know which outcome results. If agent 2 has five strategies, then the mechanism must take the form shown in the left panel in the bottom row of Figure \ref{fig:case1}. But note that in that panel $\hat{\hat {s}}_2$ and $s_2^{bca}$ are duplicate strategies. Because we have assumed that there are no duplicate strategies, we can conclude that agent 2 has four strategies and the mechanism is the one shown in the right panel in the bottom row of Figure \ref{fig:case1}. This completes the proof for Case 1. \medskip

\noindent{\sc Case 2:} For all preference profiles such that both agents have multiple undominated strategies, exactly one outcome may result if agents choose from the strategies that are undominated for these preference profiles. \medskip

Let us denote by $(\tilde R_1, \tilde R_2)$ a preference profile for which both agents have more than one undominated strategies. Without loss of generality, let us assume that $g(U\!D_1(\tilde R_1), U\!D_2(\tilde R_2)) = \{a\}$.\medskip

Figure \ref{fig:case2} illustrates the proof for the second case. We shall refer to Figure \ref{fig:case2} while presenting the proof. The left panel in the top row shows the starting point of the proof. We begin with an analysis of the sets $U\!D_i(\tilde R_i)$ for each agent and of the menus offered by the strategies in these sets.

\begin{figure}[h]
	\centering
	\begin{tikzpicture}[scale=0.65]
	
	\subfloat[Step (b1)]{
		\node [left] at (0, 5) {$a$};
		\node [left] at (1.3, 5) {$\cdots$};
		\node [left] at (2, 5) {$a$};
		\node[rotate=90,yshift=0pt] at (-0.4, 4) {$\cdots$};
		\node[rotate=90,yshift=0pt] at (1.6, 4) {$\cdots$};
		\node [left] at (0, 3) {$a$};
		\node [left] at (1.3, 3) {$\cdots$};
		\node [left] at (2, 3) {$a$};
		\draw [decorate,decoration={brace,amplitude=7pt},xshift=0pt,yshift=0pt]
		(-1.3, 2.5) -- (-1.3, 5.5) node [black,midway,xshift=-1.1cm]
		{\footnotesize $U\!D_1 (\tilde R_1)$};
		\draw [decorate,decoration={brace,amplitude=7pt},xshift=0pt,yshift=0pt]
		(-1,5.7) -- (2,5.7) node [black,midway,yshift=0.5cm]
		{\footnotesize $U\!D_2 (\tilde R_2)$};
	}
	
	\node [left] at (4, 4.5) {$\Rightarrow$};
	
	\subfloat[Step (b2)]{
		\node [left] at (8, 5) {$\tilde{s}_1$};
		\node [left] at (8, 4) {$\tilde{\tilde{s}}_1$};
		\node [left] at (8.2, 3) {$s_1^{cab}$};
		\node [left] at (8.2, 2) {$s_1^{cba}$};
		\node [left] at (9.1, 6) {$\tilde{s}_2$};
		\node [left] at (10.2, 6) {$\tilde{\tilde{s}}_2$};
		\node [left] at (11.7, 6) {$s_2^{cab}$};
		\node [left] at (13, 6) {$s_2^{cba}$};
		\node [left] at (9, 5) {$a$};
		\node [left] at (9, 4) {$a$};
		\node [left] at (9, 3) {$c$};
		\node [left] at (9, 2) {$c$};
		\node [left] at (10, 5) {$a$};
		\node [left] at (10, 4) {$a$};
		\node [left] at (10, 3) {$a$};
		\node [left] at (10, 2) {$b$};
		\node [left] at (11.2, 5) {$c$};
		\node [left] at (11.2, 4) {$a$};
		\node [left] at (12.3, 5) {$c$};
		\node [left] at (12.3, 4) {$b$};
		\draw [decorate,decoration={brace,amplitude=7pt},xshift=0pt,yshift=0pt]
		(7, 3.5) -- (7, 5.5) node [black,midway,xshift=-1.1cm]
		{\footnotesize $U\!D_1 (\tilde R_1)$};
		\draw [decorate,decoration={brace,amplitude=7pt},xshift=0pt,yshift=0pt]
		(8,6.5) -- (10,6.5) node [black,midway,yshift=0.5cm]
		{\footnotesize $U\!D_2 (\tilde R_2)$};
	}
	
	\node [left] at (-6, -3.5) {$\Rightarrow$};
	
	\subfloat[Step (b3)]{
		\node [left] at (-2, -3) {$\tilde{s}_1$};
		\node [left] at (-2, -4) {$\tilde{\tilde{s}}_1$};
		\node [left] at (-2, -5) {$s_1^b$};
		\node [left] at (-1.8, -6) {$s_1^{cab}$};
		\node [left] at (-1.8, -7) {$s_1^{cba}$};
		\node [left] at (-0.9, -2) {$\tilde{s}_2$};
		\node [left] at (0.2, -2) {$\tilde{\tilde{s}}_2$};
		\node [left] at (1.3, -2) {$s_2^b$};
		\node [left] at (2.8, -2) {$s_2^{cab}$};
		\node [left] at (4.1, -2) {$s_2^{cba}$};
		
		\node [left] at (-1, -3) {$a$};
		\node [left] at (0, -3) {$a$};
		\node [left] at (1.2, -3) {$b$};
		\node [left] at (2.3, -3) {$c$};
		\node [left] at (3.4, -3) {$c$};
		
		\node [left] at (-1, -4) {$a$};
		\node [left] at (0, -4) {$a$};
		\node [left] at (1.2, -4) {$b$};
		\node [left] at (2.3, -4) {$a$};
		\node [left] at (3.4, -4) {$b$};
		
		\node [left] at (-1, -5) {$b$};
		\node [left] at (0, -5) {$b$};
		\node [left] at (1.2, -5) {$b$};
		\node [left] at (2.3, -5) {$b$};
		\node [left] at (3.4, -5) {$b$};
		
		\node [left] at (-1, -6) {$c$};
		\node [left] at (0, -6) {$a$};
		\node [left] at (1.2, -6) {$b$};
		
		\node [left] at (-1, -7) {$c$};
		\node [left] at (0, -7) {$b$};
		\node [left] at (1.2, -7) {$b$};
		
		\draw [decorate,decoration={brace,amplitude=7pt},xshift=0pt,yshift=0pt]
		(-3, -4.5) -- (-3, -2.5) node [black,midway,xshift=-1.1cm]
		{\footnotesize $U\!D_1(acb)$};
		\draw [decorate,decoration={brace,amplitude=7pt},xshift=0pt,yshift=0pt]
		(-2,-1.5) -- (0,-1.5) node [black,midway,yshift=0.5cm]
		{\footnotesize $U\!D_2 (acb)$};
	}
	
	\node [left] at (5, -3.5) {$\Rightarrow$};
	
	\subfloat[Step (b4)]{
		\node [left] at (9, -3) {$\tilde{s}_1$};
		\node [left] at (9, -4) {$\tilde{\tilde{s}}_1$};
		\node [left] at (9, -5) {$s_1^b$};
		\node [left] at (9.2, -6) {$s_1^{cab}$};
		\node [left] at (9.2, -7) {$s_1^{cba}$};
		\node [left] at (10.1, -2) {$\tilde{s}_2$};
		\node [left] at (11.2, -2) {$\tilde{\tilde{s}}_2$};
		\node [left] at (12.3, -2) {$s_2^b$};
		\node [left] at (13.8, -2) {$s_2^{cab}$};
		\node [left] at (15.1, -2) {$s_2^{cba}$};
		
		\node [left] at (10, -3) {$a$};
		\node [left] at (11, -3) {$a$};
		\node [left] at (12.2, -3) {$b$};
		\node [left] at (13.3, -3) {$c$};
		\node [left] at (14.4, -3) {$c$};
		
		\node [left] at (10, -4) {$a$};
		\node [left] at (11, -4) {$a$};
		\node [left] at (12.2, -4) {$b$};
		\node [left] at (13.3, -4) {$a$};
		\node [left] at (14.4, -4) {$b$};
		
		\node [left] at (10, -5) {$b$};
		\node [left] at (11, -5) {$b$};
		\node [left] at (12.2, -5) {$b$};
		\node [left] at (13.3, -5) {$b$};
		\node [left] at (14.4, -5) {$b$};
		
		\node [left] at (10, -6) {$c$};
		\node [left] at (11, -6) {$a$};
		\node [left] at (12.2, -6) {$b$};
		\node [left] at (13.3, -6) {$c$};
		\node [left] at (14.4, -6) {$c$};
		
		\node [left] at (10, -7) {$c$};
		\node [left] at (11, -7) {$b$};
		\node [left] at (12.2, -7) {$b$};
		\node [left] at (13.3, -7) {$c$};
		\node [left] at (14.4, -7) {$c$};
		
		\draw [decorate,decoration={brace,amplitude=7pt},xshift=0pt,yshift=0pt]
		(8, -4.5) -- (8, -2.5) node [black,midway,xshift=-1.1cm]
		{\footnotesize $U\!D_1(acb)$};
		\draw [decorate,decoration={brace,amplitude=7pt},xshift=0pt,yshift=0pt]
		(9,-1.5) -- (11,-1.5) node [black,midway,yshift=0.5cm]
		{\footnotesize $U\!D_2 (acb)$};
	}
	
	\end{tikzpicture}
	\caption{\small There is a unique type 2 strategically simple mechanism (up to relabeling) in the second case.} \label{fig:case2}
\end{figure}

\begin{claim}\label{NotUniquea}
If $s_1 \in U\!D_1(\tilde R_1)$ then $M_2 (s_1) \neq \{a\}$. (The analogous statement for agent 2 can be proved in the same way.)
\end{claim}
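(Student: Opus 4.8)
The plan is to argue by contradiction. Suppose some $s_1\in U\!D_1(\tilde R_1)$ has $M_2(s_1)=\{a\}$, so $s_1$ yields $a$ against \emph{every} strategy of agent~2. Since $(\tilde R_1,\tilde R_2)$ is a profile at which both agents have more than one undominated strategy, $U\!D_1(\tilde R_1)$ contains a second strategy $s_1'\neq s_1$; by the Case~2 hypothesis $g(U\!D_1(\tilde R_1),U\!D_2(\tilde R_2))=\{a\}$ this $s_1'$ also yields $a$ against every $s_2\in U\!D_2(\tilde R_2)$, and by the no-duplicate-strategies assumption it yields something other than $a$ against some $s_2$. The first move is to locate $a$ in the ranking $\tilde R_1$. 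If $a$ is $\tilde R_1$-bottom, then every strategy yields an outcome weakly $\tilde R_1$-above $a$ against every $s_2$, so $s_1'$ weakly dominates the constant strategy $s_1$ (strictly somewhere, by no duplicates), contradicting $s_1\in U\!D_1(\tilde R_1)$. If $a$ is $\tilde R_1$-top, then $s_1$ weakly dominates every other strategy, so $U\!D_1(\tilde R_1)=\{s_1\}$, contradicting that it has at least two elements. Thus the only surviving case is that $a$ lies strictly in the middle of $\tilde R_1$.

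Next I would analyze the remaining case. Writing $\tilde R_1$ as $bac$ (the mirror case $cab$ follows by swapping $b$ and $c$), the requirement that neither $s_1$ nor $s_1'$ weakly dominates the other forces $s_1'$ to yield $b$ against some strategy and $c$ against another, so together with the $a$ it attains on $U\!D_2(\tilde R_2)$ we obtain the full menu $M_2(s_1')=\{a,b,c\}$. Applying Lemma~\ref{lem:UD} to agent~2 (some undominated strategy of agent~2 attains agent~2's $\tilde R_2$-best outcome in $M_2(s_1')$ against $s_1'$), together with $g(U\!D_1(\tilde R_1),U\!D_2(\tilde R_2))=\{a\}$, shows that agent~2's favorite among $\{a,b,c\}$ is $a$; that is, $a$ is $\tilde R_2$-top. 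Finally I would feed $s_1'$ into the local-dictator dichotomy of Theorem~\ref{thm:MainR}: at every profile $(\tilde R_1,R_2)$ there is a local dictator, and in either branch $s_1'$ is forced to be constant on $U\!D_2(R_2)$ --- if agent~1 dictates this is immediate, and if agent~2 dictates then, because $s_1$ is the constant-$a$ strategy, every undominated $s_2$ must yield $a$ against all of $U\!D_1(\tilde R_1)$, in particular against $s_1'$. Since $s_1'$ nonetheless realizes all three outcomes over $S_2$, and every strategy of agent~2 is undominated for some preference, the three level sets of $s_1'$ must lie in undominated regions of agent~2 that are never simultaneously active, and the aim is to contradict this.

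The main obstacle is precisely this last step. Strategic simplicity does not merely fail to help here; it actively \emph{forces} $s_1'$ to be constant on each $U\!D_2(R_2)$, so one cannot close the middle case by naively perturbing agent~2's preference to expose a profile with no local dictator, and Lemma~\ref{lem:UD} transports an undominated strategy only within the same level set of $s_1'$. The contradiction must therefore come from a more global incompatibility between (i)~$s_1'$ attaining all of $a,b,c$, (ii)~the forced monochromaticity on every undominated set, and (iii)~the menu bookkeeping of Corollary~\ref{cor:not-same-menu} and the no-duplicate assumption. I expect to extract it by examining the preferences $R_2$ that rank $a$ last: Lemma~\ref{lem:UD} then guarantees undominated strategies of agent~2 that reach both $b$ and $c$ against $s_1'$, and I would show that monochromaticity cannot be maintained for the $b$- and $c$-level sets at once without either creating duplicate strategies or producing a profile at which neither agent is a local dictator, the latter contradicting Theorem~\ref{thm:MainR}.
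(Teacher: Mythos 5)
Your opening is correct and coincides with the paper's own first steps: ruling out $a$ at the top or bottom of $\tilde R_1$, so that $\tilde R_1$ ranks $a$ in the middle (without loss of generality $bac$), and showing that the other undominated strategy $s_1'$ must have the full menu $M_2(s_1')=\{a,b,c\}$. Your further observations---that $a$ must be $\tilde R_2$-top, and that Theorem \ref{thm:MainR} forces $s_1'$ to be constant on $U\!D_2(R_2)$ for every $R_2$---are also valid, though the paper needs neither. The genuine gap is that the proof is never finished: you concede that the contradiction itself remains to be found, and the direction in which you propose to look for it cannot deliver it. The pattern you hope to refute---a full-menu strategy $s_1'$ that is constant equal to $b$ on $U\!D_2(bca)$ and constant equal to $c$ on $U\!D_2(cba)$, monochromaticity on every undominated set, no duplicates, a local dictator at every profile---is internally consistent: it is exactly what occurs in the type 2 mechanism that Case 2 of the proposition ultimately produces, and the presence of the constant-$a$ strategy $s_1$ creates no difficulty at profiles $(\tilde R_1,R_2)$ with $a$ ranked last, since agent 1 is simply the local dictator there. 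What your sketch never uses is the decisive hypothesis: that $U\!D_2(\tilde R_2)$ itself has at least two elements.

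That hypothesis is what the paper exploits, and the finish is short, working entirely at the profile $(\tilde R_1,\tilde R_2)$ rather than at profiles where agent 2 ranks $a$ last. With $\tilde R_1=bac$, apply Lemma \ref{lem:UD} to agent 1, whose top outcome is $b$: if $b\in M_1(s_2)$ for some $s_2\in U\!D_2(\tilde R_2)$, then some strategy in $U\!D_1(\tilde R_1)$ would achieve $b$ against $s_2$, contradicting $g(U\!D_1(\tilde R_1),U\!D_2(\tilde R_2))=\{a\}$. Hence $b\notin M_1(s_2)$ for every $s_2\in U\!D_2(\tilde R_2)$; and since $g(s_1,s_2)=a$, each such menu contains $a$, so each is $\{a\}$ or $\{a,c\}$. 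Now take two distinct $s_2,s_2'\in U\!D_2(\tilde R_2)$: Corollary \ref{cor:not-same-menu} forces $M_1(s_2)\neq M_1(s_2')$, so one menu is $\{a\}$ and the other is $\{a,c\}$. But then, whichever of $a$ and $c$ agent 2's preference ranks higher, one of $s_2,s_2'$ weakly dominates the other, so they cannot both lie in $U\!D_2(\tilde R_2)$. This is the contradiction your proposal is missing; all the material you assembled before the final step is compatible with it, but the step itself---pinning down the menus of agent 2's two undominated strategies at $\tilde R_2$---is the heart of the proof.
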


\begin{proof}
	The proof is indirect. Suppose that $M_2 (s_1) = \{a\}$ for some $s_1 \in U\!D_1 (\tilde R_1)$. Let $s_1'$ be another element of $U\!D_1(\tilde R_1)$. First observe that $M_2 (s_1')$ has to be $\{a,b,c\}$, because in all other cases, for every preference of agent 1, either $s_1$ weakly dominates $s_1'$ or the other way round.\medskip
	
	For both $s_1$ and $s_1'$ to be undominated for agent 1 with preference $\tilde R_1$, it must be that $\tilde R_1$ ranks $a$ in the middle. Without loss of generality, we assume that $\tilde R_1=bac$. By Lemma \ref{lem:UD}, we conclude that $b \notin M_1(s_2)$ for any $s_2 \in U\!D_2(\tilde R_2)$.\medskip
	
	Now let $s_2$ and $s_2'$ denote two different elements of $U\!D_2(\tilde R_2).$ We just concluded that neither strategy offers a menu that includes $b$. By Corollary 1, they have to offer different menus, and therefore, without loss of generality, we can write that $M_1(s_2) = \{a\}$ and $M_1(s_2') = \{a,c\}$. But then there is no preference of agent 2 under which both $s_2$ and $s_2'$ are undominated.
\end{proof}

\begin{claim}\label{Case2Menus}
The set $U\!D_1(\tilde R_1)$ has exactly two elements, say $\tilde s_1$ and $\tilde{\tilde s}_1$. Moreover, for one of these two strategies, say $\tilde s_1$, we have: $M_2 (\tilde s_1) = \{a, b, c\}$. For the other strategy, either $M_2(\tilde{\tilde s}_1)=\{a,b\}$ or $M_2(\tilde{\tilde s}_1)=\{a,c\}$. (The analogous claim is true for agent 2.)
\end{claim}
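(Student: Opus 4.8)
The plan is to first determine which menus an undominated strategy of agent~1 can offer, then to show that $a$ must be agent~1's top‑ranked alternative, and finally to exclude the configuration in which two different ``two‑element'' menus occur. For the first step I would note that for every $s_1\in U\!D_1(\tilde R_1)$ the menu $M_2(s_1)$ contains $a$ (since $g(s_1,s_2)=a$ for every $s_2\in U\!D_2(\tilde R_2)$) and, by Claim \ref{NotUniquea}, is not equal to $\{a\}$; hence $M_2(s_1)\in\{\{a,b\},\{a,c\},\{a,b,c\}\}$. As $U\!D_2(\tilde R_2)$ has at least two elements, Corollary \ref{cor:not-same-menu} forces distinct elements of $U\!D_1(\tilde R_1)$ to offer distinct menus, so $U\!D_1(\tilde R_1)$ has either two or three elements.

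Next I would prove that $\tilde R_1$ ranks $a$ first. I apply Lemma \ref{lem:structure} with $i=1$ at $(\tilde R_1,\tilde R_2)$. Because $g\equiv a$ on $U\!D_1(\tilde R_1)\times U\!D_2(\tilde R_2)$, the hypothesis of part~(1) can never hold: its conclusion $g(s_1,s_2)=b_1(s_2,\tilde R_1)$ together with $g\equiv a$ would force $b_1(\cdot,\tilde R_1)\equiv a$, contradicting the non‑constancy asserted by the hypothesis. Thus $b_1(s_2,\tilde R_1)$ is the same for all $s_2\in U\!D_2(\tilde R_2)$, and part~(2), whose conclusion produces an undominated $s_1$ attaining that common value on the box, identifies it with the box outcome $a$. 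Hence $a$ is agent~1's most preferred alternative in every menu $M_1(s_2)$ offered by agent~2's undominated strategies. Invoking the agent‑2 analogue of Claim \ref{NotUniquea} and Corollary \ref{cor:not-same-menu}, these menus are at least two, distinct, and drawn from $\{\{a,b\},\{a,c\},\{a,b,c\}\}$, so their union is $\{a,b,c\}$; therefore $a\,P_1\,b$ and $a\,P_1\,c$, i.e.\ $\tilde R_1$ ranks $a$ first. Relabeling $b$ and $c$ if necessary, I may assume $\tilde R_1=abc$.

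It then remains to exclude that $U\!D_1(\tilde R_1)$ contains a strategy $s_1^{ab}$ with $M_2(s_1^{ab})=\{a,b\}$ together with a strategy $s_1^{ac}$ with $M_2(s_1^{ac})=\{a,c\}$; ruling out their coexistence simultaneously eliminates the ``two small menus'' configuration and the three‑element configuration, leaving exactly two strategies one of whose menus is $\{a,b,c\}$. This is the main obstacle. My plan is to argue by contradiction using Theorem \ref{thm:MainR}. Undominatedness of the two strategies under $abc$ yields a witness column $\tau$ with $(g(s_1^{ab},\tau),g(s_1^{ac},\tau))=(b,a)$ and a column $\sigma$ with $g(s_1^{ac},\sigma)=c$. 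Using Lemma \ref{lem:UD} to activate these columns under suitable preferences of agent~2 (in particular one ranking $a$ last), I would show that at the resulting profiles agent~2 cannot be a local dictator, since the two strategies already disagree there; by Theorem \ref{thm:MainR} agent~1 must then be the local dictator, which forces $s_1^{ab}$ and $s_1^{ac}$ each to enforce a constant outcome on agent~2's undominated set. Playing these constancy requirements against the crossover that is needed for agent~2 to possess two undominated strategies at $\tilde R_2$ (the defining feature of Case~2) should produce a profile at which neither agent can be a local dictator, contradicting strategic simplicity.

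Once coexistence is excluded, $U\!D_1(\tilde R_1)$ has exactly two elements, one with menu $\{a,b,c\}$ and the other with menu $\{a,b\}$ or $\{a,c\}$, and the identical argument with the roles of the two agents interchanged establishes the analogous statement for agent~2. I expect the delicate point throughout Step~3 to be choosing the auxiliary preferences of agent~2 so that the two witness columns are simultaneously relevant, so that the clash between the forced local‑dictatorship constancy and the required crossover is genuine rather than vacuous; this is where a short case analysis on agent~2's preferences (there being only finitely many, since $\#A=3$) is likely unavoidable.
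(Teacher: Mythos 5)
Your Steps 1 and 2 are sound. The menu inventory (every menu of a strategy in $U\!D_1(\tilde R_1)$ contains $a$, is not $\{a\}$ by Claim \ref{NotUniquea}, and distinct strategies offer distinct menus by Corollary \ref{cor:not-same-menu}) is exactly how the paper begins, and your use of Lemma \ref{lem:structure} to conclude that $a$ is $\tilde R_1$'s top alternative is a correct, self-contained derivation of a fact the paper only obtains later (Claim \ref{Case2Preferences}) by a different route. The difficulty lies entirely in your Step 3 --- ruling out the coexistence of a strategy with menu $\{a,b\}$ and a strategy with menu $\{a,c\}$ in $U\!D_1(\tilde R_1)$ --- which is the real content of the claim. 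There your proposal is a plan rather than an argument, and the plan as stated cannot be completed.

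The concrete obstruction is the \emph{form} of contradiction you aim for: ``a profile at which neither agent can be a local dictator'' does not exist in Case 2 and so cannot be exhibited. By the defining hypothesis of Case 2, at any profile where both agents have several undominated strategies exactly one outcome arises on the box, so both agents are local dictators there trivially; and at any profile where one agent's undominated set is a singleton, the other agent is a local dictator trivially. Relatedly, the facts your plan extracts are too weak to link different preference profiles: local dictatorship of agent 1 at $(\tilde R_1, bca)$ gives constancy of $g(s_1^{ab},\cdot)$ and $g(s_1^{ac},\cdot)$ only on $U\!D_2(bca)$, which says nothing about these rows against agent 2's strategies in $U\!D_2(\tilde R_2)$ --- so the intended ``clash with the crossover at $\tilde R_2$'' is precisely the vacuity you yourself flag. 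The missing lever is the Case 2 hypothesis itself, not Theorem \ref{thm:MainR}: since agent 2 with preference $bca$ can, by Lemma \ref{lem:UD}, reach $b$ against $s_1^{ab}$ and $c$ against $s_1^{ac}$, two outcomes arise on $U\!D_1(\tilde R_1)\times U\!D_2(bca)$, and Case 2 then forces $U\!D_2(bca)$ to be a singleton, hence (again by Lemma \ref{lem:UD}) a \emph{dominant} strategy $s_2^{bca}$ with $g(s_1^{ab},s_2^{bca})=b$ and $g(s_1^{ac},s_2^{bca})=c$; the same argument gives a dominant $s_2^{cba}$ with $g(s_1^{ab},s_2^{cba})=b$ and $g(s_1^{ac},s_2^{cba})=c$. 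Turning to agent 2's side at $\tilde R_2$, two distinct menus must be offered, at least one of which is two-element --- WLOG $\{a,c\}$, offered by $s_2$, while the other, offered by $s_2'$, contains $b$ --- and the same device yields a dominant strategy $s_1^{bca}$ for agent 1 with $g(s_1^{bca},s_2)=c$ and $g(s_1^{bca},s_2')=b$. The contradiction is then an \emph{overdetermined cell}, not a missing dictator: because $s_1^{bca}$ is dominant and the row $s_1^{ab}$ obtains $b$ (the $bca$-best outcome) in column $s_2^{cba}$, we must have $g(s_1^{bca},s_2^{cba})=b$; because $s_2^{cba}$ is dominant and the column $s_2$ obtains $c$ (the $cba$-best outcome) in row $s_1^{bca}$, we must have $g(s_1^{bca},s_2^{cba})=c$. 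Dominance --- optimality against \emph{all} strategies, which singleton undominated sets deliver via Lemma \ref{lem:UD} --- is what glues the profiles together; constancy on undominated boxes cannot.
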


\begin{proof}
The claim follows from Claim \ref{NotUniquea} and Corollary \ref{cor:not-same-menu} once we rule out the case in which that there are simultaneously a strategy in $U\!D_1(\tilde R_1)$ that offers menu $\{a,b\}$ and another strategy in $U\!D_1(\tilde R_1)$ that offers menu $\{a,c\}$. We prove indirectly that this cannot be the case.\medskip
	
	Thus we assume that there is a strategy $s_1 \in U\!D_1(\tilde R_1)$ with $M_2(s_1)=\{a,b\}$ and another strategy $s_1' \in U\!D_1(\tilde R_1)$ with $M_2(s_1') = \{a, c\}$. By Lemma \ref{lem:UD}, it would have to be the case that for agent 2 with preference $bca$, there is an undominated strategy that yields $b$ against $s_1$ and also an undominated strategy that yields $c$ against $s_1'$. Therefore, we would conclude that $g(U\!D_1(\tilde R_1), U\!D_2(bca))=2$. By the definition of case 2, it has to be that $U\!D_2(bca)$ has just one element. In other words, agent 2 with preference $bca$ has a dominant strategy $s_2^{bca}$. Using the same arguments as above, we can conclude that agent 2 with preference $cba$ has a dominant strategy $s_2^{cba}$.\medskip
	
	Now consider any two different strategies $s_2, s_2' \in U\!D_2(\tilde R_2)$. By assumption, both strategies' menus include $a$, and by Claim \ref{NotUniquea} cannot only include $a$. Therefore, at least one of these menus must contain exactly two elements, one of which is $a$. Without loss of generality let the other one be $c$. Thus, we consider: $M_1 (s_2) = \{a, c\}$. By Corollary \ref{cor:not-same-menu}, $s_2'$ has to offer a different menu, and this implies: $b \in M_1 (s_2').$\medskip
	
	Using the same argument as in the second paragraph of the current proof, we can conclude that agent 1 with preference $bca$ has a dominant strategy, say $s_1^{bca}$, and that $g(s_1^{bca}, s_2) = c$, and that $g(s_1^{bca}, s_2') = b$.\medskip
	
	Now consider $g(s_1^{bca}, s_2^{cba}).$ Because $s_1^{bca}$ is a dominant strategy for agent 1 with preference $bca$, and because $g(s_1, s_2^{cba}) = b$, it follows that
	$g(s_1^{bca}, s_2^{cba}) = b$. But similarly, because $s_2^{cba}$ is a dominant strategy for agent 2 with preference $cba$, and because $g(s_1^{bca}, s_2) = c$, it follows that  $g(s_1^{bca}, s_2^{cba}) = c$. We have obtained a contradiction.
\end{proof}

Without loss of generality, we now assume that $M_2 (\tilde{\tilde s}_1)=\{a,b\}$. Next, we show that, as a consequence, we have to have that $M_1(\tilde{\tilde s}_2)=\{a,b\}$.

\begin{claim}\label{TheSecondMenu}
$M_1 (\tilde{\tilde s}_2)=\{a,b\}$.
\end{claim}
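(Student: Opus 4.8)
The plan is to argue by contradiction: suppose instead that $M_1(\tilde{\tilde s}_2)=\{a,c\}$, the only alternative left open by Claim \ref{Case2Menus}, and derive a contradiction with strategic simplicity. Throughout I would use the facts already in hand: by Claim \ref{Case2Menus} and its analogue for agent $2$, $U\!D_1(\tilde R_1)=\{\tilde s_1,\tilde{\tilde s}_1\}$ and $U\!D_2(\tilde R_2)=\{\tilde s_2,\tilde{\tilde s}_2\}$ with $M_2(\tilde s_1)=M_1(\tilde s_2)=\{a,b,c\}$, $M_2(\tilde{\tilde s}_1)=\{a,b\}$, and (under the contradiction hypothesis) $M_1(\tilde{\tilde s}_2)=\{a,c\}$; moreover all four outcomes $g(\tilde s_1,\tilde s_2)$, $g(\tilde s_1,\tilde{\tilde s}_2)$, $g(\tilde{\tilde s}_1,\tilde s_2)$, $g(\tilde{\tilde s}_1,\tilde{\tilde s}_2)$ equal $a$. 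As a preliminary, I would apply Lemma \ref{lem:structure} to $(\tilde R_1,\tilde R_2)$ with $i=1$ and $R_{-i}=\tilde R_2$: since $g(\tilde s_1,\tilde s_2)=g(\tilde s_1,\tilde{\tilde s}_2)=a$, case (1) of that lemma is impossible, because it would force $b_1(\tilde s_2,\tilde R_1)=b_1(\tilde{\tilde s}_2,\tilde R_1)=a$, contradicting the distinctness that case (1) requires. Hence case (2) applies, the common best $b_1(\tilde s_2,\tilde R_1)$ is the $\tilde R_1$-maximal element of $\{a,b,c\}$, and it must equal the achieved value $a$; so $a$ is $\tilde R_1$-top, and symmetrically $a$ is $\tilde R_2$-top. (This preliminary is solid and I expect it to be what ultimately powers the final step.)

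The core of the argument is to probe the mechanism at two auxiliary preference profiles and read off forcing facts from case (1) of Lemma \ref{lem:structure}; crucially these probes need only the undominated sets $U\!D_1(\tilde R_1),U\!D_2(\tilde R_2)$ and their menus, not the exact identities of $\tilde R_1,\tilde R_2$. At $(bac,\tilde R_2)$ with $i=1$ one has $b_1(\tilde s_2,bac)=b\neq a=b_1(\tilde{\tilde s}_2,bac)$ (here the hypothesis $M_1(\tilde{\tilde s}_2)=\{a,c\}$ is what makes $b_1(\tilde{\tilde s}_2,bac)=a$), so case (1) yields $g(s_1,\tilde s_2)=b$ and $g(s_1,\tilde{\tilde s}_2)=a$ for every $s_1\in U\!D_1(bac)$, with agent $2$ the local dictator. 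Symmetrically, at $(\tilde R_1,cab)$ with $i=2$ one has $b_2(\tilde s_1,cab)=c\neq a=b_2(\tilde{\tilde s}_1,cab)$, so $g(\tilde s_1,s_2)=c$ and $g(\tilde{\tilde s}_1,s_2)=a$ for every $s_2\in U\!D_2(cab)$, with agent $1$ the local dictator. Thus the posited asymmetry $\{a,b\}$ versus $\{a,c\}$ produces, at these two probes, local dictators pulling the outcome toward $b$ when agent $1$ ranks $b$ first, and toward $c$ when agent $2$ ranks $c$ first.

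Finally I would consider the conflict profile $(bac,cab)$ and contradict Theorem \ref{thm:MainR}: agent $1$, ranking $b$ first, ought to retain the $b$-response inherited from the first probe, while agent $2$, ranking $c$ first, ought to retain the $c$-response from the second, so that neither agent's undominated strategies can fix the outcome independently of the other's, leaving no (let alone unique) local dictator. I expect this closing step to be the main obstacle. The forcing facts above are stated against the witnesses $\tilde s_1,\tilde s_2,\tilde{\tilde s}_1,\tilde{\tilde s}_2$, which need \emph{not} themselves be undominated at $(bac,cab)$, and because $c$ is $bac$-bottom while $b$ is $cab$-bottom, Lemma \ref{lem:UD} does not transport these responses directly (the improvement direction it guarantees is vacuous at a bottom-ranked outcome). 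The delicate bookkeeping is therefore to use Lemma \ref{lem:UD}, Corollary \ref{cor:not-same-menu}, and the top-rankings established in the preliminary to certify that genuine $b$-enforcing and $c$-enforcing strategies survive as undominated strategies at $(bac,cab)$; securing these two incompatible would-be dictators is exactly what closes the contradiction and forces $M_1(\tilde{\tilde s}_2)=\{a,b\}$.
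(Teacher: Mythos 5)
Your overall architecture (indirect proof, probing auxiliary preference profiles, reading off forcing facts from part (1) of Lemma \ref{lem:structure}, then colliding the two ``pulls'') matches the paper's, and your preliminary step and the two forcing computations are correct as far as they go. The genuine gap is your choice of probe preferences, and it is fatal to the closing step rather than a matter of bookkeeping. First, you overlooked that the defining property of Case 2 upgrades your forcing facts: at $(bac,\tilde R_2)$ the two outcomes $b$ and $a$ both arise over $U\!D_1(bac)\times U\!D_2(\tilde R_2)$, and since agent 2 has two undominated strategies there, the Case 2 property forces $U\!D_1(bac)$ to be a singleton, i.e.\ agent 1 with preference $bac$ has a dominant strategy $s_1^{bac}$; symmetrically $U\!D_2(cab)=\{s_2^{cab}\}$. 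Consequently, at your conflict profile $(bac,cab)$ the restricted game consists of the single profile $(s_1^{bac},s_2^{cab})$, so a local dictator exists trivially and no contradiction with Theorem \ref{thm:MainR} is available there. Second, the direct collision also fails: dominance pins $g(s_1^{bac},s_2^{cab})$ down as the $bac$-best element of $M_1(s_2^{cab})\supseteq\{a,c\}$ and as the $cab$-best element of $M_2(s_1^{bac})\supseteq\{a,b\}$, and both requirements are simultaneously met by the outcome $a$ whenever $b\notin M_1(s_2^{cab})$ and $c\notin M_2(s_1^{bac})$ --- a configuration nothing you have derived excludes. Because $a$ sits in the \emph{middle} of both $bac$ and $cab$, your two would-be dictators can peacefully compromise on $a$, so no contradiction can be extracted from the profile $(bac,cab)$ at all.

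The paper's proof avoids this by probing with the preferences that rank $a$ at the \emph{bottom}: $bca$ for agent 1 and $cba$ for agent 2. Under the contradiction hypothesis $M_1(\tilde{\tilde s}_2)=\{a,c\}$, agent 1 with preference $bca$ is forced to $b$ against $\tilde s_2$ and to $c$ against $\tilde{\tilde s}_2$ (no fallback to $a$ is possible since $a$ is $bca$-worst); two distinct outcomes arise, so by the Case 2 property $U\!D_1(bca)=\{s_1^{bca}\}$ is a dominant strategy. Symmetrically $s_2^{cba}$ is dominant with $g(\tilde s_1,s_2^{cba})=c$ and $g(\tilde{\tilde s}_1,s_2^{cba})=b$. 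At the crossing point, dominance of $s_1^{bca}$ together with $b\in M_1(s_2^{cba})$ forces $g(s_1^{bca},s_2^{cba})=b$, while dominance of $s_2^{cba}$ together with $c\in M_2(s_1^{bca})$ forces $g(s_1^{bca},s_2^{cba})=c$: a direct contradiction, with no appeal to Theorem \ref{thm:MainR} at the conflict profile needed. So the repair is not to certify that $b$-enforcing and $c$-enforcing strategies ``survive'' as undominated at $(bac,cab)$ --- they do not --- but to switch the probes to $bca$ and $cba$, exactly as in the proofs of Claims \ref{Case2Menus} and \ref{TheSecondMenu}.
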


\begin{proof}
	The proof is indirect. Suppose that $M_1 (\tilde{\tilde s}_2) = \{a, c\}$. As in the proof of claim \ref{Case2Menus}, we can then infer that agent 1 with preference $bca$ has a dominant strategy $s_1^{bca}$. Furthermore, $g(s_1^{bca}, \tilde s_2) = b$ and $g(s_1^{bca}, \tilde{\tilde s}_2) = c$. Similarly, the assumption that $M_2 (\tilde{\tilde s}_1)=\{a,b\}$ implies that agent 2 with preference $cba$ has a dominant strategy $s_2^{cba}$. Furthermore, $g(\tilde s_1, s_2^{cba})=c$ and $g(\tilde{\tilde s}_1, s_2^{cba}) = b$. A contradiction is then reached as in the proof of claim \ref{Case2Menus} by showing that $g(s_1^{bca}, s_2^{cba})$ has to be simultaneously $b$ and $c$.
\end{proof}

\begin{claim}\label{CDominant}
 Agent 1 with preference $cab$ has a dominant strategy $s_1^{cab}$, and $g(s_1^{cab}, \tilde s_2) = c$ and $g(s_1^{cab}, \tilde{\tilde s}_2) = a$. Agent 1 with preference $cba$ has a dominant strategy $s_1^{cba}$, and $g(s_1^{cba}, \tilde s_2) = c$ and $g(s_1^{cba}, \tilde{\tilde s}_2) = b$.  (The analogous claims are true for agent 2.)
\end{claim}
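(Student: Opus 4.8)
The plan is to prove the statement for agent~1 (both for preference $cab$ and for preference $cba$) and then read off the claims for agent~2 by symmetry, since everything established so far is symmetric: by Claim~\ref{Case2Menus} and Claim~\ref{TheSecondMenu} we have $M_2(\tilde s_1)=\{a,b,c\}$, $M_2(\tilde{\tilde s}_1)=\{a,b\}$ and, mirroring these, $M_1(\tilde s_2)=\{a,b,c\}$, $M_1(\tilde{\tilde s}_2)=\{a,b\}$. The engine of the argument is a single observation drawn from the defining hypothesis of Case~2: \emph{if at some profile $(R_1,R_2)$ one agent has at least two undominated strategies and $g(U\!D_1(R_1),U\!D_2(R_2))$ contains at least two alternatives, then the other agent has a unique undominated strategy}. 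Indeed, otherwise both agents would have multiple undominated strategies at a profile producing two outcomes, contradicting Case~2.

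First, for agent~1 with preference $cab$ (ranking $c$ top, $a$ second, $b$ bottom), I would work at the profile $(cab,\tilde R_2)$. Note that $U\!D_2(\tilde R_2)$ is determined by agent~2's preference alone and is independent of agent~1's declared preference, so $\tilde s_2$ and $\tilde{\tilde s}_2$ remain available as witnesses here; thus agent~2 still has at least two undominated strategies. I would then show $g(U\!D_1(cab),U\!D_2(\tilde R_2))$ has at least two elements. Against $\tilde{\tilde s}_2$ every strategy of agent~1 yields an outcome in $M_1(\tilde{\tilde s}_2)=\{a,b\}$, hence never $c$. Against $\tilde s_2$, since $c\in M_1(\tilde s_2)=\{a,b,c\}$, some strategy of agent~1 yields $c$; because $c$ is the $cab$-maximal alternative, the undominated strategy that weakly dominates it (which exists by finiteness, exactly as in the proof of Lemma~\ref{lem:UD}) must also yield $c$ against $\tilde s_2$. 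So both $c$ and some element of $\{a,b\}$ lie in $g(U\!D_1(cab),U\!D_2(\tilde R_2))$, giving two outcomes. By the observation above, agent~1 with preference $cab$ has a unique undominated strategy $s_1^{cab}$.

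Second, I would pin down the outcomes. Being the unique undominated strategy, $s_1^{cab}$ weakly dominates every strategy of agent~1 under preference $cab$, so against each fixed strategy of agent~2 it attains agent~1's $cab$-best attainable outcome in the corresponding menu. Against $\tilde s_2$ the best attainable outcome in $\{a,b,c\}$ is $c$, so $g(s_1^{cab},\tilde s_2)=c$; against $\tilde{\tilde s}_2$ the best attainable outcome in $\{a,b\}$ is $a$, so $g(s_1^{cab},\tilde{\tilde s}_2)=a$. The argument for preference $cba$ is verbatim the same, except that against $\tilde{\tilde s}_2$ the best attainable outcome in $\{a,b\}$ is now $b$, giving $g(s_1^{cba},\tilde{\tilde s}_2)=b$ while $g(s_1^{cba},\tilde s_2)=c$. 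Swapping the roles of the two agents and using $M_2(\tilde s_1)=\{a,b,c\}$, $M_2(\tilde{\tilde s}_1)=\{a,b\}$ yields the dominant strategies $s_2^{cab}$ and $s_2^{cba}$ with the asserted outcomes against $\tilde s_1$ and $\tilde{\tilde s}_1$.

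The step I would be most careful about is the passage from \textquotedblleft some strategy yields $c$ against $\tilde s_2$\textquotedblright\ to \textquotedblleft some \emph{undominated} strategy for $cab$ yields $c$ against $\tilde s_2$,\textquotedblright\ and, dually, the claim that the unique undominated strategy realizes the best attainable outcome in every menu. Both rest on the finiteness fact that a weakly dominated strategy is weakly dominated by an undominated one, combined with $c$ being the top-ranked alternative of $cab$ (resp.\ $cba$); these are precisely the moves already used in the proof of Lemma~\ref{lem:UD}, so the remaining work is to state them cleanly rather than to find a new idea. A secondary point worth recording explicitly is the independence of $U\!D_2(\tilde R_2)$ from agent~1's preference, which is what lets $\tilde s_2,\tilde{\tilde s}_2$ serve as witnesses at the auxiliary profile $(cab,\tilde R_2)$; notably, the argument nowhere uses the identity of $\tilde R_1$, which is only pinned down later.
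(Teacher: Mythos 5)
Your proposal is correct and takes essentially the same route as the paper: the paper's own proof of this claim is just a pointer to ``the arguments used in the second paragraph of the proof of Claim \ref{Case2Menus},'' and those are exactly the arguments you spell out — use Lemma \ref{lem:UD} to exhibit two distinct outcomes in $g(U\!D_1(cab),U\!D_2(\tilde R_2))$ (one being $c$ against $\tilde s_2$, one in $\{a,b\}$ against $\tilde{\tilde s}_2$), invoke the defining property of Case 2 to force $U\!D_1(cab)$ to be a singleton, and then read off the stated outcomes from the fact that the unique undominated (hence dominant) strategy must attain the preference-maximal element of each menu. Your explicit justification of the lifting step (a dominated strategy is dominated by an undominated one, by finiteness) and of the independence of $U\!D_2(\tilde R_2)$ from agent 1's preference are exactly the details the paper leaves implicit.
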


\begin{proof}
This follows from the arguments used in the second paragraph of the proof of claim \ref{Case2Menus}.
\end{proof}

At this point we have a good understanding of the sets $U\!D_i(\tilde R_i)$ and of the menus offered by the strategies in these sets. What we have obtained so far is symbolically represented in the right panel in the top row in Figure \ref{fig:case2}. (Observe that the strategies $s_i^{cab}$ and $s_i^{cba}$ are not contained in $U\!D(\tilde R_i)$.)

\begin{claim}\label{Case2Obvious}
 If agent 1 ranks $a$ top, then every undominated strategy $s_1$ of agent 1 satisfies $g(s_1, \tilde s_2) = g(s_1, \tilde{\tilde s}_2) = a$. If agent 1 ranks $b$ top, then every undominated strategy $s_1$ of agent 1 satisfies $g(s_1, \tilde s_2) = g(s_1, \tilde{\tilde s}_2) = b$. (The analogous claims are true for agent 2.)
\end{claim}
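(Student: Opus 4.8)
The plan is to prove the two assertions for agent~1; the assertions for agent~2 then follow by the symmetric argument, using $M_2(\tilde s_1)=\{a,b,c\}$ and $M_2(\tilde{\tilde s}_1)=\{a,b\}$ in place of the menus of agent~2. The engine of the proof is a single observation: if $R_1$ ranks an alternative $x$ at the top and $x\in M_1(s_2)$ for some strategy $s_2$ of agent~2, then there is an undominated strategy $s_1'\in U\!D_1(R_1)$ with $g(s_1',s_2)=x$. To see this I would pick any $s_1^0\in S_1$ with $g(s_1^0,s_2)=x$ (possible since $x\in M_1(s_2)$); by the standing assumption that every strategy is undominated for some preference, $s_1^0\in U\!D_1(R_1^0)$ for some $R_1^0$, and Lemma~\ref{lem:UD} (applied with $R_1^0$ and $R_1$) yields $s_1'\in U\!D_1(R_1)$ with $g(s_1',s_2)\,R_1\,g(s_1^0,s_2)=x$; since $x$ is $R_1$-maximal this forces $g(s_1',s_2)=x$.

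Now suppose $R_1$ ranks $a$ top. Recall $M_1(\tilde s_2)=\{a,b,c\}$ and $M_1(\tilde{\tilde s}_2)=\{a,b\}$ by Claims~\ref{Case2Menus} and~\ref{TheSecondMenu}, so $a$ lies in both menus, and $U\!D_2(\tilde R_2)=\{\tilde s_2,\tilde{\tilde s}_2\}$. Applying the observation to $\tilde s_2$ shows $a\in g(U\!D_1(R_1),\{\tilde s_2\})\subseteq g(U\!D_1(R_1),U\!D_2(\tilde R_2))$. I would then split on the size of $U\!D_1(R_1)$. If $U\!D_1(R_1)$ has at least two elements, then both agents have multiple undominated strategies at the profile $(R_1,\tilde R_2)$, so by the defining hypothesis of Case~2 the set $g(U\!D_1(R_1),U\!D_2(\tilde R_2))$ is a singleton; since it contains $a$ it equals $\{a\}$, whence $g(s_1,\tilde s_2)=g(s_1,\tilde{\tilde s}_2)=a$ for every $s_1\in U\!D_1(R_1)$. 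If instead $U\!D_1(R_1)=\{s_1^\ast\}$, I apply the observation once to $\tilde s_2$ and once to $\tilde{\tilde s}_2$ (using $a\in M_1(\tilde{\tilde s}_2)$): the strategy produced must be $s_1^\ast$ in each case, so $g(s_1^\ast,\tilde s_2)=g(s_1^\ast,\tilde{\tilde s}_2)=a$, which is the assertion since $s_1^\ast$ is the only undominated strategy.

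The case in which $R_1$ ranks $b$ top is verbatim the same, with $b$ playing the role of $a$: it again uses only that the top alternative belongs to both $M_1(\tilde s_2)=\{a,b,c\}$ and $M_1(\tilde{\tilde s}_2)=\{a,b\}$, which holds for $b$. Notice the argument deliberately does not cover a preference ranking $c$ top, consistent with the statement, precisely because $c\notin M_1(\tilde{\tilde s}_2)$. The only point requiring care is the case split: the Case~2 hypothesis applies only when \emph{both} agents have several undominated strategies, so the singleton-menu subcase for agent~1 must be disposed of separately through the observation rather than through that hypothesis. This is the main thing to get right; everything else is a direct application of Lemma~\ref{lem:UD} and the menu facts already established.
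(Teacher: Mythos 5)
Your proof is correct. It differs from the paper's only in which lemma carries the load: the paper disposes of this claim in one line by citing part (2) of Lemma~\ref{lem:structure} together with the definition of Case 2 --- since $a$ (resp.\ $b$) is the $R_1$-best element of both menus $M_1(\tilde s_2)=\{a,b,c\}$ and $M_1(\tilde{\tilde s}_2)=\{a,b\}$, part (2) supplies one strategy in $U\!D_1(R_1)$ yielding that alternative against both $\tilde s_2$ and $\tilde{\tilde s}_2$, and the Case 2 hypothesis then forces all of $U\!D_1(R_1)$ to yield it whenever that set has more than one element. You bypass Lemma~\ref{lem:structure} entirely and instead re-derive the existence step from Lemma~\ref{lem:UD} plus the standing assumption that every strategy is undominated for some preference; this ``engine'' of yours is in fact exactly the first step of the paper's own proof of part (2) of Lemma~\ref{lem:structure}, so you have inlined that lemma rather than cited it. The trade-off: your argument is more self-contained and makes visible that the claim rests only on the generic weak-dominance lemma and the Case 2 single-outcome hypothesis (not on the structure lemma, whose proof invokes local dictatorship), while the paper's version is shorter and reuses machinery already in place, including the ``moreover'' clause of part (2), which gives simultaneity against $\tilde s_2$ and $\tilde{\tilde s}_2$ for free. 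Your explicit case split on whether $U\!D_1(R_1)$ is a singleton is the right way to make the Case 2 hypothesis applicable --- it applies only when \emph{both} agents have multiple undominated strategies --- and the same dichotomy is implicit in the paper's one-line proof.
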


\begin{proof}
	This follows from part (2) of Lemma \ref{lem:structure} and from the definition of Case 2.	
\end{proof}

\begin{claim}\label{Case2Preferences}
 $\tilde R_1=\tilde R_2=acb$.
\end{claim}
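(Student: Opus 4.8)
The plan is to determine each $\tilde R_i$ by first locating its top-ranked alternative and then distinguishing the two remaining candidates. By the symmetry between the two agents it suffices to show $\tilde R_1 = acb$, the argument for $\tilde R_2$ being identical with the roles of the agents reversed. Throughout I would use what has already been established: by Claims \ref{Case2Menus}, \ref{CDominant} and \ref{Case2Obvious} we have $U\!D_1(\tilde R_1) = \{\tilde s_1, \tilde{\tilde s}_1\}$ with $M_2(\tilde s_1) = \{a,b,c\}$ and (after the relabeling fixed earlier) $M_2(\tilde{\tilde s}_1) = \{a,b\}$; the Case 2 hypothesis gives $g(U\!D_1(\tilde R_1), U\!D_2(\tilde R_2)) = \{a\}$, so in particular $g(\tilde s_1, \tilde s_2) = a$; and Claim \ref{CDominant} supplies $g(\tilde s_1, s_2^{cab}) = g(\tilde s_1, s_2^{cba}) = c$, $g(\tilde{\tilde s}_1, s_2^{cab}) = a$, and $g(\tilde{\tilde s}_1, s_2^{cba}) = b$.

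First I would show that $\tilde R_1$ ranks $a$ at the top. It cannot rank $c$ at the top: the only $c$-top preferences are $cab$ and $cba$, and by Claim \ref{CDominant} agent 1 has a dominant strategy for each, so $|U\!D_1| = 1$, contradicting $|U\!D_1(\tilde R_1)| = 2$. It cannot rank $b$ at the top either: by Claim \ref{Case2Obvious} a $b$-top preference forces every undominated strategy to yield $b$ against $\tilde s_2$, whereas $\tilde s_1 \in U\!D_1(\tilde R_1)$ gives $g(\tilde s_1, \tilde s_2) = a$. Hence $\tilde R_1 \in \{abc, acb\}$.

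The main step, and the main obstacle, is ruling out $\tilde R_1 = abc$. Here I would argue that under $abc$ the strategy $\tilde{\tilde s}_1$ (menu $\{a,b\}$) weakly dominates $\tilde s_1$, contradicting $\tilde s_1 \in U\!D_1(\tilde R_1)$. Since $g(\tilde{\tilde s}_1, s_2) \in \{a,b\}$ and $g(\tilde s_1, s_2) \in \{a,b,c\}$ for every $s_2$, and $c$ is worst under $abc$, the only way $\tilde{\tilde s}_1$ could fail to be at least as good as $\tilde s_1$ against some $s_2$ is a column $s_2^\dagger$ with $g(\tilde s_1, s_2^\dagger) = a$ and $g(\tilde{\tilde s}_1, s_2^\dagger) = b$; the crux is to exclude such a column. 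I would do this as follows. The strategy $s_2^\dagger$ is undominated for some preference $R_2^\dagger$ (every strategy is undominated for some preference). By the agent-2 analog of Claim \ref{Case2Obvious}, $R_2^\dagger$ cannot rank $a$ top (that would force $g(\tilde{\tilde s}_1, s_2^\dagger) = a \neq b$) nor $b$ top (that would force $g(\tilde s_1, s_2^\dagger) = b \neq a$); hence $R_2^\dagger$ ranks $c$ top, i.e. $R_2^\dagger \in \{cab, cba\}$. But by the agent-2 analog of Claim \ref{CDominant} these preferences have unique dominant strategies $s_2^{cab}, s_2^{cba}$, each yielding $c$ (not $a$) against $\tilde s_1$, so $s_2^\dagger$ can be neither — a contradiction. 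With the offending column excluded, $\tilde{\tilde s}_1$ weakly dominates $\tilde s_1$ (strictly against $s_2^{cab}$), so $\tilde s_1 \notin U\!D_1(abc)$ and therefore $\tilde R_1 = acb$.

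Finally, the symmetric version of the whole argument yields $\tilde R_2 = acb$, completing the claim. I expect the delicate point to be the verification that the only preferences leaving $s_2^\dagger$ undominated are $c$-top preferences; this is exactly where the agent-2 analogs of Claims \ref{Case2Obvious} and \ref{CDominant} do the work, and once it is in place the contradiction via the dominant strategies $s_2^{cab}, s_2^{cba}$ is immediate.
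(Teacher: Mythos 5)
Your proposal is correct and follows essentially the same route as the paper's proof: Claim \ref{CDominant} rules out $c$-top preferences (unique dominant strategy contradicts two undominated strategies), Claim \ref{Case2Obvious} rules out $b$-top preferences (every undominated strategy would yield $b$ against $\tilde s_2$, yet $g(\tilde s_1,\tilde s_2)=a$), and weak dominance of $\tilde s_1$ by $\tilde{\tilde s}_1$ eliminates $abc$. The only difference is that the paper asserts this last dominance as ``clear,'' whereas you supply the verification it silently requires --- excluding a column $s_2^\dagger$ with $g(\tilde s_1,s_2^\dagger)=a$ and $g(\tilde{\tilde s}_1,s_2^\dagger)=b$ by classifying the preference for which $s_2^\dagger$ is undominated --- which is a correct and worthwhile filling-in of detail.
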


\begin{proof}
Claims \ref{CDominant} and \ref{Case2Obvious}  show that an agent with multiple undominated strategies must rank $a$ top. This leaves just two possible preferences: $abc$ and $acb$. But if $\tilde R_1 = abc$, then clearly $\tilde{\tilde s}_1$ would weakly dominate $\tilde s_1$.
\end{proof}

\begin{claim}\label{BDominant}
 There is a unique strategy, say $s_1^b$, such that, if agent 1 ranks $b$ top, then this strategy is dominant. Moreover, $g(s_1^b,s_2)=b$ for all $s_2 \in S_2$. (The analogous statement is true for agent 2.)
\end{claim}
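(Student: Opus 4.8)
The engine of the proof is a single observation about agent $2$'s dominant strategy $s_2^{cab}$ for the preference $cab$, which exists by Claim~\ref{CDominant}. Since $cab$ ranks $b$ last, $b$ is agent $2$'s $R_{cab}$-worst alternative, and since $s_2^{cab}$ weakly dominates every strategy of agent $2$ under $cab$, we have $g(s_1,s_2^{cab})\,R_{cab}\,g(s_1,s_2')$ for all $s_1\in S_1$ and all $s_2'\in S_2$. Hence if some strategy $s_1$ satisfies $g(s_1,s_2^{cab})=b$, then $b\,R_{cab}\,g(s_1,s_2')$ for every $s_2'$, and because $b$ is the unique $R_{cab}$-worst alternative this forces $g(s_1,s_2')=b$ for all $s_2'\in S_2$. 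In other words, any strategy yielding $b$ against $s_2^{cab}$ is automatically an ``all-$b$'' strategy. The whole claim therefore reduces to showing that such a strategy exists, i.e.\ that $b\in M_1(s_2^{cab})$; I will call the witnessing strategy $s_1^b$.

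Granting existence, the remaining assertions are immediate. If agent $1$ ranks $b$ top (preference $bac$ or $bca$), then $s_1^b$ yields her top alternative $b$ against every strategy of agent $2$, so it weakly dominates every strategy and is dominant; in particular it is the \emph{same} strategy for both $b$-top preferences. Uniqueness is then forced by the no-duplicate-strategies assumption, since two distinct strategies both yielding $b$ everywhere would be duplicates. The analogous statement for agent $2$ follows by interchanging the two agents throughout, using agent $1$'s dominant strategy $s_1^{cab}$ for $cab$ in place of $s_2^{cab}$.

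The genuinely hard step is the existence of $s_1^b$, equivalently $b\in M_1(s_2^{cab})$, and I would establish it by contradiction, assuming that no strategy of agent $1$ yields $b$ against $s_2^{cab}$, so that $M_1(s_2^{cab})=\{a,c\}$ (the inclusion $\{a,c\}\subseteq M_1(s_2^{cab})$ holds because $g(\tilde s_1,s_2^{cab})=c$ and $g(\tilde{\tilde s}_1,s_2^{cab})=a$ by Claim~\ref{CDominant}). The plan is to examine agent $1$'s undominated strategies for the preference $bca$. By Claim~\ref{Case2Obvious} every such strategy yields $b$ against $\tilde s_2$ and $\tilde{\tilde s}_2$; applying Lemma~\ref{lem:UD} to $\tilde{\tilde s}_1$, which yields $b$ against $s_2^{cba}$, produces $\hat s_1\in U\!D_1(bca)$ with $g(\hat s_1,s_2^{cba})=b$, whereupon the same worst-outcome argument applied to $s_2^{cba}$ (under $cba$, $b$ is the middle alternative, so $b\,R_{cba}\,x$ forces $x\in\{a,b\}$) gives $M_2(\hat s_1)\subseteq\{a,b\}$, and under the contradiction hypothesis $g(\hat s_1,s_2^{cab})=a$. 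Applying Lemma~\ref{lem:UD} instead to $\tilde s_1$ yields a second strategy $s_1'\in U\!D_1(bca)$ with $g(s_1',s_2^{cab})=c$; since $g(\hat s_1,s_2^{cab})=a\neq c$ we have $\hat s_1\neq s_1'$, and as $M_2(\hat s_1)\subseteq\{a,b\}$ while $c\in M_2(s_1')$ these are distinct strategies offering agent $2$ different menus, consistent with Corollary~\ref{cor:not-same-menu}. The contradiction is then to be extracted at the profile $(bca,bca)$: both agents there have at least two undominated strategies, so the defining property of Case~$2$ forces a single outcome, and I expect that confronting this single-outcome requirement with the menus $M_2(\hat s_1)\subseteq\{a,b\}$ and $M_2(s_1')\supseteq\{b,c\}$, together with the symmetric constraints on agent $2$'s strategies $s_2^{cab},\hat s_2,s_2'$ and the no-duplicate assumption, forces either a duplicated strategy or a cell constrained to take two different values.

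The step I expect to be delicate is precisely this last one, namely ruling out $M_1(s_2^{cab})=\{a,c\}$. The worst-outcome reduction is decisive once a strategy is known to meet $b$ against $s_2^{cab}$, but by its nature it cannot manufacture such a strategy; establishing existence therefore seems to require the global bookkeeping of the outcomes produced by all of agent $1$'s and agent $2$'s undominated strategies (the content of the successive panels of Figure~\ref{fig:case2}), with the Case~$2$ single-outcome hypothesis and the absence of duplicate strategies used to close the argument.
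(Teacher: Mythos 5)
Your reduction step is sound: since $s_2^{cab}$ weakly dominates every strategy of agent 2 under $cab$, and $cab$ ranks $b$ last, any $s_1$ with $g(s_1,s_2^{cab})=b$ must satisfy $g(s_1,s_2)=b$ for all $s_2\in S_2$, and dominance for $b$-top preferences plus uniqueness then follow from the no-duplicates assumption. But the crux to which you reduce everything --- existence of such an $s_1$, i.e.\ $b\in M_1(s_2^{cab})$ --- is exactly where there is a genuine gap, and your proposed contradiction route does not close. Three concrete problems: (i) the ``symmetric constraints'' on agent 2's strategies $\hat s_2, s_2'$ require the mirror hypothesis $b\notin M_2(s_1^{cab})$, which does not follow from the assumed $b\notin M_1(s_2^{cab})$ (the first is about the row of $s_1^{cab}$, the second about the column of $s_2^{cab}$); without it you cannot force $g(s_1^{cab},s_2')=c\neq g(s_1^{cab},\hat s_2)$ and hence cannot produce two distinct elements of $U\!D_2(bca)$. (ii) If agent 2 happens to have a unique undominated strategy for $bca$, the Case 2 single-outcome property simply does not apply at $(bca,bca)$, and your plan stalls. (iii) Even when both agents do have multiple undominated strategies at $(bca,bca)$, the facts you assemble yield no contradiction: applying Lemma \ref{lem:UD} to $\tilde s_1$ (from $acb$ to $bca$) and Claim \ref{Case2Obvious} for agent 2 shows the unique outcome at that profile must be $b$, and this is fully consistent with $g(\hat s_1,s_2^{cab})=a$ and $g(s_1',s_2^{cab})=c$, because these deductions merely force $s_2^{cab}\notin U\!D_2(bca)$, which is no contradiction. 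So the contradiction you ``expect'' cannot be extracted from these ingredients.

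The missing idea is that the right anchor for your worst-outcome engine is not $s_2^{cab}$ but $U\!D_2(acb)=\{\tilde s_2,\tilde{\tilde s}_2\}$, because for that anchor the analogue of your existence question is already answered: by Claim \ref{Case2Obvious}, \emph{every} $s_1\in U\!D_1(R_1)$ with $R_1$ ranking $b$ top satisfies $g(s_1,\tilde s_2)=g(s_1,\tilde{\tilde s}_2)=b$. Since $\tilde R_2=acb$ (Claim \ref{Case2Preferences}) ranks $b$ bottom, the paper then argues exactly as you do but from this anchor: any $s_2\in S_2$ is undominated for some preference, so Lemma \ref{lem:UD} yields $s_2''\in U\!D_2(acb)$ with $g(s_1,s_2'')\,R_{acb}\,g(s_1,s_2)$; the left side equals $b$, the $R_{acb}$-worst outcome, forcing $g(s_1,s_2)=b$. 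Thus every undominated strategy of a $b$-top agent is already the all-$b$ strategy; existence is free because $U\!D_1(bca)\neq\emptyset$ in a finite mechanism, and uniqueness and dominance follow as in your second paragraph. In short, your engine is the right tool pointed at the wrong target: for $s_2^{cab}$ the needed ``hits $b$'' fact is as hard as the claim itself, while for $U\!D_2(acb)$ it is already in hand from the preceding claims.
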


\begin{proof}
	By Claim \ref{Case2Obvious}, if agent 1 ranks $b$ top, every undominated strategy $s_1$ of agent 1 satisfies: $g(s_1, \tilde s_2) = g(s_1, \tilde{\tilde s}_2) = b$. Claim \ref{Case2Preferences} showed that $\tilde R_2 = acb$, which ranks $b$ bottom. Therefore, by Lemma \ref{lem:UD}, we have to have that $g(s_1, s_2) = b$ for all $s_2 \in S_2$. There can only be one such strategy, because there are no duplicate strategies. Moreover, this strategy is dominant whenever agent 1 ranks $b$ top.
\end{proof}

The left panel in the bottom row of Figure \ref{fig:case2} shows what we have inferred so far about the mechanism.

\begin{claim}\label{ADominant}
 For agent 1 with preference $abc$, strategy $\tilde{\tilde s}_1$ is dominant. (The analogous statement is true for agent 2.)
\end{claim}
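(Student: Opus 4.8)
The plan is to prove that $\tilde{\tilde s}_1$ weakly dominates every strategy of agent~1 when her preference is $abc$. The crucial starting observation is that, by the standing assumption of this case, $M_2(\tilde{\tilde s}_1)=\{a,b\}$, so $\tilde{\tilde s}_1$ never returns $c$, the worst alternative under $abc$. Consequently $\tilde{\tilde s}_1$ can fail to be weakly dominant only in one way: there is a strategy $s_2'$ of agent~2 against which $\tilde{\tilde s}_1$ yields $b$ while some strategy $s_1'$ of agent~1 yields the strictly better outcome $a$, that is, $g(\tilde{\tilde s}_1,s_2')=b$ and $g(s_1',s_2')=a$. I would assume such $s_1',s_2'$ exist and derive a contradiction.

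The first step is to sharpen the information about the column $s_2'$. Since $g(s_1',s_2')=a$ and $a$ is top-ranked under $acb$, Lemma~\ref{lem:UD} (applied with a preference for which $s_1'$ is undominated and target preference $acb$) produces a strategy in $U\!D_1(acb)=\{\tilde s_1,\tilde{\tilde s}_1\}$ that also returns $a$ against $s_2'$; as $\tilde{\tilde s}_1$ returns $b$, this strategy must be $\tilde s_1$, so $g(\tilde s_1,s_2')=a$. Thus $s_2'$ returns the pair $(a,b)$ against $(\tilde s_1,\tilde{\tilde s}_1)$ (meaning $g(\tilde s_1,s_2')=a$, $g(\tilde{\tilde s}_1,s_2')=b$). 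I would then compare this pair with the five strategies of agent~2 already identified: $\tilde s_2$ and $\tilde{\tilde s}_2$ return $(a,a)$, $s_2^b$ returns $(b,b)$, $s_2^{cab}$ returns $(c,a)$, and $s_2^{cba}$ returns $(c,b)$ (from $g(U\!D_1(acb),U\!D_2(acb))=\{a\}$ together with the agent-2 analogues of Claims~\ref{CDominant} and~\ref{BDominant}). None of these equals $(a,b)$, so $s_2'$ is none of the five.

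The decisive step is to identify the preference for which $s_2'$ is undominated. By the agent-2 analogues of Claims~\ref{Case2Menus}, \ref{CDominant} and~\ref{BDominant}, every undominated strategy of agent~2 for a preference other than $abc$ lies among those five strategies; since $s_2'$ is not among them, it must be that $s_2'\in U\!D_2(abc)$. But $abc$ ranks $a$ top, so the agent-2 analogue of Claim~\ref{Case2Obvious} forces $g(\tilde s_1,s_2')=g(\tilde{\tilde s}_1,s_2')=a$, contradicting $g(\tilde{\tilde s}_1,s_2')=b$. Hence no such $s_1',s_2'$ exist and $\tilde{\tilde s}_1$ is weakly dominant for $abc$, with the statement for agent~2 following by symmetry. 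The main obstacle to anticipate is precisely that, at this stage, agent~2's strategy set has not yet been shown to consist of only the five catalogued strategies (the $abc$-case for agent~2 is the very symmetric statement being proved), so one cannot simply read the conclusion off the finite table; the real work is to show that any ``rogue'' column witnessing a dominance failure would have to be an $abc$-undominated strategy of agent~2, which the $a$-top structural claim then rules out.
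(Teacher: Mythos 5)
Your proof is correct and takes essentially the same route as the paper's: both arguments classify agent 2's strategies by the preference for which they are undominated (using the standing assumption that every strategy is undominated for some preference) and then invoke the agent-2 analogues of Claims~\ref{Case2Obvious}, \ref{BDominant}, and \ref{CDominant} together with Lemma~\ref{lem:UD} in exactly the same roles. The only difference is organizational: the paper verifies optimality of $\tilde{\tilde s}_1$ directly against each class of columns, whereas you argue by contradiction and funnel the hypothetical violating column $s_2'$ into $U\!D_2(abc)$, where Claim~\ref{Case2Obvious} closes the argument; in particular, your use of Lemma~\ref{lem:UD} to derive $g(\tilde s_1, s_2')=a$ is the same step as the paper's proof that $a \notin M_1(s_2^{cba})$.
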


\begin{proof}
		Consider agent 1 with preference $abc$. Whenever agent 2's strategy is undominated for a preference that puts $a$ top, then, by Claim \ref{Case2Obvious}, if agent 1 chooses $\tilde{\tilde s}_1$, the outcome is $a$, which is agent 1's most preferred outcome. If agent 2's strategy is undominated for a preference that puts $b$ at the top, by Claim \ref{BDominant}, all strategies of agent 1 yield the same outcome $b$. Finally, if agent 2 chooses an undominated strategy for preference $cab$, then, by Claim \ref{CDominant}, the outcome that results if agent 1 chooses $\tilde{\tilde s}_1$ is $a$.\medskip
	
	The only remaining case is that agent 2 has preference $cba$ and chooses his dominant strategy $s_2^{cba}$. By claim \ref{CDominant}, $g(\tilde{\tilde s}_1, s_2^{cba}) = b$. Thus we have to show that $a \notin M_1 (s_2^{cba}).$ If $a \in M_1(s_2^{cba})$, by Lemma \ref{lem:UD}, there would have to be an undominated strategy of agent 1 with preference $acb$ that yields $a$ against $s_2^{cba}$. In Claim \ref{CDominant} we showed that no such strategy exists.
\end{proof}

We can now wrap up the analysis of the second case. For five of the six possible preferences of each agent, we have established that they have dominant strategies. Moreover, for agents with preference $acb$, we have established that they have only two undominated strategies. Moreover, the dominant strategy of agents with preference $abc$ is one of the undominated strategies of agents with preference $acb$, and agents with preferences that put $b$ top have the same dominant strategy. A short calculation reveals that every agent has exactly 5 strategies. The results that we have obtained so far show for most strategy combinations which outcome results. What remains to be filled in is are the outcomes that result when both agents choose their strategies $s_i^{cab}$ and $s_i^{cba}$. But because these are dominant strategies, and because we already know that each agent has a strategy available that achieves outcome $c$ against these two strategies of the other agent, it must be that:
\[
g(s_1^{cab}, s_2^{cab}) = g(s_1^{cab}, s_2^{cba}) = g(s_1^{cba}, s_2^{cab}) = g(s_1^{cba}, s_2^{cba}) = c.
\]
Thus, there is a unique type 2 strategically simple mechanism (up to relabeling) in the second case as shown in the right panel in the bottom row of Figure \ref{fig:case2}.

\end{document}